\numberwithin{equation}{section}
\def\showauthornotes{1}
\tikzstyle{vertex}=[circle, draw,fill=gray!20, inner sep=0pt, minimum size=16pt]
\tikzstyle{svertex}=[circle, draw,fill=gray!20, inner sep=0pt, minimum size=8pt]
\tikzstyle{ssquare}=[draw,fill=gray!20, inner sep=0pt, minimum size=8pt]
\tikzset{
        >=stealth',
            pil/.style={
           ->,
           thick,
           shorten <=2pt,
           shorten >=2pt,}
}
\newtheorem{theorem}{Theorem}[section]
\newtheorem{lemma}[theorem]{Lemma}
\newtheorem{definition}[theorem]{Definition}
\newtheorem{remark}[theorem]{Remark}
\newtheorem{corollary}[theorem]{Corollary}
\newtheorem{claim}[theorem]{Claim}
\newtheorem{proposition}[theorem]{Proposition}
\newtheorem*{claim*}{Claim}
\newtheorem*{remark*}{Remark}
\newtheorem*{fact*}{Fact}
\newtheorem{invariant}[]{Invariant}
\theoremstyle{definition}
\newtheorem*{rep@theorem}{\rep@title}
\newcommand{\newreptheorem}[2]{\newenvironment{rep#1}[1]{ \def\rep@title{#2 \ref{##1}} \begin{rep@theorem}} {\end{rep@theorem}}}
\colorlet{darkgreen}{green!50!black}
\newcommand{\defcal}[1]{\expandafter\newcommand\csname c#1\endcsname{{\mathcal{#1}}}}
\newcounter{ct}
  \edef\letter{\Alph{ct}}
\newcommand{\Authornote}[2]{{\sffamily\small\color{red}{[#1: #2]}}}
\newcommand{\Authornotecolored}[3]{{\sffamily\small\color{#1}{[#2: #3]}}}
\newcommand{\Authorcomment}[2]{{\sffamily\small\color{gray}{[#1: #2]}}}
\newcommand{\Authorstartcomment}[1]{\sffamily\small\color{gray}[#1: }
\newcommand{\Authorfnote}[2]{\footnote{\color{red}{#1: #2}}}
\newcommand{\Authorfixme}[1]{\Authornote{#1}{\textbf{??}}}
\newcommand{\Authormarginmark}[1]{\marginpar{\textcolor{red}{\fbox{\Large #1:!}}}}
\newcommand{\Authornote}[2]{}
\newcommand{\Authornotecolored}[3]{}
\newcommand{\Authorcomment}[2]{}
\newcommand{\Authorstartcomment}[1]{}
\newcommand{\Authorfnote}[2]{}
\newcommand{\Authorfixme}[1]{}
\newcommand{\Authormarginmark}[1]{}
\newcommand{\Onote}{\Authornote{O}}
\DeclareMathOperator{\poly}{poly}
\renewcommand{\vec}[1]{{\bm{#1}}}
\newcommand{\ie}{i.e.,\xspace}
\newcommand{\eg}{e.g.,\xspace}
\newcommand{\etal}{et al.\xspace}
\newcommand{\E}{{\mathbb{E}}}
\newcommand{\cDD}{\ensuremath{\mathcal{D}}}
\newcommand{\cFF}{\ensuremath{\mathcal{F}}}
\newcommand{\bRn}{\ensuremath{\mathbb{R_{+}}}}
\newcommand{\opt}{{\ensuremath{\mathrm{OPT}}}\xspace}
\newcommand{\lpopt}[1]{{\ensuremath{\mathrm{OPT #1}}}\xspace}
\newcommand{\argmin}{\operatorname{arg\min}}
\newcommand{\lmp}{\ensuremath{\mathsf{LMP}}\xspace}
\newcommand{\hide}[1]{}
\newcommand{\sgraph}[1]{\ensuremath{\mathcal{SG}
\ifthenelse{\equal{#1}{}}{}{(#1)}
}}
\newcommand{\cgraph}[1]{\ensuremath{\mathcal{CG}
\ifthenelse{\equal{#1}{}}{}{(#1)}
}}
\newcommand{\cpath}[2]{\ensuremath{P_{#1}
\ifthenelse{\equal{#2}{}}{}{(#2)}
}}
\newcommand{\lpf}{\textsf{\small LP$(\lambda)$}\xspace}
\newcommand{\jv}{\textsf{\small JV}\xspace}
\newcommand{\jvd}{\textsf{\small JV$(\delta)$}\xspace}
\newcommand{\jvp}[1]{\textsf{\small JV$(#1)$}\xspace}
\newcommand{\dualf}[1][\lambda]{\textsf{\small DUAL$(#1)$}\xspace}
\newcommand{\is}{\mathsf{IS}\xspace}
\newcommand{\dmean}{\ensuremath{\delta_{\textsf{\tiny mean}}}\xspace}
\newcommand{\rmean}{\ensuremath{\rho_{\textsf{\tiny mean}}}\xspace}
\newcommand{\dmed}{\ensuremath{\delta_{\textsf{\tiny med}}}\xspace}
\newcommand{\rmed}{\ensuremath{\rho_{\textsf{\tiny med}}}\xspace}
\newcommand{\sfac}{\ensuremath{\cFF_{\textsf{\tiny S}}}\xspace}
\newcommand{\sfact}[1]{\ensuremath{\sfac^{(#1)}}\xspace}
\newcommand{\dfac}{\ensuremath{\cFF_{\textsf{\tiny D}}}\xspace}
\newcommand{\dclient}{\ensuremath{\cDD_{\textsf{\tiny D}}}\xspace}
\newcommand{\sclient}{\ensuremath{\cDD_{\textsf{\tiny S}}}\xspace}
\newcommand{\sclientt}[1]{\ensuremath{\sclient^{(#1)}}}
\newcommand{\bclient}{\ensuremath{\cDD_{\textsf{\tiny B}}}\xspace}
\newcommand{\cneigh}[1][\gamma]{\ensuremath{N^{(0)}_{#1}}\xspace}
\newcommand{\clients}{\cD}
\newcommand{\facilities}{\cF}
\newcommand{\sweep}{{\normalfont\textsc{Sweep}}\xspace}
\newcommand{\graphupdate}{{\normalfont\textsc{GraphUpdate}}\xspace}
\newcommand{\raiseprice}{{\normalfont\textsc{RaisePrice}}\xspace}
\newcommand{\raiseprices}{{\normalfont\textsc{RaisePrices}}\xspace}
\newcommand{\quasisweep}{{\normalfont\textsc{QuasiSweep}}\xspace}
\newcommand{\quasigraphupdate}{{\normalfont\textsc{QuasiGraphUpdate}}\xspace}
\newcommand{\vfact}[1]{\cV^{(#1)}}
\newcommand{\vfac}{\cV}
\newcommand{\Gcf}{G}
\newcommand{\Gf}{H}
\newcommand{\Gcft}[1]{\Gcf^{(#1)}}
\newcommand{\Gft}[1]{\Gf^{(#1)}}
\newcommand{\timet}[1]{t^{(#1)}}
\newcommand{\stime}{\tau}
\newcommand{\stimet}[1]{\tau^{(#1)}}
\newcommand{\cst}[1]{\cS^{(#1)}}
\newcommand{\aeps}{\epsilon}
\newcommand{\const}[1]{C_{#1}}
\newcommand{\zeps}{\epsilon_z}
\newcommand{\alphat}[1]{\alpha^{(#1)}}
\newcommand{\Nt}[1]{N^{(#1)}}
\newcommand{\zt}[1]{z^{(#1)}}
\newcommand{\solt}[1]{(\alphat{#1}, \zt{#1})}
\newcommand{\rolt}[1]{(\alphat{#1}, \zt{#1},\sfact{#1}, \sclientt{#1} )}
\newcommand{\ist}[1]{\is^{(#1)}}
\newcommand{\sqrtalpha}{\bar{\alpha}}
\newcommand{\sqrtalphat}[1]{\sqrtalpha^{(#1)}}
\newcommand{\betat}[1]{\beta^{(#1)}}
\newcommand{\alphain}{\alpha^{\mbox{\tiny \textsf{in}}}}
\newcommand{\Nin}{N^{\mbox{\tiny \textsf{in}}}}
\newcommand{\alphaout}{\alpha^{\mbox{\tiny \textsf{out}}}}
\newif\ifshort
\begin{document}
\title{Better Guarantees for $k$-Means and Euclidean $k$-Median by Primal-Dual Algorithms\thanks{Supported by ERC Starting Grant 335288-OptApprox.}} 

\author{Sara Ahmadian  \thanks{Department of Combinatorics and Optimization, University of Waterloo.
    Email:\textit{sahmadian@uwaterloo.ca}.}
\and
Ashkan Norouzi-Fard\thanks{School of Computer and Communication Sciences, EPFL.
Email: \textit{ashkan.norouzifard@epfl.ch}.
}
\and
Ola Svensson\thanks{School of Computer and Communication Sciences, EPFL.
Email: \textit{ola.svensson@epfl.ch}.
}
\and
Justin Ward\thanks{School of Computer and Communication Sciences, EPFL.
Email: \textit{{justin.ward@epfl.ch}}.
}}
\clearpage\maketitle
\thispagestyle{empty}

\begin{abstract}
  Clustering is a classic topic in optimization with $k$-means being one of the
  most fundamental such problems. In the absence of any restrictions on the
  input, the best known algorithm for $k$-means with a provable guarantee is
  a simple local search heuristic yielding an approximation guarantee of
  $9+\epsilon$, a ratio that  is known to be tight with respect to such methods.

  We  overcome this barrier by presenting a new primal-dual approach that
  allows us to (1) exploit the geometric structure of $k$-means and (2) to
  satisfy the hard constraint that at most $k$ clusters are selected without
  deteriorating the approximation guarantee. Our main result is
  a $6.357$-approximation algorithm with respect to the  standard LP relaxation.
  Our techniques are quite general and we also show improved guarantees for the
  general version of $k$-means where the underlying metric is not required to
  be Euclidean and for $k$-median in Euclidean metrics.

\end{abstract}
\clearpage
\setcounter{page}{1}

\section{Introduction}
\label{sec:intro}

Clustering problems have been extensively studied in computer science. They play a central role in many areas, including data science and machine learning, and their study has led to the development and refinement of several key techniques in algorithms and theoretical computer science. Perhaps the most widely considered clustering problem is the \emph{$k$-means} problem: given a set $\cD$ of $n$ points in $\mathbb{R}^{\ell}$ and an integer $k$, the task is to select a set $S$ of $k$ \emph{cluster centers} in $\mathbb{R}^\ell$, so that $\sum_{j \in \cD}c(j,S)$ is minimized, where $c(j,S)$ is the squared Euclidean distance between $j$ and its nearest center in~$S$.  

A commonly used heuristic for $k$-means is Lloyd's algorithm~\cite{Lloyd:2006:LSQ:2263356.2269955}, which is based on iterative improvement.  However, despite its ubiquity in practice, Lloyd's algorithm has, in general, no worst-case guarantee and may not even converge in polynomial time \cite{Arthur:2006:SKM:1137856.1137880,Vattani2011}.  To overcome some of these limitations, Arthur and Vassilvitskii \cite{Arthur:2007:KAC:1283383.1283494} proposed a randomized initialization procedure for Lloyd's algorithm, called $k$-means$++$, that leads to a $\Theta(\log k)$ expected approximation guarantee in the worst case.  Under additional assumptions about the \emph{clusterability} of the input dataset, Ostrovsky \etal~\cite{Ostrovsky2013} showed that this adaptation of Lloyd's algorithm gives a PTAS for $k$-means clustering. However, under no such assumptions, the best approximation algorithm in the general case has for some time remained a $(9+\epsilon)$-approximation algorithm based on local search, due to Kanungo \etal~\cite{DBLP:journals/comgeo/KanungoMNPSW04111}. Their analysis also shows that no natural local search algorithm performing a fixed number of swaps can improve upon this ratio.  This leads to a barrier for these techniques that are rather far away from  the best-known inapproximability result which  only says that it is NP-hard to approximate this problem to within a factor better than $1.0013$~\cite{DBLP:journals/corr/LeeSW15a}.

While the general problem has resisted improvements, there has been significant progress on the
$k$-means problem  under a variety of assumptions.  For example, Awasthi, Blum,
and Sheffet obtain a PTAS in the special case when the instance has certain
stability properties \cite{Awasthi:2010:SYP:1917827.1918395} (see also \cite{Balcan2009}), and there has
been a long line of work (beginning with \cite{matouvsek2000approximate})
obtaining better and better PTASes under the assumption that $k$ is constant.
Most recently, it has been shown that local search gives a PTAS under the
assumption that the dimension $\ell$ of the dataset is constant
\cite{DBLP:journals/corr/Cohen-AddadKM16,DBLP:journals/corr/FriggstadRS16}.
These last results generalize to the case in which the squared distances are
from the shortest path metric on a graph with forbidden minors
\cite{DBLP:journals/corr/Cohen-AddadKM16} or from a metric with constant
doubling dimension \cite{DBLP:journals/corr/FriggstadRS16}. We remark that the
dimension $\ell$ of a $k$-means instance may always be assumed to be at most
$O(\log n)$ by a standard application of the Johnson-Lindenstrauss transform. But, as the results in~\cite{DBLP:journals/corr/Cohen-AddadKM16,DBLP:journals/corr/FriggstadRS16} exhibit doubly-exponential dependence on the dimension, they do not give any non-trivial implications for the general case. Moreover,   
such a doubly-exponential dependence is essentially unavoidable, as the problem is APX-hard in the general case~\cite{DBLP:conf/compgeom/AwasthiCKS15}.

In summary, while $k$-means is perhaps the most widely used clustering problem
in computer science, the only constant-factor approximation algorithm for the
general case is based on simple local search heuristics that, for inherent
reasons, give guarantees that are rather far from known hardness results.
This is in contrast to many other well-studied clustering problems, such
as facility location and $k$-median.  Over the past several decades, a toolbox of core algorithmic techniques such as dual fitting, primal-dual and LP-rounding, has been refined and applied to these problems leading to improved approximation guarantees~\cite{STA97,CS04,
BA10,DBLP:journals/iandc/Li13,LV92B,DBLP:journals/jacm/JainV01,Jain:2002:NGA:509907.510012,LiS16,JMM03}.
In particular, the current best approximation guarantees for both facility location (a
1.488-approximation due to Li~\cite{DBLP:journals/iandc/Li13}) and $k$-median
(a 2.675-approximation due to Byrka \etal~\cite{DBLP:conf/soda/ByrkaPRST15}) are LP-based and 
give significantly better results than
previous local search algorithms \cite{AGKMMP,CG}.  However, such LP-based techniques
have not yet been able to attain similar improvements for $k$-means.  One
reason for this is that they have relied heavily on the triangle inequality,
which does not hold in the case of $k$-means.  

\paragraph{Our results.}
In this work, we overcome this barrier by developing new techniques that allow us to exploit the standard LP formulation for $k$-means.  We significantly narrow the gap between known upper and lower bounds by designing a new primal-dual algorithm for the $k$-means problem.  We stress that our algorithm works in the general case that $k$ and $\ell$ are part of the input, and requires no assumptions on the dataset.
\begin{theorem}
  \label{thm:maineuckmeans}
  For any $\epsilon > 0$, there is a $(\rmean+\epsilon)$-approximation algorithm for the $k$-means problem, where $\rmean \approx 6.357$. Moreover, the integrality gap of the standard LP is at most $\rmean$.
\end{theorem}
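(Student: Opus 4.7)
The plan is to establish the theorem via a Jain--Vazirani style primal-dual scheme applied to the Lagrangean relaxation of the cardinality constraint $\sum_i y_i \le k$ in the standard LP for $k$-means. Writing the LP dual, this cardinality constraint carries a single multiplier $z \ge 0$; moving $z$ into the objective yields an uncapacitated facility-location style relaxation in which every potential center has uniform opening cost $z$, while the remaining duals are the client variables $\alpha_j$. The strategy is to design, for each $z$, an algorithm producing an integral solution with a Lagrangean-multiplier-preserving guarantee, and then to combine two such solutions to obtain one opening exactly $k$ centers.

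First I would design a primal-dual algorithm $\jvp{z}$ that, for each $z \ge 0$, constructs an integral set $S$ of centers and feasible duals $\alpha$ satisfying an inequality of the form
\[
\sum_{j \in \cD} c(j,S) + \gamma \cdot z \cdot |S| \;\le\; \gamma \cdot \sum_{j} \alpha_j
\]
for the smallest possible constant $\gamma$. Following Jain--Vazirani, all $\alpha_j$ rise at unit rate until a client becomes tight with a center, clients whose contributions collectively pay the opening cost $z$ force that center to be tentatively opened, and a pruning phase then selects a maximal conflict-free subset. Since squared Euclidean distance fails the triangle inequality, the rerouting cost of a client whose tentative center is pruned must be bounded through the parameterised geometric inequality $\|a-b\|^2 \le (1+\epsilon)\|a-c\|^2 + (1+1/\epsilon)\|c-b\|^2$, and the centroid structure of optimum $k$-means clusters exploited via the parallel-axis identity, so that clients lying in the same optimum cluster are charged against their shared centroid rather than pairwise. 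The worst case of this charging reduces to a factor-revealing optimisation problem whose supremum determines $\gamma$.

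Next, to pass from the LMP regime to a solution opening exactly $k$ centers, I would bisect over $z$: for large $z$ the algorithm opens fewer than $k$ centers, and for small $z$ more than $k$. A polynomial-precision binary search locates $z_1 < z_2$ with $z_2 - z_1$ exponentially small and corresponding LMP solutions $S_1, S_2$ satisfying $|S_1| > k > |S_2|$ (if ever $|S_i| = k$ we are immediately done). Averaging the two dual solutions, together with a rounding of $(S_1, S_2)$ tailored to the Euclidean structure and generalising that of \cite{DBLP:journals/jacm/JainV01}, produces an integral $k$-center solution of cost at most $(\rmean + \epsilon)$ times the LP optimum, where $\rmean \approx 6.357$ is the composite factor. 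Since this bounds the rounded cost against the LP value, it simultaneously proves the approximation guarantee and the claimed integrality-gap bound.

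The chief obstacle is the LMP analysis of Step~1: driving $\gamma$ down so that the composite factor evaluates to $\rmean \approx 6.357$. The difficulty is twofold. First, squared Euclidean distance does not satisfy the triangle inequality, so the $(1+\epsilon, 1+1/\epsilon)$ loss from rerouting a pruned client must be absorbed inside the factor-revealing analysis rather than accumulating in a final black-box application of an approximate triangle inequality; this absorption is delicate because it interacts with the dual-growth process and with the choice of $\epsilon$. Second, the $k$-means-specific averaging must be woven into the charging so that clients within the same optimum cluster pay collectively rather than individually. A secondary obstacle lies in the combination of Step~2: the ``seed'' selection used when merging $S_1$ and $S_2$ must rely on Euclidean geometry (via parallel-axis-style arguments) in place of the triangle inequality invoked in the $k$-median setting.
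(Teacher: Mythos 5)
Your Step 1 is broadly in the spirit of the paper: the paper also runs a Jain--Vazirani style dual growth and exploits Euclidean geometry in the pruning phase. Concretely, it parameterizes the conflict graph by a threshold $\delta$ (two tight facilities conflict only if $c(i,i')\le\delta\min(t_i,t_{i'})$), so that a client may end up contributing to \emph{several} opened facilities; the over-payment is controlled by the centroid identity $\sum_i\|x_i-\mu\|^2=\frac{1}{2s}\sum_{i}\sum_{i'}\|x_i-x_{i'}\|^2$ applied to the opened facilities the client pays for (not to optimum clusters), and the ``3-hop'' rerouting is bounded by $d(j,\is)\le(1+\sqrt\delta)\sqrt{\alpha_j}$ rather than by an approximate triangle inequality; optimizing $\max\{(1+\sqrt\delta)^2,\,1/(\delta/2-1)\}$ gives $\rmean\approx 6.357$ directly as the \emph{LMP} factor. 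These are differences of execution, not of substance.

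The genuine gap is in Step 2. You propose binary search over the opening price to get $S_1,S_2$ with $|S_1|>k>|S_2|$ and then a bipoint-style combination, asserting that this yields the ``composite factor'' $\rmean+\epsilon$. But $\rmean$ in the theorem is the bare LMP factor; the whole difficulty is that the classical combination of two LMP solutions loses an extra multiplicative factor (a factor $2$ in Jain--Vazirani, improved to smaller constants by later bipoint roundings, but never to $1+\epsilon$). Even though $z_2-z_1$ can be made exponentially small by binary search, the dual vectors produced by re-running the primal-dual algorithm at the two prices can differ drastically --- the map from price to dual solution is unstable --- so ``averaging the two dual solutions'' does not give a feasible dual certificate against which both integral solutions are nearly tight, and no Euclidean ``seed selection'' is known that recovers the loss. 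The paper's main technical contribution is precisely to avoid binary search: it enumerates prices $0,\zeps,2\zeps,\dots$ and \emph{transforms} each dual solution into the next by a controlled \sweep{} procedure, guaranteeing that consecutive duals are close in $L^\infty$ (within $1/n^2$ coordinatewise). Only because consecutive duals are close can one build a hybrid conflict graph, walk through a chain of maximal independent sets whose size drops by at most one per step, and hence hit size exactly $k$ while both candidate solutions are simultaneously (almost) tight against a single feasible dual --- giving the $(1+O(\epsilon))$ rather than factor-$2$ loss. Without this idea (or an equivalent one), your plan proves at best a $c\cdot\rmean$-approximation for some constant $c>1$, and the stated bound does not follow.
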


We now describe our approach and contributions at a high level.  Given a $k$-means instance, we apply standard discretization techniques (\eg \cite{DBLP:conf/compgeom/FeldmanMS07}) to obtain an instance of the \emph{discrete $k$-means} problem, in which we are given a discrete set $\cF$ of candidate centers in $\mathbb{R}^\ell$ and must select $k$ centers from $\cF$, rather than $k$ arbitrary points in $\mathbb{R}^\ell$.  This step incurs an arbitrarily small loss in the approximation guarantee with respect to the original $k$-means instance.  Because our algorithm always returns a set of centers from the discrete set $\cF$, all of our results also hold for the \emph{exemplar clustering} problem, in which centers must be chosen from the input points in $\clients$.  Specifically, we can simply take $\cF = \clients$.

Using Lagrangian relaxation, we can then consider the resulting discrete problem using the standard linear programming formulation for facility location.  This general approach was pioneered in this context by Jain and
Vazirani~\cite{DBLP:journals/jacm/JainV01} who gave primal-dual algorithms for the $k$-median problem. In their paper, they first present
a \emph{Lagrangian Multiplier Preserving} (\lmp) $3$-approximation algorithm
for the facility location problem. 
Then they run binary search over the opening cost of the facilities and use the aforementioned algorithm to get two
solutions: one that opens more than $k$ facilities and one that opens fewer than
$k$, such that the opening cost of facilities in these solutions are close to each other. These
solutions are then combined to obtain a solution that opens exactly $k$
facilities. This step results in  losing another factor $2$ in the
approximation guarantee, which results in a $6$-approximation algorithm for 
$k$-median. The factor $6$ was later improved by Jain, Mahdian, and Saberi~\cite{Jain:2002:NGA:509907.510012}  who obtained
a $4$-approximation algorithm for $k$-median by developing an $\lmp$
$2$-approximation algorithm for facility location. 

\paragraph{Technical contributions.}
One can see that the same approach gives a much larger constant factor for the $k$-means problem since one cannot anymore rely on the triangle
inequality.  We use two main ideas to overcome this obstacle: (1) we exploit the geometric structure of $k$-means to obtain an improved $\lmp$-approximation, and (2) we develop a new primal-dual algorithm that opens exactly $k$ facilities while losing only an \emph{arbitrarily small factor}.

For our first contribution, we modify the primal-dual algorithm of Jain and Vazirani~\cite{DBLP:journals/jacm/JainV01} into a parameterized version which allows us to  regulate the ``aggressiveness'' of the opening strategy of facilities. By using properties of Euclidean metrics we show that this leads to improved \lmp approximation algorithms for $k$-means. 

By the virtue of~\cite{DBLP:conf/esa/ArcherRS03}, these results  already imply upper bounds on the integrality gaps of the standard LP relaxations, albeit with an \emph{exponential time} rounding algorithm.  Our second and more technical contribution is a new primal-dual algorithm that accomplishes the same task in polynomial time.  In other words, we are able to turn an \lmp approximation algorithm into an algorithm that opens at most $k$ facilities without deteriorating the approximation guarantee. We believe that this contribution is of independent interest. Indeed, all recent progress on the approximation of $k$-median beyond long-standing local search approaches \cite{AGKMMP} has involved reducing the factor $2$ that is lost by Jain and Vazirani when two solutions are combined to open exactly $k$ facilities (i.e. in the rounding of a so-called \emph{bipoint} solution) \cite{LiS16,DBLP:conf/soda/ByrkaPRST15}.  Here, we show that it is possible to reduce this loss all the way to $(1 + \epsilon)$ by fundamentally changing the way in which dual solutions are constructed and maintained.  

Instead of finding two solutions by binary search as in the
framework of~\cite{DBLP:journals/jacm/JainV01}, we consider a sequence of solutions such that the opening costs and also the dual values of any two consecutive solutions are close in $L^{\infty}$-norm.  We show that this latter property allows us to combine two appropriate, consecutive solutions in the sequence into a single solution that opens exactly $k$ facilities while losing only a factor of $1+\epsilon$ (rather than 2) in the approximation guarantee.  Unfortunately, the dual solutions produced by the standard primal-dual algorithm approach are unstable, in the sense that a small change in opening price may result in drastic changes in the value of the dual variables.  Thus, we introduce a new primal-dual procedure which instead iteratively transforms a dual solution for one price into a dual solution for another price.  By carefully constraining the way in which the dual variables are altered, we show that we can obtain a sequence of ``close'' solutions that can be combined as desired.

We believe that this technique may be applicable in other settings, as well.  An especially interesting open question is whether it is possible combine stronger \lmp approximation algorithms, such as the one by Jain, Mahdian, Saberi~\cite{Jain:2002:NGA:509907.510012},  with our lossless rounding to obtain an improved $(2+\epsilon)$-approximation algorithm for $k$-median.

\paragraph{Extensions to other problems.}
In addition to the standard $k$-means problem, we show that our results also extend to the following two problems.  In the first extension, we consider the Euclidean $k$-median problem.  Here we are given a set $\cD$ of $n$ points in $\mathbb{R}^\ell$ and a set $\cF$ of $m$ points in $\mathbb{R}^\ell$ corresponding to facilities.  The task is to select a set $S$ of at most $k$ facilities from $\cF$ so as to minimize $\sum_{j \in \cD}c(j,S)$, where $c(j,S)$ is now the (non-squared) Euclidean distance from $j$ to its nearest facility in $S$.  For this problem, no approximation better than the general $2.675$-approximation algorithm of Byrka \etal~\cite{DBLP:conf/soda/ByrkaPRST15} for $k$-median was known.
\begin{theorem}
  \label{thm:kmedian}
  For any $\epsilon>0$, there is a $(\rmed+\epsilon)$-approximation algorithm for the Euclidean $k$-median problem, where $\rmed \approx 2.633$. Moreover, the integrality gap of the standard LP is at most $\rmed$.
\end{theorem}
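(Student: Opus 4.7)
\textbf{Proof proposal for Theorem~\ref{thm:kmedian}.}
The plan is to mirror the two-stage framework used for Theorem~\ref{thm:maineuckmeans}, replacing the squared distance by the (non-squared) Euclidean distance throughout. First, I would apply the standard discretization of candidate centers so that, up to an arbitrarily small multiplicative loss, the instance becomes a \emph{discrete} Euclidean $k$-median problem over a polynomial-size facility set $\cF \subseteq \mathbb{R}^\ell$; this is entirely analogous to the discretization used for $k$-means. From that point on the input is a facility-location-style instance whose underlying metric is the ordinary Euclidean distance.

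Next I would invoke the paper's two technical ingredients, each specialized to this metric. The first is the \emph{parameterized} primal-dual \lmp algorithm for facility location, whose parameter controls the aggressiveness with which facilities are opened; tuning this parameter optimally against the Euclidean metric class is what will yield the numerical value $\rmed \approx 2.633$. The second is the new lossless primal-dual rounding procedure that transforms the sequence of \lmp dual solutions (indexed by opening price) into a single integral solution opening exactly $k$ facilities, at a multiplicative loss of only $(1+\epsilon)$ rather than $2$. The rounding step is purely combinatorial and depends only on the $L^\infty$-closeness of consecutive duals produced by the \lmp algorithm, so it carries over from Theorem~\ref{thm:maineuckmeans} essentially verbatim.

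The heart of the proof, and the one place where Euclideanness is really exploited, is the analysis of the parameterized \lmp ratio for Euclidean facility location. I would follow a factor-revealing LP analysis in the spirit of Jain--Mahdian--Saberi: write down, as a function of a finite number of geometric configurations of clients around a candidate facility, the worst-case ratio between the primal cost paid and the dual raised by the parameterized algorithm. In a general metric this analysis tops out at ratio $2$; the gain here will come from restricting the configurations to those realizable in $\mathbb{R}^\ell$, where the triangle inequality is strengthened by convex combinations (distances to convex combinations of two points are controlled in a way that is strictly tighter than the metric constraint). After expressing the Euclidean restrictions and optimizing the \lmp parameter against the resulting tightened program, the worst-case ratio drops to $\rmed$.

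The main obstacle will be taming the Euclidean feasibility region in the factor-revealing LP: it is not defined by linear metric inequalities but by semi-algebraic constraints coming from inner-product geometry. I would handle this by reducing the worst instance to low dimension, observing that without loss of generality the extremal configuration lies in the affine span of the relevant facility and a one-parameter family of client directions, and then using rotational symmetry to collapse the free parameters so that the improved bound reduces to a low-dimensional optimization that can be certified to any desired precision. Once the $\rmed$ bound on the \lmp ratio is established, the integrality-gap statement follows immediately: the \lmp guarantee produces a dual feasible for the standard LP whose value is at least $1/\rmed$ times the integral cost, and the lossless rounding preserves this ratio up to $(1+\epsilon)$, so the LP optimum is within a factor $\rmed$ of the integer optimum.
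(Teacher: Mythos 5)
Your overall architecture matches the paper's: discretize the candidate centers, run the parameterized primal-dual algorithm \jvd{} to get an \lmp guarantee, and feed a close sequence of dual solutions into the lossless rounding to open exactly $k$ facilities. The rounding step and the way the integrality-gap statement follows are as you describe, and indeed carry over from the $k$-means case with only the choice of $\delta$ and $\rho$ changed.

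The gap is in the \lmp analysis itself, which is the only part of Theorem~\ref{thm:kmedian} not shared with $k$-means, and which you leave essentially unproved. The paper does not set up a factor-revealing LP over semi-algebraic Euclidean configurations. The proof of Theorem~\ref{thm:euckmed} is a short, direct case analysis on $s = |N(j)\cap\is|$, the number of opened facilities that client $j$ contributes to. The only Euclidean fact used is the centroid identity for \emph{squared} distances, $\sum_{i\in S}\|x_i-\mu\|_2^2 = \tfrac{1}{2s}\sum_{i\in S}\sum_{i'\in S}\|x_i-x_{i'}\|_2^2$, which together with the conflict-graph condition $d(i,i') > \dmed\min(t_i,t_{i'}) \ge \dmed\,\alpha_j$ forces $s\le 3$; the cases $s=0,1,2,3$ are then closed with the plain triangle inequality, yielding $\rmed=\max\bigl(1+\dmed,\ 1/(\dmed-1),\ 1/(\tfrac{3}{2}\dmed-2)\bigr)$, optimized at $\dmed=\sqrt{8/3}$. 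Your route defers its hardest step (``taming the Euclidean feasibility region'' of the factor-revealing program) to a sketch, so as written the bound $\rmed\approx 2.633$ is not established. There is also an internal inconsistency in your framing: you say the general-metric analysis ``tops out at ratio $2$'' and that the Euclidean restriction makes the ratio ``drop to $\rmed$,'' but $2.633 > 2$. The \lmp-$2$ guarantee belongs to the Jain--Mahdian--Saberi greedy algorithm, which is a different algorithm that the paper does not analyze (and explicitly leaves open whether it can be combined with their rounding, which would give $2+\epsilon$). The correct general-metric baseline for the algorithm actually being analyzed is the Jain--Vazirani \lmp ratio $3$, from which $1+\sqrt{8/3}$ is indeed an improvement.
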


In the second extension, we consider a variant of the $k$-means problem in which each $c(j,S)$ corresponds to the squared distance in an arbitrary (possibly non-Euclidean) metric on $\cD \cup \cF$.  For this problem, the best-known approximation algorithm is a 16-approximation due to Gupta and Tangwongsan~\cite{DBLP:journals/corr/abs-0809-2554}.  In this paper, we obtain the following improvement:
\begin{theorem}
  \label{thm:kmeansgeneralmetrics}
  For any $\epsilon>0$, there is a  $(9+\epsilon)$-approximation algorithm for the $k$-means problem in  general metrics. Moreover, the integrality gap of the standard LP is at most $9$.
\end{theorem}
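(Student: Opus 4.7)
The plan is to follow exactly the two-step template used for Theorem \ref{thm:maineuckmeans}, but to replace the Euclidean-specific LMP analysis by one that relies only on the relaxed triangle inequality satisfied by squared metric distances. First, applying the standard Lagrangian relaxation to an instance of general-metric $k$-means, I would view the resulting subproblem as a facility location instance with connection costs $c_{ij}=d(i,j)^2$, where $d$ is the underlying (arbitrary) metric on $\cD\cup\cF$.

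Second, I would run the parameterized primal-dual LMP procedure developed earlier in the paper on this squared-distance facility location instance. The key technical point is to redo the LMP approximation guarantee without access to Euclidean structure. In the Jain--Vazirani-style analysis, a directly-connected client $j$ pays $c_{ij}\le\alpha_j$, while an indirectly-connected client $j$ routed through a witness $j'$ and ``neighbor'' facility $i$ incurs
\[
c_{ji'}\;=\;d(j,i')^2\;\le\;\bigl(d(j,i)+d(i,j')+d(j',i')\bigr)^2\;\le\;3\bigl(c_{ji}+c_{ij'}+c_{j'i'}\bigr),
\]
using the triangle inequality for $d$ together with the elementary bound $(a+b+c)^2\le 3(a^2+b^2+c^2)$. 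The usual LMP timing argument ensures that $j'$ is tight to both $i$ and $i'$ and that $\alpha_{j'}\le\alpha_j$, so each of the three terms on the right is at most $\alpha_j$, and hence $c_{ji'}\le 9\alpha_j$. Combined with the trivial direct-case bound, this gives an \lmp{} $9$-approximation for facility location with squared-metric costs.

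Third, I would feed this \lmp{} $9$-approximation into the lossless primal-dual rounding that is the heart of this paper's proof of Theorem \ref{thm:maineuckmeans}. That procedure converts any \lmp{} $\rho$-approximation for the Lagrangian-relaxed facility-location subproblem into an algorithm that opens exactly $k$ centers while losing only a multiplicative $(1+\epsilon)$ factor, and simultaneously certifies the integrality gap bound $\rho$ for the standard LP. Instantiating with $\rho=9$ yields both the claimed $(9+\epsilon)$-approximation and the integrality gap bound of $9$.

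The step I expect to be the main obstacle is verifying that the lossless rounding genuinely treats the LMP subroutine as a black box, i.e., that every estimate used in combining two ``close'' dual solutions relies only on the relaxed triangle inequality $(a+b)^2\le 2(a^2+b^2)$ for squared distances, and not on any specific Euclidean fact (such as the parallelogram identity) that was convenient in the proof of Theorem \ref{thm:maineuckmeans}. Where Euclidean geometry was used to sharpen a constant, one must fall back on the weaker squared-metric inequality; since the loss introduced by the rounding is absorbed into the arbitrarily small $\epsilon$ slack, such a substitution still preserves the final $(9+\epsilon)$ bound and the integrality gap of $9$.
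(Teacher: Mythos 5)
Your proposal matches the paper's argument: the LMP $9$-approximation is exactly Theorem~\ref{thm:genkmeans} (run \jvp{\infty}, i.e.\ plain \jv, and bound the $3$-hop connection cost via the triangle inequality together with $(a+b+c)^2 \le 3(a^2+b^2+c^2)$ and the witness relation $\alpha_j \ge t_{i_1} \ge \alpha_{j_1}$), and the paper then feeds this into the same lossless rounding used for Theorem~\ref{thm:maineuckmeans}, noting that the rounding only requires changing $\delta$ and $\rho$ and that with $\delta=\infty$ each client contributes to at most one opened facility, so the Euclidean centroid identity is never needed. Your flagged concern about the rounding being LMP-agnostic is exactly the point the paper addresses by this observation, so the proposal is correct and essentially identical to the paper's proof.
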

\noindent We remark that the same hardness reduction as used for $k$-median~\cite{Jain:2002:NGA:509907.510012} immediately yields a much stronger hardness result for the above generalization than what is known for the standard $k$-means problem: it is hard to approximate the $k$-means problem in general metrics within a factor $1+8/e - \epsilon \approx 3.94$ for any $\epsilon >0$.

\paragraph{Outline of paper.}
In Section~\ref{sec:prelim} we review the standard LP formulation that we use, as well as its Lagrangian relaxation.  We then in Section~\ref{PDalg} show how to exploit the geometric structure of $k$-means and Euclidean $k$-median to give improved \lmp guarantees.  In Section~\ref{sec:quasi} we show the main ideas behind our new rounding approach by giving an algorithm that runs in quasi-polynomial time. These results are then generalized in Sections~\ref{sec:polyn-time-appr},~\ref{sec:rounding}, and~\ref{sec:polyn-time-algor} to obtain an algorithm that runs in polynomial time.

\section{The standard LP relaxation  and its Lagrangian relaxation} 
\label{sec:prelim}
Here and in the remainder of the paper, we shall consider the
\emph{discrete} $k$-means problem, where we are given a discrete set
$\cF$ of facilities (corresponding to candidate centers).\footnote{As
discussed in the introduction, it is well-known that a
$\rho$-approximation algorithm for this case can be turned into a
$(\rho+\epsilon)$-approximation algorithm for the standard $k$-means
problem, for any constant
$\epsilon > 0$ (see \eg~\cite{DBLP:conf/compgeom/FeldmanMS07}).}  Henceforth, we
will simply refer to the discrete $k$-means problem as the $k$-means
problem.

Given an instance $(\cD, \cF, d, k)$ of the $k$-means problem or the $k$-median
problem, let $c(j,i)$ denote the connection cost of client $j$ if connected to
facility $i$.  That is, $c(j,i) = d(j,i)$ in the case of $k$-median and $c(j,i)
= d(j,i)^2$ in the case of $k$-means.  Let $n = |\cD|$ and $m = |\cF|$.

The standard linear programming (LP) relaxation of these problems has two sets of variables:
a variable $y_i$ for each facility $i\in \cF$ and a variable $x_{ij}$
for each facility-client pair $i\in \cF, j\in \cD$.  The intuition of these
variables is that $y_i$ should  indicate whether facility $i$ is opened
and $x_{ij}$ should indicate whether client $j$ is connected to facility $i$. The
standard LP relaxation can now be formulated as follows.
\ifshort
\begin{center}
  \begin{minipage}[t]{0.5\textwidth}
    \begin{alignat}{2}
     \textstyle \min \  \textstyle \sum_{i\in \cFF, j\in \cDD}&\   x_{ij} \cdot c(j,i) & \notag  \\
      \text{s.t. }  \textstyle \sum_{i\in \cFF} x_{ij} &\textstyle \geq  1 
      \quad \ \ \forall j \in \cDD \label{con:assclient}\\
      \textstyle x_{ij} &\textstyle\leq y_i \quad \ \ \forall j \in \cDD, i\in \cFF \label{con:assopen}     \end{alignat} 
  \end{minipage}
  \begin{minipage}[t]{0.25\textwidth}
    \begin{alignat}{2}
      \  \notag \\
                     \textstyle \sum_{i \in \cFF} y_{i} &\textstyle \le k \label{con:kbnd} \\
      \textstyle x,y & \geq 0\,. \label{con:nonneg}
    \end{alignat} 
  \end{minipage}
\end{center}
\else
\begin{center}
  \begin{minipage}[t]{0.55\textwidth}
    \begin{alignat}{2}
      \min & \quad & \sum_{i\in \cFF, j\in \cDD} x_{ij} \cdot c(j,i) & \notag  \\
      \text{s.t.} && \sum_{i\in \cFF} x_{ij} & \geq  1 
      \quad \ \ \forall j \in \cDD \label{con:assclient}\\
      && x_{ij} &\leq y_i \quad \ \ \forall j \in \cDD, i\in \cFF \label{con:assopen} \\[2mm]
      && \sum_{i \in \cFF} y_{i} & \le k \label{con:kbnd} \\
      && x,y & \geq 0\,. \label{con:nonneg}
    \end{alignat} 
  \end{minipage}
\end{center}
\fi
The first set of constraints says that each client should be connected to at least one
facility; the second set of constraints enforces that clients can only be connected to opened
facilities; and the third constraint says that at most $k$ facilities can be opened. 
We remark that this is a relaxation of the original problem as we have relaxed
the constraint that $x$ and $y$ should take Boolean values to 
a non-negativity constraint. For future reference, we let $\opt_k$ denote the
value of an optimal solution to this relaxation.

A main difficulty for approximating the $k$-median and the $k$-means  problems
is the hard constraint that at most $k$ facilities can be selected, i.e.,
constraint~\eqref{con:kbnd} in the above relaxation.  A popular way of
overcoming this difficulty, pioneered in this context by Jain and
Vazirani~\cite{DBLP:journals/jacm/JainV01}, is to consider the Lagrangian relaxation where we multiply the 
constraint~\eqref{con:kbnd} times a Lagrange multiplier $\lambda$ and move it to the
objective.  This results, for every $\lambda\geq 0$, in the following
relaxation and its dual that we denote by \lpf and \dualf, respectively.  
\begin{center}
\hspace*{-2ex}
\fbox{\begin{minipage}[t][3.5cm]{0.45\textwidth}
  \begin{center}\lpf
\begin{alignat}{2}
  \min & \quad & \sum_{i \in \cFF, j\in \cDD} x_{ij} \cdot c(j,i) + \lambda\cdot\left(\sum_{i\in \cFF} y_i - k\right) & \notag \\[4mm]
  \text{s.t.} && \mbox{\eqref{con:assclient},~\eqref{con:assopen}, and~\eqref{con:nonneg}.} \qquad \qquad \qquad &    \notag
\end{alignat} \end{center}
\end{minipage}}
\fbox{\begin{minipage}[t][3.5cm]{0.5\textwidth}
  \begin{center} \dualf
\begin{alignat}{2}
\max & \quad & \sum_{j\in \cDD} \alpha_j - \lambda \cdot k & \notag \\
\text{s.t.} && \sum_{j\in \cDD} [\alpha_j - c(j,i)]^+& \leq  \lambda \qquad \forall i\in\cFF \label{eq:facopen} \\
&& \alpha & \geq 0 \notag.
\end{alignat}
\end{center}
\end{minipage}}
\end{center}
Here, we have simplified the dual by noticing that the dual variables
$\{\beta_{ij}\}_{i\in \cFF, j\in \cDD}$ corresponding to the
constraints~\eqref{con:assopen} of the primal can always be set  $\beta_{ij}
= [\alpha_j - c(j,i)]^+$; the notation $[a]^+$ denotes $\max(a,0)$.
 Moreover, to see that \lpf remains a relaxation, note that
any feasible solution to the original LP is a feasible solution to  the
Lagrangian relaxation of no higher cost. In other words, for any $\lambda \geq 0$, the optimum value of \lpf is at most $\opt_k$. 

If we disregard the constant term $\lambda\cdot k$ in the objective functions, \lpf
and \dualf become the standard LP formulation and its dual for the facility
location problem where the opening cost  of each facility equals $\lambda$ and the connection costs are defined by $c(\cdot, \cdot)$.
Recall that the facility location problem (with uniform opening costs) is defined as the  problem of selecting a set $S \subseteq \cF$ of facilities to open so as to minimize the opening cost $|S| \lambda$ plus  the connection cost $\sum_{j\in \cDD} c(j, S)$. Jain and
Vazirani~\cite{DBLP:journals/jacm/JainV01} introduced the following method for addressing the $k$-median
problem motivated by simple economics. On the one hand,  if $\lambda$ is
selected to be very small, i.e., it is cheap to open facilities, then a good
algorithm for the facility location problem will open many facilities. On the
other hand, if $\lambda$ is selected to be very large, then a good algorithm
for the facility location problem will open few facilities. Ideally, we want to
use this intuition to find an opening price that leads to the opening of
exactly $k$ facilities and thus a solution to the original, constrained problem.

To make this intuition work, we need the notion of \emph{Lagrangian Multiplier
Preserving} (\lmp)  approximations: we say that a $\rho$-approximation
algorithm is \lmp for the facility location problem with opening costs $\lambda$ if it returns a solution $S\subseteq \cFF$ satisfying
\begin{align*}
  \sum_{j\in \cDD} c(j, S) \leq \rho (\lpopt(\lambda) - |S| \lambda)\,,
\end{align*}
where $\lpopt(\lambda)$ denotes the value of an optimal solution to \lpf without the constant
term $\lambda \cdot k$. The  importance of
this definition becomes apparent when either $\lambda = 0$ or $|S| \leq k$. In those cases, we
can see that the value of the $k$-median or $k$-means solution is at most
$\rho$ times the optimal value of its relaxation \lpf, and thus an $\rho$-approximation with respect to its standard LP relaxation  since $\lpopt(\lambda)
- k\cdot \lambda \leq \opt_k$ for any $\lambda \geq 0$.  For further explanations and applications of this technique, we refer the reader to the excellent text books~\cite{Vazirani:2001:AA:500776} and~\cite{Williamson:2011:DAA:1971947}.

\section{Exploiting Euclidean metrics via primal-dual algorithms}\label{PDalg}

In this section we show how to exploit the structure of Euclidean metrics to
achieve better approximation guarantees. Our \lmp approximation algorithm builds upon the 
primal-dual algorithm for the facility location problem by Jain and Vazirani~\cite{DBLP:journals/jacm/JainV01}.
We refer to their algorithm as the \jv algorithm.  The main modification to
their algorithm is that we allow for a more ``aggressive''  opening strategy of
facilities. The amount of aggressiveness is measured by the parameter $\delta$: we devise an algorithm \jvd  for  each
parameter $\delta \geq 0$, where a smaller $\delta$ results in a more aggressive opening strategy. 
 We first describe \jvd and 
we then optimize $\delta$ for the considered objectives to obtain the claimed approximation guarantees. 

We remark that the result in~\cite{DBLP:conf/esa/ArcherRS03} (non-constructively) upper bounds  the integrality gap of the
standard LP relaxation of $k$-median in terms of the \lmp approximation guarantee of $\jv$. This
readily generalizes to the $k$-means problem and $\jvd$. Consequently, our guarantees
presented here upper bound the integrality gaps as the theorems state in
the introduction.
\ifshort \vspace{-2mm} \fi
\subsection{Description of \jvd}
\label{sec:pddesc}
As alluded to above, the algorithm is a modification of \jv, and Remark~\ref{remark:difference} below highlights the difference. The algorithm consists of two phases:
the dual-growth phase and the pruning phase. 
\ifshort \vspace{-4mm} \fi
\paragraph{Dual-growth phase:} In this stage, we  construct a feasible dual
solution $\alpha$ to \dualf. Initially, we set $\alpha = \vec{0}$ and let $A
= \cD$ denote the set of active clients (which is all clients at first).  We
then repeat the following until there are no active clients, i.e.,  $A
= \emptyset$: increase the dual-variables $\{\alpha_j\}_{j\in A}$ corresponding
to the active clients at a uniform rate until one of the following events occur (if
several events happen at the same time, break ties arbitrarily):
\begin{description}
  \item[\textnormal{Event 1:}] A dual constraint $\sum_{j\in \cD} [\alpha_j - c(j,i)]^+ \leq
    \lambda$ becomes tight for a facility $i\in \cF$. In this case we say
    that facility $i$ is \emph{tight} or \emph{temporarily opened}. We
    update $A$ by removing the active clients with a \emph{tight edge} to
    $i$, that is, a client $j\in A$ is removed if $\alpha_j - c(j,i) \geq
    0$.  For future reference, we say that facility $i$ is the \emph{witness} of these removed
    clients.   \item[\textnormal{Event 2:}] An active client $j\in A$ gets a tight edge, i.e., $\alpha_j
    - c(j,i) = 0$, to some  already tight facility $i$. In this case, we remove
    $j$ from $A$ and let $i$ be its witness. \end{description}
This completes the description of the dual-growth phase. Before proceeding to
the pruning phase, let us remark that the constructed $\alpha$ is indeed
a feasible solution to \dualf by design. It is clear that $\alpha$ is
non-negative. Now consider a facility $i\in \cFF$ and its corresponding dual
constraint  $\sum_{j\in \cDD} [\alpha_j - c(j,i)]^+ \leq \lambda$. On the one hand, the
constraint is clearly satisfied if it never becomes tight during the
dual-growth phase.  On other hand, if it becomes tight, then all clients with
a tight edge to it are removed from the active set of clients by
Event~1. Moreover, if any client gets a tight edge to $i$ in
subsequent iterations it gets immediately removed from the set of active
clients by Event~2. Therefore the left-hand-side of the constraint
will never increase (nor decrease) after it becomes tight so the constraint
remains satisfied. Having proved that $\alpha$ is a feasible solution to \dualf, let us now describe the pruning phase.

\ifshort \vspace{-3mm} \fi
\paragraph{Pruning phase:} After the dual-growth phase (too) many facilities
are temporarily opened. The pruning phase will select a subset of these
facilities to open. In order to formally describe this process, 
 we need the following notation. For a client $j$, let $N(j) = \{i \in \cFF: \alpha_j - c(j,i) > 0\}$ denote the facilities to which client $j$ contributes to the opening cost. Similarly, for $i\in \cFF$, let $N(i) = \{j\in \cDD: \alpha_j - c(j,i) > 0\}$ denote the clients with a positive contribution toward $i$'s opening cost. 
For a temporarily opened facility $i$, let 
\ifshort \vspace{-3mm} \fi
\begin{align*}
  t_i =  \max_{j\in N(i)} \alpha_j\,,
\end{align*}
and by convention let $t_i = 0$ if $N(i) = \emptyset$ (this convention will be useful in future sections and will only be used when the opening cost $\lambda$ of facilities are set to $0$). 
Note that, if $N(i) \neq \emptyset$, then  $t_i$ equals the ``time'' that
facility $i$ was temporarily opened in the dual-growth phase. A crucial property of $t_i$ that follows from the construction of $\alpha$ is the following.
\begin{claim}
  \label{claim:twitness}
  For a client $j$ and its witness $i$, $\alpha_j \geq t_i$. Moreover, for any $j'\in N(i)$ we have $t_i \geq \alpha_{j'}$.
\end{claim}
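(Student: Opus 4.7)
The plan is to adopt a ``time'' viewpoint on the dual-growth phase: since every active client's $\alpha$-value grows at unit rate from $0$, at any moment during the algorithm the value $\alpha_{j'}$ of an active client $j'$ equals the current time, and it freezes at the moment $j'$ becomes inactive (whether by Event~1 or Event~2). Let $\tau_i$ denote the time at which facility $i$ becomes tight. The second assertion, $t_i \geq \alpha_{j'}$ for all $j' \in N(i)$, is immediate from the definition of $t_i$ as a maximum over $N(i)$, so the real content is the first assertion $\alpha_j \geq t_i$.

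First I would establish that $t_i = \tau_i$ whenever $N(i) \neq \emptyset$, by a two-sided argument. For the upper bound $\alpha_{j''} \leq \tau_i$ for every $j'' \in N(i)$: if $j''$ became inactive strictly before $\tau_i$, then $\alpha_{j''}$ was frozen at a value smaller than $\tau_i$; if $j''$ was still active at $\tau_i$ with $\alpha_{j''}(\tau_i) > c(j'',i)$, then Event~1 triggered by $i$ at time $\tau_i$ removes $j''$, freezing $\alpha_{j''} = \tau_i$; finally, if $j''$ was active at $\tau_i$ with $\alpha_{j''}(\tau_i) \leq c(j'',i)$, then $\alpha_{j''}$ cannot exceed $c(j'',i)$ subsequently, since Event~2 against the already-tight $i$ would deactivate $j''$ the moment $\alpha_{j''}$ reached $c(j'',i)$, so such a $j''$ is not in $N(i)$. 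For the matching lower bound, note that the left hand side of $i$'s dual constraint only grows because of active contributors; hence at time $\tau_i$ there must exist at least one active client $j''$ with $\alpha_{j''}(\tau_i) > c(j'',i)$ (otherwise the constraint cannot have been strictly increasing up to the positive value $\lambda$). This $j''$ is removed by Event~1 at $\tau_i$, belongs to $N(i)$, and has $\alpha_{j''} = \tau_i$.

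With $t_i = \tau_i$ in hand, the first assertion reduces to a short case split on how the witnessing $i$ deactivated $j$. If $j$ was removed via Event~1 at the tightening moment, then $\alpha_j = \tau_i = t_i$. If $j$ was removed via Event~2 at some time $\tau' \geq \tau_i$ (Event~2 requires $i$ to be already tight), then $j$ stayed active up to $\tau'$, so $\alpha_j = \tau' \geq \tau_i = t_i$. Finally, when $N(i) = \emptyset$ the convention $t_i = 0$ makes the first assertion trivial since $\alpha_j \geq 0$; as remarked, this case only arises when $\lambda = 0$, which is not relevant in this section.

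The main obstacle is resisting the temptation to define $t_i$ directly as the tightening time and instead showing that the given maximum over $N(i)$ actually equals $\tau_i$. The delicate point is excluding clients whose $\alpha$ ``overshoots'' $\tau_i$: this is exactly where Event~2 is essential, as it cuts off any would-be later contributor to $N(i)$ the instant it would become one.
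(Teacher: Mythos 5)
Your proof is correct and follows exactly the reasoning the paper intends (the paper asserts the claim without proof, noting only that $t_i$ equals the time $i$ was temporarily opened, which is precisely the identity $t_i=\tau_i$ you establish). The two-sided argument for $t_i=\tau_i$, the use of Event~2 to cut off late contributors, and the case split on how the witness deactivated $j$ are all sound.
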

For the pruning phase, it will be convenient to define the \emph{client-facility graph} $\Gcf$ and the \emph{conflict graph} $\Gf$. The vertex set of $\Gcf$ consist of all the clients and  all the facilities $i$ such that $\sum_{j\in \clients} [\alpha_{j} - c(j,i)]^+ = \lambda$ (\ie the tight or  temporarily open facilities). There is an edge between facility $i$ and client $j$ if $i \in N(j)$. The conflict graph $\Gf$ is defined based on the client-facility graph $\Gcf$ and $t$ as follows:
\ifshort \vspace{-2mm} \fi
\begin{itemize}\ifshort \itemsep0mm \fi
  \item The vertex set consists of all facilities in $\Gcf$.
  \item There is an edge between two facilities $i$ and $i'$ if some client $j$ is adjacent to both of them  in $\Gcf$ and $c(i,i') \leq \delta \min(t_i, t_{i'})$.
\end{itemize}
\ifshort \vspace{-2mm} \fi
The pruning phase now finds a (inclusion-wise) maximal independent set $\is$ of $\Gf$ and opens those 
facilities; clients are connected to the closest
facility in $\is$. 

\begin{remark}
  \label{remark:difference}
  The difference between the original algorithm \jv and our modified \jvd
  is the additional condition $c(i,i') \leq \delta \min(t_i, t_{i'})$ in the
  definition of the conflict graph. Notice that if we select a smaller $\delta$
  we will have fewer edges in $\Gf$. Therefore a maximal independent set will likely
  grow in size, which results in a more ``aggressive'' opening strategy. Adjusting $\delta$ will allow us to achieve better \lmp approximation guarantees.
\end{remark}

\subsection{Analysis of \jvd for the considered objectives}

In the following subsections, we optimize $\delta$ and analyze the guarantees
obtained by the algorithm $\jvd$  for the
objective functions: k-means objective in general metrics, standard $k$-means objective (in Euclidean metrics),   and k-median
objective in Euclidean metrics.  The first analysis is very similar to the
original \jv analysis and also serves as a motivation for the possible
improvements in Euclidean metrics.
\subsubsection{$k$-Means objective in general metrics}
We consider the case when $c(j,i) = d(j,i)^2$ and $d$ forms a general metric.
We let $\delta= \infty$ so \jvd becomes simply the \jv algorithm. We prove the following. \begin{theorem}
Let $d$ be any metric on $\clients \cup \facilities$ and suppose that $c(j,i) = d(j,i)^2$ for every $i \in \facilities$ and $j \in \clients$.  Then, for any $\lambda \geq 0$, Algorithm \jvp{\infty} constructs a solution $\alpha$ to \dualf and returns a set $\is$ of opened  facilities such that
  \begin{align*}
    \sum_{j\in \clients} c(j, \is)  \leq 9 \cdot  (\sum_{j\in \cDD} \alpha_j - \lambda |\is|)\,.
  \end{align*}
  \label{thm:genkmeans}
\end{theorem}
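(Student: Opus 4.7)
The plan is to follow the classical Jain--Vazirani style analysis, but with the squared-distance cost in mind, separating clients into directly and indirectly connected groups with respect to the returned independent set $\is$.

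First, I would classify the clients. Call a client $j$ \emph{directly connected} if some $i \in \is$ lies in $N(j)$, and \emph{indirectly connected} otherwise. The key structural observation is that, because $\delta = \infty$, the conflict graph $\Gf$ contains an edge between any two tight facilities sharing a common neighbor in $\Gcf$. Since $\is$ is an independent set of $\Gf$, the sets $\{N(i)\}_{i \in \is}$ must be pairwise disjoint, so the ``directly connected'' assignment $\sigma : D \to \is$ is well-defined. Using the fact that each $i \in \is$ is tight, $\sum_{j \in N(i)} (\alpha_j - c(j,i)) = \lambda$, I would sum over $i \in \is$ to get
\[
\sum_{j \in D} c(j, \sigma(j)) \;=\; \sum_{j \in D} \alpha_j \;-\; \lambda|\is|,
\]
and in particular $\lambda |\is| \le \sum_{j \in D} \alpha_j$.

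For an indirectly connected client $j$, I would use its witness $i'$ (temporarily opened but not in $\is$). By maximality of $\is$ there is some $i \in \is$ adjacent to $i'$ in $\Gf$, which by $\delta = \infty$ simply means some client $j'$ lies in $N(i) \cap N(i')$. Now I invoke the triangle inequality on the underlying metric $d$ three times: $d(j,i) \le d(j,i') + d(i',j') + d(j',i)$. Each of the three terms is bounded by $\sqrt{\alpha_j}$ or $\sqrt{\alpha_{j'}}$, using that $\alpha_j \ge c(j,i') = d(j,i')^2$ (witness bound), $\alpha_{j'} \ge c(j',i')$ and $\alpha_{j'} \ge c(j',i)$ (since $j' \in N(i) \cap N(i')$), and Claim~3.2 giving $\alpha_{j'} \le t_{i'} \le \alpha_j$. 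Squaring yields the crucial bound $c(j, \is) \le d(j,i)^2 \le 9\alpha_j$.

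Putting both cases together,
\[
\sum_{j \in \clients} c(j, \is) \;\le\; \sum_{j \in D} c(j, \sigma(j)) \;+\; 9\sum_{j \in I} \alpha_j \;=\; \Bigl(\sum_{j \in D} \alpha_j - \lambda |\is|\Bigr) + 9\sum_{j \in I} \alpha_j,
\]
and the inequality $\lambda |\is| \le \sum_{j \in D} \alpha_j$ proved above lets me relax the coefficient on the direct contribution from $1$ to $9$ at the cost of enlarging the subtracted $\lambda|\is|$ by a factor of $9$, giving the claimed bound $9(\sum_j \alpha_j - \lambda|\is|)$. The main obstacle to watch is the factor-$3$ triangle-inequality detour: it is the only place where the analysis degrades from the $k$-median ratio of $3$ to the $k$-means ratio of $9$, and it is also what prevents any further gain in the general-metric setting. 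Everything else, including the disjointness of the $N(i)$'s and the tightness bookkeeping, carries over unchanged from the standard \jv argument.
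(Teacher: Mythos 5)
Your proof is correct and follows essentially the same route as the paper's: the clique/disjointness consequence of $\delta=\infty$, the direct/indirect client split, and the three-hop triangle-inequality bound $c(j,\is)\le 9\alpha_j$, with the only difference being that the paper packages the accounting as a per-client inequality $c(j,\is)/9 \le \alpha_j - \sum_{i\in \is}[\alpha_j - c(j,i)]^+$ and sums, whereas you sum the two groups separately and absorb the factor $9$ globally via $\lambda|\is| \le \sum_{j\in D}\alpha_j$. The one detail you gloss over is the sub-case where an indirectly connected client's witness $i'$ already lies in $\is$ (your parenthetical asserts it is not in $\is$); there the bound is immediate since $\alpha_j = c(j,i')$, and the paper treats this explicitly.
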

\begin{proof}
  Consider any client $j \in \clients$. We shall prove that 
  \begin{align}
    \label{eq:genkmeans}
\frac{c(j, \is)}{9}   \leq \alpha_j - \sum_{i\in N(j)\cap \is} (\alpha_j - c(j,i))  =   \alpha_j - \sum_{i\in \is} [\alpha_j - c(j,i)]^+\,.
  \end{align}
  The statement then follows by summing up over all clients and noting that
  any facility $i\in \is$ was temporarily opened and thus we have $\sum_{j\in
  \cDD} [\alpha_j - c(j,i)]^+ = \lambda$.  
  
  To prove~\eqref{eq:genkmeans}, we first note that $|\is \cap N(j)| \leq 1$.
  Indeed, consider $i \neq i'  \in N(j)$. Then $(j,i)$ and $(j,i')$ are edges in the client-facility graph $\Gcf$ 
  and as $\delta = \infty$, $i$ and $i'$ are adjacent in the conflict graph
  $\Gf$. Hence, the temporarily opened facilities in $N(j)$ form
  a clique in $H$ and at most one of them can be selected in the maximal independent
  set $\is$.  We complete the analysis by considering the two cases $|\is \cap
  N(j)| = 1$ and $|\is \cap N(j)| = 0$.
  \begin{description}
    \item[\textnormal{Case $|\is \cap N(j)| = 1$:}] Let $i^*$ be the unique facility in $\is \cap N(j)$.  Then
      \begin{align*}
        \frac{c(j,\is)}{9} \leq c(j,\is) \leq c(j, i^*) = \alpha_j  - (\alpha_j - c(j,i^*)) = \alpha_j - \sum_{i\in N(j) \cap \is}(\alpha_j - c(j,i)).
      \end{align*}
      Notice the amount of slack in the above analysis (specifically, the first
      inequality).  In the Euclidean case, we exploit this slack for a more aggressive opening and to improve the
      approximation guarantee. 
    \item[\textnormal{Case $|\is \cap N(j)| = 0$:}]  Let $i_1$ be $j$'s
      witness.  First, if $i_1 \in \is$ then by the same arguments as above we have
      the desired inequality; specifically, since $j$ has a tight edge to $i_1$ but $i_1 \not\in N(j)$ we must have $\alpha_j = c(j, i_1)$.  Now consider the more interesting case when $i_1
      \not \in \is$. As $\is$ is a maximal independent set in $\Gf$, there
      must be a facility $i_2 \in \is$ that is adjacent to $i_1$ in $\Gf$.
      By definition of $H$, there is a client $j_1$ such that $(j_1, i_1)$ and $(j_1, i_2)$ are edges in the client-facility graph $\Gcf$, \ie  $j_1 \in N(i_1)
      \cap N(i_2)$.  By the definition of witness and $N(\cdot)$, we have  
      \begin{align*}
        \alpha_{j} \geq c(j,i_1,)\,, \qquad \alpha_{j_1} > c(j_1,i_1)\,,  \qquad \alpha_{j_1} > c(j_1,i_2)\,,
      \end{align*}
      and by the description of the algorithm (see Claim~\ref{claim:twitness} in Section~\ref{PDalg}) we have $\alpha_{j} \geq t_{i_1} \geq \alpha_{j_1}$. 
      Hence, using the triangle inequality and that $(a+b+c)^2 \leq 3 (a^2 + b^2 + c^2)$,
      \begin{align*}
        c(j,\is) &\leq c(j, i_2)  = d(j,i_2)^2  \leq (d(j,i_1) + d(j_1,i_1) + d(j_1, i_2))^2    \\
        & \leq 3(d(j,i_1)^2 + d(j_1,i_1)^2 + d(j_1, i_2)^2) \\
        & = 3(c(j,i_1) + c(j_1, i_1) + c(j_1, i_2)) \leq 9 \alpha_j\,.
      \end{align*}
      As $\sum_{i\in N(j) \cap \is}(\alpha_j - c(j,i)) = 0$, this completes the proof of this case   and thus the theorem.
      
  \end{description}

\end{proof}
\subsubsection{$k$-Means objective in Euclidean metrics}
\label{sec:euckmeans}
\begin{figure}[t!hb]
\centering
    \begin{tikzpicture}[scale=0.7]
  \begin{scope}
  \draw[fill=black!5!white] (0,0) circle (1.2cm);
  \draw[fill=black] (0,0) circle (3pt) node[below right] {\small $j$};
  \draw[dashed] (0,0) -- node[above] {\small $1$} (-1.2,0);
  \draw[fill=black!5!white] (-2.4,0) circle (1.2cm);
  \draw[fill=black] (-2.4,0) circle (3pt) node[below right] {\small $j_1$};
    \draw[dashed] (-2.4,0) -- node[above] {\small $1$} (-1.2,0);
  \draw[dashed] (-2.4,0) -- node[above] {\small $1$} (-3.6,0);
  \draw[fill=white] (-1.3,-0.1) rectangle (-1.1, 0.1) node[below right = 0.1cm and -0.05cm] {\small $i_1$};
  \draw[fill=black] (-3.7,-0.1) rectangle (-3.5, 0.1) node[below right = 0.1cm and -0.05cm] {\small $i_2$};
  \node at (-1.3, 2.5) {\begin{minipage}{6.5cm}\small \vspace{-6mm} \begin{center}\textbf{Worst case configuration} \end{center} \vspace{-2mm} 
\footnotesize     The clients and the facilities are arranged on a line and we have $c(i_2,j) = d(i_2,j)^2 =  9 \alpha_j$.\end{minipage}};
  \draw (-6.5, -1.7) rectangle (3.9, 4);
\end{scope}
  \begin{scope}[xshift=11cm]
  \draw[fill=black!5!white] (0,0) circle (1.2cm);
  \draw[fill=black] (0,0) circle (3pt) node[below right] {\small $j$};
  \draw[dashed] (0,0) -- node[above] {\small $1$} (-1.2,0);
  \draw[fill=black!5!white] (-2.4,0) circle (1.2cm);
  \draw[fill=black] (-2.4,0) circle (3pt) node[below right] {\small $j_1$};
    \draw[dashed] (-2.4,0) -- node[above] {\small $1$} (-1.2,0);
  \draw[dashed,rotate around={25:(-2.4,0)}] (-2.4,0) -- node[above] {\small $1$} (-3.6,0);
  \draw[fill=white] (-1.3,-0.1) rectangle (-1.1, 0.1) node[below right = 0.1cm and -0.05cm] {\small $i_1$};
  \draw[fill=black, rotate around={25:(-2.4,0)}] (-3.7,-0.1) rectangle (-3.5, 0.1) node[below right = 0.00cm and 0.05cm] {\small $i_2$};
  \node at (-1.3, 2.5) {\begin{minipage}{6.5cm}\small \vspace{-6mm} \begin{center}\textbf{Better
      case in Euclidean space} \end{center} \vspace{-2mm} \footnotesize The distance $d(j, i_2)$ is
    better than that the triangle inequality gives yielding a better
    bound.\end{minipage}};
  \draw (-6.5, -1.7) rectangle (3.9, 4);
\end{scope}
\end{tikzpicture}
   \caption{The intuition how we improve the guarantee in the Euclidean case. In both cases, we have $\alpha_j = \alpha_{j_1} = 1$. Moreover, $i_1\not \in \is, i_2 \in \is$ and we are interested in bounding $c(j,i_2)$ as a function of $\alpha_j$.}
  \label{fig:euc}
\end{figure}
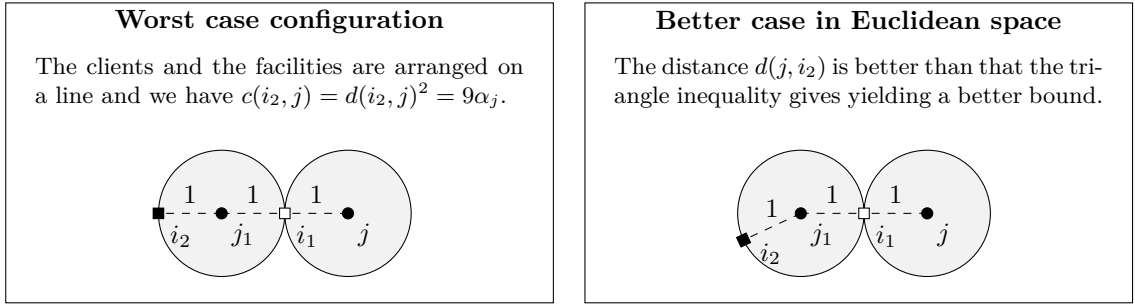
We start with some intuition that illustrates our approach.  From the standard  analysis of $\jv$ (and our analysis of $k$-means in general metrics), it is clear that the bottleneck for the approximation guarantee 
comes from the connection-cost analysis of clients that need to do a ``$3$-hop'' as illustrated in the left part of Figure~\ref{fig:euc}: client $j$ is connected to open facility $i_2$ and the squared-distance is bounded by the path $j-i_1-j_1-i_2$. 
Moreover, this analysis is tight when considering $\jv= \jvp{\infty}$.
Our strategy will now be as follows: Select $\delta$ to be a constant smaller than $4$. This means that in the configurations of Figure~\ref{fig:euc}, we will also open $i_2$ if the distance between $i_1$ and $i_2$ is close to $2$. Therefore, if we do not open $i_2$, the distance between $i_1$ and $i_2$ is less than $2$ (as in the right part of Figure~\ref{fig:euc}) which allows us to get an approximation guarantee better than $9$. However, this might result in a client contributing to the opening cost of many facilities in $\is$. Nonetheless, by using the properties of Euclidean metrics, we show that even in this case, we are able to achieve an $\lmp$ approximation guarantee with ratio better than $9$. 

Specifically, define $\dmean$ to be the constant larger than $2$ that minimizes  
  \begin{align*}
    \rmean (\delta) = \max\left\{ (1+\sqrt{\delta})^2, \frac{1}{\delta/2-1}\right\}\,, 
  \end{align*}
which will be our approximation guarantee. It can be verified that  $\dmean
\approx 2.3146$ and $\rmean \approx 6.3574$. Let also $c(j,i) = d(j,i)^2$ where $d$ is the underlying Euclidean metric. The proof uses the following basic facts about squared-distances in Euclidean metrics: given $x_1, x_2, \dots, x_s \in \mathbb{R}^\ell$, we have that $\min_{y \in \mathbb{R}^\ell} \sum_{i=1}^s \|x_i - y\|_2^2$ is attained by the \emph{centroid} $\mu = \tfrac{1}{s} \sum_{i=1}^s x_i$ and in addition we have the identity $\sum_{i=1}^s \|x_i - \mu\|_2^2 = \tfrac{1}{2s} \sum_{i=1}^s \sum_{j=1}^s \|x_i - x_j\|_2^2$.
\begin{theorem}
Let $d$ be a Euclidean metric on $\clients \cup \facilities$ and suppose that $c(j,i) = d(j,i)^2$  for every $i \in \facilities$ and $j \in \clients$.  Then, for any $\lambda \geq 0$, Algorithm \jvp{\dmean} constructs a solution $\alpha$ to \dualf and
  returns a set $\is$ of opened  facilities such that
  \begin{align*}
    \sum_{j\in \clients} c(j, \is)  \leq \rmean \cdot  (\sum_{j\in \cDD} \alpha_j - \lambda |\is|)\,.
  \end{align*}
  \label{thm:euckmeans}
\end{theorem}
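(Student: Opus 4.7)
The plan is to mirror the proof of Theorem~\ref{thm:genkmeans}: establish a per-client inequality
\[
c(j,\is) \leq \rmean \Bigl( \alpha_j - \sum_{i \in N(j) \cap \is}(\alpha_j - c(j,i)) \Bigr)
\]
for every $j \in \clients$, sum over $j$, and use that each $i \in \is$ was temporarily opened and hence satisfies $\sum_{j} [\alpha_j - c(j,i)]^+ = \lambda$, which collapses the double sum on the right to $\lambda |\is|$. The qualitatively new feature in the Euclidean setting is that with a finite $\delta$ a single client $j$ may contribute to many opened facilities, so $s := |N(j) \cap \is|$ can exceed $1$. I would therefore split the per-client analysis into three cases: $s=0$, $s=1$, and $s \geq 2$.

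The case $s = 1$ is immediate, exactly as in the general-metric proof: the unique $i^* \in N(j) \cap \is$ yields $c(j,\is) \leq c(j,i^*)$, which equals the right-hand side. For $s = 0$, I would follow the $3$-hop argument. If $j$'s witness $i_1$ lies in $\is$, then $i_1 \notin N(j)$ combined with the witness property forces $\alpha_j = c(j,i_1)$, which already suffices. Otherwise, maximality of $\is$ in $\Gf$ produces an $i_2 \in \is$ adjacent to $i_1$, and with it a client $j_1 \in N(i_1) \cap N(i_2)$ and the crucial bound $d(i_1,i_2)^2 \leq \delta\,\min(t_{i_1}, t_{i_2}) \leq \delta\,\alpha_j$, the last step using $\alpha_j \geq t_{i_1}$ from Claim~\ref{claim:twitness}. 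In contrast to the general metric, I would apply the ordinary triangle inequality to $d$ itself (not $d^2$),
\[
d(j,i_2) \leq d(j,i_1) + d(i_1,i_2) \leq \sqrt{\alpha_j} + \sqrt{\delta\,\alpha_j} = (1+\sqrt{\delta})\sqrt{\alpha_j},
\]
so $c(j,\is) \leq (1+\sqrt{\delta})^2 \alpha_j$ against a right-hand side of $\alpha_j$, which is precisely the first expression inside the $\max$ defining $\rmean(\delta)$.

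The main obstacle, and the step that genuinely uses Euclidean geometry, is $s \geq 2$. For any two distinct $i_a, i_b \in N(j) \cap \is$, non-adjacency in $\Gf$ together with $j \in N(i_a) \cap N(i_b)$ (which forces $t_{i_a}, t_{i_b} \geq \alpha_j$) yields $d(i_a,i_b)^2 > \delta\,\alpha_j$. Letting $\mu$ be the centroid of $\{i_1,\dots,i_s\}$, the centroid identity recalled just before the theorem gives
\[
\sum_{a=1}^{s} c(j,i_a) = \sum_{a=1}^{s} d(j,i_a)^2 \geq \sum_{a=1}^{s} d(\mu,i_a)^2 = \frac{1}{2s}\sum_{a,b} d(i_a,i_b)^2 > \frac{s-1}{2}\,\delta\,\alpha_j.
\]
Consequently the right-hand side of the target inequality satisfies
\[
\alpha_j - \sum_{a=1}^{s}(\alpha_j - c(j,i_a)) = \sum_{a=1}^{s} c(j,i_a) - (s-1)\alpha_j > (s-1)\bigl(\delta/2 - 1\bigr)\alpha_j \geq (\delta/2 - 1)\,\alpha_j,
\]
while $c(j,\is) \leq c(j,i_a) < \alpha_j$ for any single $a$, giving a ratio at most $1/(\delta/2 - 1)$, the second expression in $\rmean(\delta)$.

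Combining the three cases bounds the per-client ratio by $\max\{(1+\sqrt{\delta})^2,\,1/(\delta/2-1)\} = \rmean(\delta)$; summing over $j$ and choosing $\delta = \dmean$ (which balances the two branches) yields the claimed $\rmean \approx 6.3574$. The only delicate step is $s \geq 2$: in a non-Euclidean metric one can have many pairwise far-apart opened facilities close to $j$, and extracting the required lower bound on $\sum_a c(j,i_a)$ relies precisely on the centroid identity that is unavailable outside $\mathbb{R}^\ell$.
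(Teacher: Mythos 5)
Your proposal is correct and follows essentially the same route as the paper's proof: the same case split on $s=|N(j)\cap\is|$, the same centroid identity combined with the non-adjacency bound $c(i,i')>\delta\alpha_j$ for $s\ge 2$, and the same witness-plus-conflict-edge argument with the unsquared triangle inequality for $s=0$. The algebra in the $s\ge 2$ case is only a rearrangement of the paper's, so there is nothing further to add.
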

\begin{proof}
  To simplify notation, we use $\delta$ instead of $\dmean$  throughout the proof.
    Consider any client $j\in \clients$. We shall prove that
  \begin{align*}
    \frac{c(j, \is)}{\rmean} \leq \alpha_j - \sum_{i\in N(j)\cap \is} (\alpha_j - c(j,i)) =  \alpha_j - \sum_{i\in \is} [\alpha_j - c(j,i)]^+ \,.
  \end{align*}
  Similarly to the proof of Theorem~\ref{thm:genkmeans}, the statement then follows by summing up over all clients.
  A difference compared to the standard analysis of \jv is that in
  our algorithm we may open several facilities in $N(j)$, i.e., client $j$
  may contribute to the opening of several facilities. 
  We divide our
  analysis into the three cases $|N(j) \cap \is| = 1$, $|N(j) \cap \is| >  1$, and $|N(j) \cap \is|
  = 0$. For brevity, let $S$ denote $N(j) \cap \is$ and $s = |S|$.
  \begin{description}
    \item[\textnormal{Case $s = 1$:}]  If we let $i^*$ be the unique facility
      in $S$,
    \begin{align*}
        \frac{c(j,\is)}{\rmean} \leq c(j,\is) \leq c(j, i^*) = \alpha_j  - (\alpha_j - c(j,i^*)) = \alpha_j - \sum_{i\in N(j) \cap \is}(\alpha_j - c(j,i))\,.
      \end{align*}
                \item[\textnormal{Case $s > 1$:}] In this case, there are multiple
      facilities in $\is$ that $j$ is contributing to.  We need to show that
      $\alpha_j - \sum_{i\in S} (\alpha_j-c(j,i))  \geq  \frac{1}{\rmean}
      c(j,\is)$.

      The sum $\sum_{i\in S} c(j,i)$  is the sum of square distances from $j$ to facilities in $S$ which is at least the sum of square distances of these facilities from their centroid $\mu$, i.e., $\sum_{i\in S} c(j,i) \geq \sum_{i\in S} c(i, \mu)$. Moreover, by the identity, $\sum_{i\in S} c(i,\mu) = \frac{1}{2s}\sum_{i \in S}\sum_{i' \in S} c(i,i')$, we get
\begin{equation*} 
        \sum_{i\in S} c(j,i) \geq \frac{1}{2s}\sum_{i \in S} \sum_{i' \in S}c(i,i')\,.
\end{equation*}
      As there is no edge between any pair of distinct facilities $i$ and $i'$ in $S\subseteq \is$, we must have \[c(i,i') > \delta\cdot\min(t_i,t_{i'}) \ge \delta\cdot\alpha_j,\]
where the last inequality follows because $j$ is contributing to both $i$ and $i'$ and hence  $\min(t_i, t_{i'}) \geq \alpha_j$.  By the above, 
\begin{align*}\sum_{i\in S} c(j,i) \geq \frac{\sum_{i \in S} \sum_{i' \in S}c(i,i')}{2s} \geq \frac{\sum_{i \in S}\sum_{i' \neq i \in S} \delta\cdot \alpha_j}{2s} = \delta \cdot \frac{s-1}{2} \cdot \alpha_j\,.
\end{align*}
Hence,
 \[ 
   \sum_{i\in S}(\alpha_j - c(j,i)) \leq \bigg(s - \delta \cdot \frac{s-1}{2}\bigg)\alpha_j = \Big(s \big(1-\tfrac{\delta}{2}\big) + \tfrac{\delta}{2}\Big)\alpha_j\,.
\] 
Now, since $\delta \geq 2$ the above upper bound is a non-increasing function of $s$. Therefore, since $s\geq 2$ we always have 
 \begin{equation}\label{okmeanslast1}
\sum_{i\in S} (\alpha_j - c(j,i)) \leq \big(2-\tfrac{\delta}{2}\big)\alpha_j\,.
\end{equation}
We also know that $\alpha_j > c(j,i)$ for any $i\in S$. Therefore, $\alpha_j > c(j,\is)$ and, since $\delta \ge 2$:
\begin{equation}\label{okmeanslast2}
\big(\tfrac{\delta}{2} - 1\big)c(j,\is) \leq \big(\tfrac{\delta}{2} - 1\big)\alpha_j\,.
\end{equation} 
Combining Inequalities~\eqref{okmeanslast1} and~\eqref{okmeanslast2},
\[
\sum_{i\in S} (\alpha_j - c(j,i)) + \big(\tfrac{\delta}{2} - 1\big)c(j,\is) \leq \big(2-\tfrac{\delta}{2}\big)\alpha_j  + \big(\tfrac{\delta}{2}-1\big)\alpha_j = \alpha_j\,.
\]
We conclude the analysis of this case by
rearranging the above inequality and recalling that $\rmean \geq
\frac{1}{\delta/2-1}$.

 \item[\textnormal{Case $s = 0$:}] Here, we claim that there exists a tight facility $i$ such that
   \begin{align}
     \label{eq:maineqkmeans}
    d(j,i) + \sqrt{\delta t_i}  \leq (1+\sqrt{\delta})\sqrt{\alpha_j} \,.
   \end{align}
To see that such a facility $i$ exists, consider the witness $w(j)$ of $j$.
   By Claim~\ref{claim:twitness}, we have $\alpha_j \geq t_{w(j)}$ and since $j$ has a tight edge to its witness $w(j)$,  $\alpha_j \geq c(j,w(j) = d(j,w(j))^2$; or, equivalently,  $\sqrt{\alpha_j} \geq \sqrt{t_{w(j)}}$ and $\sqrt{\alpha_j} \geq d(j,w(j))$ which implies that there is a tight facility, namely $w(j)$,  satisfying~\eqref{eq:maineqkmeans}. 

   Since $\is$ is a maximal independent set of $\Gf$, either $i\in \is$, in which case $d(j,\is) \leq d(j,i)$, or there is an $i'\in \is$ such that the edge $(i',i)$ is in $\Gf$, in which case
   \begin{align*}
     d(j,\is) \leq d(j,i) + d(i,i') \leq d(j,i) + \sqrt{\delta t_i}\,,
   \end{align*}
   where the second inequality follows from $d(i,i')^2 =c(i,i') \leq \delta \min(t_i,
   t_{i'})$ by the definition of $\Gf$. In any case, we have by~\eqref{eq:maineqkmeans}
   \begin{align*}
d(j, \is) \leq (1+\sqrt{\delta})\sqrt{\alpha_j} \,.
   \end{align*}
   Squaring both sides and recalling that $\rmean \geq (1+\sqrt{\delta})^2$ completes the last case and
   the proof of the theorem.
  \end{description}
\end{proof}

\subsubsection{$k$-Median objective in Euclidean metrics}
We use a very similar approach as the one for $k$-means (in Euclidean metrics) to address the $k$-median objective in Euclidean metrics. In this section, we have $c(j,i) = d(j,i)$, i.e., the distances are \emph{not} squared. Define,  
\begin{align*}
  \dmed = \sqrt{\tfrac{8}{3}} \qquad \mbox{and} \qquad  \rmed=  1+ \sqrt{\tfrac{8}{3}} = \max\left(1+\dmed,\, 1/(\dmed-1),\, 1/\big(\tfrac{3}{2}\dmed -2\big)\right)\,.
\end{align*}
We have $\dmed \approx 1.633$ and $\rmed \approx 2.633$.
\begin{theorem}
Let $d$ be a Euclidean metric on $\clients \cup \facilities$ and suppose that $c(j,i) = d(j,i)$  for every $i \in \facilities$ and $j \in \clients$.  Then, for any $\lambda \geq 0$, Algorithm \jvp{\dmed} constructs a solution $\alpha$ to \dualf and
  returns a set $\is$ of opened  facilities such that
  \begin{align*}
    \sum_{j\in \clients} c(j, \is)  \leq \rmed \cdot  (\sum_{j\in \cDD} \alpha_j - \lambda |\is|)\,.
  \end{align*}
  \label{thm:euckmed}
\end{theorem}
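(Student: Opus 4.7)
The plan is to follow the same template as the proof of Theorem~\ref{thm:euckmeans}: I will establish, for each client $j$, the per-client bound
\[
  \frac{c(j,\is)}{\rmed} \leq \alpha_j - \sum_{i \in N(j) \cap \is}(\alpha_j - c(j,i))\,,
\]
and then sum over all $j \in \clients$, using $\sum_{j}[\alpha_j - c(j,i)]^+ = \lambda$ for each opened $i \in \is$ to recover the \lmp guarantee.

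The main new ingredient, and the place where Euclidean structure is essential, is the observation that the choice $\dmed = \sqrt{8/3}$ forces $s := |N(j) \cap \is|$ to be at most $3$. Indeed, if $i \in N(j)$ then $d(j,i) = c(j,i) < \alpha_j$, so all facilities in $S := N(j) \cap \is$ lie in the open ball $B(j,\alpha_j)$. For any distinct $i,i' \in S$, the edge $(i,i')$ is absent from $\Gf$ even though $j$ is a common $\Gcf$-neighbor, so $d(i,i') > \dmed \min(t_i,t_{i'}) \geq \dmed\alpha_j$ by Claim~\ref{claim:twitness} applied with $j \in N(i) \cap N(i')$. The classical packing fact that $n$ points in a Euclidean ball of radius $R$ have minimum pairwise distance at most $R\sqrt{2n/(n-1)}$, attained by a regular $(n-1)$-simplex inscribed in the sphere (and equal to $R\sqrt{8/3}$ for $n = 4$), then rules out four facilities in $B(j,\alpha_j)$ with strict pairwise distance exceeding $\sqrt{8/3}\,\alpha_j$, forcing $|S| \leq 3$. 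I expect this packing/Jung-type step to be the only delicate geometric ingredient; the remainder of the argument is pure triangle-inequality bookkeeping.

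With $s \leq 3$ the analysis splits into four cases. The case $s = 1$ is immediate, as in Theorem~\ref{thm:euckmeans}. For $s = 0$, which dictates the $\rmed \geq 1 + \dmed$ constraint, I let $w$ be $j$'s witness: either $w \in \is$ (and $c(j,\is) \leq d(j,w) \leq \alpha_j$ suffices), or by maximality of $\is$ some $i \in \is$ is $\Gf$-adjacent to $w$, giving
\[
  d(j,\is) \leq d(j,w) + d(w,i) \leq \alpha_j + \dmed\,t_{w} \leq (1+\dmed)\alpha_j = \rmed\alpha_j\,,
\]
where I used $\alpha_j \geq t_w$ from Claim~\ref{claim:twitness}; the absence of squaring compared to $k$-means is exactly what replaces the $(1+\sqrt{\dmean})^2$ factor by $1+\dmed$. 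For $s = 2$ with $S = \{i_1,i_2\}$, the triangle inequality gives $c(j,i_1)+c(j,i_2) \geq d(i_1,i_2) > \dmed\alpha_j$, whence $\alpha_j - \sum_{i \in S}(\alpha_j - c(j,i)) > (\dmed - 1)\alpha_j$; combining with $c(j,\is) < \alpha_j$ and $\rmed \geq 1/(\dmed-1)$ closes the case. For $s = 3$, I sum the triangle inequality over all three pairs in $S$ to get $2\sum_{i\in S} c(j,i) \geq \sum_{i \neq i' \in S} d(i,i') > 3\dmed\alpha_j$, so $\alpha_j - \sum_{i \in S}(\alpha_j - c(j,i)) > (3\dmed/2 - 2)\alpha_j$; combining with $c(j,\is) < \alpha_j$ and $\rmed \geq 1/(3\dmed/2 - 2)$ finishes the last case and the theorem.
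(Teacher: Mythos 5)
Your proposal is correct and follows essentially the same route as the paper: per-client case analysis on $s = |N(j)\cap\is|$, with the bound $s\leq 3$, and the identical triangle-inequality bookkeeping for $s=0,1,2,3$ against the three constraints defining $\rmed$. The only superficial difference is that you invoke the Jung-type packing bound $D \leq R\sqrt{2n/(n-1)}$ to get $s\leq 3$, whereas the paper derives the same inequality directly from the centroid identity $\sum_{i\in S}\|x_i-\mu\|^2 = \tfrac{1}{2s}\sum_{i,i'}\|x_i-x_{i'}\|^2$ — these are the same computation, since that identity is the standard proof of the packing fact.
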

\begin{proof}
  To simplify notation, we use $\delta$ instead of $\dmed$  throughout the proof. Similar to the proof of the previous theorem, we proceed by considering a single client $j$ and prove 
  \begin{equation*}
\frac{c(j, \is)}{\rmed} \leq \alpha_j - \sum_{i\in N(j)\cap \is} (\alpha_j - c(j,i)) = 
\alpha_j - \sum_{i\in \is} [\alpha_j - c(j,i)]^+\,.
  \end{equation*} Let $S$ denote $N(j) \cap \is$ and $s = |S|$. We again proceed by case distinction on $s$. We first bound the number of cases.
  \begin{claim}
    We have $s\leq 3$. 
  \end{claim}
  \begin{proof}
      Using the centroid property of squared distances in Euclidean space, 
      \begin{equation*} 
      \sum_{i\in S}d(j,i)^2 \geq \frac{\sum_{i\in S}\sum_{i' \in S}d(i,i')^2}{2s}
=  \frac{\sum_{i\in S}\sum_{i' \neq i \in S}d(i,i')^2}{2s}
 > \frac{\delta^2 (s-1)\alpha_j^2 }{2}\,,
      \end{equation*}
      where the last inequality follows from the fact that each pair of facilities $i,i'\in S \subseteq \is$ are not adjacent in $\Gf$ so $d(i,i') > \delta \min(t_i, t_{i'})$ and  $\min(t_i, t_{i'}) \geq \alpha_j$ since $i, i' \in S \subseteq N(j)$. 
      Since the left-hand-side is upper bounded by $s \alpha_j^2$, we get $s > \frac{\delta^2 (s-1)}{2}$. Therefore $s < \frac{\delta^2}{\delta^2-2} = 4 $.
  \end{proof}
  We now proceed by considering the cases $s=0,1,2,3$.
  \begin{description}
    \item[\textnormal{Case $s = 0$:}] Consider the witness $i_1$ of $j$. We have $\alpha_j \geq t_{i_1}$ by Claim \ref{claim:twitness} and also $\alpha_j \geq c(j,i_1) = d(j,i_1)$. Since $\is$ is a maximal independent set of $\Gf$, either $i_1\in \is$, in which case $c(j,\is) = d(j,\is) \leq d(j,i_1) \leq \alpha_j$, or there is an $i_2\in \is$ such that the edge $(i_1,i_2)$ is in $\Gf$, in which case
   \begin{align*}
     c(j,\is) = d(j,\is) \leq d(j,i_1) + d(i_1,i_2) \leq d(j,i_1) + \delta t_{i_1} \leq (1+ \delta) \alpha_j\,.
   \end{align*}
   In any case, we have $c(j,\is)/\rmed \leq \alpha_j$ as required.
    \item[\textnormal{Case $s = 1$:}] If we let $i^*$ be the unique facility
      in $S$,
      \begin{align*}
        \frac{c(j,\is)}{\rmed} \leq c(j,\is) \leq c(j, i^*) = \alpha_j  - (\alpha_j - c(j,i^*)) = \alpha_j - \sum_{i\in N(j) \cap \is}(\alpha_j - c(j,i))\,.
      \end{align*}
    \item[\textnormal{Case $s = 2$:}]
      Let $S= \{i^*_1, i^*_2\}$. We have
      \begin{align*}
      2\alpha_j & =  c(j,i_1^*) + c(j,i^*_2) + (\alpha_j - c(j,i_1^*)) + (\alpha_j - c(j,i_2^*)) \\
      & \geq c(i_1^*,i_2^*) +  \sum_{i\in S} (\alpha_j - c(j,i))  \\ &\geq  \delta \alpha_j + \sum_{i\in S} (\alpha_j - c(j,i))\,,
      \end{align*}
      where we used the triangle inequality and that $c(i_1^*, i_2^*) > \delta
      \min(t_{i^*_1}, t_{i^*_2}) \geq \delta \alpha_j$ since both $i_1^*$ and $i_2^*$ are in $S$ and hence $i_1^*$ and $i_2^*$ are not adjacent in $\Gf$. 
           Rearranging the above inequality noting that $\alpha_j \geq c(j,\is)$, we have
\[
(\delta-1) c(j,\is) \leq \alpha_j - \sum_{i \in S} (\alpha_j - c(j,i)),
\]
and the case follows because $\rmed \geq 1/(\delta-1) $.

    \item[\textnormal{Case $s = 3$:}]
        Similar to the previous case,  
                \begin{align*}
          3\alpha_j &= \sum_{i\in S}c(j,i) + \sum_{i\in S} (\alpha_j - c(j,i)) \\
          & \geq\frac{1}{2}\sum_{\{i,i'\} \subseteq S}c(i,i') + \sum_{i\in S} (\alpha_j - c(j,i)) \\
          & \geq\frac{3\delta }{2}\cdot \alpha_j + \sum_{i\in S} (\alpha_j - c(j,i))\,,
        \end{align*}
        using the triangle inequality. Rearranging the above inequality noting that $\alpha_j \geq c(j,\is)$, we have
\[
\left(\tfrac{3\delta}{2}-2\right)c(j,\is) \leq \alpha_j - \sum_{i \in S} (\alpha_j - c(j,i))\,
\] and the lemma follows because $\rmed \geq 1/(\frac{3\delta}{2} - 2)$.
   \end{description}

\end{proof}

\section{Quasi-polynomial time algorithm}
\label{sec:quasi}
In this section, we present a quasi-polynomial time approach that turns the \lmp approximation algorithms presented in the previous section into approximation algorithms for the original problems ($k$-means and $k$-median), i.e., into algorithms that find solutions  satisfying  the strict constraint that at most $k$ facilities are opened. This is achieved  by only deteriorating the approximation guarantee by an arbitrarily small factor regulated by $\epsilon$.  We also
 introduce several of the ideas used in the polynomial time approach. Although
 the results obtained in this section are weaker (quasi-polynomial instead of
 polynomial), we believe that  the easier quasi-polynomial algorithm serves as
 a good starting point before reading the more complex 
 polynomial time algorithm.  From now on, we concentrate on the $k$-means problem and we let $\rho = \rmean$ denote the approximation guarantee and $\delta = \dmean$ denote the parameter to our algorithm, where $\rmean$ and
$\dmean$ are defined as in Section~\ref{sec:euckmeans} (it will be clear that the techniques presented here are easily applicable to the other considered objectives, as well).  Throughout this section we fix $\aeps> 0$ to be a small constant, and we assume for notational convenience and without loss of generality that $n\gg 1/\aeps$.  We shall also assume that the distances satisfy the following:
\begin{lemma}\label{lem:dis}
  By losing a factor $(1+100/n^2)$ in the approximation guarantee, we can assume that the squared-distance between any client $j$ and any facility $i$ satisfies: $1 \leq d(i,j)^2 \leq n^6$, where $n=|\cD|$.
\end{lemma}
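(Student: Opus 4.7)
The plan is a standard preprocessing reduction: first rescale the instance using a crude approximation, then truncate the squared-distances both from above and below. I would first compute, in polynomial time, a constant-factor approximation $\widetilde{\mathrm{OPT}}$ to $\opt_k$ (\eg using the $(9+\epsilon)$-local search algorithm of Kanungo \etal, or bootstrapping Theorem~\ref{thm:genkmeans}), so that $\opt_k \le \widetilde{\mathrm{OPT}} \le c \cdot \opt_k$ for some absolute constant $c$. Rescaling every squared-distance by the factor $n^3/\widetilde{\mathrm{OPT}}$, we may assume $\widetilde{\mathrm{OPT}} = n^3$, and in particular $\opt_k \ge n^3/c$; this rescaling preserves approximation ratios.

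Next, truncate from above by replacing each squared-distance $d(i,j)^2$ with $\min(d(i,j)^2,\, n^6)$. Since any connection used in the optimum has squared-distance at most $\opt_k \le n^3 < n^6$, no optimal connection is affected, so the optimum of the modified instance equals the original $\opt_k$. Moreover, any $O(1)$-approximate solution returned by an algorithm on the modified instance has total cost well below $n^6$, so it uses no pair whose squared-distance was truncated; hence its cost in the original instance coincides with its cost in the modified instance. Then truncate from below by replacing each squared-distance with $\max(d(i,j)^2,\, 1)$. This can only increase distances, and since each of the $n$ clients is connected to a single facility, the cost of any feasible solution grows by at most $n$ in total. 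Consequently the optimum of the fully modified instance lies in $[\opt_k,\, \opt_k + n]$.

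Since $\opt_k \ge n^3/c$, the relative error introduced by the rounding is at most $n/\opt_k \le c/n^2$, which is bounded by $100/n^2$ for an appropriate choice of constant-factor algorithm and large enough $n$. Thus any $\rho$-approximation on the modified instance, whose squared-distances now lie in $[1, n^6]$, yields a $\rho(1 + 100/n^2)$-approximation to the original $k$-means instance, as claimed. The main subtlety is the degenerate case $\widetilde{\mathrm{OPT}} = 0$: every client then coincides with some facility, and the $k$-means problem is solved exactly in polynomial time by selecting any $k$ facilities covering the client set, so the lemma is vacuous. A minor point to verify is the absence of circularity: the initial constant-factor approximation can be obtained from any previously known polynomial-time algorithm for $k$-means in general metrics (\eg Kanungo \etal's local search), so Theorem~\ref{thm:euckmeans} is not needed here.
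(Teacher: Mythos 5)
Your proposal correctly reproduces the paper's argument for the \emph{general metric} case: rescale using a crude constant-factor approximation so that $\opt_k = \Theta(n^3)$, cap the distances from above (no near-optimal solution ever uses a capped pair), raise them to at least $1$ from below (adding at most $n \le (100/n^2)\opt_k$ to any solution's cost), and check that the resulting function is still a metric. That part is sound, and your handling of the degenerate case $\widetilde{\mathrm{OPT}}=0$ and of circularity is fine.

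The genuine gap is that the lemma is invoked for the \emph{Euclidean} $k$-means algorithm, and your truncation destroys the Euclidean structure. After replacing $d(i,j)^2$ by $\min(\max(d(i,j)^2,1),n^6)$, the points no longer embed in $\mathbb{R}^{\ell}$ with these as squared Euclidean distances, yet the analysis of \jvp{\dmean} (Theorem~\ref{thm:euckmeans}, ``Case $s>1$'') relies essentially on the centroid identity $\sum_{i}\|x_i-\mu\|_2^2 = \tfrac{1}{2s}\sum_i\sum_{j}\|x_i-x_j\|_2^2$, which is a genuinely Euclidean fact. So the modified instance cannot be fed to the algorithm whose guarantee you are trying to preserve. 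The paper spends the bulk of its proof on exactly this point: after rescaling, it partitions $\cD\cup\cF$ into clusters that are pairwise far apart (squared distance $\Omega(n)\cdot\opt$), discards facilities in no cluster, translates each cluster so its centroid is at the origin, and then \emph{adds new coordinates} --- one per cluster with value $n^2$ to re-separate the clusters, and one final coordinate distinguishing facilities from clients to enforce the lower bound of $1$. This yields an honest point set in a higher-dimensional Euclidean space whose squared distances lie in $[1,n^6]$ and whose near-optimal solutions coincide with those of the original instance. Your write-up needs an analogous geometric construction (or an explicit restriction of the lemma to general metrics) to be complete.
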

\noindent The proof follows by standard discretization techniques and is presented in Appendix~\ref{appdistances}. 

Our algorithm will produce a $(\rho + O(\aeps))$-approximate solution.  In the algorithm, we consider separately the two phases of the primal-dual
algorithm from Section~\ref{sec:euckmeans}. Suppose that the first phase produces a set of values $\alpha
= \{\alpha_j\}_{j \in \cDD}$ satisfying the following definition:
\begin{definition}
\label{def:quasi-good-soln}
A feasible solution $\alpha$ of $\dualf[\lambda]$ is \emph{good} if for every
$j \in \cDD$ there exists a tight facility $i$ such that
$(1+\sqrt{\delta} + \aeps) \sqrt{\smash[b]{\alpha_j}} \ge d(j,i) + \sqrt{\smash[b]{\delta t_i}}$.
\end{definition}
\noindent Recall that for a dual solution $\alpha$, $t_i$ is defined to be the largest $\alpha$-value out of all clients that are contributing to a facility $i$: $t_i = \max_{j\in N(i)} \alpha_j$ where $N(i) = \{j\in \clients: \alpha_j - d(i,j)^2 > 0\}$.

As the condition of Definition~\ref{def:quasi-good-soln} relaxes~\eqref{eq:maineqkmeans} by a tiny amount (regulated
by $\epsilon$), our analysis in Section~\ref{PDalg} shows that as long as the
first stage of the primal-dual algorithm produces an $\alpha$ that is good, the
second stage will find a set of facilities $\is$ such that $\sum_{j\in \clients} d(j,\is)^2 = \sum_{j \in
\clients}c(j,\is) \le (\rho + O(\epsilon))\big(\sum_{j \in \cDD}\alpha_j
- \lambda|\is|\big)$.  If we could somehow find a value $\lambda$ such that the
second stage opened \emph{exactly} $k$ facilities, then we would obtain
a $(\rho + O(\epsilon))$-approximation algorithm.  In order to accomplish this, we first enumerate all potential values $\lambda = 0,1\cdot\zeps,2 \cdot \zeps,\ldots, L\cdot \zeps$, where $\zeps$ is a small step size and $L$ is large enough to guarantee that we eventually find a solution of size at most $k$ (for a precise definition of $L$ and $\zeps$, see~\eqref{eq:quasi-step} and~\eqref{eq:quasiL}).  Specifically, in Section~\ref{sec:quasi-polyn-time-algor}, we give an algorithm that in time $n^{O(\aeps^{-1} \log n)}$ generates a 
quasi-polynomial-length sequence of solutions
$\alphat{0},\alphat{1},\ldots,\alphat{L}$, where $\alphat{\ell}$ is a
good solution to $\dualf[\ell\cdot \zeps]$.  We shall ensure that each consecutive set of
values $\alphat{\ell}, \alphat{\ell+1}$ are \emph{close} in the following
sense:
\begin{definition}
\label{def:quasi-close-sequence}
Two solutions $\alpha$ and $\alpha'$ are \emph{close} if $|\alpha'_j
- \alpha_j| \le \frac{1}{n^2}$ for all $j \in \cD$.
\end{definition}

Unfortunately, it may be the case that for a good solution $\alphat{\ell}$ to $\dualf[\lambda]$, the second stage of our algorithm opens more than $k$ facilities, while for the next good solution $\alphat{\ell+1}$ to $\dualf[\lambda+\zeps]$, it opens fewer than $k$ facilities.  In order to obtain a solution that opens \emph{exactly} $k$ facilities, we must somehow interpolate between consecutive solutions in our sequence.  In Section~\ref{sec:quasi-rounding} we describe an algorithm that accomplishes this task.  Specifically, for each pair of consecutive solutions $\alphat{\ell},\alphat{\ell+1}$ we show that, since their $\alpha$-values are nearly the same, we can control the way in which a maximal independent set in the associated conflict graphs changes.  Formally, we show how to maintain a sequence of approximate integral solutions with cost bounded by $\alphat{\ell}$ and $\alphat{\ell+1}$, in which the number of open facilities decreases by at most one in each step.  This ensures that some solution indeed opens exactly $k$ facilities and it will be found in time $n^{O(\aeps^{-1} \log n)}$.  

\subsection{Generating a sequence of close, good solutions}
\label{sec:quasi-polyn-time-algor}
We first describe our procedure for generating a close sequence of good solutions. Select the following parameters
\begin{align}
  \zeps &= n^{-3-10 \log_{1+\aeps} n}  \label{eq:quasi-step}\,,\\
  L & = 4n^{7} \cdot \zeps^{-1} = n^{O(\aeps^{-1}  \log n)} \,. \label{eq:quasiL}
\end{align}
We also use the notion of \emph{buckets} that partition the real line:
\begin{definition}
  \label{def:quasi-buckets}
  For any value $v\in \mathbb{R}$, let  \ifshort
  $B(v) = 0$, if $v<1$, and $B(v)$ = $1+\lfloor \log_{1+\epsilon}(v)\rfloor$ if $v \geq 1$.
  \else
  \begin{align*}
    B(v) = \begin{cases}
      0 & \mbox{if $v<1$}, \\
      1+\lfloor \log_{1+\epsilon}(v)\rfloor  & \mbox{if $v\geq 1$}.
    \end{cases}
  \end{align*}
  \fi
We say that $B(v)$ is the index of the \emph{bucket containing $v$}. 
\end{definition}
\noindent
The buckets will be used to partition the $\alpha$-values of the clients. As, in every constructed solution $\alpha$, each client will have a tight edge to a facility,  Lemma \ref{lem:dis} implies that $\alpha_j$ will always be at least $1$. Therefore, the definition gives the property that the $\alpha$-values of any two clients $j$ and $j'$  in the same bucket differ by at most a factor of $1+\epsilon$. In other words, the buckets will be used to classify the clients according to similar $\alpha$-values.

We now describe a procedure \quasisweep that takes as input a good dual solution $\alphain$ of $\dualf$ and outputs a good dual solution $\alphaout$ of  $\dualf[\lambda + \zeps]$ such that $\alphain$ and $\alphaout$ are close.  
In order to generate the desired close sequence of solutions, 
we first define an initial solution for $\dualf[0]$ by $\alpha_j = \min_{i\in
\facilities} d(i,j)^2$ for $j\in \clients$.  Then, for $0 \le \ell < L$, we call \quasisweep with $\alphain = \alphat{\ell}$ to generate the next solution $\alphat{\ell+1}$ in our sequence.  We shall show that each $\alphat{\ell}$ is a feasible dual solution of $\dualf[\ell \cdot \zeps]$, and that the following invariant holds throughout the generation of our sequence:
\begin{invariant}
  \label{inv:quasi}
  In every solution $\alpha = \alphat{\ell}$, $(0 \le \ell \le L)$, every client $j\in \clients$ has a tight edge to  a tight facility $w(j) \in \facilities$ (its witness) such that $B(t_{w(j)}) \leq B(\alpha_j)$. 
\end{invariant}
Note that this implies that each solution in our sequence is good.  Indeed, consider a dual solution $\alpha$ that satisfies Invariant~\ref{inv:quasi}. Then, for any client $j$, we have some $i$ ($= w(j)$) such that $\sqrt{\alpha_j} \geq d(i,j)$ (since $j$ has a tight edge to $w(j)$) and
$\sqrt{(1+\epsilon)\delta\alpha_j} \geq \sqrt{\delta t_{i}}$ where we used that $B(\alpha_j) \geq B(t_i)$ implies $(1+\aeps)\alpha_j \geq t_i$.
Hence, 
\ifshort
$  (1+ \sqrt{\delta} + \aeps) \sqrt{\alpha_j} \geq \left(1 + \sqrt{(1+\aeps) \delta}\right)\sqrt{\alpha_j} \geq d(i,j) + \sqrt{\delta t_i}
  $,
\else
\begin{align*}
  (1+ \sqrt{\delta} + \aeps) \sqrt{\alpha_j} \geq \left(1 + \sqrt{(1+\aeps) \delta}\right)\sqrt{\alpha_j} \geq d(i,j) + \sqrt{\delta t_i}\,,
\end{align*}
\fi
and so $\alpha$ is good (here, for the first inequality we have used that $\sqrt{1+\aeps} \leq 1+\aeps/2$ and $\sqrt{\delta} \leq 2$).  We observe that our initial solution $\alphat{0}$ has $t_i = 0$ for all $i \in \facilities$, and so Invariant~\ref{inv:quasi} holds trivially.  In our following analysis, we will show that each call to \sweep preserves Invariant~\ref{inv:quasi}.

\subsubsection{Description of \quasisweep} We now formally describe the procedure \quasisweep that, given the last previously generated solution $\alphain$ in our sequences produces the solution $\alphaout$ returned next.

We initialize the algorithm by
setting $\alpha_j = \alphain_j$ for each $j\in \clients$ and by increasing the
opening prices of each facility from $\lambda$ to $\lambda+\zeps$. At this
point, no facility is tight and therefore the solution $\alpha$ is not a good
solution of $\dualf[\lambda+\zeps]$. We now describe how to modify $\alpha$ to obtain a solution $\alphaout$ satisfying Invariant~\ref{inv:quasi} (and hence into a good solution). The algorithm will maintain a current set $A$ of active clients  and a current threshold $\theta$. Initially, $A= \emptyset$, and $\theta = 0$. We slowly increase $\theta$ and whenever $\theta = \alpha_j$ for some client $j$, we add $j$ to $A$. While $j\in A$, we increase $\alpha_j$ at the same rate as $\theta$. We remove a client $j$ from $A$, whenever the following occurs:
\begin{itemize}
  \item[] $j$ has a tight edge to some tight facility $i$ with $B(\alpha_j) \geq B(t_i)$.  In this case, we say that $i$ is the \emph{witness} of $j$.
\end{itemize}
Note that if a client $j$ satisfies this condition when it is added to $A$, then we remove $j$ from $A$ immediately after it is added.  In this case, $\alpha_j$ is not increased.

Increasing the $\alpha$-values for clients in $A$, may cause the contributions to some facility $i$ to exceed the opening cost $\lambda+ \zeps$.  To prevent this from happening, we also decrease every value $\alpha_{j}$ with $B(\alpha_{j}) > B(\theta)$ at a rate of $|A|$ times the rate that $\theta$ is increasing.  Observe that while there exists any such $j \in N(i)$, the total contribution of the clients toward opening this $i$ cannot increase, and so $i$ cannot become tight.  It follows that once any facility $i$ becomes tight, $B(\alpha_{j}) \le B(\theta)$ for every $j \in N(i)$ and so $i$ is presently a witness for all clients $j \in N(i)\cap A$.  At this moment all such clients in $N(i) \cap A$ will be removed from $A$ and their $\alpha$-values will not subsequently be changed.  Thus, $i$ remains tight until the end of \quasisweep.  Moreover, observe any other client $j'$ that is added to $A$ later will immediately be removed from $A$ as soon as it has a tight edge to $i$.  Thus, neither $t_i$ nor the total contribution to $i$ change throughout the remainder of \quasisweep.  In particular, $i$ remains a witness for all such clients $j$ for the remainder of \quasisweep.

We stop increasing $\theta$ once every client $j$ has been added and removed from $A$. The procedure \quasisweep then terminates and outputs $\alphaout = \alpha$.  As we have just argued, the contributions to any tight facility can never increase, and every client that is removed from $j$ will have a witness through the rest of \quasisweep (in particular, in $\alphaout$).  Thus, $\alphaout$ is a feasible solution of $\dualf[\lambda+\zeps]$ in which every client $j$ has a witness $w(j)$, i.e., $j$ has a tight edge to the tight facility $w(j)$ and $B(t_{w(j)}) \leq B(\alpha_j)$. Hence,  the output of \sweep always satisfies Invariant~\ref{inv:quasi}.

This completes the description of \quasisweep. For a small example of its
execution see Figure~\ref{fig:quasiseq}. We now show that the produced sequence of solutions is close and to analyze the running time.

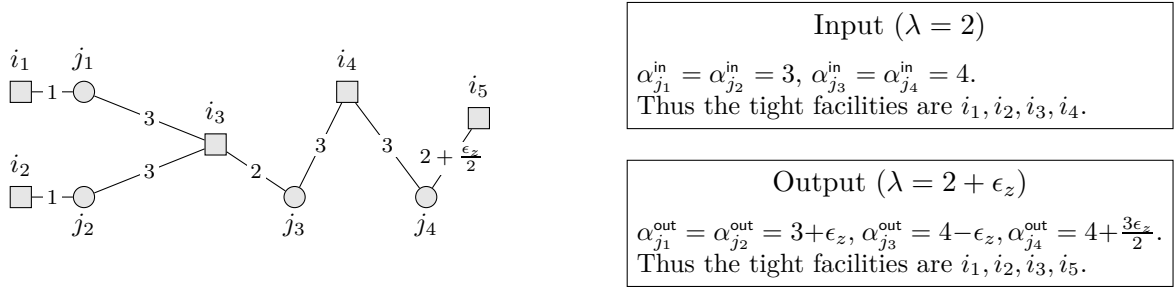
\begin{figure}[ht]
  \centering
  \begin{tikzpicture}
    \begin{scope}[scale=0.7]
      \node[draw=black] at (15, 1.5) {\begin{minipage}{7cm}\begin{center} Input $(\lambda = 2)$\end{center}\vspace{-2mm} \small $\alphain_{j_1} = \alphain_{j_2} = 3$, $\alphain_{j_3} = \alphain_{j_4} = 4$. \\Thus the tight facilities are $i_1, i_2, i_3, i_4$.\end{minipage}};
      \node[draw=black] at (15, -1.5) {\begin{minipage}{7cm}\begin{center} Output $(\lambda = 2+\zeps)$\end{center}\vspace{-2mm} \small $\alphaout_{j_1} = \alphaout_{j_2} = 3+\zeps$, $\alphaout_{j_3} = 4-\zeps, \alphaout_{j_4} = 4+\tfrac{3\zeps}{2}$. Thus the tight facilities are $i_1, i_2, i_3, i_5$.\end{minipage}};
      \node[ssquare, label={\small $i_1$}] (i1a) at (-1.7,1) {} ;
      \node[ssquare, label={\small $i_2$}] (i1b) at (-1.7,-1) {} ;
      \node[svertex, label={\small $j_1$}] (j1) at (-0.5,1) {};
      \node[svertex, label={[yshift=-0.8cm]\small $j_2$}] (j2) at (-0.5, -1) {};
      \node[ssquare, label={\small $i_3$}] (i2) at (2, 0) {};
      \node[svertex, label={[yshift=-0.8cm]\small $j_3$}] (j3) at (3.5, -1) {};
      \node[ssquare, label={\small $i_4$}] (i3) at (4.5, 1) {};
      \node[svertex, label={[yshift=-0.8cm]\small $j_4$}] (j4) at (6, -1) {};
      \node[ssquare, label={\small $i_5$}] (i4) at (7, 0.5) {};
      \draw (i1a) edge node[fill=white,inner sep=0.5mm] {\scriptsize $1$} (j1);
      \draw (i1b) edge node[fill=white,inner sep=0.5mm] {\scriptsize $1$} (j2);
      \draw (i2) edge node[fill=white,inner sep=0.5mm] {\scriptsize $3$} (j1) edge node[fill=white,inner sep=0.5mm] {\scriptsize $3$} (j2) edge node[fill=white,inner sep=0.5mm] {\scriptsize $2$} (j3);
      \draw (i3) edge node[fill=white,inner sep=0.5mm] {\scriptsize $3$} (j3) edge node[fill=white,inner sep=0.5mm] {\scriptsize $3$} (j4);
      \draw (i4) edge node[fill=white,inner sep=0.5mm] {\scriptsize $2+\tfrac{\zeps}{2}$} (j4);
    \end{scope}

  \end{tikzpicture}
  \caption{The instance has $4$ clients and $5$ facilities depicted by circles
  and squares, respectively. The number on an edge is the squared-distance of that edge and the squared-distances that are not depicted are all of value $5$. Given the input solution $\alphain$ with
$\lambda= 2$, \quasisweep proceeds as follows. First the opening prices
of facilities are increased to $2+\zeps$. Next the clients $j_1,j_2$ are added
to the set $A$ of active clients when the threshold $\theta = 3$.  Then, until $\theta = 3+\zeps$, $\alpha_{j_1}$ and $\alpha_{j_2}$ increase at a uniform rate while the (significantly) larger dual values $\alpha_{j_3}$ and $\alpha_{j_4}$ are decreasing $|A| =2$ times that rate.  
At the point $\theta = 3+\zeps$, both $i_1$ and $i_2$
become tight and the witnesses of $j_1$ and $j_2$ respectively. This
causes these clients to be removed from $A$ which stops their increase and the decrease of the larger values. When $\theta = 4-2\zeps$, $j_3$ and $j_4$ are added to $A$ and they start to increase at a uniform rate. Next, the facility $i_3$ becomes tight when $\theta = 4-\zeps$ and client $j_3$ is removed from $A$ with $i_3$ as its witness. Finally, $j_4$ is removed from $A$ when  $\theta = 4+3\zeps/2$ at which point $i_5$ becomes tight and its witness. }
  \label{fig:quasiseq}
\end{figure}

\subsubsection{Closeness and running time}
We begin by showing that \quasisweep produces a close sequence of solutions.

\begin{lemma}
  For each client $j \in \clients$, we have $|\alphain_j  - \alphaout_j| \leq 1/n^2$.
  \label{lem:quasi-close}
\end{lemma}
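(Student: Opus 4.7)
My plan is to analyze the piecewise-linear trajectory of each $\alpha_j$ during \quasisweep and show that its total variation is $\poly(n)\cdot\zeps$, which by the choice $\zeps = n^{-3-10\log_{1+\aeps}n}$ in~\eqref{eq:quasi-step} is far below $1/n^2$. From the algorithm's description, $\alpha_j$ can evolve only in four ways: it decreases at rate $|A|$ (with respect to $\theta$) while it is in the \emph{high} state $B(\alpha_j) > B(\theta)$; it is stationary while $B(\alpha_j) = B(\theta)$ and $\theta < \alpha_j$; it increases at rate $1$ while $j \in A$; and it is permanently stationary once $j$ has been removed from $A$. Monotonicity of $\theta$, combined with the fact that $j$ is added to $A$ only when $\theta = \alpha_j$, ensures that $j$ cannot re-enter either the high state or $A$ after leaving it, so $\alphaout_j - \alphain_j$ equals the total active-phase increase minus the total high-phase decrease.

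It thus suffices to bound the $\theta$-length of $j$'s active and high intervals. Multiplying by the respective rates ($1$ in the active phase and $|A|\le n$ in the high phase), an upper bound of $\poly(n)\cdot\zeps$ on the $\theta$-lengths yields the desired bound on $|\alphaout_j - \alphain_j|$. A key enabling observation is that a facility $i$ can only become tight at an instant when $N(i)$ contains no client in the high state: otherwise the net rate of change of $i$'s contribution is $|N(i)\cap A| - |A|\cdot|N(i)\cap\{j': B(\alpha_{j'})>B(\theta)\}| \leq 0$, so the contribution cannot grow to $\lambda + \zeps$. This implies that at the instant $j$'s eventual witness $w$ becomes tight, every $j' \in N(w)$ lies in bucket $\leq B(\theta)$, so $B(t_w) \le B(\theta) = B(\alpha_j)$ holds automatically and $j$ is removed at that instant; in particular $j$'s active interval ends exactly when $w$ tightens.

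The main obstacle is then to bound the $\theta$-lengths of the high and active intervals by $\poly(n)\cdot\zeps$, since a naive bound using the global range of $\theta$ gives up to $\max_j\alphain_j \le n^6$ (by Lemma~\ref{lem:dis}), which is far too large. My plan is to partition the sweep into maximal phases within which $A$ and the set of high clients are unchanged and argue that the total $\theta$-length across all such phases is $\poly(n)\cdot\zeps$. Each phase ends at one of a bounded number of discrete events: a facility tightening ($\leq m$ overall, since tightened facilities stay tight), a client joining or leaving $A$ ($\leq 2n$ overall), some $\alpha_{j'}$ or $\theta$ crossing a bucket boundary (at most $O(n\aeps^{-1}\log n)$ overall). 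Within each phase I would charge the $\theta$-growth either against the $\zeps$-of-contribution budget of whichever facility is next to tighten (using that the net contribution rate is at least $1$ once all high neighbors have been cleared), or against the closing gap between $\theta$ and the next client to be added to $A$ (using the bucket structure to prevent pathological fragmentation). Combined with the polynomial bound on the number of phases, this yields the required $\poly(n)\cdot\zeps$ bound on the total $\theta$-length and completes the argument.
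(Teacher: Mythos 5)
Your structural description of the trajectory of $\alpha_j$ (decrease in the high phase, stationary, increase while active, frozen after removal from $A$) is accurate, and your observation that a facility can only become tight at a moment when none of its contributing clients is in the high state matches the paper's argument. But the central quantitative claim of your plan --- that the total variation of each $\alpha_j$ is $\poly(n)\cdot\zeps$ --- is wrong, and the charging scheme you sketch cannot establish it. The deficit that an active client $j$ must fill before its witness $w$ re-tightens is not just the $\zeps$ price increase: it is $\zeps$ plus the total amount by which the contributions of the other clients in $\Nin(w)$ have been drained, and each of those clients may have been decreased by up to $|A|\le n$ times the total increase of clients in strictly lower buckets. Summing over the up to $n-1$ neighbors, the deficit at bucket $b$ is roughly $n^{2}$ times the maximum increase at buckets $<b$, so the maximum increase obeys a recursion of the form $I_b \approx n^{O(1)} I_{b-1} + \zeps$, i.e.\ $I_b = \zeps\, n^{O(b)}$. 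Since there are $\Theta(\log_{1+\aeps} n)$ buckets, the worst-case total change is $\zeps\, n^{O(\log n)}$ --- quasi-polynomial in $n$, not polynomial. This is exactly why the paper sets $\zeps = n^{-3-10\log_{1+\aeps} n}$ rather than $n^{-O(1)}$, and why Sections 5--7 exist at all: if your $\poly(n)\cdot\zeps$ bound held, the quasi-polynomial algorithm would already run in polynomial time.

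The paper's proof makes this amplification explicit via an induction on the bucket index $b$: it shows that any $\alpha_j$ can increase by at most $\zeps n^{3b}$ while $B(\theta)\le b$, with the inductive step splitting the increase into (i) recovering the amount by which $\alpha_j$ was previously decreased (at most $n$ times the inductive bound for lower buckets) and (ii) refilling the witness's drained contribution (at most $n-1$ times that same quantity, plus $\zeps$). Your proposal leaves precisely this step --- bounding the $\theta$-lengths of the active and high intervals --- as a plan, and the plan as stated (charging against a $\zeps$ budget per facility, or against the gap to the next client joining $A$) does not account for the recursive, per-bucket amplification of facility deficits. To repair the argument you would need to replace the $\poly(n)\cdot\zeps$ target with a bucket-indexed bound such as $\zeps n^{3b}$ and carry out the induction over buckets.
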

\begin{proof}
  We first note that the largest $\alpha$-value at any time is at most
  $(\lambda + \zeps)+  n^6 \leq L \zeps + n^6 = 4n^{7} + n^6 \leq 5 n^{7}$. This
  follows from the feasibility of $\alpha$ because, by Lemma~\ref{lem:dis}, no
  squared-distance is larger than $n^6$ and the opening cost of any facility is
  at most $\lambda+\zeps \leq L\zeps$.  Hence, $B(\alpha_j) \leq 1 + \lfloor
  \log_{1+\aeps}(5n^7) \rfloor \leq 10 \log_{1+\aeps}(n)$ for any client $j$
  and dual solution $\alpha$.

We now prove the following claim:
\begin{claim*}
Any $\alpha_j$ can increase by at most $\zeps n^{3b}$ while $B(\theta) \le b$.  
\end{claim*}
\begin{proof} The proof is by induction on $b= 0, 1,\dots, 10\log_{1+\aeps}(n)$.
\paragraph{Base case $b=0$:}   
This case is trivially true because there are no clients $j$ with $B(\alpha_j)  = 0$, and so no clients can have been added to $A$ while $B(\theta) = 0$. Indeed, any client $j$ had a tight edge to some facility in $\alphain$, which by Lemma~\ref{lem:dis} implies $\alphain_j\geq 1$, and a client's $\alpha$-value can decrease only while some smaller $\alpha$-value is increasing.

\paragraph{Inductive step (assume true for $0,1,\dots, b-1$ and prove for $b$):}       
Now, we suppose some $\alpha_j$ is increasing while $B(\theta) \le b$.  
Note that we then must have $\alpha_j = \theta$.  Let $i$ be the witness of $j$ in $\alphain$, and let $\Nin(i)$ be the set of clients contributing to $i$ in $\alphain$.  We  further suppose that $\alpha_j$ is increased by at least $\zeps$ while $B(\theta) \le b$; otherwise the claim follows immediately, since $\zeps \leq n^{3b} \zeps$ for all $b \geq 0$.

First, suppose that $\alpha_j < \alphain_j$ and so $\alpha_j$ was previously decreased by \quasisweep. Moreover, since $\alpha_j$ has increased by at least $\zeps$ while $B(\theta) \le b$, we must have previously decreased $\alpha_j$ while $B(\alpha_j) \le b$.  In particular, at the last moment $\alpha_j$ was decreased, we must have had $B(\alpha_j) \le b$, and since $\alpha_j$ was decreasing at this moment, we also had $B(\theta) < B(\alpha_j)$.  Then, $\alpha_j$ was decreased only while $B(\theta) < b$.  Moreover, during this time, $j$'s $\alpha$-value was decreased at a rate of $|A|$, and so was decreased (in total) at most $n$ times the largest amount that any other $\alpha_{j'}$ was increased.  By the induction hypothesis, any $\alpha_{j'}$ was increased at most $\zeps \cdot n^{3b-3}$ while $B(\theta) < b$, and so $\alpha_j$ was decreased  at most $\zeps \cdot n^{3b-2}$.  Thus, after $\alpha_j$ increases by at most $\zeps \cdot n^{3b-2}$ we will again have $\alpha_j = \alphain_j$.

Now, we consider how much $\alpha_j$ may increase while $\alpha_j \ge \alphain_j$ (and still $B(\theta) \leq b$).  For each $j' \in \Nin(i)$ we must have initially had $B(\alphain_{j'}) \le B(\alphain_j)$ since $i$ is a witness for $j$ in $\alphain$.  Additionally, by our assumptions in this case, $B(\alphain_j) \leq B(\alpha_j) \leq b$.  Thus, the $\alpha$-value of any $j' \in \Nin(i)$ was decreased by \quasisweep only while $B(\theta) \le b-1$ and so by the same argument as above the $\alpha$-value of any $j' \in \Nin(i)$ can decrease at most $\zeps n^{3b-2}$ throughout \quasisweep.  Thus, the total contribution to $i$ from all $j' \neq j$ can decrease at most $(n - 1) \cdot \zeps \cdot n^{3b - 2}$.  After increasing $\alpha_j$ at most $(n - 1) \cdot \zeps\cdot n^{3b-2} + \zeps$, $i$ will again be tight.  
Moreover, at this moment, any client $j'$ contributing to $i$ was either already added to $A$ (and potentially also removed) in which case $\alpha_{j'} \leq \theta = \alpha_j$ or it was not already added to $A$ in which case $\alpha_{j'} \leq \alphain_{j'}$ since $\alpha_{j'}$ has not been increased yet. In either case, $B(\alpha_{j'})  \leq B(\alpha_j)$ so $i$ is a witness for $j$, and $j$ will be removed from~$A$.

Altogether, the total amount $\alpha_j$ can increase while $B(\theta) \le b$ is then the sum of these two increases, which is $ \zeps \cdot n^{3b-2} + (n-1) \cdot \zeps \cdot n^{3b - 2} + \zeps \le \zeps \cdot n^{3b}$, as required.
\end{proof}

The claim immediately bounds the increase $\alphaout_j - \alphain_j$ by $\tfrac{1}{n^3} \leq \tfrac{1}{n^2}$ as 
required (recall that $\zeps = n^{-3-10 \log_{1+\aeps} n }$)).  Moreover, as shown in the proof of the claim above, the $\alpha$-value of every client decreases by no more than  $n$ times the maximum increase in the $\alpha$-value of any client.  Then, the desired bound $\tfrac{1}{n^2}$ on $\alphain_j - \alphaout_j$ follows as well.
\end{proof}

For the sake of clarity, we have presented the \quasisweep procedure in a continuous fashion.  We show in Appendix~\ref{sec:quasi-runtime} how to implement \quasisweep as a discrete algorithm running in polynomial time.  We conclude the analysis of this section by noting that, as  \sweep is repeated $L = n^{O(\aeps^{-1} \log n)}$
times, the total running time for producing the sequence $\alphat{0}, \alphat{1}, \dots, \alphat{L}$ is $n^{O(\aeps^{-1} \log n)}$.

\subsection{Finding a solution of size $k$}
\label{sec:quasi-rounding}

In this section we describe our algorithm for finding a solution of $k$ facilities
given a close sequence $\alphat{0}, \alphat{1}\dots, \alphat{L}$, where
$\alphat{\ell}$ is a good solution to $\dualf[\zeps \cdot \ell]$.

We associate with each dual solution $\alphat{\ell}$ a client-facility
graph and a conflict graph that are defined in  exactly the same way as in Section~\ref{sec:pddesc}: that is, the graph $\Gcft{\ell}$ is a bipartite graph with all of $\clients$ on one side and every tight facility in $\alphat{\ell}$ on the other and $\Gcft{\ell}$ contains the edge $(j,i)$ if and only if $\alphat{\ell}_j > c(j,i)$.  Given each $\Gcft{\ell}$, recall that $\Gft{\ell}$ is then a graph consisting of the facilities present in $\Gcft{\ell}$, which contains an edge $(i,i')$ if  $i$ and $i'$ are both adjacent to some client $j$ in $\Gcft{\ell}$ and $c(i,i') \leq \delta \min(t^{(\ell)}_i, t^{(\ell)}_{i'})$, where for each $i$, we have $t^{(\ell)}_i = \max \{ \alphat{\ell}_j : \alphat{\ell}_j > c(j,i)\}$ (and again we adopt the convention that $t^{(\ell)}_i = 0$ if $\alphat{\ell}_j \leq c(j,i)$ for all $j \in \clients$).
Thus, we have a sequence $\Gcft{0}, \dots, \Gcft{L}$  of client-facility graphs and
a sequence $\Gft{0}, \dots, \Gft{L}$ of conflict graphs obtained from our
sequence of dual solutions.  The main goal of this section is to give
a corresponding sequence  of maximal independent sets
of the conflict graphs so that the size of the solution (independent set) never
decreases by more than $1$ in this sequence.     Unfortunately, this is not quite possible.
Instead,  starting with a maximal independent set $\ist{\ell}$ of $\Gft{\ell}$,
we shall slowly transform it into a maximal independent set $\ist{\ell+1}$ of
$\Gft{\ell+1}$ by considering maximal independent sets in a sequence of
polynomially many intermediate conflict graphs $\Gft{\ell} = \Gft{\ell, 0}
, \Gft{\ell,1}, \dots, \Gft{\ell, p_\ell} = \Gft{\ell+1}$. We shall refer to
these independent sets as $\ist{\ell} = \ist{\ell, 0}, \ist{\ell,1}, \dots,
\ist{\ell, p_\ell} = \ist{\ell+1}$. This interpolation will allow us to ensure
that the size of our independent set decreases by at most $1$ throughout this
sequence.  It follows that at some point
we find a solution $\is$ of size exactly $k$: on the one hand,  since $\Gft{0}$ contains all facilities and no edges we have $\ist{0} = \cF$ , which by assumption is
strictly greater than $k$. On the other hand, we must have $|\ist{L}| \leq 1$. Indeed,  as $\alphat{L}$
is a good dual solution of $\dualf[L\zeps] = \dualf[4 n^7]$, we claim $\Gft{L}$ is a clique. First, note that any tight facility $i$ in $\alphat{L}$ has $t_i \geq \frac{L\zeps}{n}=4n^6$ which means that all clients have a tight edge to $i$ when $i$ becomes tight (since the maximum squared facility-client distance is $n^6$ by Lemma~\ref{lem:dis}). Second, any two facilities $i,i'$ have $d(i,i')^2 \leq 4n^6$ using the triangle inequality and facility-client distance bound. Combining these two insights, we can see that $\Gft{L}$ is a clique and so $|\ist{L}| \leq 1$.

It remains to describe and analyze the procedure $\quasigraphupdate$ that will
perform the interpolation between two conflict graphs $\Gft{\ell}$ and
$\Gft{\ell+1}$ when given a maximal independent set $\ist{\ell}$ of
$\Gft{\ell}$ so that $|\ist{\ell}| >k$.  We run this procedure at most $L$
times starting with $ \Gft{0}, \Gft{1}$, and $\ist{0} = \cF$  until we find a solution
of size $k$.

\subsubsection{Description of \quasigraphupdate}
\label{sec:descquasigraphupdate}
\begin{figure}[t]
  \centering
  \begin{tikzpicture}
    \begin{scope}[scale=0.7]
      \draw[rounded corners=15pt]  (-2.2,-2.5) rectangle (8.7,3);
      \node at (0.5, 2.5) {$\Gcft{\ell}$};
      \node at (6.5, 2.5) {$\Gft{\ell}$};
      \node[ssquare,fill=white, label=right:{\footnotesize $\timet{\ell}_1 = 3$}] (i1) at (1,1.5) {} ;
      \node[ssquare,fill=white, label=right:{\footnotesize $\timet{\ell}_2 = 3$}] (i2) at (1,0.5) {} ;
      \node[ssquare,fill=white, label=right:{\footnotesize $\timet{\ell}_3 = 4$}] (i3) at (1,-0.5) {} ;
      \node[ssquare,fill=white, label=right:{\footnotesize $\timet{\ell}_4 = 4$}] (i4) at (1,-1.5) {} ;
      \node[svertex, label=left:{\small $j_1$}] (j1) at (-1,1.5) {};
      \node[svertex, label=left:{\small $j_2$}] (j2) at (-1,0.5) {};
      \node[svertex, label=left:{\small $j_3$}] (j3) at (-1,-0.5) {};
      \node[svertex, label=left:{\small $j_4$}] (j4) at (-1,-1.5) {};
      \draw (j1) edge (i1);
      \draw (j2) edge (i2);
      \draw (j3) edge (i3) edge (i4);
      \draw (j4)  edge (i4);
      \node[ssquare,fill=white, pattern=north west lines, label=above:{\small $i_1$}] (hi1) at (5,0.5) {} ;
      \node[ssquare,fill=white, pattern=north west lines, label=above:{\small $i_2$}] (hi2) at (6,0.5) {} ;
      \node[ssquare,fill=white, pattern=north west lines, label=above:{\small $i_3$}] (hi3) at (7,0.5) {} ;
      \node[ssquare,fill=white, label=above:{\small $i_4$}] (hi4) at (8,0.5) {} ;
      \draw (hi3) edge (hi4);
    \end{scope}
    \begin{scope}[scale=0.7,yshift=-6cm]
      \draw[rounded corners=15pt]  (-2.2,-2.5) rectangle (8.7,3);
      \node at (0.5, 2.5) {$\Gcft{\ell+1}$};
      \node at (6.5, 2.5) {$\Gft{\ell+1}$};
      \node[ssquare,fill=gray!70, label=right:{\footnotesize $\timet{\ell+1}_1 = 3+\zeps$}] (i1) at (1,1.5) {} ;
      \node[ssquare,fill=gray!70, label=right:{\footnotesize $\timet{\ell+1}_2 = 3+\zeps$}] (i2) at (1,0.5) {} ;
      \node[ssquare,fill=gray!70, label=right:{\footnotesize $\timet{\ell+1}_3 = 4-\zeps$}] (i3) at (1,-0.5) {} ;
      \node[ssquare,fill=gray!70, label=right:{\footnotesize $\timet{\ell+1}_5 = 4+\tfrac{3\zeps}{2}$}] (i5) at (1,-1.5) {} ;
      \node[svertex, label=left:{\small $j_1$}] (j1) at (-1,1.5) {};
      \node[svertex, label=left:{\small $j_2$}] (j2) at (-1,0.5) {};
      \node[svertex, label=left:{\small $j_3$}] (j3) at (-1,-0.5) {};
      \node[svertex, label=left:{\small $j_4$}] (j4) at (-1,-1.5) {};
      \draw (j1) edge (i1) edge (i3);
      \draw (j2) edge (i2) edge (i3);
      \draw (j3) edge (i3);
      \draw (j4)  edge (i5);
      \node[ssquare,fill=gray!70, label=above:{\small $i_1$}] (hbi1) at (5,0.5) {} ;
      \node[ssquare,fill=gray!70, label=above:{\small $i_2$}] (hbi2) at (6,0.5) {} ;
      \node[ssquare,fill=gray!70, label=above:{\small $i_3$}] (hbi3) at (7,0.5) {} ;
      \node[ssquare,fill=gray!70, label=above:{\small $i_5$}] (hbi5) at (8,0.5) {} ;
      \draw (hbi3) edge (hbi2) edge[bend left] (hbi1);
    \end{scope}
    \begin{scope}[scale=0.7, xshift=12cm]
      \draw[rounded corners=15pt]  (-2.2,-6.5) rectangle (8.7,3);
      \node at (0.5, 2.5) {$\Gcf$};
      \node at (6.5, 2.5) {$\Gf$};
      \node[ssquare,fill=white, label=right:{\footnotesize $t_1 = 3$}] (i1) at (1,1.5) {} ;
      \node[ssquare,fill=white, label=right:{\footnotesize $t_2 = 3$}] (i2) at (1,0.5) {} ;
      \node[ssquare,fill=white, label=right:{\footnotesize $t_3 = 4$}] (i3) at (1,-0.5) {} ;
      \node[ssquare,fill=white, label=right:{\footnotesize $t_4 = 4$}] (i4) at (1,-1.5) {} ;
      \node[ssquare,fill=gray!70, label=right:{\footnotesize $t_{1'} = 3+\zeps$}] (bi1) at (1,-2.5) {} ;
      \node[ssquare,fill=gray!70, label=right:{\footnotesize $t_{2'} = 3+\zeps$}] (bi2) at (1,-3.5) {} ;
      \node[ssquare,fill=gray!70, label=right:{\footnotesize $t_{3'} = 4-\zeps$}] (bi3) at (1,-4.5) {} ;
      \node[ssquare,fill=gray!70, label=right:{\footnotesize $t_{5}= 4+\tfrac{3\zeps}{2}$}] (bi5) at (1,-5.5) {} ;
      \node[svertex, label=left:{\small $j_1$}] (j1) at (-1,-0.5) {};
      \node[svertex, label=left:{\small $j_2$}] (j2) at (-1,-1.5) {};
      \node[svertex, label=left:{\small $j_3$}] (j3) at (-1,-2.5) {};
      \node[svertex, label=left:{\small $j_4$}] (j4) at (-1,-3.5) {};
      \draw (j1) edge (i1);
      \draw (j2) edge (i2);
      \draw (j3) edge (i3) edge (i4);
      \draw (j4)  edge (i4);
      \draw (j1) edge (bi1) edge (bi3);
      \draw (j2) edge (bi2) edge (bi3);
      \draw (j3) edge (bi3);
      \draw (j4)  edge (bi5);
      \node[ssquare,fill=white, pattern=north west lines, label=above:{\small $i_1$}] (hi1) at (5,-0.5) {} ;
      \node[ssquare,fill=white, pattern=north west lines, label=above:{\small $i_2$}] (hi2) at (6,-0.5) {} ;
      \node[ssquare,fill=white, pattern=north west lines, label=above:{\small $i_3$}] (hi3) at (7,-0.5) {} ;
      \node[ssquare,fill=white, label=above:{\small $i_4$}] (hi4) at (8,-0.5) {} ;
      \draw (hi3) edge (hi4);
      \node[ssquare,fill=gray!70, label=below:{\small $i_1$}] (hbi1) at (5,-3) {} ;
      \node[ssquare,fill=gray!70, label=below:{\small $i_2$}] (hbi2) at (6,-3) {} ;
      \node[ssquare,fill=gray!70, label=below:{\small $i_3$}] (hbi3) at (7,-3) {} ;
      \node[ssquare,preaction ={fill, gray!70}, pattern=north west lines, label=below:{\small $i_5$}] (hbi5) at (8,-3) {} ;
      \draw (hbi3) edge (hbi2) edge[bend left] (hbi1);
      \draw (hbi1) edge (hi1);
      \draw (hbi2) edge (hi2);
      \draw (hbi3) edge (hi1) edge (hi2) edge (hi3) edge (hi4);
      \draw (hbi5) edge (hi4);
    \end{scope}

  \end{tikzpicture}
  \caption{An example of the ``hybrid'' client-facility graph and associated
    conflict graph used by \quasigraphupdate. $\Gcft{\ell}$ and $\Gcft{\ell+1}$
    are the client-facility graphs of $\alphain$ and $\alphaout$ of
    Figure~\ref{fig:quasiseq}. Next to the facilities, we have written the
    facility times ($t_i$'s) of those solutions. As the squared-distance
    between any two facilities is $5$ in the example of
    Figure~\ref{fig:quasiseq}, one can see that any two facilities with
    a common neighbor in the client-facility graph will be adjacent in the
    conflict graph.  $G$ is the ``hybrid''
    client-facility graph of $\Gcft{\ell}$ and $\Gcft{\ell+1}$. When $H$ is
    formed, we extend the given maximal independent set $\ist{\ell}$ of
    $\Gft{\ell}$ to form a maximal independent set of $H$. The facilities in
  the relevant independent sets are indicated with stripes.  }
  \label{fig:quasigraph}
\end{figure}
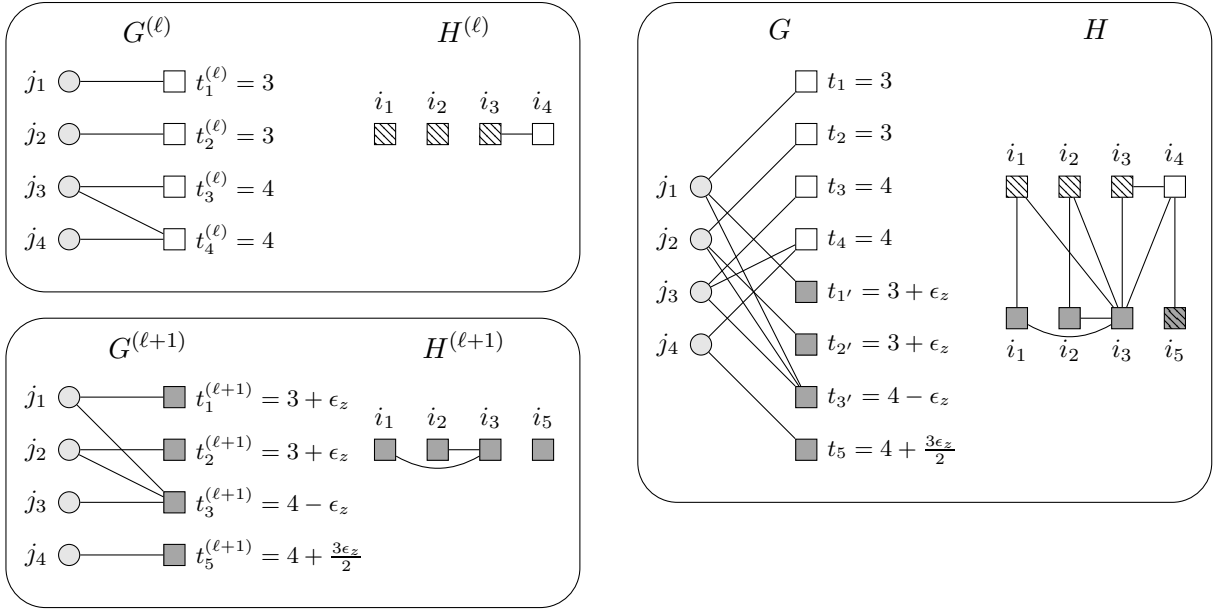
 Denote the input by $\Gft{\ell}, \Gft{\ell+1}$, and $\ist{\ell}$ (the maximal independent set of $\Gft{\ell}$ of size greater than $k$).
Although we are interested in producing a sequence of conflict graphs, it will be helpful to think of a process that
alters some ``hybrid'' client-facility graph $\Gcf$, then uses $\Gcf$ and the
corresponding opening times $t$ to construct a new conflict graph $\Gf$ after
each alteration.   To ease the description of this process, we duplicate each
facility that appears both in $\Gcft{\ell}$ and $\Gcft{\ell+1}$ so as to
ensure that these sets are disjoint. Let $\vfact{\ell}$ and $\vfact{\ell+1}$
denote the (now disjoint) sets of facilities in $\Gcft{\ell}$ and $\Gcft{\ell+1}$,
respectively. Note that the duplication of facilities
does not alter the solution space of the considered instance, as one may
assume that at most one facility is opened at
each location.  Note that our algorithm will also satisfy this property, since $d(i,i')^2 = 0$ for any pair of co-located facilities $i,i'$. 

Initially, we let $G$   be the client-facility graph with bipartition $\clients$ and $\vfac^{(\ell)}
\cup \vfac^{(\ell+1)}$ that has an edge from client $j$ to  facility $i\in
\vfac^{(\ell)}$ if $(j,i)$ is present in $G^{(\ell)}$ and to $i\in
\vfac^{(\ell+1)}$  if $(j,i)$ is present in $G^{(\ell+1)}$.
The opening time $t_i$ of facility $i$ is
now naturally set to $t^{(\ell)}_i$ if $i\in \vfac^{(\ell)}$ and to
$t^{(\ell+1)}_i$ if $i\in \vfac^{(\ell+1)}$.  Informally, $G$ is the union of
the two client-facility graphs $G^{(\ell)}$ and $G^{(\ell+1)}$ where the client
vertices are shared (see Figure~\ref{fig:quasigraph}).  We then generate\footnote{Recall that a conflict graph is defined in terms of a client-facility graph $\Gcf$ and $t$: the vertices are the facilities in $\Gcf$, and two facilities $i$ and $i'$ are adjacent if there is some client $j$ that is adjacent to both of them in $\Gcf$ and $d(i,i')^2 \leq \delta \min(t_i, t_{i'})$.} the conflict graph $\Gft{\ell,1}$ from $\Gcf$ and
$t$.   As the induced subgraph of $\Gft{\ell,1}$ on vertex set $\vfac^{\ell}$
equals $\Gft{\ell} = \Gft{\ell,0}$, we have that $\ist{\ell}$ is also an
independent set of $\Gft{\ell,1}$.  We obtain a maximal independent set
$\ist{\ell,1}$ of $\Gft{\ell,1}$ by greedily extending $\ist{\ell}$.  Clearly,
the independent set can only increase so we still have $|\ist{\ell,1}| > k$.

To produce the remaining sequence, we iteratively perform changes, but
construct and output a new conflict graph and maximal independent set after
\emph{each} such change.  Specifically,  we remove from $\Gcf$ each facility $i
\in \vfact{\ell}$, one by one.  At the end of the procedure (after
$|\vfact{\ell}|$ many steps), we have $\Gcf = \Gcft{\ell+1}$ and so $\Gft{\ell,
p_\ell}  = \Gft{\ell+1}$.  Note that at each step, our modification to $\Gcf$
results in removing a single facility $i$ from the associated conflict graph.
Thus, if $\ist{\ell,s}$ is an independent set in $\Gft{\ell,s}$ before
a modification, then $\ist{\ell,s} \setminus \{i\}$ is an independent set  in $\Gft{\ell, s+1}$.  We obtain a maximal
independent set $\ist{\ell, s+1}$ of $\Gft{\ell, s+1}$ by greedily extending
$\ist{\ell,s} \setminus \{i\}$.  Then, for each step $s$, we have $|\ist{\ell,
s+1}| \geq |\ist{\ell, s}| -1$, as required.

\subsubsection{Analysis}
\label{sec:quasigraphupdateanalysis}
The total running time is $n^{O(\aeps^{-1} \log n)}$ since the number of steps $L$
(and the number of dual solutions in our sequence) is  $n^{O(\aeps^{-1} \log n)}$  and
each step runs in polynomial time since it involves the construction of at most
$O(|\cF|)$ conflict graphs and maximal independent sets. 

We proceed to analyze the approximation guarantee. 
Consider the first time that we produce some  maximal independent set $\is$ of size
exactly $k$.   Suppose that when this happened,
we were moving between two solutions $\alphat{\ell}$ and $\alphat{\ell+1}$,
i.e., $\is = \ist{\ell, s}$ is a maximal independent set of $\Gft{\ell,s}$ for
some $1 \leq s \leq p_\ell$.  That we may assume that $s \geq 1$ follows from
$|\ist{0}| > k$ and $\ist{\ell-1, p_\ell} = \ist{\ell} = \ist{\ell,0}$
(recall that $\is$ was selected to be the \emph{first} independent set of size
$k$). 

To ease notation, we let $\Gf =  \Gft{\ell, s}$ and denote by $\Gcf$ the ``hybrid'' client-facility graph that generated $\Gf$. In order to analyze the cost of $\is$, let us form a hybrid solution $\alpha$ 
by setting $\alpha_j = \min(\alphat{\ell}_j,\alphat{\ell+1}_j)$ for each client
$j \in \clients$.  Note that $\alpha \leq \alphat{\ell}$ is a feasible solution of $\dualf[\lambda]$ where $\lambda = \ell \cdot \zeps$ and, since $\alphat{\ell}$ and $\alphat{\ell+1}$ are close,
$\alpha_j \ge \alphat{\ell}_j - \frac{1}{n^2}$ and $\alpha_j \ge
\alphat{\ell+1}_j - \frac{1}{n^2}$ for all $j$.  For each client $j$, we define a set of
facilities $S_j \subseteq \is$ to which $j$ contributes, as follows.  For all
$i \in \is$, we have $i \in S_j$ if $\alpha_j > d(j,i)^2$.  Note that $S_j$ is a subset of $j$'s neighborhood in $G$ and therefore 
\begin{align*}
  \alpha_j  = \min(\alphat{\ell}_j, \alphat{\ell+1}_j) \leq t_i  = 
  \begin{cases}
    \timet{\ell}_i & \mbox{if $i\in \vfact{\ell}$} \\
    \timet{\ell+1}_i & \mbox{if $i\in \vfact{\ell+1}$} 
  \end{cases}
    \qquad  \mbox{for all $i\in S_j$.}
\end{align*}

Using the fact that $\alphat{\ell+1}$ is a good dual solution, we can bound the total
service cost of all clients in the integral solution $\is$.  Let us first
proceed separately for those clients with $|S_j| >  0$.  Let $\clients_0 = \{j
\in \clients : |S_j| = 0\}$, and $\clients_{> 0} = \clients \setminus
\clients_0$. We remark that the  analysis is now very similar to the proof of Theorem~\ref{thm:euckmeans}. 
We define $\beta_{ij} = [\alpha_j - d(i,j)^2]^+$ and similarly  
 $\betat{\ell}_{ij} = [\alphat{\ell}_j - d(i,j)^2]^+$ and $\betat{\ell+1}_{ij}
 = [\alphat{\ell+1}_j - d(i,j)^2]^+$.
\begin{lemma}\label{lem:quasi-connected-client-costs}
  For any $j\in \clients_{>0}$, $d(j,\is)^2 \leq \rho \cdot \left( \alpha_j - \sum_{i \in S_j}\beta_{ij}\right)$.
\end{lemma}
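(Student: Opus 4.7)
The plan is to mirror the case analysis in the proof of Theorem~\ref{thm:euckmeans} (cases $s=1$ and $s>1$), but applied to the hybrid dual solution $\alpha$ and the hybrid conflict graph $\Gf$. Write $s = |S_j|$ and fix any $i, i' \in S_j$. I will use three facts which are essentially immediate from the definitions of the hybrid solution and from $S_j \subseteq \is$.

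\emph{Key facts.} (i) For every $i \in S_j$ we have $\alpha_j \leq t_i$: since $\alpha_j > d(j,i)^2$ and $\alpha_j \leq \alphat{\ell}_j, \alphat{\ell+1}_j$, the client $j$ is a contributor to $i$ in whichever solution $i$ came from, so $t_i \geq \alpha_j$. (ii) Both $i$ and $i'$ are adjacent to $j$ in the hybrid client--facility graph $\Gcf$, because $\alpha_j > d(j,i)^2$ implies the corresponding $\alphat{\cdot}_j > d(j,i)^2$ in either copy. (iii) Since $S_j \subseteq \is$ is an independent set of $\Gf$, facilities $i,i'$ are non-adjacent there; combined with (ii), the definition of $\Gf$ forces $d(i,i')^2 > \delta \min(t_i, t_{i'}) \geq \delta\,\alpha_j$ by (i).

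\emph{Case $s = 1$.} If $S_j = \{i^*\}$, then $d(j,\is)^2 \leq d(j,i^*)^2 = \alpha_j - \beta_{i^*j}$, and the claim follows because $\rho \geq 1$.

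\emph{Case $s > 1$.} Apply the Euclidean centroid identity to the facilities in $S_j$ with target $j$, exactly as in the proof of Theorem~\ref{thm:euckmeans}:
\[
\sum_{i \in S_j} d(j,i)^2 \;\geq\; \sum_{i \in S_j} d(i,\mu)^2 \;=\; \frac{1}{2s}\sum_{i \in S_j}\sum_{i' \in S_j} d(i,i')^2 \;>\; \frac{s-1}{2}\,\delta\,\alpha_j,
\]
using fact (iii) in the last step. Hence
\[
\sum_{i \in S_j}\beta_{ij} \;=\; \sum_{i \in S_j}\bigl(\alpha_j - d(j,i)^2\bigr) \;\leq\; \Bigl(s\bigl(1 - \tfrac{\delta}{2}\bigr) + \tfrac{\delta}{2}\Bigr)\alpha_j,
\]
and since $\delta \geq 2$ this upper bound is non-increasing in $s$, so for $s \geq 2$ it is at most $(2 - \delta/2)\alpha_j$. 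On the other hand, $d(j,\is)^2 \leq d(j,i)^2 < \alpha_j$ for any $i \in S_j$, so $(\delta/2 - 1)\,d(j,\is)^2 \leq (\delta/2 - 1)\alpha_j$. Adding the two inequalities yields
\[
\sum_{i \in S_j}\beta_{ij} + \bigl(\tfrac{\delta}{2} - 1\bigr) d(j,\is)^2 \;\leq\; \alpha_j,
\]
and rearranging, together with $\rho \geq 1/(\delta/2 - 1)$, gives the claimed inequality.

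The main (very mild) obstacle is just to verify that the three key facts above still hold for the hybrid objects $\alpha$, $\Gcf$ and $\Gf$, where $t_i$ depends on which copy $\vfact{\ell}$ or $\vfact{\ell+1}$ the facility $i$ belongs to; once this bookkeeping is done, the analysis is a direct transcription of the Euclidean $k$-means argument, and no new inequality is needed.
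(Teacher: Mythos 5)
Your proof is correct and follows exactly the paper's approach: the paper proves this lemma by citing the facts $\alpha_j \le \min(t_i,t_{i'})$ for $i,i'\in S_j$ (established just before the lemma) and then invoking verbatim the ``Case $s=1$'' and ``Case $s>1$'' arguments from Theorem~\ref{thm:euckmeans}, which is precisely the case analysis and centroid computation you transcribe. Your explicit verification of facts (i)--(iii) for the hybrid objects is the same bookkeeping the paper performs in the displayed equation preceding the lemma.
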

\begin{proof}
  Consider some $j \in \clients_{>0}$ and first suppose that $|S_j| = 1$.  Then, if we let $S_j = \{i\}$,
  $\alpha_j = \beta_{ij} + d(j,i)^2 \ge \beta_{ij} + d(j,\is)^2$ just
  as in ``Case $s=1$'' of Theorem~\ref{thm:euckmeans}.   Next, suppose that
  $|S_j| = s > 1$.  In other words, $j$ is contributing to multiple facilities
  in $\is$.  By construction we have $\alpha_j \leq \min(t_i, t_{i'})$ for any two facilities $i, i' \in S_j$.  Thus, $\alpha_j -  \sum_{i\in S_j}\beta_{ij}
  \geq \frac{1}{\rho}d(j,\is)^2$ by the exact same arguments as in ``Case
  $s>1$'' of Theorem~\ref{thm:euckmeans}.
\end{proof}
Next, we bound the total service cost of all those clients that do not
contribute to any facility in $\is$. The proof is very similar to ``Case
$s=0$'' in the proof of Theorem~\ref{thm:euckmeans}.
\begin{lemma}\label{lem:quasi-disconnected-client-costs}
  For every $j\in \clients_0$, $d(j,\is)^2 \leq (1+5\aeps)\rho \cdot \alpha_j$.
\end{lemma}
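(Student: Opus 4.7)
The plan is to follow the structure of ``Case $s=0$'' in the proof of Theorem~\ref{thm:euckmeans}, but applied to the hybrid solution $\alpha$ and with the relaxed inequality coming from the definition of a good solution. The main subtlety is that we must pick the tight facility witnessing the good condition so that it still lives in the hybrid graph $\Gcf$ at step $s$, and we must convert bounds in terms of $\alphat{\ell+1}_j$ into bounds in terms of $\alpha_j$ using closeness.

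Concretely, I would first invoke the goodness of $\alphat{\ell+1}$: by Definition~\ref{def:quasi-good-soln} there is a tight facility $i \in \vfact{\ell+1}$ with
\[
d(j,i) + \sqrt{\delta\, t^{(\ell+1)}_i} \le (1+\sqrt{\delta}+\aeps)\sqrt{\alphat{\ell+1}_j}.
\]
Since \quasigraphupdate only deletes facilities from $\vfact{\ell}$, this $i$ is still a vertex of $\Gcf$ (and hence of $\Gf=\Gft{\ell,s}$) at step $s$. Because $\is$ is a maximal independent set of $\Gf$, either $i\in\is$, giving $d(j,\is)\le d(j,i)$, or some $i'\in\is$ is adjacent to $i$ in $\Gf$, in which case $d(i,i')^2\le \delta\min(t_i,t_{i'})\le \delta\, t^{(\ell+1)}_i$ by the conflict-graph definition, so $d(j,\is)\le d(j,i)+\sqrt{\delta\, t^{(\ell+1)}_i}$. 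In either case,
\[
d(j,\is) \le (1+\sqrt{\delta}+\aeps)\sqrt{\alphat{\ell+1}_j}.
\]

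Next I would pass from $\alphat{\ell+1}_j$ to $\alpha_j$. By Lemma~\ref{lem:dis} every client has $\alpha_j\ge 1$ (each client has a tight edge to some facility and squared distances are at least $1$), and by Lemma~\ref{lem:quasi-close} the sequence is close, so
\[
\alphat{\ell+1}_j \le \alpha_j + \tfrac{1}{n^2} \le \bigl(1+\tfrac{1}{n^2}\bigr)\alpha_j,
\]
which gives $\sqrt{\alphat{\ell+1}_j}\le (1+\tfrac{1}{n^2})\sqrt{\alpha_j}$. Squaring the previous display and using $\rho=\rmean\ge (1+\sqrt{\delta})^2$ yields
\[
d(j,\is)^2 \le (1+\sqrt{\delta}+\aeps)^2\bigl(1+\tfrac{1}{n^2}\bigr)^2 \alpha_j.
\]

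Finally I would bound the two factors. For $\aeps\le 1$ and $a=1+\sqrt{\delta}\ge 1$, $(a+\aeps)^2\le a^2(1+2\aeps+\aeps^2)\le \rho(1+3\aeps)$, and since $n\gg 1/\aeps$, $(1+1/n^2)^2 \le 1+\aeps$. Multiplying, $(1+\sqrt{\delta}+\aeps)^2(1+1/n^2)^2\le \rho(1+3\aeps)(1+\aeps)\le (1+5\aeps)\rho$, giving the desired conclusion $d(j,\is)^2\le(1+5\aeps)\rho\,\alpha_j$. The main technical point (rather than an obstacle) is choosing the witness $i$ from the $(\ell+1)$-side so that it is guaranteed not to have been deleted during the interpolation; the rest is the same geometric reasoning as in Theorem~\ref{thm:euckmeans}, padded by the small closeness and goodness slacks.
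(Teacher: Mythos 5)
Your proposal is correct and follows essentially the same route as the paper's proof: invoke goodness of $\alphat{\ell+1}$ to get a tight witness $i \in \vfact{\ell+1}$ that is guaranteed to survive in the hybrid conflict graph, use maximality of $\is$ to bound $d(j,\is) \le d(j,i) + \sqrt{\smash[b]{\delta t_i}}$, convert to $\alpha_j$ via closeness and $\alpha_j \ge 1$, and absorb the slack factors into $(1+5\aeps)$. The only differences are cosmetic bookkeeping in how the $(1+1/n^2)$ and $\aeps$ factors are distributed.
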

\begin{proof}
Consider some client $j \in \clients_0$, and let $i \in \vfact{\ell+1}$ be
a tight facility  so that 
\begin{align*}
  (1+\sqrt{\delta} + \aeps) \sqrt{\alphat{\ell+1}_j} \geq d(j, i) + \sqrt{\delta \timet{\ell+1}_i}\,. 
\end{align*}
Such a facility $i$ is guaranteed to exist because $\alphat{\ell+1}$ is a good dual solution. 
Furthermore, note that  $i$ is present in $\Gf$ since $\Gf$ contains all facilities in $\vfact{\ell+1}$.
By definition $t_i = \timet{\ell+1}_i$ and, as  all $\alpha$-values are at least $1$ (by the preprocessing of Lemma~\ref{lem:dis}), $(1+\frac{1}{n^2}){\alpha_j} \geq  {\alpha_j} + 1/n^2 \geq {\alphat{\ell+1}_j}$. Hence, the above inequality implies
\begin{align*}
(1+\tfrac{1}{n^2})^{1/2}(1+\sqrt{\delta} + \aeps) \sqrt{\alpha_j} \geq d(j, i) + \sqrt{\delta t_i}\,. 
\end{align*}
Note the similarity of this inequality with that of~\eqref{eq:maineqkmeans} and the proof is now identical to ``Case $s=0$'' of Theorem~\ref{thm:euckmeans}.

 Indeed, since $\is$ is a maximal independent set of $\Gf$, either $i\in \is$, in which case $d(j,\is) \leq d(j,i)$, or there is a $i'\in \is$ such that the edge $(i',i)$ is in $\Gf$, in which case
 \begin{align*}
   d(j,\is) \leq d(j,i) + d(i,i') \leq d(j,i) + \sqrt{\delta t_i}\,,
 \end{align*}
 where the inequality follows from $d(i,i')^2  \leq \delta \min(t_i,
 t_{i'})$ by the definition of $\Gf$. In any case, we have  (using $n \gg 1/\aeps$)
 \begin{align*}
 d(j, \is) \leq (1+\tfrac{1}{n^2})^{1/2}(1+\sqrt{\delta} + \aeps)\sqrt{\alpha_j} \leq   (1+2\aeps) (1+\sqrt{\delta})\sqrt{\alpha_j}\,.
 \end{align*} Squaring both sides and recalling that $\rho \geq (1+\sqrt{\delta})^2$ and that $\aeps$ is a small constant so $(1+2\aeps)^2 \leq (1+5\aeps)$ completes the 
 proof of the lemma.
\end{proof}
One difference compared to the analysis in Section~\ref{sec:euckmeans} is that not all opened facilities are fully paid for. However, they are almost paid for: 
\begin{lemma}
  For any $i\in \is$, $\sum_{j\in \clients} \beta_{ij}  \geq \lambda - \tfrac{1}{n}$.
  \label{lem:quasifacilitycost}
\end{lemma}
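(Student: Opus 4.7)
The plan is a direct case analysis on whether the facility $i \in \is$ came from $\vfact{\ell}$ or $\vfact{\ell+1}$, exploiting the tightness of $i$ in the corresponding original dual solution together with the $L^\infty$-closeness guarantee $|\alphat{\ell}_j - \alphat{\ell+1}_j| \le 1/n^2$ established in Lemma \ref{lem:quasi-close}. The only real subtlety is that closeness is stated for the two \emph{sequence} solutions, while $\alpha$ is defined as their pointwise minimum, so I need to notice that $\alpha_j \ge \alphat{\ell}_j - 1/n^2$ and $\alpha_j \ge \alphat{\ell+1}_j - 1/n^2$ simultaneously.

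First I would fix $i \in \is$ and recall that by construction of the hybrid graph $\Gcf$, every facility appearing in $\is$ lies in $\vfact{\ell} \cup \vfact{\ell+1}$, i.e.\ it is tight in either $\alphat{\ell}$ (at opening price $\lambda = \ell \zeps$) or $\alphat{\ell+1}$ (at opening price $\lambda + \zeps$). Consider first the case $i \in \vfact{\ell}$. By tightness,
\[
\sum_{j \in \clients} [\alphat{\ell}_j - d(i,j)^2]^+ \;=\; \lambda.
\]
Let $\Nin(i) = \{j : \alphat{\ell}_j > d(i,j)^2\}$. By closeness, $\alpha_j \ge \alphat{\ell}_j - 1/n^2$ for every $j$, hence
\[
\beta_{ij} \;=\; [\alpha_j - d(i,j)^2]^+ \;\ge\; \alpha_j - d(i,j)^2 \;\ge\; \alphat{\ell}_j - d(i,j)^2 - \tfrac{1}{n^2}.
\]
Summing the inequality over $j \in \Nin(i)$ (and using $\beta_{ij} \ge 0$ for the other clients) gives
\[
\sum_{j \in \clients} \beta_{ij} \;\ge\; \sum_{j \in \Nin(i)} (\alphat{\ell}_j - d(i,j)^2) - \tfrac{|\Nin(i)|}{n^2} \;\ge\; \lambda - \tfrac{n}{n^2} \;=\; \lambda - \tfrac{1}{n},
\]
since $|\Nin(i)| \le n$.

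The second case $i \in \vfact{\ell+1}$ is analogous: $i$ is tight at price $\lambda + \zeps$ in $\alphat{\ell+1}$, and the same manipulation using $\alpha_j \ge \alphat{\ell+1}_j - 1/n^2$ yields
\[
\sum_{j \in \clients} \beta_{ij} \;\ge\; (\lambda + \zeps) - \tfrac{1}{n} \;\ge\; \lambda - \tfrac{1}{n}.
\]
Since these two cases exhaust all $i \in \is$, the bound holds in general. I do not expect any real obstacle here; this is a short, mechanical consequence of Lemma \ref{lem:quasi-close} together with the definition of tightness, and the $1/n$ loss is simply $n$ clients each losing $1/n^2$ to the closeness slack.
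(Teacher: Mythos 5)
Your proof is correct and follows essentially the same route as the paper: tightness of $i$ in whichever of $\alphat{\ell}$ or $\alphat{\ell+1}$ it comes from, combined with the closeness bound $\alpha_j \ge \alphat{x}_j - 1/n^2$, summed over at most $n$ clients. The paper phrases the two cases together via $\beta_{ij} \ge \max(\betat{\ell}_{ij}, \betat{\ell+1}_{ij}) - 1/n^2$, but the argument is identical.
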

\begin{proof}
  If $i\in \vfact{\ell+1}$, then it is a tight facility with respect to $\alphat{\ell+1}$, i.e.,
  $\sum_{j\in \clients} \betat{\ell+1}_{ij} = \lambda + \zeps$. Similarly, if $i \in \vfact{\ell}$ then $\sum_{j\in \clients} \betat{\ell}_{ij} = \lambda$. Now since
  $\alpha_j \geq \max(\alphat{\ell+1}_j, \alphat{\ell}_j) - \tfrac{1}{n^2}$ for every client $j$, 
  \begin{align*}
    \sum_{j\in \clients} \beta_{ij} \geq  \sum_{j\in \clients} \left(\max(\betat{\ell+1}_{ij}, \betat{\ell}_{ij})- \tfrac{1}{n^2} \right)  \geq \lambda - \tfrac{1}{n}\,. & \qedhere
  \end{align*}
\end{proof}
We now combine the above lemmas to bound the approximation guarantee of the found solution. Recall that $\opt_k$ denotes the optimum value of the standard LP-relaxation (see Section~\ref{sec:prelim}).
\begin{theorem}
  \label{thm:quasiapproxbound}
$\sum_{j \in \cD} d(j,\is)^2 \le (1+6\aeps)\rho\cdot \opt_k$.
\end{theorem}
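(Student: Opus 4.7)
The plan is to combine the three preceding lemmas with weak LP duality, using a self-referential trick to absorb the error coming from the disconnected clients. First, I would sum Lemma~\ref{lem:quasi-connected-client-costs} over $j \in \clients_{>0}$ and Lemma~\ref{lem:quasi-disconnected-client-costs} over $j \in \clients_0$, writing the $(1+5\aeps)$ as $1 + 5\aeps$ and collecting $\sum_j \alpha_j$, to obtain
\begin{equation*}
\sum_j d(j,\is)^2 \;\leq\; \rho \sum_j \alpha_j \;-\; \rho \sum_{j \in \clients_{>0}} \sum_{i \in S_j} \beta_{ij} \;+\; 5\aeps \rho \sum_{j \in \clients_0} \alpha_j.
\end{equation*}
I would then swap the order of summation in the middle term, noting that the edges $\{(i,j): i \in S_j\}$ are exactly the edges $\{(i,j) : i \in \is,\ \beta_{ij} > 0\}$, and apply Lemma~\ref{lem:quasifacilitycost} summed over the $k$ facilities in $\is$ to get $\sum_{i \in \is}\sum_j \beta_{ij} \ge k\lambda - k/n$.

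Next, since $\alpha \leq \alphat{\ell}$ is feasible for $\dualf[\lambda]$ with $\lambda = \ell\cdot \zeps$, weak LP duality yields $\sum_j \alpha_j - k\lambda \leq \opt_k$. Substituting these two bounds gives
\begin{equation*}
\sum_j d(j,\is)^2 \;\leq\; \rho\,\opt_k \;+\; \rho\,k/n \;+\; 5\aeps\rho \sum_{j \in \clients_0} \alpha_j.
\end{equation*}
At this point the natural temptation is to bound $\sum_j \alpha_j$ by $\opt_k + k\lambda$, but this reintroduces an uncontrolled $k\lambda$ term that has no a priori bound in $\opt_k$.

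The key step is to close a recursion instead. For every $j \in \clients_0$ we have $|S_j| = 0$, which by definition of $S_j$ means $\alpha_j \leq d(j,i)^2$ for every $i \in \is$, hence $\alpha_j \leq d(j,\is)^2$. Therefore $\sum_{j \in \clients_0} \alpha_j \leq \sum_j d(j,\is)^2$, and moving this to the left-hand side gives
\begin{equation*}
(1 - 5\aeps\rho)\sum_j d(j,\is)^2 \;\leq\; \rho\,\opt_k + \rho\,k/n.
\end{equation*}
Since $\zeps$ was chosen small enough (and $n$ large enough, recalling Lemma~\ref{lem:dis}) that $\rho k/n$ is negligible compared to $\aeps \rho \opt_k$, and since $\aeps$ appearing as the error parameter inside the lemmas can be rescaled by a suitable constant depending on $\rho$, this rearranges to the claimed $(1+6\aeps)\rho\,\opt_k$ bound. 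The main obstacle is precisely this closing of the recursion: without the observation $\alpha_j \leq d(j,\is)^2$ for $j \in \clients_0$, one is stuck with a $5\aeps\rho k\lambda$ additive error that cannot be controlled, and it is this self-referential argument that allows the lossless-rounding promise to go through.
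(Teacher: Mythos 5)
Your proof is correct, but it closes the argument differently from the paper, and the difference is worth noting. The paper never isolates the term $5\aeps\rho\sum_{j\in\clients_0}\alpha_j$: it observes that each per-client quantity $\alpha_j - \sum_{i\in S_j}\beta_{ij}$ is nonnegative (for $j\in\clients_{>0}$ this follows from Lemma~\ref{lem:quasi-connected-client-costs} since $d(j,\is)^2\ge 0$), so it can harmlessly upgrade the factor $\rho$ to $(1+5\aeps)\rho$ uniformly and write
\begin{equation*}
\sum_{j\in\clients}d(j,\is)^2 \le (1+5\aeps)\rho\sum_{j\in\clients}\Bigl(\alpha_j-\sum_{i\in S_j}\beta_{ij}\Bigr)\le (1+5\aeps)\rho\,(\opt_k+1),
\end{equation*}
where the second inequality is exactly your combination of Lemma~\ref{lem:quasifacilitycost}, $|\is|=k$, and weak duality. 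The ``uncontrolled $k\lambda$ term'' you identify as the main obstacle is therefore a non-issue in the paper's accounting; it only appears because you multiplied the $\clients_0$ and $\clients_{>0}$ contributions by different factors before summing. Your self-referential fix --- using that $|S_j|=0$ implies $\alpha_j\le d(j,\is)^2$, which is indeed valid for the definition of $S_j$ in this section --- is a legitimate alternative, but it costs you: you need $5\aeps\rho<1$ for the rearrangement to make sense, and the resulting bound is $\rho(\opt_k+k/n)/(1-5\aeps\rho)\approx(1+(1+5\rho)\aeps)\rho\,\opt_k$, i.e.\ roughly $(1+33\aeps)\rho\,\opt_k$ for $\rho\approx 6.36$ rather than $(1+6\aeps)\rho\,\opt_k$. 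That is fine for the qualitative $(\rho+O(\aeps))$ claim after rescaling $\aeps$, as you say, but the paper's grouping gives the stated constant directly and with no constraint relating $\aeps$ to $\rho$.
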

\begin{proof}
From Lemmas \ref{lem:quasi-connected-client-costs}  and \ref{lem:quasi-disconnected-client-costs} we have:
\begin{align*}
  \sum_{j \in \clients}d(j,\is)^2 \le (1+5\aeps)\rho \sum_{j \in \clients}\left(\alpha_j - \sum_{i \in S_j}\beta_{ij}\right)\,. 
\end{align*}
By Lemma~\ref{lem:quasifacilitycost} (note that by definition, $\sum_{i \in \is} \beta_{ij} = \sum_{i\in S_j} \beta_{ij}$),
\begin{align*}
  \sum_{j \in \clients}\left(\alpha_j - \sum_{i \in S_j}\beta_{ij}\right) \leq \sum_{j\in \clients} \alpha_j - |\is| \left(\lambda - \tfrac{1}{n}\right) = \sum_{j\in \clients} \alpha_j - k \cdot \lambda + \tfrac{k}{n}\leq \opt_k + 1\,,
\end{align*}
where the last inequality follows from $k\leq n$ and, as $\alpha$ is a feasible solution to $\dualf$, $\sum_{j\in \clients} \alpha_j - k \cdot \lambda \leq \opt_k$. 
The statement now follows from $\opt_k \geq \sum_{j\in
\clients} \min_{i\in \facilities} d(i,j)^2 \geq n$ and $n \gg 1/\aeps$, which imply that $\opt_k + 1 \leq (1 + \aeps)\opt_k$.

\end{proof}

We have thus proved that our quasi-polynomial algorithm produces a  $(\rho + O(\epsilon))$-approximate solution which implies Theorem~\ref{thm:maineuckmeans}. The quasi-polynomial algorithms for the other considered problems are the same except for the selection of $\delta$ and $\rho$, and that in the $k$-median problem the connection costs are the (non-squared) distances.

\section{Polynomial time algorithm}
\label{sec:polyn-time-appr}
We now show how to obtain a polynomial-time algorithm, building on the ideas  presented in the previous section.  As in Section \ref{sec:quasi},  we focus exclusively on the $k$-means problem, and let $\delta = \dmean \approx 2.3146$ and $\rho = \rmean = (1 + \sqrt{\delta})^2 \approx 6.3574$, and assume that the squared-distances between clients and facilities are in $[1, n^6]$ by Lemma~\ref{lem:dis}.  Additionally, we choose $\aeps$ and $\gamma$ to be suitably small constants with $0<\gamma \ll \aeps \ll 1$, and for notational convenience we assume without loss of generality that $n\gg 1/\gamma$.

Similarly to Section~\ref{sec:quasi-polyn-time-algor}, we give an algorithm for generating a close sequence of feasible solutions to $\dualf$, and then show how to use this sequence to generate a sequence of integral solutions that must contain some solution of size exactly $k$.  Here, however, we ensure that our sequence of feasible solutions is of polynomial length.  In order to accomplish this, we must relax some of the requirements in our definition of a good solution (Definition \ref{def:quasi-good-soln}). 

First, rather than requiring that every facility has opening cost $\lambda$, we instead allow each facility $i$ to have its own price in $z_i \in \{\lambda, \lambda+\tfrac{1}{n}\}$ (Condition~\ref{item:roundable-1} of Definition~\ref{def:roundable}).  For each $\alphat{\ell}$, our algorithm will produce an associated set of facility prices $\zt{\ell} = \{\zt{\ell}_i\}_{i \in \cF}$.  For any such $(\alphat{\ell},\zt{\ell})$, we define $\betat{\ell}_{ij} = [\alphat{\ell}_j - d(j,i)^2]^+$ and $\Nt{\ell}(i) = \{j\ :\ \betat{\ell}_{ij} > 0 \}$, as before.  However, we now say that a facility $i$ is \emph{tight} in $\solt{\ell}$ if $\sum_{j \in \clients}\betat{\ell}_{ij}= \zt{\ell}_i$.  That is, we consider a facility $i$ tight once its (possibly unique) price $z_i$ is paid in the dual.  Intuitively, if all the facility prices $z_i$ are \emph{almost} the same, we can still carry out our analysis, and obtain a $(\rho + O(\epsilon))$-approximation.

Second, we shall designate a set of \emph{special} facilities $\sfac \subseteq \facilities$ that we shall open, \emph{even if they are not tight}.  To each special facility $i \in \sfac$ we assign a set of special clients $\sclient(i) \subseteq \clients$ that are allowed to pay for $i$.  Then, for each $i \in \sfac$, we define the time $\stime_i = \max_{j \in N(i) \cap \sclient(i)}\alpha_j$, while for each $i \in \facilities \setminus \sfac$ we set $\stime_i = t_i = \max_{j \in N(i)}\alpha_j$.  Again, we adopt the convention that $\stime_i = 0$ if $N(i) \cap \sclient(i) = \emptyset$ for $i \in \sfac$ or $N(i) = \emptyset$ for $i \in \facilities \setminus \sfac$.  Although a facility in $\sfac$ is not necessarily tight, we shall require that the total of all payments to such facilities by special clients is \emph{almost} equal to $\lambda |\sfac|$ (Condition~\ref{item:roundable-3} of Definition~\ref{def:roundable}).  That is, on average, each facility of $\sfac$ is almost tight.
  
Finally, given the times $\stime_i$, we shall not require that \emph{every} client $j$ has some tight or special facility $i$ such that $(1 + \sqrt{\delta} + 10\epsilon)\sqrt{\smash[b]{\alpha_j}} \ge d(j,i) + \sqrt{\smash[b]{\delta \stime_i}}$.  Specifically, we shall allow some small set of \emph{bad} clients $\bclient$ to instead satisfy a weaker inequality $6\sqrt{\smash[b]{\alpha_j}} \ge d(j,i) + \sqrt{\smash[b]{\delta \stime_i}}$ for some tight or special facility $i$.  Such clients will have a higher service cost, so we require that their total contribution to the cost of an optimal solution is small (Condition~\ref{item:roundable-2} of Definition~\ref{def:roundable}).

Combining the above, we have the following definition.  
\begin{definition}
\label{def:roundable}
Consider a tuple $(\alpha, z, \sfac, \sclient)$ where $\alpha \in \mathbb{R}^\clients, z \in \mathbb{R}^\facilities$, $\sfac \subseteq \facilities$ is a set of special facilities, and $\sclient : \sfac \to 2^\clients$ is a function assigning each special facility $i$ a set of special clients $\sclient(i)$.  We say that this tuple is roundable for $\lambda$ (or $\lambda$-roundable) 
if $\alpha$ is a feasible solution of $\dualf[\lambda + \frac{1}{n}]$, and:
\begin{enumerate}
\item For all $i \in \facilities$, $\lambda \le z_i \le \lambda + \frac{1}{n}$. \label{item:roundable-1}
\item There exists a subset $\bclient$ of clients so that for all $j \in \clients$ there is a facility $w(j)$ that is either tight or in $\sfac$ and:
\begin{enumerate}
\item $(1+\sqrt{\delta} + 10\epsilon)^2\alpha_j \ge \left(d(j,w(j)) +  \sqrt{\smash[b]{\delta \cdot \stime_{w(j)}}}\right)^2$ for all $j \in \clients \setminus \bclient$.
\label{item:good-client-roundable}
\item $36\gamma \cdot \opt_k \ge \sum_{j \in \bclient}\left(d(j,w(j)) + \sqrt{\smash[b]{\delta\cdot \stime_{w(j)}}}\right)^2$,
\label{item:bad-client-roundable}
\end{enumerate} \label{item:roundable-2}
\item $\sum_{i \in \sfac}\sum_{j \in \sclient(i)} \beta_{ij} \ge \lambda |\sfac| - \gamma \cdot \opt_k$ and $|\sfac| \leq n$. \label{item:roundable-3}
\end{enumerate}
\end{definition}

Observe that any $\lambda$-roundable solution with $\sfac=\emptyset$, and $\bclient =\emptyset$ is essentially a good solution for $\dualf[\lambda+\frac{1}{n}]$ (as defined for the quasi-polynomial algorithm in Section~\ref{sec:quasi}) except that the opening costs of the facilities are allowed to vary slightly.  We shall also say that $(\alpha,z)$ is roundable if $(\alpha, z, \emptyset, \sclient)$ is roundable.

An overview of our polynomial time algorithm is shown in Algorithm
\ref{alg:1}.  The algorithm maintains a current base price $\lambda$ and a current roundable solution $\cst{0} = (\alphat{0}, \zt{0}, \sfac^{(0)}, \sclient^{(0)})$ for $\lambda$, as well as a corresponding integral solution $\ist{0}$.  As in the quasi-polynomial algorithm, we shall enumerate a sequence $0,1 \cdot \zeps, 2\cdot\zeps,\ldots,L\cdot \zeps$ of base prices $\lambda$, where now $\zeps = n^{-O(1)}$ and, as before we define $L = 4n^7\cdot \zeps^{-1}$.  Here, however we increase facility prices from $\lambda$ to $\lambda+\zeps$ \emph{one-by-one} using an auxiliary procedure \raiseprice, which takes as input a fractional dual solution $\alphat{0}$, a set of prices $\zt{0}$, a current integral solution $\ist{0}$, and a facility $i$.  \raiseprice increases the price of facility $i$, then outputs a close sequence of
roundable solutions $\cst{1}=\rolt{1},\ldots,\cst{q}=\rolt{q}$, each having $\zt{\ell}_i = \zt{0}_i + \zeps$ and $\zt{\ell}_{i'} = \zt{0}_{i'}$ for all $i' \neq i$.  Note that in addition to increasing the facility prices one-by-one, we now generate a \emph{sequence} of solutions for each individual price increase.

Initially, we set $\lambda \gets 0$ and then initialize $\cst{0}$ by setting $\zt{0}_i \gets 0$ for all $i \in \facilities$ and $\sfac=\emptyset$ (observe that $\sclient$ is then an empty function), and constructing $\alphat{0}$ as follows.  We set $\alpha_j = 0$ for all $j \in \clients$ and then increase all $\alpha_j$ at a uniform rate.  We stop increasing a value $\alpha_j$ whenever $j$ gains a tight edge to some facility $i \in \facilities$ or $2\sqrt{\alpha_j} \ge d(j,j') + 6\sqrt{\alpha_{j'}}$ for some $j' \in \clients$ (the rationale behind this choice will be made clear in Section \ref{sec:polyn-time-algor}).  Finally, we initialize our current integral solution $\ist{0} = \facilities$.

As long as an integral solution of size $k$ has
not yet been produced, Algorithm \ref{alg:1} iterates through each facility
$i \in \facilities$, calling \raiseprice to raise $z_i$
by $\zeps < 1/n$.  The sequences that are produced are used to obtain a sequence of integral solutions in which the number of open facilities decreases by at most $1$.  This is done by using a second procedure, \graphupdate, which is very similar to the procedure \quasigraphupdate described in the previous section.  Note that raise price always increases a single facility $i$'s price by $\zeps < 1/n$, and does not increase $z_i$ further until all other facility prices have also been increased by $\zeps$.  Thus, each in every pair of consecutive solutions $\cst{\ell}, \cst{\ell+1}$ considered by \graphupdate in line \ref{line:graphupdate}, every price $z_i \in \{\lambda, \lambda + \zeps\}$ and so both solutions are $\lambda$-roundable (for the same value $\lambda$).  We describe our auxiliary procedures \graphupdate and \raiseprice in the next sections.  Note that initially  $|\ist{0}| = |\facilities|$ and, by the same reasoning as in Section \ref{sec:quasi-rounding}, once $\lambda = L\cdot \zeps = 4n^7$ we must have $|\ist{0}| = 1$.  Thus, at some intermediate point, we will indeed find some solution $\is$ of size $k$.

\begin{algorithm}[h]
\DontPrintSemicolon
\SetNoFillComment
\caption{Polynomial time $(\rmean + O(\epsilon))$-approximation algorithm for $k$-means}
\label{alg:1}
\small
Initialize $\cst{0}=\rolt{0}$ as described in our discussion above\;
$\lambda \gets 0, \ist{0} \gets \facilities$\;
\For{$\lambda = 0,\,1\cdot\zeps,\,2\cdot\zeps,\,\ldots,\,L\cdot\zeps$}{
  \tcc{Raise the price of each facility $i$ to $\zt{0}_i + \zeps = \lambda + \zeps$}
  \ForEach{$i \in \facilities$}{
    Call $\raiseprice(\alphat{0}, \zt{0}, \ist{0},i)$ to produce a sequence $\cst{1},\ldots,\cst{q}$ of $\lambda$-roundable solutions\;
   \tcc{Move through this sequence, constructing integral solutions}
    \For{$\ell = 0$ \KwTo $q-1$}{
      Call $\graphupdate(\cst{\ell}, \cst{\ell+1},\ist{\ell})$ to produce a sequence $\ist{\ell,0},\ldots,\ist{\ell,p_\ell}$\; \label{line:graphupdate}
      \lIf{\normalfont $|\ist{\ell,r}| = k$ for some $\ist{\ell,r}$ in this sequence}{\Return $\ist{\ell,r}$}
      \lElse{$\ist{\ell+1} \gets \ist{\ell,p_\ell}$}
    }
   \tcc{After each price increase, update current solutions}
    $\cst{0} \gets \cst{q}$, $\ist{0} \gets \ist{q}$\;    \label{line:update}
  }
  \tcc{All prices have been increased. Continue to the next base price $\lambda$}
}
\end{algorithm}

Algorithm~\ref{alg:1} executes $L = 4n^7\cdot \zeps^{-1}$ base price
increases, each of which performs $|\facilities|$ calls to
\raiseprice.  In order to show that Algorithm~\ref{alg:1} runs it
polynomial time, it is sufficient to show that each call to
\raiseprice and \graphupdate produces a polynomial length sequence in
polynomial time.  In the next sections, we describe these procedures
in more detail and show that they run in polynomial time. In addition,
we show that \raiseprice produces a sequence of roundable solutions
(Proposition~\ref{prop:roundable}) that are close (Proposition~\ref{prop:close-values}).
In Section~\ref{sec:rounding}, we show that given these solutions,
\graphupdate finds a $(\rho+1000\epsilon)$-approximate solution
(Theorem~\ref{thm:rounding-approx-ratio}).\footnote{We remark that we have chosen to first describe \graphupdate as that procedure is very similar to
  \quasigraphupdate in the quasi-polynomial algorithm whereas
  \raiseprices is more complex.}  This implies our main
theorem:
\begin{theorem}
  For any $\aeps >0$, there is a $(\rho+\epsilon)$-approximation algorithm for $k$-means.
  \label{thm:mainsec}
\end{theorem}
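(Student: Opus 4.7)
The plan is to combine three results that the remaining subsections will establish: Proposition~\ref{prop:roundable} (each call to \raiseprice outputs a sequence of $\lambda$-roundable tuples), Proposition~\ref{prop:close-values} (that sequence has polynomial length and consecutive tuples are close), and Theorem~\ref{thm:rounding-approx-ratio} (given such a sequence together with a starting integral solution of size $>k$, \graphupdate produces an integral solution of size exactly $k$ whose cost is at most $(\rho+1000\epsilon)\opt_k$). Granted these, Theorem~\ref{thm:mainsec} will follow after rescaling $\epsilon \leftarrow \aeps/1000$ and taking $\gamma$ correspondingly small.

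First, I would verify polynomial running time by direct counting: the outer loop on $\lambda$ runs $L+1 = n^{O(1)}$ times (since $\zeps = n^{-O(1)}$), each iteration performs $|\facilities|$ calls to \raiseprice, each of which by Proposition~\ref{prop:close-values} outputs $n^{O(1)}$ tuples in polynomial time, and each resulting call to \graphupdate runs in polynomial time by the same argument as for \quasigraphupdate in Section~\ref{sec:descquasigraphupdate}.

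Next, I would argue that at some iteration $|\ist{\ell,r}| = k$ is attained. The initial integral solution $\ist{0} = \facilities$ has size $\geq k$ (else the problem is trivial). When $\lambda$ reaches $L\cdot \zeps = 4n^7$, the same clique argument as in Section~\ref{sec:quasi-rounding} — any two tight facilities lie within squared distance at most $4n^6 \ll \delta\cdot \stime_i$ of each other via Lemma~\ref{lem:dis} and the large base price — shows the conflict graph is a clique, so the independent set has size at most $1 < k$. Since \graphupdate is designed so that the size of the integral solution decreases by at most $1$ between consecutive members of its output sequence, a discrete intermediate-value argument supplies the desired $\is$ with $|\is| = k$. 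The algorithm then returns this $\is$.

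Finally, I would apply Theorem~\ref{thm:rounding-approx-ratio} at the moment of return. The pair $(\cst{\ell}, \cst{\ell+1})$ passed to \graphupdate consists of two $\lambda$-roundable tuples sharing the \emph{same} base price $\lambda$: \raiseprice modifies only a single facility's price by $\zeps$ between consecutive solutions in its output, so throughout a single call every $z_i$ lies in $\{\lambda, \lambda+\zeps\}$, and the shift $\cst{0} \gets \cst{q}$ in line~\ref{line:update} occurs only after all $|\facilities|$ facility prices have been raised to $\lambda+\zeps$. Hence Theorem~\ref{thm:rounding-approx-ratio} applies and yields $\sum_{j\in \clients} d(j,\is)^2 \leq (\rho+1000\epsilon)\opt_k$; the rescaling of $\epsilon$ then delivers Theorem~\ref{thm:mainsec}.

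The main obstacle lies in the deferred technical ingredients rather than in this top-level assembly: designing \raiseprice so that it outputs a polynomial-length (not quasi-polynomial) sequence of roundable, close tuples despite only adjusting one facility's price at a time, and verifying that the three relaxations permitted in Definition~\ref{def:roundable} — individual prices $z_i$, a set $\sfac$ of special (possibly not fully paid) facilities together with their designated payers $\sclient$, and a bad-client set $\bclient$ — cost only an additive $O(\epsilon)\opt_k$ in the cost accounting inside \graphupdate, so that the clean analysis of Section~\ref{sec:quasigraphupdateanalysis} still goes through up to lower-order error.
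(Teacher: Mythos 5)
Your proposal matches the paper's own (very terse) proof: Theorem~\ref{thm:mainsec} is obtained exactly by combining Proposition~\ref{prop:roundable}, Proposition~\ref{prop:close-values}, and Theorem~\ref{thm:rounding-approx-ratio}, together with the polynomial bound on the number of price increments and the intermediate-value argument (size drops by at most one per step, starts at $|\facilities|$, ends at $1$ once $\lambda = 4n^7$) guaranteeing a solution of size exactly $k$. Your top-level assembly, including the observation that consecutive solutions handed to \graphupdate share the same base price $\lambda$ and the final rescaling of $\epsilon$, is the intended argument.
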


\section{Opening a set of exactly $k$ facilities in a close, roundable sequence}
\label{sec:rounding}
In this section, we describe our algorithm  \graphupdate for interpolating
between two close roundable solutions  $\cst{\ell}$ and $\cst{\ell+1}$ starting with a maximal independent set $\ist{\ell}$ of the conflict graph\footnote{Below, we slightly generalize the definition of client-facility and conflict graphs in Section~\ref{sec:pddesc} to that of roundable solutions.} $\Gft{\ell}$ of $\cst{\ell}$.
The goal of this procedure is the same as that of \quasigraphupdate explained in Section~\ref{sec:quasi-rounding}: we maintain a sequence of maximal independent sets in appropriately constructed conflict graphs so that the size of the independent set never decreases by more than $1$, and the last solution is a maximal independent set of the conflict graph $\Gft{\ell+1}$ of $\cst{\ell+1}$.
Similar to Section~\ref{sec:quasi-rounding}, we use a ``hybrid'' client-facility
graph to generate our conflict graph in each step of our procedure.  The only
difference is that we need to slightly generalize the definition of a client-facility
graph  to incorporate the concept of roundable solutions. 

\paragraph{Client-facility and conflict graphs of roundable solutions.} We define the
\emph{client-facility} graph $\Gcf$ of a roundable solution $\cS
= (\alpha, z, \sfac, \sclient)$ as in Section~\ref{sec:pddesc} with the following
two changes: First, recall that we now consider a facility $i$ \emph{tight} if and only if
$\sum_{j \in N(i)}\beta_{ij} = z_i$.  Second, we shall additionally add every
facility $i \in \sfac$ to $\Gcf$, but place an edge between each $i \in \sfac$
and $j \in \clients$ only if $j \in N(i) \cap \sclient(i)$.  Intuitively, we treat
special facilities $i \in \sfac$ essentially the same as tight facilities,
except only those clients in $N(i) \cap \sclient(i)$ are considered to be paying for $i$.

Formally, let $\vfac$ denote the set of all tight facilities or special
facilities with respect to $\cS$.  Then, $\Gcf$ is a bipartite graph on
$\clients$ and $\vfac$ that contains an edge $(i,j)$ if and only if $i \in
\vfac \setminus \sfac$ and $j \in N(i)$ or $i \in \sfac$ and $j \in N(i)
\cap \sclient(i)$.  As before, we assign an
\emph{opening time} $\stime_i$ to each $i \in \vfac$.  For $i \in \vfac
\setminus \sfac$, $\stime_i = t_i =   \max_{j \in N(i)}\alpha_j$, and for $i \in \sfac$, $\stime_i = \max_{j \in N(i) \cap \sclient(i)}\alpha_j$.  In other words, $\stime_i$ equals the maximum $\alpha_j$ over all clients $j$ such that $(j,i)$ is an edge in $\Gcf$ (in the case that there is no such edge, we adopt the convention that $\stime_i = 0$).  Note that $\tau_i \leq t_i$ for any facility $i$.

Given a client facility graph $\Gcf$, and a set of opening times $\stime$, we
construct the corresponding \emph{conflict graph} $\Gf$ in the same way as in
Section~\ref{PDalg}:  the vertex set of $\Gf$ is the set of all facilities
appearing in $\Gcf$ and we place an edge between two facilities $i$ and $i'$ in
$\Gf$ if and only if there is some $j \in \clients$ such that both $(j,i)$ and
$(j,i')$ are present in $\Gcf$ and $d(i,i')^2 \le \delta \cdot  \min(\stime_i,
\stime_{i'})$. Notice that this coincides with the definition in
Section~\ref{PDalg} when the set of special facilities is empty. In particular, the initial independent set $\cF$ is a maximal independent set of the conflict graph associated to the initial solution (which has all facilities and no edges). Then, as in each iteration the last constructed  independent set by \graphupdate is given as input in the next call (see Algorithm~\ref{alg:1}), we maintain the property that  the input independent set $\ist{\ell}$ is a maximal independent set of the conflict graph of $\cst{\ell}$.

\paragraph{Description of \graphupdate.} Our algorithm now proceeds in the exact same
way as \quasigraphupdate in Section~\ref{sec:quasi-rounding}. A short description is repeated here for convenience.  Let $\Gcft{\ell}, \stimet{\ell}$ and
$\Gcft{\ell+1}, \stimet{\ell+1}$ be the client-facility graphs and times
associated with $\cst{\ell}$ and $\cst{\ell+1}$, respectively. Furthermore, let $\Gft{\ell}$ and $\Gft{\ell+1}$ be the conflict graphs generated by  $\Gcft{\ell}, \stimet{\ell}$ and
$\Gcft{\ell+1}, \stimet{\ell+1}$. Recall that the input to \graphupdate is $\cst{\ell}, \cst{\ell+1}$ and a maximal independent set $\ist{\ell}$ of $\Gft{\ell}$.

Define the
``hybrid'' client-facility graph $G$ as  the union of $\Gcft{\ell}$ and
$\Gcft{\ell+1}$ where the client vertices are shared and the facilities are
duplicated if necessary so as to make sure that the facilities of $\Gcft{\ell}$ and $\Gcft{\ell+1}$ are disjoint.  The opening times are defined by
\begin{align*}
  \stime_i = \begin{cases}
    \stimet{\ell}_i  & \mbox{ if $i\in \vfact{\ell}$} \\
    \stimet{\ell+1}_i & \mbox{ if $i\in \vfact{\ell+1}$} 
  \end{cases}\,,
\end{align*}
where $\vfact{\ell}$ and $\vfact{\ell+1}$ denote the (disjoint) sets of
facilities in $\Gcft{\ell}$ and $\Gcft{\ell+1}$, respectively.
We then generate the conflict graph $\Gft{\ell,1}$ from $\Gcf$ and
$\stime$.   As the induced subgraph of $\Gft{\ell,1}$ on vertex set $\vfac^{\ell}$
equals $\Gft{\ell} = \Gft{\ell,0}$, we have that the given maximal independent set $\ist{\ell}$ of $\Gft{\ell}$ is also an
independent set of $\Gft{\ell,1}$.  We obtain a maximal independent set
$\ist{\ell,1}$ of $\Gft{\ell,1}$ by greedily extending $\ist{\ell}$.   
We then obtain the remaining conflict graphs and independent sets by removing
from $\Gcf$ each facility $i\in \vfact{\ell}$, one by one. After each step
we generate the associated conflict graph and we greedily extend the previous independent
set (with $i$ potentially removed) so as to obtain a maximal independent set in the
updated conflict graph. This results, as in Section~\ref{sec:quasi-rounding}, in the sequence $\Gft{\ell} = \Gft{\ell,0}, \Gft{\ell,1}, \dots, \Gft{\ell, p_\ell} = \Gft{\ell+1}$ of $|\vfact{\ell}|+2$ many conflict graphs 
and  a sequence $\ist{\ell} = \ist{\ell,0}, \ist{\ell,1}, \dots, \ist{\ell, p_\ell} = \ist{\ell+1}$ of associated maximal independent sets so that $|\ist{\ell, s}| \geq |\ist{\ell, s-1}| -1$ for any $s=1,\dots, p_\ell$. The output of \graphupdate is this sequence of independent sets.

\subsection{Analysis}
\label{sec:obtaining}
\graphupdate clearly runs in polynomial time since the number of steps is  
polynomial and each step requires only the construction of a conflict graph and greedily maintaining a maximal independent set. 

We proceed to analyze the approximation guarantee.
In comparison to Section~\ref{sec:quasigraphupdateanalysis}, our analysis here
is slightly more involved because it is with respect to roundable solutions
instead of good solutions. In addition, we prove that \emph{all} independent
sets constructed in Algorithm~\ref{alg:1} (by calls to \graphupdate) of size at least $k$ have small
connection cost.   Specifically, we show that any constructed independent set $\is$ with $|\is| \geq k$
has $\sum_{j\in\clients} d(j,\is)^2 \leq (\rho + O(\aeps)) \opt_k$. 

First note that the initial independent set $\ist{0}$ of
Algorithm~\ref{alg:1} contains all facilities and hence $\sum_{j\in \clients}
d(j,\ist{0})^2 \leq \opt_k$.  All other independent sets are
constructed by calls to \graphupdate. Consider one such independent set $\is$
with $|\is| \geq k$ and  consider the first time this independent set was constructed. Suppose that when this happened,
we were moving between two solutions $\cst{\ell}$ and $\cst{\ell+1}$ that are
roundable for  the same $\lambda$.  Then, $\is=\ist{\ell,s}$ for some step $s\geq 1$ of
\graphupdate.  We may assume $s \geq 1$ because $\ist{\ell,0}
= \ist{\ell}$ was constructed in the previous call to \graphupdate (or it equals the initial independent set).
Let $\Gcf$ and $\stime$ be the client-facility graph and the opening times that generated the conflict graph $H = \Gft{\ell, s}$ in which $\is = \ist{\ell, s}$ is a maximal independent set. Also note that we may assume, without loss of generality, that $|\is| \leq n$.  Otherwise, we can reduce the size of $\is$ 
since the connection cost of $\is$ equals that of $\bigcup_{j\in \clients}
\{\argmin_{i\in \is} d(j,i)\}$.

Similar to Section~\ref{sec:quasigraphupdateanalysis}, we analyze the cost of $\is$ with respect to a hybrid solution $\alpha$ obtained 
by setting $\alpha_j = \min(\alphat{\ell}_j,\alphat{\ell+1}_j)$ for each client
$j \in \clients$.  The following observations and concepts are also very similar to the ones in that section. We remark that $\alpha$ is a feasible solution of $\dualf[\lambda
+ \tfrac{1}{n}]$ and, since $\alphat{\ell}$ and $\alphat{\ell+1}$ are close,
$\alpha_j \ge \alphat{\ell}_j - \frac{1}{n^2}$ and $\alpha_j \ge
\alphat{\ell+1}_j - \frac{1}{n^2}$.  For each client $j$, we define a set of
facilities $S_j \subseteq \is$ to which $j$ contributes, as follows.  For all
$i \in \is$, we have $i \in S_j$ if $\alpha_j > d(j,i)^2$ and $(j,i)$ is an edge
in $G$.  Note that $S_j$ is a subset of $j$'s neighborhood in $G$ and therefore 
\begin{align}
  \label{eq:times}
  \alpha_j  = \min(\alphat{\ell}_j, \alphat{\ell+1}_j) \leq \stime_i \qquad  \mbox{for all $i\in S_j$.}
\end{align}

Using the fact that $\cst{\ell+1}$ is roundable, we can bound the total
service cost of all clients in the integral solution $\is$.  Let us first
proceed separately for those clients with $|S_j| >  0$.  Let $\clients_0 = \{j
\in \clients : |S_j| = 0\}$, and $\clients_{> 0} = \clients \setminus
\clients_0$.  The following lemma is identical to Lemma~\ref{lem:quasi-connected-client-costs} and its proof  is therefore omitted.

\begin{lemma}
  \label{lem:connected-client-costs}
  For any $j\in \clients_{>0}$, $d(j,\is)^2 \leq \rho \cdot \left( \alpha_j - \sum_{i \in S_j}\beta_{ij}\right)$.
\end{lemma}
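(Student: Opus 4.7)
The plan is to mirror the case analysis from the proof of Theorem~\ref{thm:euckmeans} (and its near-verbatim quasi-polynomial twin, Lemma~\ref{lem:quasi-connected-client-costs}), with the only substantive change being that the role played there by $t_i$ is now played by the opening time $\stime_i$ of the hybrid client-facility graph $\Gcf$. The key input is Equation~\eqref{eq:times}, which states that for every $i \in S_j$ one has $\alpha_j \le \stime_i$; this follows from the definition of $S_j$ as a subset of $j$'s neighborhood in $\Gcf$. Given this, the non-adjacency of facilities inside $S_j \subseteq \is$ in the conflict graph $H$ yields $d(i,i')^2 > \delta \min(\stime_i,\stime_{i'}) \ge \delta \alpha_j$ for any two distinct $i,i' \in S_j$, which is exactly the inequality that powered the analysis of Theorem~\ref{thm:euckmeans}.

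First I would dispose of the case $|S_j| = 1$: writing $S_j = \{i^*\}$, the definition of $S_j$ gives $\beta_{i^* j} = \alpha_j - d(j,i^*)^2$, so
\[
\alpha_j - \sum_{i \in S_j}\beta_{ij} \;=\; d(j,i^*)^2 \;\ge\; d(j,\is)^2,
\]
and the bound follows since $\rho \ge 1$. Next I would handle $|S_j| = s > 1$ by invoking the Euclidean centroid identity: letting $\mu$ be the centroid of $S_j$,
\[
\sum_{i \in S_j} d(j,i)^2 \;\ge\; \sum_{i \in S_j} d(i,\mu)^2 \;=\; \frac{1}{2s}\sum_{i \in S_j}\sum_{i' \in S_j} d(i,i')^2 \;>\; \frac{\delta (s-1)}{2}\,\alpha_j,
\]
where the last step uses the non-adjacency bound above. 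From here the algebra is identical to ``Case $s > 1$'' of Theorem~\ref{thm:euckmeans}: rewriting $\sum_{i \in S_j}\beta_{ij} = s\alpha_j - \sum_{i \in S_j} d(j,i)^2$ gives $\sum_{i \in S_j}\beta_{ij} \le (s(1-\tfrac{\delta}{2})+\tfrac{\delta}{2})\alpha_j$, which is non-increasing in $s$ for $\delta \ge 2$ and is therefore at most $(2-\tfrac{\delta}{2})\alpha_j$ when $s \ge 2$. Combined with $(\tfrac{\delta}{2}-1)d(j,\is)^2 \le (\tfrac{\delta}{2}-1)\alpha_j$ (which holds because $\alpha_j > d(j,i)^2 \ge d(j,\is)^2$ for any $i \in S_j$), one obtains the required inequality upon recalling $\rho \ge 1/(\delta/2 - 1)$.

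There is essentially no obstacle here; the sole conceptual check is that \eqref{eq:times} indeed supplies the lower bound $\min(\stime_i,\stime_{i'}) \ge \alpha_j$ on the relevant opening times in the hybrid graph, which is exactly what the definition of $S_j$ (membership via a $\Gcf$-edge and $\alpha_j > d(j,i)^2$) was arranged to guarantee. Everything else is word-for-word the earlier Euclidean calculation, which is why the paper leaves the proof to the reader.
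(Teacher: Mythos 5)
Your proof is correct and is exactly the argument the paper intends: the paper omits the proof of this lemma, deferring to Lemma~\ref{lem:quasi-connected-client-costs}, which in turn defers to ``Case $s=1$'' and ``Case $s>1$'' of Theorem~\ref{thm:euckmeans}, using precisely the observation~\eqref{eq:times} that $\alpha_j \le \stime_i$ for all $i \in S_j$ so that non-adjacency in the conflict graph gives $d(i,i')^2 > \delta\alpha_j$. No discrepancies to report.
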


                 Next, we bound the total service cost of all those clients that do not
contribute to any facility in $\is$. The proof is very similar to that of Lemma~\ref{lem:quasi-disconnected-client-costs}
except that we also need to
handle the bad clients in \bclient. 
\begin{lemma}\label{lem:disconnected-client-costs}
$\sum_{j \in \clients_0}d(j,\is)^2 \le (\rho+ 200\epsilon)\sum_{j \in \clients_0}\alpha_j + 36\gamma \cdot \opt_k$.
\end{lemma}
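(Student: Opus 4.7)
My plan is to mirror the proof of Lemma~\ref{lem:quasi-disconnected-client-costs} (essentially ``Case $s=0$'' of Theorem~\ref{thm:euckmeans}), adapting it to roundable solutions by accounting for the two new ingredients: the bad clients in $\bclient$ and the special facilities in $\sfac$. I will work throughout with the roundable solution $\cst{\ell+1}$ and its witnesses $w(j)$, and use the hybrid graphs $\Gcf, H = \Gft{\ell,s}$ in which $\is = \ist{\ell,s}$ is maximal.

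First, I would establish the uniform inequality
\begin{equation*}
 d(j,\is) \;\le\; d(j,w(j)) + \sqrt{\smash[b]{\delta\cdot\stime_{w(j)}}} \qquad \text{for every } j\in\clients.
\end{equation*}
By Condition~\ref{item:roundable-2} of Definition~\ref{def:roundable}, the witness $w(j)$ is either tight in $\cst{\ell+1}$ or lies in $\sfact{\ell+1}$, hence $w(j)\in\vfact{\ell+1}$ and therefore appears as a vertex of the hybrid graph $\Gcf$ (since \graphupdate only ever removes facilities from $\vfact{\ell}$). If $w(j)\in\is$ the inequality is immediate; otherwise, by maximality of $\is$ in $H$ there is some $i'\in\is$ adjacent to $w(j)$, and by the definition of the conflict graph $d(w(j),i')^2 \le \delta\min(\stime_{w(j)},\stime_{i'}) \le \delta\stime_{w(j)}$, so the triangle inequality gives the bound.

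Second, I would partition $\clients_0 = (\clients_0\setminus\bclient) \cup (\clients_0\cap\bclient)$ and handle the two pieces separately. For $j\in\clients_0\setminus\bclient$, Condition~\ref{item:good-client-roundable} gives $(d(j,w(j)) + \sqrt{\delta\stime_{w(j)}})^2 \le (1+\sqrt{\delta}+10\epsilon)^2\,\alphat{\ell+1}_j$. Closeness of the sequence yields $\alphat{\ell+1}_j \le \alpha_j + 1/n^2$, and since $\alpha_j\ge 1$ (by the preprocessing of Lemma~\ref{lem:dis}) this means $\alphat{\ell+1}_j \le (1+1/n^2)\alpha_j$. Expanding
\begin{equation*}
(1+\sqrt{\delta}+10\epsilon)^2 = \rho + 20\epsilon(1+\sqrt{\delta}) + 100\epsilon^2 \le \rho + 100\epsilon,
\end{equation*}
and absorbing the $(1+1/n^2)$ factor (which is negligible since $n\gg 1/\epsilon$), summing over $j\in\clients_0\setminus\bclient$ produces $\sum_{j\in\clients_0\setminus\bclient} d(j,\is)^2 \le (\rho+200\epsilon)\sum_{j\in\clients_0\setminus\bclient}\alpha_j$.

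Third, for $j\in\clients_0\cap\bclient$ I would directly apply the upper bound from Step~1 together with Condition~\ref{item:bad-client-roundable}:
\begin{equation*}
\sum_{j\in\clients_0\cap\bclient}d(j,\is)^2 \;\le\; \sum_{j\in\clients_0\cap\bclient}\bigl(d(j,w(j))+\sqrt{\smash[b]{\delta\stime_{w(j)}}}\bigr)^2 \;\le\; \sum_{j\in\bclient}\bigl(d(j,w(j))+\sqrt{\smash[b]{\delta\stime_{w(j)}}}\bigr)^2 \;\le\; 36\gamma\cdot\opt_k.
\end{equation*}
Adding the two bounds and using $\sum_{j\in\clients_0\setminus\bclient}\alpha_j \le \sum_{j\in\clients_0}\alpha_j$ gives the statement of the lemma.

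The only step that requires real care is the first one: I must verify that $w(j)$, chosen with respect to $\cst{\ell+1}$, is actually present in the hybrid graph $\Gcf$ used at step $s$ so the maximality argument applies. This is precisely where the asymmetric structure of \graphupdate (removing only facilities from $\vfact{\ell}$) is essential. The remainder of the proof is a routine constant-chase that hinges on the relaxed good-client inequality in Definition~\ref{def:roundable} and the closeness of consecutive dual solutions.
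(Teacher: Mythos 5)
Your proposal is correct and follows essentially the same route as the paper's proof: the same maximality argument in the hybrid conflict graph yielding $d(j,\is)\le d(j,w(j))+\sqrt{\smash[b]{\delta\stime_{w(j)}}}$, the same split of $\clients_0$ into $\clients_0\setminus\bclient$ and $\clients_0\cap\bclient$, and the same use of Conditions~\ref{item:good-client-roundable} and~\ref{item:bad-client-roundable} together with closeness to absorb the constants. Your care in checking that $w(j)\in\vfact{\ell+1}$ survives in every hybrid graph with $s\ge 1$ is exactly the point the paper also makes.
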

\begin{proof}
Consider some client $j \in \clients_0$, and let $w(j) \in \vfact{\ell+1}$ be
the tight or special facility for $j$ corresponding to the roundable solution
$\cst{\ell+1}$. Note that  $w(j)$ is present in $\Gf$ (since $\Gf= \Gft{\ell,s}$ with $s\geq 1$ contains all facilities in $\vfact{\ell+1}$) and $\stime_{w(j)}
= \stimet{\ell+1}_{w(j)}$ by definition.   Thus, since $\is$ is a maximal independent set of $\Gf$, either $w(j)\in \is$, in which case $d(j,\is) \leq d(j,w(j))$, or  there must be some
other facility $i \in \is$ such that $\Gf$ contains the edge $(i,w(j))$, in which case
 \begin{align*}
   d(j,\is) \leq d(j,w(j)) + d(w(j),i) \leq d(j,w(j)) + \sqrt{\delta \stime_i}\,,
 \end{align*}
 where the last inequality follows from the fact that $w(j)$ and $i$ are adjacent in $H$ and thus $d(w(j),i)^2  \leq \delta \min(\stime_i,\stime_{i'})$ by the definition of $\Gf$. In any case, we have   $d(j,\is) \le d(j,w(j)) + \sqrt{\delta
 \stime_i}$ with $\stime_i = \stimet{\ell+1}_i$,  and so:
\begin{align*}
\sum_{j \in \clients_0}d(j,\is)^2 &= \sum_{j \in \clients_0 \setminus \bclient}d(j,\is)^2 + \sum_{j \in \clients_0 \cap \bclient}d(j,\is)^2 \\
&\le \sum_{j \in \clients_0 \setminus \bclient} \left(d(j,w(j)) + \sqrt{\delta \cdot \stimet{\ell+1}_{w(j)}}\right)^2 + 
\sum_{j \in \bclient} \left(d(j,w(j)) + \sqrt{\delta \cdot \stimet{\ell+1}_{w(j)}}\right)^2 \\
&\le \sum_{j \in \clients_0 \setminus \bclient} (1+\sqrt{\delta} + 10\aeps)^2\cdot \alphat{\ell+1}_j + 36\gamma \cdot \opt_k\,,
\end{align*}
where the final inequality follows from the fact that $\cst{\ell+1}$ is roundable.  The statement now follows since\footnote{Here we assume without loss of generality that $\alpha_j \geq 1$ for every client $j$. That this is without loss of generality follows from the fact that the distance from any client to a facility is at least $1$ (Lemma~\ref{lem:dis}). In particular, any solution produced by Algorithm~\ref{alg:1} satisfies this (see Invariant~\ref{inv:feasibility}).} $\alphat{\ell+1}_j \leq \alpha_j + 1/n^2 \leq (1 + 1/n^2)\alpha_j$ for all $j \in \clients$, $\epsilon\leq 1, \sqrt{\delta} \leq 2$, and $\rho \ge (1 + \sqrt{\delta})^2$.
\end{proof}
We now bound the contributions to the opened facilities as in
Lemma~\ref{lem:quasifacilitycost} except that we also need to handle the
special facilities. 
\begin{lemma}For any $i\in \is \setminus (\sfact{\ell} \cup \sfact{\ell+1})$, we have $\sum_{j\in \clients} \beta_{ij}  \geq  \lambda - \tfrac{1}{n}$ and for any $i \in \sfact{x}$ for some $x\in \{\ell, \ell+1\}$, we have $\sum_{j\in \sclientt{x}(i)} \beta_{ij}  \geq\left(\sum_{j\in \sclientt{x}(i)}  \betat{x}_{ij} \right) - \tfrac{1}{n}$.
  \label{lem:facilitycost}
\end{lemma}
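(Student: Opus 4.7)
The plan is to adapt the argument of Lemma~\ref{lem:quasifacilitycost} to the roundable setting, in which facility prices may vary slightly and where we must treat special facilities separately. The key pointwise fact to establish first is that for any facility $i$ and any client $j$,
\[
\beta_{ij} \ge \betat{x}_{ij} - \tfrac{1}{n^2} \qquad \text{for both } x \in \{\ell, \ell+1\}.
\]
This follows because $\alpha_j = \min(\alphat{\ell}_j, \alphat{\ell+1}_j)$ together with the closeness of $\alphat{\ell}$ and $\alphat{\ell+1}$ gives $\alpha_j \ge \alphat{x}_j - \tfrac{1}{n^2}$, so $[\alpha_j - d(j,i)^2]^+ \ge [\alphat{x}_j - d(j,i)^2]^+ - \tfrac{1}{n^2}$.

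For the first part of the statement, fix $i \in \is \setminus (\sfact{\ell} \cup \sfact{\ell+1})$. Since $i$ is a vertex of the conflict graph $H = \Gft{\ell, s}$ generated from the hybrid client-facility graph, $i$ must lie in $\vfact{\ell}$ or in $\vfact{\ell+1}$; pick $x \in \{\ell, \ell+1\}$ so that $i \in \vfact{x}$. By assumption $i$ is not special in $\cst{x}$, so $i$ appears in $\Gcft{x}$ only because it is tight, i.e.\ $\sum_{j\in \clients} \betat{x}_{ij} = \zt{x}_i \ge \lambda$ (using Condition~\ref{item:roundable-1} of Definition~\ref{def:roundable}). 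Summing the pointwise inequality over all $n$ clients then yields
\[
\sum_{j\in \clients} \beta_{ij} \;\ge\; \sum_{j\in \clients} \betat{x}_{ij} - \tfrac{n}{n^2} \;\ge\; \lambda - \tfrac{1}{n},
\]
as required.

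For the second part, fix $i \in \sfact{x}$ for some $x \in \{\ell, \ell+1\}$ and restrict the same pointwise inequality to $j \in \sclientt{x}(i)$. Since $|\sclientt{x}(i)| \le |\clients| = n$,
\[
\sum_{j \in \sclientt{x}(i)} \beta_{ij} \;\ge\; \sum_{j \in \sclientt{x}(i)} \betat{x}_{ij} - \tfrac{|\sclientt{x}(i)|}{n^2} \;\ge\; \Bigl(\sum_{j \in \sclientt{x}(i)} \betat{x}_{ij}\Bigr) - \tfrac{1}{n}.
\]
This is the claimed inequality; there is no genuine obstacle beyond carefully tracking the $\tfrac{1}{n^2}$ slack per client that arises from $\alpha$ being the pointwise minimum of two close solutions. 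The only subtlety to flag is picking the correct $x$ in the non-special case (which is available precisely because $H$ is built from the hybrid graph, so every non-special vertex of $H$ inherits tightness from one of $\cst{\ell}$, $\cst{\ell+1}$).
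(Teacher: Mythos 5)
Your proposal is correct and follows essentially the same argument as the paper: derive the pointwise bound $\beta_{ij} \ge \betat{x}_{ij} - \tfrac{1}{n^2}$ from the closeness of $\alphat{\ell}$ and $\alphat{\ell+1}$, invoke tightness (with $\zt{x}_i \ge \lambda$) for non-special facilities, and sum over at most $n$ clients in each case. No issues.
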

\begin{proof}
  For the first bound, consider a facility $i\in \is \setminus (\sfact{\ell} \cup \sfact{\ell+1})$ and let $x\in \{\ell, \ell+1\}$ be such that $i \in \vfact{x}$. Then $i$ is a tight facility with respect to $\solt{x}$, i.e.,
  $\sum_{j\in \clients} \betat{x}_{ij} = \zt{x}_i$. As $\cst{x}$ is roundable for $\lambda$, we have $\zt{x}_i \geq \lambda$. Moreover,  $\alpha_j \geq \alphat{x}_j - \tfrac{1}{n^2}$ for every client $j$, and so
  \begin{align*}
    \sum_{j\in \clients} \beta_{ij} \geq  \sum_{j\in \clients} \left(\betat{x}_{ij}- \tfrac{1}{n^2}\right)   \geq \lambda - \tfrac{1}{n}\,. 
  \end{align*}
  Now consider a special facility $i\in \sfact{x}$ for some $x\in \{\ell, \ell+1\}$.  Then, by again using that $\alpha_j \geq \alphat{x}_j - \tfrac{1}{n^2}$ for every client $j$, 
  \begin{align*}
    \sum_{j\in \sclientt{x}(i)} \beta_{ij} \geq  \sum_{j\in \sclientt{x}(i)} \left(\betat{x}_{ij}- \tfrac{1}{n^2}\right)\,,
  \end{align*}
  and the lemma follows since $|\sclientt{x}(i)| \leq |\clients| = n$.
\end{proof}
We are now ready to prove our main result, which bounds the connection cost of $\is$ in terms of $\opt_k$ as desired. The proof is very similar to the proof of Theorem~\ref{thm:quasiapproxbound}.
\begin{theorem}
\label{thm:rounding-approx-ratio}
For any $\is$ produced by \graphupdate with $|\is| \geq k$,
\[\sum_{j \in \cD} d(j,\is)^2 \le (\rho + 1000\epsilon)\cdot \opt_k.\]
\end{theorem}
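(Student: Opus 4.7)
The plan is to combine the three preceding lemmas with the dual feasibility of $\alpha$ and the roundability conditions to obtain the desired bound. First, I would split the sum over clients as $\sum_{j \in \clients}d(j,\is)^2 = \sum_{j\in \clients_{>0}} d(j,\is)^2 + \sum_{j\in \clients_0} d(j,\is)^2$. Applying Lemma~\ref{lem:connected-client-costs} to the first summand and Lemma~\ref{lem:disconnected-client-costs} to the second, and using that $S_j = \emptyset$ for $j \in \clients_0$, this is at most
\[
(\rho + 200\aeps)\sum_{j \in \clients} \Bigl(\alpha_j - \sum_{i \in S_j}\beta_{ij}\Bigr) + 36\gamma\cdot\opt_k.
\]

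Next, I would swap the order of summation to rewrite $\sum_{j}\sum_{i \in S_j}\beta_{ij} = \sum_{i \in \is} \sum_{j : i \in S_j}\beta_{ij}$, and lower bound each inner sum using Lemma~\ref{lem:facilitycost}. Here the cases split: for $i \in \is \setminus (\sfact{\ell} \cup \sfact{\ell+1})$, the set $\{j : i \in S_j\}$ coincides with $\{j : \beta_{ij} > 0\}$ and the lemma yields $\ge \lambda - 1/n$; for $i \in \is \cap \sfact{x}$ ($x \in \{\ell,\ell+1\}$), the set $\{j : i \in S_j\}$ equals $\sclientt{x}(i) \cap \{j : \beta_{ij} > 0\}$ and the lemma gives $\ge \bigl(\sum_{j \in \sclientt{x}(i)} \betat{x}_{ij}\bigr) - 1/n$.

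The main obstacle is that Condition~\ref{item:roundable-3} of Definition~\ref{def:roundable} only provides an \emph{aggregate} lower bound $\sum_{i \in \sfact{x}}\sum_{j \in \sclientt{x}(i)} \betat{x}_{ij} \geq \lambda |\sfact{x}| - \gamma\opt_k$, but I need a lower bound that involves only special facilities inside $\is$. I would handle this by subtracting off contributions from special facilities outside $\is$, each bounded above by $\lambda + 1/n$ (since $\alphat{x}$ is feasible for $\dualf[\lambda+1/n]$). Since $|\sfact{x}| \leq n$ by Condition~\ref{item:roundable-3}, the resulting additive error is $O(1)$, so summing over both $x = \ell, \ell+1$ gives
\[
\sum_{i \in \is}\sum_{j : i \in S_j}\beta_{ij} \;\geq\; \lambda|\is| - 2\gamma\cdot\opt_k - O(1) \;\geq\; \lambda k - 2\gamma\cdot\opt_k - O(1),
\]
where the last inequality uses $|\is| \geq k$ and $\lambda \geq 0$.

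Finally, since $\alpha$ is feasible for $\dualf[\lambda + \tfrac{1}{n}]$, we have $\sum_{j}\alpha_j - (\lambda + \tfrac{1}{n})k \leq \opt_k$, and so $\sum_{j}\alpha_j - \lambda k \leq \opt_k + 1$. Combining, $\sum_{j}(\alpha_j - \sum_{i \in S_j}\beta_{ij}) \leq (1 + 2\gamma)\opt_k + O(1)$. Using $\opt_k \geq \sum_{j}\min_i d(i,j)^2 \geq n$ (from Lemma~\ref{lem:dis}) together with $n \gg 1/\gamma$, the additive $O(1)$ is absorbed into $\aeps \opt_k$. Plugging this back into the first display and using $\gamma \ll \aeps$ to absorb all the $O(\gamma)\rho$ and $36\gamma$ terms into a single $1000\aeps$ slack yields $\sum_{j}d(j,\is)^2 \leq (\rho + 1000\aeps)\opt_k$, as required.
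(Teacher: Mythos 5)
Your proposal is correct and follows essentially the same route as the paper's proof: the same split into $\clients_{>0}$ and $\clients_0$ via Lemmas~\ref{lem:connected-client-costs} and~\ref{lem:disconnected-client-costs}, the same per-facility accounting via Lemma~\ref{lem:facilitycost}, the same trick of converting the aggregate bound of Condition~\ref{item:roundable-3} into a bound over $\sfact{x}\cap\is$ by charging each excluded special facility at most $\lambda+\tfrac1n$, and the same use of dual feasibility plus $\opt_k\ge n$ to absorb the additive error. The only cosmetic difference is that you drop from $\lambda|\is|$ to $\lambda k$ before invoking weak duality whereas the paper keeps $|\is|$ throughout; this changes nothing.
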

\begin{proof}
From Lemmas \ref{lem:connected-client-costs}  and \ref{lem:disconnected-client-costs} we have:
\begin{align}
\sum_{j \in \clients}d(j,\is)^2 \le (\rho + 200\epsilon)\left(\sum_{j \in \clients}\alpha_j - \sum_{i \in S_j}\beta_{ij}\right) + 36\gamma \cdot \opt_k\,.
\label{eq:cost-main}
\end{align}
Note that by definition, if 
$i \not\in  \sfact{\ell} \cup \sfact{\ell+1}$ then 
$\sum_{j\in \clients} \beta_{ij} = \sum_{j : i \in S_j} \beta_{ij}$ and if 
$i \in \sfact{x}$ then $\sum_{j\in \sclientt{x}(i)} \beta_{ij} = \sum_{j : i \in S_j} \beta_{ij}$.  
Also, recall that by our construction of $H$, $\sfact{\ell}$ and $\sfact{\ell+1}$ are distinct.  Thus, by Lemma~\ref{lem:facilitycost},
\begin{align*}
  \sum_{j \in \clients}\left(\alpha_j - \sum_{i \in S_j}\beta_{ij}\right) & \leq \sum_{j\in \clients} \alpha_j - |\is\setminus (\sfact{\ell} \cup \sfact{\ell+1})| \left(\lambda - \tfrac{1}{n}\right) - \sum_{x\in \{\ell, \ell+1\}} \sum_{i\in \sfact{x} \cap \is} \left(\sum_{j\in \sclientt{x}(i)} \betat{x}_{ij} - \tfrac{1}{n}\right) \\
  & \leq  \sum_{j\in \clients} \alpha_j - |\is\setminus (\sfact{\ell} \cup \sfact{\ell+1})| \lambda  - \sum_{x\in \{\ell, \ell+1\}} \sum_{i\in \sfact{x} \cap \is} \sum_{j\in \sclientt{x}(i)} \betat{x}_{ij} + \frac{|\is|}{n}\,.
\end{align*}
Since $\cst{x}$ is roundable for $x\in \{\ell, \ell+1\}$, we have $\sum_{i\in
  \sfact{x}} \sum_{j\in \sclientt{x}(i)} \betat{x}_{ij} \geq \lambda |\sfact{x}
  | - \gamma \cdot \opt_k$. Moreover, as $\alphat{x}$ is a feasible solution of
  $\dualf[ \lambda+ \tfrac{1}{n}]$, we have that $\sum_{j\in \sclientt{x}(i)}
  \betat{x}_{ij} \leq \lambda + \tfrac{1}{n}$ for any $i\in \sfact{x}$. Therefore,
  \begin{align*}
    \sum_{i\in \sfact{x}\cap \is} \sum_{j\in \sclientt{x}(i)} \betat{x}_{ij} \geq \lambda | \sfact{x} \cap \is| - \tfrac{|\sfact{x}  \setminus \is|}{n}  - \gamma \cdot \opt_k
 \geq \lambda | \sfact{x} \cap \is| - 2\gamma \cdot \opt_k\,.
  \end{align*} 
  where for the final inequality we use
  that $|\sfact{x}|\leq n \leq \opt_k$, which follows from Definition~\ref{def:roundable}, 
  the fact that  any client has distance at least $1$ to its closest facility, and $1/n \ll \gamma$.  Combining this with the above inequalities yields
\begin{align*}
  \sum_{j \in \clients}\left(\alpha_j - \sum_{i \in S_j}\beta_{ij}\right) & \leq  \sum_{j\in \clients} \alpha_j - |\is| \lambda  + 4\gamma \cdot \opt_k + \frac{|\is|}{n}\\
  & = \sum_{j\in \clients} \alpha_j - |\is|(\lambda + \tfrac{1}{n})  + 4\gamma \cdot \opt_k + \frac{2|\is|}{n} \\
  & \leq  \opt_k +  4\gamma \cdot \opt_k + \frac{2|\is|}{n} \leq (1+5\gamma ) \opt_k \,,
\end{align*}
where we in the penultimate inequality used that $\alpha$ is a feasible solution to \dualf[\lambda + \tfrac{1}{n}] and $|\is|\geq k$, therefore $\sum_{j\in \clients} \alpha_j - |\is| (\lambda + \tfrac{1}{n}) \leq \sum_{j\in \clients} \alpha_j - k (\lambda + \tfrac{1}{n}) \leq \opt_k$; and, in the last inequality, we used that  $\gamma\cdot \opt_k \geq \gamma n \geq 2$ and the assumption that $|\is|\leq n$.

We conclude the proof by substituting this bound in~\eqref{eq:cost-main}: 
\begin{align*}
  \sum_{j \in \clients}d(j,\is)^2 \le (\rho + 200\epsilon) (1+5 \gamma) \opt_k  + 36 \gamma \cdot \opt_k  \leq (\rho + 1000\aeps) \opt_k\,. & \qedhere
\end{align*}

\end{proof}

\section{The algorithm \raiseprice}
\label{sec:polyn-time-algor}
In this section, we give the details of the algorithm \raiseprice, which
is responsible for raising facility prices and generating sequences of roundable
solutions in Algorithm \ref{alg:1}.  It is based on similar insights as used in the
quasi-polynomial algorithm described in Section~\ref{sec:quasi}.  Let us first provide
a high-level overview of our approach.  Recall that in our analysis of that
procedure, changing the values $\alpha_j$ in some bucket $b$ by $\zeps$
roughly required changing the values in bucket $b+1$ by up to $n \zeps$.  Because
there were $\Omega(\log(n))$ buckets, the total change in the last bucket was
potentially $\zeps n^{\Omega(\log n)}$, and so to obtain a close sequence
of $\alpha$-values, we required $\zeps= n^{-\Omega(\log n)}$ in that section.  
Here, we reduce the dependence on $n$ by changing the way in which we increase the opening price $z$.  As in the quasi-polynomial procedure, our algorithm
repeatedly increases the opening cost of every facility from $\lambda$ to
$\lambda + \zeps$, for some appropriate small increment $\zeps = n^{-O(1)} < \aeps$.  However,
instead of performing each such increase for every facility at once, we instead
increase only a single facility's price at a time.  Each such increase will
still cause some clients to become unsatisfied (or undecided as we shall call them), and so we must repair the
solution.  In contrast to the quasi-polynomial procedure, \raiseprice repairs
the solution over a series of \emph{stages}.  We show this will result in a \emph{polynomial length} sequence of close, roundable solutions.

\textbf{Notation: }Throughout this section, we let $z_i$ denote the current price for a facility $i \in \facilities$, where always $z_i \in \{\lambda,  \lambda + \zeps\}$.  We shall now say that $i$ is \emph{tight} if $\sum_{j \in \clients}\beta_{ij} = z_i$, where as before for a solution $\alpha$, we use $\beta_{ij}$ as a shorthand for $[\alpha_{j} - d(j,i)^2]^+$.  It will also be convenient to denote $\sqrt{\alpha_j}$ by $\sqrtalpha_j$.  Note that $\beta_{ij} > 0$, if and only if $\sqrtalpha_j > d(j,i)$.  As in the quasi-polynomial procedure, we shall divide the range of possible values for $\alpha_j$ into buckets: we define $B(v) = 1 + \lfloor \log_{1+\aeps} v \rfloor$ for any $v \ge 1$ and $B(v) = 0$, for all $v \le 1$.

To control the number of undecided (unsatisfied) clients, it will be important
to control the way clients may be increased and decreased throughout our
algorithm.  To accomplish this, we shall not insist that every client has some
tight witness in every vector $\alpha$ that we produce (in contrast to
Invariant~\ref{inv:quasi} in the quasi-polynomial algorithm).  Rather, we shall
consider several different types of clients:
\begin{itemize}
  \item \emph{witnessed clients} $j$ have a tight edge to some tight facility $i$ with $B(\alpha_j) \ge B(t_i)$.  In this case, we say that $i$ is a \emph{witness} for $j$.  Note that if $i$ is a witness for $j$ we necessarily have $(1 + \aeps)\alpha_j \ge t_i$.\footnote{Here, we use that all $\alpha$-values will be at least one and two values in the same bucket differs thus by at most a factor $1+\aeps$. We also remark that this is the same concept as in Invariant~\ref{inv:quasi} of the quasi-polynomial algorithm.}
\item \emph{stopped clients} $j$ have
  \begin{align}
    2\sqrtalpha_j \ge d(j,j') + 6\sqrtalpha_{j'}
    \label{eq:stopped}
  \end{align}
  for some other client $j'$.  In this case, we say that \emph{$j'$ stops $j$}.  Note that if $j'$ stops $j$, we necessarily have $\sqrtalpha_{j} \ge 3\sqrtalpha_{j'}$ and so $\alpha_{j} \ge 9\alpha_{j'}$.
\item \emph{undecided} clients $j$ are neither witnessed nor stopped.
\end{itemize}
Let us additionally call any client that is witnessed or stopped
\emph{decided}. Note that the sets of witnessed and stopped clients are not necessarily disjoint.  
However, we have the following lemma, which follows directly from the triangle inequality and our definitions:
\begin{lemma}
 \label{lem:stopped-normal}
 Suppose that $j$ is stopped.  Then $j$ must be stopped by some $j'$ that is \emph{not} stopped.
 \end{lemma}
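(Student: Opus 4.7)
The plan is to establish a short \emph{transitivity} property for stopping and then iterate it. Specifically, I will prove: if $j'$ stops $j$ and $j''$ stops $j'$, then $j''$ also stops $j$. Once this is available, starting from any stopped client $j$ we follow a chain $j \leftarrow j_1 \leftarrow j_2 \leftarrow \cdots$ where $j_1$ stops $j$, and whenever the current $j_k$ is itself stopped we let $j_{k+1}$ be a client that stops it. By transitivity, every $j_k$ still stops $j$, so it suffices to show the chain terminates at an unstopped client, which will be the desired witness.

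The key calculation is the transitivity step. From $j'$ stops $j$ we have
\[
2\sqrtalpha_j \ge d(j,j') + 6\sqrtalpha_{j'},
\]
and from $j''$ stops $j'$ we get $2\sqrtalpha_{j'} \ge d(j',j'') + 6\sqrtalpha_{j''}$, hence $3\sqrtalpha_{j'} \ge \tfrac{3}{2}d(j',j'') + 9\sqrtalpha_{j''}$. Splitting $6\sqrtalpha_{j'}=3\sqrtalpha_{j'}+3\sqrtalpha_{j'}$ and substituting one copy yields
\[
2\sqrtalpha_j \;\ge\; d(j,j') + 3\sqrtalpha_{j'} + \tfrac{3}{2}d(j',j'') + 9\sqrtalpha_{j''}
\;\ge\; \bigl(d(j,j')+d(j',j'')\bigr) + 6\sqrtalpha_{j''},
\]
where I kept $3\sqrtalpha_{j'}$ and a leftover $\tfrac{1}{2}d(j',j'')$ as nonnegative slack. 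Applying the triangle inequality $d(j,j'') \le d(j,j') + d(j',j'')$ then gives $2\sqrtalpha_j \ge d(j,j'') + 6\sqrtalpha_{j''}$, i.e., $j''$ stops $j$.

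For termination, I will use the simple observation already recorded in the text: if $x$ stops $y$ then $\sqrtalpha_y \ge 3\sqrtalpha_x$, so each successive client in the chain $j_1,j_2,\ldots$ has an $\alpha$-value at least a factor $9$ smaller than the previous one. In particular the $\alpha$-values are strictly decreasing, so no client appears twice and the chain must be finite. The last element $j_k$ is by construction not stopped, and by the iterated transitivity above it stops $j$, completing the proof.

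The only potentially delicate point is the transitivity inequality itself; the coefficients $2$ and $6$ in the definition of ``stopped'' are precisely what make it go through (the ratio $6/2=3>1$ is what gives room for the triangle inequality and the spare $3\sqrtalpha_{j'}$ term). Everything else is just iterating a strict contraction, so there is no real obstacle once the one-step calculation is done.
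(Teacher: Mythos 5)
Your proposal is correct and takes essentially the same route as the paper: the paper also proves the one-step transitivity (keeping $(6-2)\sqrtalpha_{j_1}$ as slack rather than your rescaled $\tfrac{3}{2}$ version, but the idea is identical) and packages the chain-following as an induction over clients in non-decreasing order of $\alpha_j$, with termination guaranteed by the same factor-$9$ decrease you invoke. No gaps.
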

 \begin{proof}
 We proceed by induction over clients $j$ in non-decreasing order of $\alpha_j$.  First, note that the client $j$ with smallest value $\alpha_j$ cannot be stopped.  For the general case, suppose that $j$ is stopped by some $j_1$.  Then, $\alpha_{j_1} < \alpha_j$.  If $j_1$ is stopped, then by the induction hypothesis it must be stopped by some $j_2$ that is not stopped.  Then, we have $2\sqrtalpha_j \ge d(j,j_1) + 6\sqrtalpha_{j_1}$, and $2\sqrtalpha_{j_1} \ge d(j_1,j_2) + 6\sqrtalpha_{j_2}$.  It follows that
 \begin{equation*}
 2\sqrtalpha_j \geq d(j,j_1) + (6-2)\sqrtalpha_{j_1} + d(j_1,j_2) + 6\sqrtalpha_{j_2}  \ge d(j,j_2) + 6\sqrtalpha_{j_2}.
 \end{equation*}
 Thus $j$ is stopped by $j_2$, as well.
 \end{proof}
 Intuitively, the stopping criterion will ensure that no $\alpha_j$ grows too large compared to the $\alpha$-values of nearby clients. At the same time it is designed so that all decided clients will have a good approximation guarantee.

Finally, we shall require that the following invariants hold throughout the execution of Algorithm~\ref{alg:1}.
\begin{invariant}[Feasibility]
\label{inv:feasibility}
For all $j \in \clients$, $\alpha_j \ge 1$ and for all $i \in \facilities$, $\sum_{j \in \clients}\beta_{ij} \le z_i$.
\end{invariant}
We remark that for dual feasibility $\alpha_j \geq 0$ is sufficient but the stronger assumption $\alpha_j \geq 1$ which is implied by Lemma~\ref{lem:dis} will be convenient. 

\begin{invariant}[No strict containment]
  \label{inv:containment}
  For any two clients $j,j' \in \clients$, $\sqrtalpha_j \leq d(j,j')  + \sqrtalpha_{j'}$
\end{invariant}
Note that the above invariant says that the ball centered at $j$ of radius $\sqrtalpha_j$ does \emph{not} strictly contain the ball centered at $j'$ of radius $\sqrtalpha_{j'}$.  For future reference, we  refer to the ball centered at  client $j$ of radius $\sqrtalpha_j$ as the $\alpha$-ball of that client. 
\begin{invariant}[$\solt{0}$ Completely Decided]
\label{inv:good-solution}
Every client is decided in $\solt{0}$.
\end{invariant}

Invariant~\ref{inv:good-solution} will be maintained as follows (as we show formally in Lemma~\ref{lem:invfeasgood}): The initial solution satisfies the invariant.  Then, given an initial solution $\solt{0}$ in which all clients are decided, \raiseprice will output a close, roundable sequence
$\cS^{(1)},\ldots,\cS^{(q)}$, where $\cS^{(q)} = (\alphat{q}, \zt{q}, \emptyset, \sclientt{q})$ is a roundable solution in which
all clients are decided. As the next call to \raiseprice will use $\solt{q}$ as
the initial solution, the invariant is maintained. 

\subsection{The main \raiseprice procedure.}
\label{sec:rais-sweep-proc}
\raiseprice is described in detail in Algorithm~\ref{alg:2}.
Initially, we suppose that we are given a $\lambda$-roundable and completely decided dual solution
$\solt{0}$ (i.e., satisfying Invariant~\ref{inv:good-solution}) where $z_i \in \{\lambda, \lambda+\zeps\}$ for all
$i \in \facilities$.  Additionally, let $\ist{0}$ be the independent set  of the conflict graph $\Gft{0}$ associated to the roundable solution $\solt{0}$, produced at the end of the previous call to \graphupdate{} as described in
Algorithm~\ref{alg:1}.  We shall assume that $|\ist{0}| \ge k$, as otherwise, Algorithm~\ref{alg:1} would have already terminated.  For a specified facility $i^+$, \raiseprice
sets $z_{i^+} \gets z_{i^+} + \zeps$.  This may result in some
clients using $i^+$ as a witness becoming undecided; specifically,
those clients that are not stopped and have no witness except $i^+$ in
$\solt{0}$.  We let $U^{(0)}$ to be the set of all these initially
undecided clients.  Throughout \raiseprice, we maintain a set $U$ of
currently undecided clients, and repair the solution over a series of
multiple stages, by calling an auxiliary procedure, \sweep.  
Each repair stage $s$ will be associated with a threshold $\theta_s$, and will make
multiple calls to the procedure \sweep, each producing a new solution $\alpha$.
The algorithm \raiseprice constructs a roundable solution $\cS = (\alpha,z,\sfac,\sclient)$ from each 
such $\alpha$, and returns the sequence $\cS^{(1)},\ldots,\cS^{(q)}$ of all such roundable solutions, 
in the order they were constructed.  \raiseprice terminates once it constructs some solution
in which all clients are decided.  In Section~\ref{sec:overview}, we shall show that this must happen after
at most $O(\log n)$ stages, and that each stage requires only a polynomial number of calls to \sweep. In addition, we show that the produced sequence is close and roundable.

\begin{algorithm}[h]
\DontPrintSemicolon
\SetNoFillComment
\caption{$\raiseprice(\alphat{0},\zt{0},\ist{0},i^+)$}
\label{alg:2}
\KwIn{
\begin{itemize}[nosep]
  \item $\solt{0}$ : a  $\lambda$-roundable solution satisfying Invariants~\ref{inv:feasibility}-\ref{inv:good-solution} with each~$z_i \in \{\lambda, \lambda + \zeps\}$.
\item $\ist{0}$ : an independent set of conflict graph $\Gft{0}$ of $\solt{0}$, produced by \graphupdate.
\item $i^+$ : a facility whose price $z_{i^+}$ is being increased from $\lambda$ to $\lambda + \zeps$
\end{itemize}
}
\KwOut{Sequence $\cS^{(1)}=\rolt{1},\ldots,\cS^{(q)}=\rolt{q}$ 
of close $\lambda$-roundable solutions,  where all clients are decided in $\cS^{(q)}$.}
$(\alpha, z) \gets \solt{0}$\;
  $z_{i^+} \gets z_{i^+} + \zeps$\;
  Let $U^{(0)}$ be the set of clients now undecided.\;
  Set $K = \Theta(\epsilon^{-1}\gamma^{-4})$ and select a shift parameter $0 \leq \sigma < K/2$.\;
  Set $\theta_1 = (\max_{j \in U^{(0)}}\alphat{0}_j + 2\zeps)(1 + \aeps)^\sigma$ and $\theta_s = (1 + \aeps)^K\theta_{s-1}$ for all $s > 1$.\;
  $U \gets U^{(0)}$\;
  $\ell \gets 1, s \gets 1$\;
  \While{$U \neq \emptyset$}
  {
    \tcc{Execute repair stage $s$}
    \While{there is some $j \in U$ with $\alpha_j < \theta_s$ }
    {
      $\alpha \gets \sweep(\theta_s,\alpha)$ \qquad (this procedure is described in Section~\ref{sec:sweep-procedure})\;
      $U \gets $ set of clients now undecided.\;
      Form $\sfac$ and $\sclient$ using $\alpha$, $z$, $\alphat{0}$, and $\ist{0}$.\; \label{line:dense-consr}
      $\cS^{(\ell)} \gets (\alpha, z, \sfac, \sclient)$.\; 
      $\ell \gets \ell + 1$\;
    }
    $s \gets s + 1$\;
  }
\end{algorithm}

Before describing \sweep in detail, let us first provide some intuition for the selection of the thresholds $\theta_s$ and describe the construction of each roundable solution $\cS^{(\ell)} = (\alphat{\ell}, \zt{\ell}, \sfact{\ell}, \sclientt{\ell})$ in \raiseprice.  Our procedure \sweep will adjust client values $\alpha_j$ similarly to the procedure \quasisweep described in Section~\ref{sec:quasi-polyn-time-algor}.  However, in each stage $s$, we  ensure that \sweep never increases any $\alpha_j$ above the threshold $\theta_s$ beyond its initial value $\alphat{0}_j$, i.e., we ensure that $\alpha_j \leq \alphat{0}_j$ for any $\alpha_j \ge \theta_s$.  We set
\begin{equation*}
  \theta_1 = (\max_{j \in U^{(0)}}\alphat{0}_j + 2\zeps)(1 + \aeps)^\sigma\quad \mbox{and}\quad \theta_s = (1+\aeps)^{K}\theta_{s-1}\,,
\end{equation*}
where $K = \Theta(\epsilon^{-1}\gamma^{-4})$ is an integer parameter and $\sigma$ is a integer ``shift'' parameter chosen uniformly at random\footnote{We shall show that it is in fact easy to select an appropriate $\sigma$ deterministically (see Remark~\ref{rem:derandomize-shift}).} from $[0, K/2)$.  Our selection of thresholds ensures that each stage updates only those $\alpha_j$ in a constant $K$ number of buckets.  Thus, the total change in any $\alpha$-value will be at most $n^{O(K)}$, which will allow us to obtain a polynomial running time.  This comes at the price of some clients remaining undecided after each stage, and some such clients $j$ may have service cost much higher than $\rho \cdot \alpha_j$.  We let $\cB$ denote the set of all such ``bad'' clients.  Using  that the $\alpha$-values are relatively well-behaved throughout \raiseprice, we show that only those clients $j$ with $\alphat{0}_j$ relatively near to the threshold $\theta_s$ can be added to $\cB$ in stage $s$.    Then, the random shift $\sigma$ in choosing our definition of thresholds will allow us to show that only an $O(K^{-1})$ fraction of clients can be bad throughout \raiseprice.  Moreover, we can bound the cost of each client $j \in \cB$ by $36 \alphat{0}_j$.  Intuitively, then, if at least a constant fraction of each $\alphat{0}_j$ is contributing to the service cost $c(j,\ist{0})$, then we can bound the effect of these bad clients by setting $K$ to be a sufficiently large constant, then using Theorem \ref{thm:rounding-approx-ratio} to conclude that:
  \[\sum_{j \in \cB}36\alphat{0}_j \leq \epsilon \cdot \sum_{j \in \clients}c(j,\ist{0}) \leq O(\epsilon) \cdot \opt_k\,.\]
Unfortunately, it may happen that many clients $j \in \cB$ have $\alphat{0}_j - c(j,\ist{0}) \approx \alphat{0}_j$.  That is, some clients may be using almost all of their $\alphat{0}$-values to pay for the opening costs of facilities.  In this case, we could have $\sum_{j \in \cB}\alphat{0}_j$ arbitrarily larger than $\sum_{j \in \clients}c(j, \ist{0})$.  In order to cope with this situation, we introduce a notion of \emph{dense} clients and facilities in Section~\ref{sec:handl-dense-clients}.  These troublesome clients and facilities are handled by carefully constructing the remaining components $\sfac$ and $\sclient$ of the roundable solution in line \ref{line:dense-consr}.  We defer the formal details to Section~\ref{sec:handl-dense-clients}, but the intuition is if enough bad clients are paying mostly for the opening cost of a facility, then we can afford to open this facility \emph{even if it is not tight}.  This is precisely the role of special facilities in Definition~\ref{def:roundable}.

\subsection{The \sweep Procedure}
\label{sec:sweep-procedure}
It remains to describe our last procedure, \sweep in more detail.  \sweep
operates in some stage $s$, with corresponding threshold value $\theta_s$,
takes as input the previous $\alpha$ produced by the algorithm, and produces
a new $\alpha$.  Note that in every call to \sweep, we let $\solt{0}$ denote
the roundable solution passed to \raiseprice, and $U$ is the set of undecided
clients immediately before \sweep was called.  Just like \quasisweep, the procedure \sweep, maintains a current set of active clients $A$ and a current threshold $\theta$, where initially, $A = \emptyset$, and $\theta = 0$.  We slowly increase $\theta$ and whenever $\theta = \alpha_j$ for some client $j$, we add $j$ to $A$.  While $j \in A$, we increase $\alpha_j$ at the same rate as $\theta$.  However, in contrast to \quasisweep, \sweep removes a client $j$ from $A$, whenever one of the following five events occurs:
\begin{enumerate}[label={Rule \arabic*.},ref={\arabic*},leftmargin=*,labelindent=\parindent]
\item $j$ has some witness $i$.
\item $j$ is stopped by some client $j'$.
\item $j \in U$ and $\alpha_j$ is $\zeps$ larger than its value at the start of \sweep.
\item $\alpha_j \ge \theta_s$ and $\alpha_j \ge \alphat{0}_j$.
\item  There is a client $j'$ that has already been removed from $A$ such that $\sqrtalpha_j \geq d(j,j') + \sqrtalpha_{j'}$.
\end{enumerate}
We remark that Rule $5$ says that $j$ is removed from $A$ as soon as its $\alpha$-ball contains the $\alpha$-ball of another client $j'$ that is not currently in $A$. This rule is designed so that the algorithm maintains Invariant~\ref{inv:containment}.
Also note that if a client $j$ satisfies one of these conditions when it is added to
$A$, then we remove $j$ from $A$ immediately after it is added.  In this case,
$\alpha_j$ is not increased.

As in \quasisweep, increasing the values $\alpha_j$ for clients in $A$ may
cause $\sum_{j \in \clients}\beta_{ij}$ to exceed $z_i$ for some facility $i$.
 We again handle this by decreasing some other values
$\alpha_{j'}$.  However, here we are more careful in our choice of clients to
decrease.  Let us call a facility $i$ \emph{potentially tight} if one of the following conditions hold:
\begin{itemize}
  \item There is some $j \in N(i)$ with $\alpha_j > \alphat{0}_j$.
  \item For all $j\in \Nt{0}(i)$, $\alpha_j \geq \alphat{0}_j$.
\end{itemize}
We now decrease 
$\alpha_{j'}$ if and only if $B(\alpha_{j'}) > B(\theta)$ and additionally: for
some potentially tight facility $i$ with $j' \in N(i)$ and $|N(i) \cap A| \ge 1$, we have
$\alpha_{j'} = t_i$.  We decrease each such $\alpha_{j'}$ at a rate of $|A|$
times the rate that $\theta$ is increasing.  To see that this maintains
feasibility we observe that at any time there are $|A \cap
N(i)|$ clients whose contribution to facility $i$ is increasing, and these
contributions are increasing at the same rate as $\theta$.  Suppose that $i$ is
tight at some moment with some $j \in N(i) \cap A$.  Then, since $z_i \ge
\zt{0}_i$, there must be either at least one client $j' \in N(i)$ with $\alpha_{j'} >
\alphat{0}_{j'}$ or we have $\alpha_{j'} = \alphat{0}_{j'}$ for all $j' \in \Nt{0}(i)$.  In either case, $i$ must be potentially tight.  Consider some client
$j_0 \in N(i)$ with $\alpha_{j_0} = t_i$ and note that $B(\alpha_{j_0}) = B(t_i)
> B(\alpha_{j}) = B(\theta)$, since otherwise we would remove $j$ from $A$ by
Rule 1.  The value of $\alpha_{j_0}$ is currently decreasing at a rate of $|A|
\ge |N(i) \cap A|$ times the rate that $\theta$ is increasing.  Thus, the total
contribution to any tight facility $i$ is never increased.  

As in \quasisweep, we stop increasing $\theta$ once every client $j$ has been added and removed from $A$, and then output the resulting $\alpha$.  Note that \sweep never changes any $\alpha_j < \theta$.  In particular, once some $j$ has been removed from $A$ it is not subsequently changed.  Additionally, observe that once $B(\theta) \ge B(\alpha_j)$, \sweep will not decrease $\alpha_j$.

\section{Analysis of the polynomial-time algorithm}
\label{sec:overview}

In contrast to the quasi-polynomial time procedure, here our analysis is quite involved.  Let us first provide a high-level overview of our overall approach.  Note that any solution that does not contain any undecided clients is roundable with $\sfac = \emptyset$, and $\bclient = \emptyset$.  Indeed, for any witnessed client $j$ there is a tight facility $i$ with $(1+\aeps)\sqrtalpha_j \ge \sqrt{t_i}$  and $\sqrtalpha_j \ge d(j,i)$ and so 
\begin{equation*}
(1 + (1+\aeps)\sqrt{\delta})\sqrtalpha_j \ge d(j,i) + \sqrt{\delta t_i}\,.
\end{equation*}
Similarly, any stopped client $j$ in such a solution must be stopped by some witnessed $j'$ (using Lemma~\ref{lem:stopped-normal} and the assumption that all clients are decided).  Let $i$ be the witness of $j'$.  Then,
\begin{equation*}
d(j,i) + \sqrt{\delta t_i} \le d(j,j') + d(j',i) + \sqrt{\delta t_i} \le
2\sqrtalpha_j - 6\sqrtalpha_{j'} + (1 + (1+\aeps)\sqrt{\delta})\sqrtalpha_{j'} < (1 + \sqrt{\delta})\sqrtalpha_j,
\end{equation*}
since $1 \leq \sqrt{\delta} \leq 2$.  As $\tau_i \leq t_i$ for any facility $i\in \facilities$, the required inequalities from Definition \ref{def:roundable} hold for any decided client $j$. In the following our main goal will then be to bound the cost in the general case in which some clients in a solution are undecided.

Our first task is to characterize which clients may \emph{currently} be undecided. To this end, we first prove some basic properties about the way \sweep alters $\alpha$-values together with Invariants~\ref{inv:feasibility},~\ref{inv:containment}, and~\ref{inv:good-solution} (Section~\ref{sec:basic-prop-sweep}).  Then, we show that only clients above threshold $\theta_s$ in each stage $s$ can become undecided (Section~\ref{sec:char-undec-clients-1}). In Section~\ref{sec:bound-cost-clients}, we bound the cost of all decided and undecided clients, showing that we can indeed obtain a $(\rho+O(\epsilon))$-approximation for all decided clients and a slightly worse guarantee for undecided ones.   Next, we would like to argue that most clients have good connection cost. Specifically, we would like to choose a set of thresholds that ensure that only a constant fraction of clients become undecided throughout the \emph{entirety} of \raiseprice.  In order to accomplish this, we show that our $\alpha$-values remain relatively stable throughout \raiseprice (Section~\ref{sec:prel-bound-chang}).  This allows us to prove that \raiseprice outputs close solutions and to  characterize those clients that may become undecided in \raiseprice by their values $\alphat{0}_j$ at the \emph{beginning} of \raiseprice.   This, together with our selection of thresholds, ensures that only an arbitrarily small, constant
fraction of clients do not have the desired guarantee.  However, we must also
show that these clients do not contribute more than a constant to $\opt_k$.
As discussed above, this will follow immediately from our analysis for those clients whose service
cost is at least a constant fraction of $\alphat{0}_j$.  For other (i.e.\ \emph{dense}) clients, we must use a different argument, involving the sets of special facilities and clients $\sfac$ and $\sclient(i)$ (Section~\ref{sec:handl-dense-clients}).  Finally, we put all of these pieces together and show that \raiseprice produces a close sequence of polynomially many roundable solutions and runs in polynomial
time (Section~\ref{sec:select-thresh-thet}).

\subsection{Basic properties of \sweep{} and Invariants~\ref{inv:feasibility},~\ref{inv:containment}, and~\ref{inv:good-solution}}
\label{sec:basic-prop-sweep}
We start by showing that Invariants~\ref{inv:feasibility},~\ref{inv:containment}, and \ref{inv:good-solution} hold.
\begin{lemma}
  \label{lem:invfeasgood}
  Invariants~\ref{inv:feasibility},~\ref{inv:containment}, and \ref{inv:good-solution} hold throughout Algorithm~\ref{alg:1}.
\end{lemma}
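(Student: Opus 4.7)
The plan is to prove the lemma by induction on the operations performed during Algorithm~\ref{alg:1}: the initial construction of $\cst{0}$ at the top of the algorithm, each call to \graphupdate (which does not modify $\alpha$ or $z$ and so preserves all three invariants trivially), and each call to \sweep nested inside \raiseprice. Invariant~\ref{inv:good-solution} is essentially free once the other pieces are in place: it holds for the initial solution by construction, and after each call to \raiseprice its outer \textbf{while}-loop only exits once the undecided set $U$ is empty, so the solution fed to the next call is completely decided.

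For the base case of Invariants~\ref{inv:feasibility} and~\ref{inv:containment}, I rely on Lemma~\ref{lem:dis} ($d(j,i)^2 \ge 1$). During the uniform growth that constructs $\cst{0}$, the stopping inequality $2\sqrtalpha_j \ge d(j,j')+6\sqrtalpha_{j'}$ cannot fire until some client has already stopped, so the first stop event for any client is a tight edge at $\alpha_j = d(j,i)^2 \ge 1$; this gives $\alpha_j \ge 1$ for every client and maintains $\sum_{j}\beta_{ij} = 0 = z_i$ for every facility because each $\alpha_j$ halts the very instant $\beta_{ij}$ would become positive. Invariant~\ref{inv:containment} is established by a short induction along stopping chains: if $j'$ is stopped by $j''$ then $\sqrtalpha_{j'} = \tfrac{1}{2}d(j',j'') + 3\sqrtalpha_{j''}$; tracing to the root of $j'$'s chain (which stopped via a tight edge to some facility $i$) and using the triangle inequality $d(j,i) \le d(j,j') + d(j',i)$, we see that any later client $j$ is forced to halt by a tight edge to the common facility, or by the stopping rule applied to an ancestor with smaller $\alpha$, strictly before its $\alpha$-ball could grow to contain $j'$'s ball.

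For the inductive step I consider a single call to \sweep in stage $s$ with threshold $\theta_s$ and assume the invariants hold at entry (the prior bump $z_{i^+} \mathrel{+}= \zeps$ only relaxes the facility constraints). The facility-feasibility portion of Invariant~\ref{inv:feasibility} is proved exactly as argued in the description of \sweep: whenever a facility $i$ becomes tight and some $j \in N(i) \cap A$ is increasing, $i$ is potentially tight, so there exists $j_0 \in N(i)$ with $\alpha_{j_0} = t_i$ and $B(\alpha_{j_0}) > B(\theta)$, and the decrease of $\alpha_{j_0}$ at rate $|A|$ outweighs the combined increase from the at most $|A \cap N(i)| \le |A|$ active contributors to $i$. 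The lower bound $\alpha_j \ge 1$ follows from the bucketing: decreases require $B(\alpha_{j'}) > B(\theta)$ and halt when $B(\alpha_{j'}) = B(\theta)$, so for $\theta \ge 1$ (equivalently $B(\theta) \ge 1$) the decrease halts at $\alpha_{j'} \ge (1+\aeps)^{B(\theta)-1} \ge 1$, while for $\theta < 1$ no client is active (every $\alpha_j \ge 1$ by hypothesis) and no decreases occur.

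The main obstacle is preserving Invariant~\ref{inv:containment} under the mixture of increases and decreases inside \sweep. Two sources of violation must be ruled out: (i) an increase of $\alpha_j$ for $j \in A$ enlarging $j$'s $\alpha$-ball until it strictly contains another client's ball, and (ii) a decrease of $\alpha_{j'}$ shrinking $j'$'s ball inside the ball of some $j''$. Case~(i) is precisely what Rule~5 is designed for: $j$ is removed from $A$ the instant $\sqrtalpha_j = d(j,j') + \sqrtalpha_{j'}$ for any $j'$ already outside $A$, while for $j' \in A$ the common value $\alpha_{j'} = \theta = \alpha_j$ rules out strict containment. Case~(ii) is the delicate part; my plan is to argue by contradiction at the first instant the invariant would fail, using that the decrease rule requires $\alpha_{j'} = t_i$ for a potentially tight $i$ with an active neighbor in $A$, together with the inductive form of Invariant~\ref{inv:containment} just before the failure. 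A potential offender $j''$ is then either currently active, in which case its own growth would already have tripped Rule~5 at that same moment (since equality is reached from below); or its $\alpha$-value has been fixed throughout the current phase, in which case the chain of decreases of $\alpha_{j'}$ that brought $\alpha_{j'}$ from its value at the start of the phase down to the failure point must have crossed the equality $\sqrtalpha_{j''} = d(j'',j') + \sqrtalpha_{j'}$ at a strictly earlier time, contradicting the inductive hypothesis at that earlier time. Making this continuous-time contradiction argument fully rigorous, and reconciling it with the fact that multiple $\alpha$-values may be decreasing simultaneously, is the technical core of the proof.
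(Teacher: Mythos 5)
Your treatment of Invariants~\ref{inv:feasibility} and~\ref{inv:good-solution}, and of the base case of Invariant~\ref{inv:containment}, follows the paper's proof. The genuine gap is exactly in the part you flag as the technical core: preserving Invariant~\ref{inv:containment} when some $\alpha_{j'}$ is decreased inside \sweep. Your proposed contradiction --- that the equality $\sqrtalpha_{j} = d(j,j') + \sqrtalpha_{j'}$ ``must have been crossed at a strictly earlier time, contradicting the inductive hypothesis'' --- does not work, because the invariant is a non-strict inequality: equality is permitted, so reaching it earlier contradicts nothing. Your dichotomy for the potential offender (active versus frozen) also misses the case that actually matters, and the active sub-case is vacuous for a different reason than you give: a decreasing $j'$ has $B(\alpha_{j'}) > B(\theta)$, so any active client $j$ has $\alpha_j = \theta < \alpha_{j'}$ and its ball cannot strictly contain $j'$'s; moreover Rule~5 is not even applicable with respect to $j'$, since a decreasing client has $\alpha_{j'} > \theta$ and hence was never added to (let alone removed from) $A$ in the current call to \sweep.

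The idea you are missing is the one the paper uses to close this case. Suppose $\alpha_{j'}$ is decreasing, so there is a potentially tight facility $i$ with $j' \in N(i)$ and $\alpha_{j'} = t_i$, and suppose the boundary $\sqrtalpha_j = d(j,j') + \sqrtalpha_{j'}$ is reached for some client $j$. Then $d(j,i) \le d(j,j') + d(j',i) < d(j,j') + \sqrtalpha_{j'} = \sqrtalpha_j$, so $j \in N(i)$ and hence $\alpha_j \le t_i = \alpha_{j'}$; combined with $\alpha_j = \bigl(d(j,j')+\sqrtalpha_{j'}\bigr)^2 \ge \alpha_{j'}$ this forces $\alpha_j = t_i = \alpha_{j'}$ (and $d(j,j') = 0$). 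Consequently $j$ itself satisfies the decrease condition and is decreased at the same rate $|A|$ as $j'$, so the equality persists and never becomes strict. Without this observation --- that the would-be offender is necessarily dragged down in lockstep with $j'$ --- the decrease case cannot be closed, and no ``first violation'' argument of the kind you sketch will substitute for it.
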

\begin{proof}
  We begin by proving Invariant~\ref{inv:feasibility}, \ie that the algorithm maintains a feasible dual solution $\alpha$ with the additional property that $\alpha_j \geq 1$ for all $j\in \clients$.  Recall our construction of the initial solution $\alphat{0}$ for Algorithm \ref{alg:1}:  we set $\alpha_j = 0$ for all $j \in \clients$ and then increase all $\alpha_j$ at a uniform rate.  We stop increasing a value $\alpha_j$ whenever $j$ gains a tight edge to some facility $i \in \facilities$ or $2\sqrtalpha_j \ge d(j,j') + 6\sqrtalpha_{j'}$ for some $j' \in \clients$.  Note that  no $\alpha_j$ is increased after $\alpha_j = d(j,i)^2$ for some facility $i$.  Thus, we have $\betat{0}_{ij} = 0$ for all $j \in\clients$ and $i \in \facilities$, and so $\alphat{0}$ is feasible.  Now, we show that $\min_{j \in \clients}\alphat{0}_j \ge 1$.  Consider the client $j_0$ that first stops increasing in our greedy initialization process.  At the time $\alpha_{j_0}$ stops increasing, we have $\alpha_{j} = \alpha_{j_0}$ for all $j \in \clients$ and so $2\sqrtalpha_j \geq d(j,j') + 6\sqrtalpha_{j'}$ cannot hold for any pair $j,j'$ of clients.  Thus, $j_0$ must have stopped increasing because $\alpha_{j_0} = d(j_0,i)^2$ for some facility $i$.  By our preprocessing (Lemma \ref{lem:dis}) we have $d(j_0,i)^2 \ge 1$, and so $\alphat{0}_{j_0} \ge 1$.  Moreover, $\alphat{0}_{j_0} = \min_{j \in \clients}\alphat{0}_j$, and so indeed $\alphat{0}_j \ge 1$ for all $j \in \clients$.  Now, we show that Algorithm~\ref{alg:1} preserves Invariant~\ref{inv:feasibility}.   Note that $\alpha$ is altered only by subroutine \sweep, and by construction, \sweep ensures that always $\sum_{j}\beta_{ij} \le z_i$.  Moreover, \sweep decreases any $\alpha_{j}$ only while there is some $j' \in N(i) \cap A$ for some facility $i$.  By our preprocessing (Lemma~\ref{lem:dis}) $\alpha_{j'} \geq d(j',i)^2 \ge 1$ for any such $j'$.  Thus, no $\alpha_{j}$ is ever decreased below 1.

Next, we prove Invariant~\ref{inv:containment}, \ie that no client's $\alpha$-ball is strictly contained in the $\alpha$-ball of another client.  First, let us show that the initially constructed solution $(\alphat{0},\zt{0})$ satisfies Invariant~\ref{inv:containment}.  Note that $\alphat{0}_j$ is equal to the value of $\alpha_j$ at the time that our initialization procedure stopped increasing $\alpha_j$.
Consider any pair of clients $j$ and $j'$.  If $\alphat{0}_{j} \le \alphat{0}_{j'}$ then clearly $\sqrtalphat{0}_j \le d(j,j')
+ \sqrtalphat{0}_{j'}$.  Thus, suppose that $\alphat{0}_{j} > \alphat{0}_{j'}$, so $\alpha_{j'}$ stopped increasing before $\alpha_j$ in our initialization procedure.  If $\alpha_{j'}$ stopped increasing because $j'$ gained a tight edge to a facility $i$, then once $\sqrtalpha_{j} = d(j,j') + \sqrtalpha_{j'}$, $j$ will have a tight edge to $i$ and stop increasing.  If $\alpha_{j'}$ stopped increasing because $2\sqrtalpha_{j'} = d(j',j'') + 6\sqrtalpha_{j''}$ for some client $j''$, then when $\sqrtalpha_{j} = d(j,j') + \sqrtalpha_{j'}$ we will have  
\[
2\sqrtalpha_{j} = 2d(j,j') + 2\sqrtalpha_{j'} = 2d(j,j') + d(j',j'') + 6\sqrtalpha_{j''} \ge d(j,j'') + 6\sqrtalpha_{j''}
\]
and so $\alpha_j$ must stop increasing.  In any case, we must have $\sqrtalphat{0}_j \le d(j,j') + \sqrtalphat{0}_{j'}$.  Having shown that the invariant is true for the first $\alphat{0}$ constructed in Algorithm~\ref{alg:1}, let us now prove that it is maintained.  First, we show that the inequality $\sqrtalpha_{j} \leq d(j,j') + \sqrtalpha_{j'}$ will not be violated by increasing $\sqrtalpha_{j}$.  Suppose that $j \in A$ and so $\alpha_j$ is increasing.  As long as $j' \in A$, as well, we have $\alpha_j = \alpha_{j'} = \theta$, and so $\sqrtalpha_j \leq d(j,j') + \sqrtalpha_{j'}$.  On the other hand, if $j' \not\in A$, then as soon as $\sqrtalpha_{j} = d(j,j') + \sqrtalpha_{j'}$, $j$ will be removed from $A$ by Rule~5 and $\sqrtalpha_j$ will no longer increase.  Now we show that also $\sqrtalpha_{j} \leq d(j,j') + \sqrtalpha_{j'}$ will not be violated by decreasing $\alpha_{j'}$.  Suppose that $\alpha_{j'}$ is decreasing.  Then, there must be some potentially tight facility $i$ with $j' \in N(i)$ and $t_i = \alpha_{j'}$.  Let $i$ be \emph{any} such facility.  If at some point we have $\sqrtalpha_{j} = d(j,j') + \sqrtalpha_{j'}$, then we must also have $j \in N(i)$ at this moment and $\alpha_{j}\geq \alpha_{j'} = t_i$.  Thus, $\alpha_j$ is also decreasing and in fact $\alpha_j = \alpha_{j'}$ (since also $\alpha_j \leq t_i$).  Then, $\sqrtalpha_{j}$ and $\sqrtalpha_{j'}$ are decreasing at same rate and so $\sqrtalpha_j = d(j,j') + \sqrtalpha_{j'}$ as long as $\sqrtalpha_{j'}$ continues to decrease.

Finally, we prove Invariant~\ref{inv:good-solution}, \ie that the input solution $\solt{0}$ to \raiseprice is always completely decided. 
Every client $j$ is either stopped by some client $j'$ or has a tight edge to some facility $i$ in our initially constructed solution $\solt{0}$.  Moreover, the initialization process ensures that $N(i) = \emptyset$ for all $i$ (since $\beta_{ij} = 0$ for all $i \in \facilities$ and $j \in \clients$).  Thus, in the latter case $t_i = 0$ and so $i$ is in fact a witness for $j$, and so every client $j$ is indeed either stopped or witnessed in this initial solution $\solt{0}$. To show that Invariant~\ref{inv:good-solution} holds throughout the rest of the Algorithm~\ref{alg:1}, we note that $\solt{0}$ is always updated (in line~\ref{line:update} of Algorithm~\ref{alg:1} where  $\cS^{(0)} \gets \cS^{(q)}$) with the $\alpha$-values corresponding to the last solution produced in a call to \raiseprice.  Due to the condition in the main loop of \raiseprice, every client is decided in this solution.
\end{proof}

The next lemma makes some basic observations about the way in which \sweep alters the $\alpha$-values.
\begin{lemma} The procedure \sweep satisfies the following properties:
\label{lem:sweep-basic}
\begin{enumerate}[label={\normalfont Property \arabic*.},ref={\arabic*},leftmargin=*,labelindent=\parindent]
\item \label{item:sb-decided} Any client $j$ that becomes decided after being added to $A$ remains decided until the end of the same call to \sweep.
\item \label{item:sb-contained} If the $\alpha$-ball of a client $j$ contains the $\alpha$-ball of a decided client, then $j$ is decided.
  \item\label{item:sb-mu} Consider the solution $\alpha$ at the beginning of \sweep, and let $\mu = \min_{j' \in U}\alpha_{j'}$.  Then, no  $\alpha_j < \mu$ is increased by \sweep, and no $\alpha_j$ with $B(\alpha_j) \le B(\mu)$ is decreased by \sweep.
\end{enumerate}
\end{lemma}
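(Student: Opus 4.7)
My plan is to prove the three properties in the order 2, 1, 3, because Property~2 gives a useful ``transitivity'' of decidedness through ball-containment that is helpful for the other two. I expect Property~3 to be the main obstacle, as it requires a careful time-induction over the evolution of $\theta$ within a single call to \sweep, together with a delicate use of the removal rules.

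For Property~2, I would case-split on why the decided client $j'$ is decided. If $j'$ is witnessed by a tight facility $i$, the hypothesis $\sqrtalpha_j \geq d(j,j') + \sqrtalpha_{j'}$ together with $\sqrtalpha_{j'} \geq d(j',i)$ gives $\sqrtalpha_j \geq d(j,i)$, so $j$ has a tight edge to $i$; and $\sqrtalpha_j \geq \sqrtalpha_{j'}$ forces $B(\alpha_j) \geq B(\alpha_{j'}) \geq B(t_i)$, making $i$ a witness for $j$ as well. If $j'$ is stopped, I would first apply Lemma~\ref{lem:stopped-normal} to pass to a non-stopped stopper $j''$ of $j'$, and then chain
\[2\sqrtalpha_j \;\geq\; 2d(j,j') + 2\sqrtalpha_{j'} \;\geq\; 2d(j,j') + d(j',j'') + 6\sqrtalpha_{j''} \;\geq\; d(j,j'') + 6\sqrtalpha_{j''}\]
to conclude that $j''$ also stops $j$.

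For Property~1, I would use the explicit remark following the removal rules that $\alpha_j$ is frozen once $j$ leaves $A$, and case-split on the rule that caused the removal. If $j$ was removed via Rule~1 (witnessed by $i$), the paper's own analysis already shows that the total contribution to any tight facility is never increased; together with the compensating decrease mechanism this keeps $B(t_i) \leq B(\alpha_j)$, since any new contributor $j''$ has $\alpha_{j''} = \theta$ and a higher bucket would already have triggered Rule~1 at $j$'s removal. If $j$ was removed via Rule~2 (stopped by $j'$), I would combine the frozen $\alpha_j$ with Invariant~\ref{inv:containment} and Rule~5, which cap the growth of $\sqrtalpha_{j'}$ at $d(j,j') + \sqrtalpha_j$, and apply Property~2 to argue that either the stopping inequality persists or, if $\alpha_{j'}$ grows enough to threaten it, Property~2 produces a substitute decided certificate for $j$ via the containment of a decided client's $\alpha$-ball.

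For Property~3, my plan is a time-induction over $\theta$ establishing the combined invariant that (a) no $\alpha_j < \mu$ has ever been increased and (b) no $\alpha_j$ with $B(\alpha_j) \leq B(\mu)$ has ever been decreased. The direction (a)$\Rightarrow$(b) is immediate: a decrease of $\alpha_{j'}$ requires $B(\alpha_{j'}) > B(\theta)$ and some $j_1 \in A \cap N(i)$ for a potentially tight $i$; if $B(\alpha_{j'}) \leq B(\mu)$ then $B(\theta) < B(\mu)$ and so $\alpha_{j_1} = \theta < \mu$, yet $j_1 \in A$ means $\alpha_{j_1}$ is being increased, contradicting (a). For (a), the key structural observation is that $A$ stays empty throughout $\theta \in [0, \mu)$: while $A$ is empty no $\alpha$-values change, so the witness and stopping relations remain exactly those of the solution at the start of \sweep; hence every client $j$ with $\alpha_j^{(\text{start})} < \mu$ satisfies $j \notin U$ at the start and is therefore decided at the start, still decided when $\theta$ reaches $\alpha_j$, and is thus added and immediately removed by the rules without $\alpha_j$ ever being increased. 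This keeps $A = \emptyset$ up to $\theta = \mu$, completing the inductive step for (a).
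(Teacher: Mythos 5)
Your Properties 2 and 3 are essentially the paper's own arguments (the detour through Lemma~\ref{lem:stopped-normal} in Property~2 is unnecessary but harmless, since the chained inequality works for an arbitrary stopper of $j'$, and your (a)$\Rightarrow$(b) reduction in Property~3 is a clean packaging of what the paper does). The problem is Property~1. In the witnessed case you cite the fact that the total contribution to a tight facility is never \emph{increased}, but what you must rule out is that the contribution to $i$ \emph{decreases} after $j$ acquires $i$ as a witness: if some $j' \in N(i)$ were later decreased, $i$ would cease to be tight and $j$ would lose its witness, and the ``compensating decrease mechanism'' you invoke is precisely the mechanism that could do this. The missing observation is that at the moment $i$ becomes $j$'s witness, every $j' \in N(i)$ satisfies $B(\alpha_{j'}) \le B(t_i) \le B(\alpha_j) \le B(\theta)$; since \sweep only decreases values in buckets strictly above $B(\theta)$ and $\theta$ is nondecreasing, no member of $N(i)$ can ever be decreased afterwards, so $i$ stays tight and $j$ keeps its tight edge. (Your point that $t_i$ cannot increase, because a new contributor is removed by Rule~1 the instant it gains a tight edge to $i$, is the right idea and completes the case.)

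The stopped case of Property~1 is also not right as written. The fallback branch (``if $\alpha_{j'}$ grows enough to threaten the inequality, Property~2 produces a substitute certificate'') is unjustified: if $\alpha_{j'}$ is growing then $j' \in A$, and Property~2 needs $j$'s ball to contain the ball of a \emph{decided} client, which you have not exhibited. Fortunately that branch is vacuous: if $j'$ stops $j$ at the moment $j$ is removed from $A$, then $\alpha_{j'} < \alpha_j \le \theta$, and \sweep never changes any $\alpha$-value strictly below the current $\theta$ (increases apply only to clients in $A$, which have $\alpha = \theta$, and decreases apply only to clients with $B(\alpha) > B(\theta)$). So both $\alpha_j$ and $\alpha_{j'}$ are frozen for the remainder of \sweep and the stopping inequality persists verbatim; this one-line observation replaces your entire case analysis.
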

\begin{proof}
For Property~\ref{item:sb-decided}, suppose first that $j$ had a witness $i$ at some point after being added to $A$.  Consider any $j' \in N(i)$ at this moment.  At this moment, we must have $B(\alpha_{j'}) \le B(\alpha_{j}) \le B(\theta)$ and so $\alpha_{j'}$ cannot be decreased for the remainder of  \sweep.  In particular, $j$ retains a tight edge to $i$ until the end of \sweep and $i$ remains tight until the end of \sweep.  Additionally, any client $j'$  with $\alpha_{j'} > t_i$ will be removed from $A$ as soon as it gains a tight edge to $i$ (by Rule~1 since $i$ would then be a witness for $j'$).  Thus, $t_i$ cannot increase and so $i$ remains a witness for $j$ until the end of \sweep.  Next, suppose that $j$ was stopped by some $j'$ after being added to $A$.  Then, at this moment, $\alpha_{j'} < \alpha_j \le \theta$.  Hence, for the remainder of \sweep, neither $\alpha_{j'}$ or $\alpha_j$ are changed and so $j$ remains stopped by $j'$.

For Property~\ref{item:sb-contained} suppose that  the $\alpha$-ball of client $j$ contains the $\alpha$-ball of a decided client $j'$. Then if $j'$ has a witness $i$, then $i$ is also a witness for $j$, since $\sqrtalpha_j \geq d(j,j') + \sqrtalpha_{j'} \geq d(j,j') + d(j',i) \geq d(j,i)$ and $B(\alpha_j) \geq B(\alpha_{j'}) \geq B(t_i)$. Similarly if $j'$ is stopped by some client $j''$ then 
   \begin{align*}
     2\sqrtalpha_{j} \geq 2(d(j,j') + \sqrtalpha_{j'}) \geq 2d(j,j') + 6\sqrtalpha_{j''} + d(j',j'') \geq 6\sqrtalpha_{j''} + d(j, j'')\,,
   \end{align*}
 and so $j$ is also stopped by $j''$.

Finally, for Property~\ref{item:sb-mu}, consider the first client $j$ whose value $\alpha_j$ is increased by \sweep.  Note that $j$ must not be decided before calling \sweep: otherwise, since no other $\alpha$-value has yet been changed, this would hold at the moment $j$ was added to $A$, as well, and so $j$ would immediately be removed by Rule~1 or 2.  Thus, the first $\alpha_j$ that is increased by \sweep must correspond to some $j \in U$, and at the moment this occurs, $\theta = \alpha_j \ge \mu$.  Furthermore, by the definition of \sweep,  no $\alpha_j$ can then be decreased unless $B(\alpha_j) \geq B(\mu) + 1$.
\end{proof}

\subsection{Characterizing currently undecided clients}
\label{sec:char-undec-clients-1}
The next observations follow rather directly from the properties given in Lemma~\ref{lem:sweep-basic} and the invariants.  These facts will help us bound the number of clients that can become bad throughout the algorithm, and also the total number of calls to \sweep that must be executed in each call to \raiseprice.  Throughout this section, we consider a single call to \raiseprice and let $(\alphat{0},\zt{0},\ist{0},i^+)$ be its input. 

\begin{lemma}
\label{lem:no-changes-stage-1}
In stage 1, \sweep is executed only a single time.  After this call, for every $j \in U^{(0)}$, we have $\alpha_j \le \alphat{0}_j + \zeps < \theta_1$ and $j$ is decided.
\end{lemma}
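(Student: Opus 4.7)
The plan is to tackle the two numerical bounds first and then devote the main effort to the decidedness claim, from which the single-call property follows. Because $U^{(0)} \subseteq U$ at the start of \sweep, Rule~3 caps the increase of any $\alpha_j$ with $j \in U^{(0)}$ at $\zeps$, giving $\alpha_j \le \alphat{0}_j + \zeps$; and the definition $\theta_1 = (\max_{j' \in U^{(0)}}\alphat{0}_{j'} + 2\zeps)(1+\aeps)^\sigma$ with $\sigma \ge 0$ yields $\theta_1 \ge \alphat{0}_j + 2\zeps$, producing the strict inequality.

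For the main decidedness claim, I would show that $i^+$ becomes tight during \sweep and serves as a witness for every $j \in U^{(0)}$. Two structural observations drive the argument. First, every $j \in U^{(0)}$ used $i^+$ as its witness in $\solt{0}$, so $U^{(0)} \subseteq N^{(0)}(i^+)$ and $B(\alphat{0}_j) \ge B(t^{(0)}_{i^+})$ for each such $j$. Second, by Invariant~\ref{inv:good-solution} every other client is decided in $\solt{0}$, so when such a $j''$ enters $A$ its original witness (or stopper) still applies, triggering Rule~1 or~2 immediately and removing $j''$ without changing any $\alpha$-value. Consequently only $\alpha$-values of $U^{(0)}$-clients are ever increased during \sweep, and all such increases flow into contributions to $i^+$.

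Next I would argue, via a bucket analysis based on the first observation, that no client $j'' \in N(i^+)$ with $\alpha_{j''} = t_{i^+}$ can satisfy the decrease condition $B(\alpha_{j''}) > B(\theta)$ while some $j \in U^{(0)} \cap A$ is active: the bucket of $t_{i^+}$ remains equal to $B(t^{(0)}_{i^+})$ modulo the $\zeps$-perturbation, while $B(\theta) = B(\alpha_j) \ge B(\alphat{0}_j) \ge B(t^{(0)}_{i^+})$. Hence contributions to $i^+$ are never net-decreased while $U^{(0)} \cap A$ is active, and the feasibility invariant forces $i^+$ to become tight exactly when the accumulated increase over $U^{(0)} \cap A$ reaches $\zeps$. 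At that moment Rule~1 fires for every $j \in U^{(0)} \cap A$ with witness $i^+$, and Lemma~\ref{lem:sweep-basic} Property~\ref{item:sb-decided} preserves their decidedness; any $j \in U^{(0)}$ added to $A$ later finds $i^+$ already a witness and is removed by Rule~1 immediately.

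To conclude that \sweep is executed only once, it remains to rule out a newly undecided $j'' \notin U^{(0)}$ with $\alpha_{j''} < \theta_1$ at the end of the call. Since $\alpha_{j''}$ is never increased and any decrease requires $B(\alpha_{j''}) > B(\theta)$, the total decrease can be bounded by $|A|$ times the total time $A$ is nonempty, which is at most $n^2 \zeps$ (each of the at most $n$ clients in $U^{(0)}$ spends at most $\zeps$ time in $A$). For suitably small $\zeps$ the random shift $\sigma$ in the definition of $\theta_1$ will place $\theta_1$ outside the narrow $n^2\zeps$-band of any $\alphat{0}_{j''}$, guaranteeing $\alpha_{j''} \ge \theta_1$ after \sweep whenever a decreased $j''$ becomes undecided. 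The main obstacle will be the delicate bucket analysis required to verify (i) that decreases in $N(i^+)$ are never triggered while $U^{(0)} \cap A$ is active and (ii) that the witness bucket condition $B(\alpha_j) \ge B(t_{i^+})$ is preserved at the moment $i^+$ becomes tight; both points demand careful handling of boundary effects in the multiplicative bucket definition.
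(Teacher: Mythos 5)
Your treatment of the two numerical bounds is fine, and your first main idea --- that $i^+$ becomes tight again and serves as a witness for every $j\in U^{(0)}$ --- is the same as the paper's handling of the Rule-3 case. The paper gets the "no contribution to $i^+$ ever decreases" fact more cleanly than your bucket analysis: since $i^+$ witnesses every $j\in U^{(0)}$, every $j'\in \Nt{0}(i^+)$ has $B(\alphat{0}_{j'})\le B(\mu)$ where $\mu=\min_{j\in U^{(0)}}\alphat{0}_j$, and Property~\ref{item:sb-mu} of Lemma~\ref{lem:sweep-basic} then says no such $\alpha_{j'}$ is ever decreased, regardless of which \emph{other} facility might trigger a decrease --- a case your "while $U^{(0)}\cap A$ is active" framing does not cover.

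There are two genuine gaps. First, your structural observation that every client decided in $\solt{0}$ is removed from $A$ immediately "because its original witness (or stopper) still applies" is false: a witness $i\ne i^+$ from $\solt{0}$ can cease to be tight because some $\alpha_{j'}$ with $j'\in N(i)$ is decreased during \sweep, and a stopping inequality $2\sqrtalpha_j\ge d(j,j')+6\sqrtalpha_{j'}$ can be broken when $\alpha_{j'}$ increases. Consequently your conclusion that "only $\alpha$-values of $U^{(0)}$-clients are ever increased" does not hold, and everything downstream of it is unsupported. Second, and more seriously, your termination argument is not salvageable: the shift $\sigma$ is fixed before \sweep runs and is used in the paper only to bound the mass of bad clients (Corollary~\ref{cor:small-prob}); it cannot be invoked to guarantee that every client that becomes undecided ends up with $\alpha_j\ge\theta_1$. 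The paper instead does a case analysis on the rule by which an undecided client was removed from $A$: Rules~1--2 mean decided, Rule~3 applies only to $U^{(0)}$ (already handled), Rule~4 gives $\alpha_j\ge\theta_1$ by definition, and for Rule~5 one proves the key claim --- missing from your proposal --- that the \emph{first} undecided client removed by Rule~5 must have its $\alpha$-ball containing that of an undecided client previously removed by Rule~4 (via Properties~\ref{item:sb-decided} and~\ref{item:sb-contained}), hence also $\alpha_j\ge\theta_1$. Without this claim you cannot conclude that stage~1 ends after a single call to \sweep.
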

\begin{proof}
Consider any client $j_0 \in U^{(0)}$.  Then $i^+$ was $j_0$'s witness in $\solt{0}$, and $j_0$ must not have been stopped or have had any other witness $i \neq i^+$.  
Observe that our choice of $\theta_1$ ensures that $\alphat{0}_{j_0} + \zeps < \theta_1$, so any $j_0 \in U^{(0)}$ will be removed from $A$ by Rule~3 once $\alpha_j = \alphat{0}_j + \zeps$.  Thus we must have $\alpha_j \le \alphat{0}_{j_0} + \zeps < \theta_1$ at the end of \sweep for every $j_0 \in U^{(0)}$.  

This also implies that no such $j_0$ is removed from $A$ by Rule~4.   We now show that when $j_0$ is removed from $A$ by any other rule, it must be decided.  By Property~\ref{item:sb-decided}, $j_0$ is then decided at the end of \sweep, as well.   First, we observe that if $j_0$ is removed from $A$ by Rules~1 or 2, then it is decided by definition.  Next, suppose that $j_0$ was removed by Rule~3, and let $\mu = \min_{j \in U^{(0)}}\alphat{0}_j$.  Since $i^+$ was a witness for every $j \in U^{(0)}$, we must have $B(\alphat{0}_{j})\leq B(\mu)$ for all $j \in \Nt{0}(i^+)$.   Thus, by Property~\ref{item:sb-mu} of \sweep, $\alpha_{j} \ge \alphat{0}_{j}$ for every $j \in \Nt{0}(i^+)$.  Then, since $\alpha_{j_0} = \alphat{0}_{j_0} + \zeps$, at the time $j_0$ was removed from $A$, $i^+$ must have been tight and also a witness for $j_0$.  By Property~\ref{item:sb-decided}, $j_0$ then remains decided until the end of \sweep.  Finally, we consider the case in which $j_0$ was removed by Rule~5.  We show the following:
\begin{claim*}
Suppose that some client $j$ is removed from $A$ by Rule~5 and that $j$ is undecided at this time.  Then, $\alpha_j \ge \theta_1$.
\end{claim*}
\begin{proof}
Consider the first time that any client $j$ that is undecided is removed from $A$ by Rule~5.  By Property~\ref{item:sb-contained}, the $\alpha$-ball of this client $j$ must contain the $\alpha$-ball of some undecided client $j'$ that was previously removed from $A$.  By Property~\ref{item:sb-decided} and our choice of time, $j'$ must have been removed from $A$ by Rule~3 or 4.  However, if $j'$ was removed by Rule~3, we must have $j' \in U^{(0)}$ and so, as we have previously shown, $j'$ must be decided.  Thus, $j'$ was removed by Rule~4, and so presently $\alpha_j = \theta \ge \alpha_{j'} \ge \theta_1$.  To complete the proof, we observe that any client that is removed from $A$ after $j$ must have an $\alpha$-value at least $\alpha_j$.
\end{proof}
\noindent
It follows by the above Claim that no $j_0 \in U^{(0)}$ can be undecided when it is removed by Rule~5, since, as we have shown, $\alpha_{j_0} < \theta_1$ for all such $j_0$.  By the above cases, every client $j_0 \in U^{(0)}$ is decided with $\alpha_j \le \alphat{0}_{j_0} + \zeps < \theta_1$ at the end of \sweep.  

It remains to show that \raiseprice continues to stage 2 after one call to \sweep.  Consider some client $j$ that is undecided at the end of \sweep.  By Property~\ref{item:sb-decided} $j$ must not have been removed from $A$ by Rule~1 or Rule~2.  Moreover, we must have $j \not\in U^{(0)}$ and so $j$ was not removed from $A$ by Rule~3.  Thus, $j$ was removed from $A$ by Rule~4 or 5.   In either case (by the definition of Rule~4 or the above Claim), we have $\alpha_j \ge \theta_1$ at this moment (and so also at the end of \sweep, since no $\alpha_j$ is changed after $j$ is removed from $A$).  Thus, after the first call to \sweep in stage 1, every undecided client $j$ has $\alpha_j \ge \theta_1$ and so \raiseprice immediately continues to stage~2.
\end{proof}

\begin{lemma}
\label{lem:sweep-preservation}
Consider any solution $(\alpha,z)$ produced by \raiseprice.  If $j$ is undecided in $(\alpha,z)$, then $\alpha_j \ge \alphat{0}_j$.
\end{lemma}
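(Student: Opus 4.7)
The plan is to argue by induction on the sequence of solutions $(\alpha,z)$ produced by \raiseprice. The base case is the first call to \sweep in stage~1, where $\alpha = \alphat{0}$ at the start, so the claim holds trivially. For the inductive step, I first record a crucial observation: every client is added to $A$ exactly once per \sweep call, and once $j$ is removed from $A$ its value $\alpha_j$ is fixed. Indeed, at the moment $j$ is removed we have $\alpha_j = \theta$; subsequently $\theta$ only grows, while the decrease rule requires $B(\alpha_{j'}) > B(\theta)$ (hence $\alpha_{j'} > \theta$), which $j$ cannot satisfy any more. Consequently, if $j$ is undecided in the output of \sweep, then $j$ was undecided at the moment of its removal from $A$. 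By Property~\ref{item:sb-decided} of Lemma~\ref{lem:sweep-basic}, Rules 1 and 2 would immediately make $j$ permanently decided, so $j$ must have been removed by Rule 3, Rule 4, or Rule 5.

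I then case split on the removal rule. For Rule~3, $j$ belongs to the set $U$ of undecided clients at the start of \sweep, and $\alpha_j = \alpha_j^{\mathrm{start}} + \zeps$. The outer induction hypothesis applied to the previous solution (or the initialization $\alpha = \alphat{0}$ in the first call) gives $\alpha_j^{\mathrm{start}} \ge \alphat{0}_j$, and hence $\alpha_j > \alphat{0}_j$. For Rule~4, the rule directly asserts $\alpha_j \ge \alphat{0}_j$. For Rule~5, there exists some client $j'$ that was removed earlier in the same \sweep call and satisfies $\sqrtalpha_j \ge d(j,j') + \sqrtalpha_{j'}$; by Property~\ref{item:sb-contained} of Lemma~\ref{lem:sweep-basic}, if $j'$ were decided then so would $j$ be, contradicting our assumption. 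Hence $j'$ is undecided at its removal. An \emph{inner} induction on the order in which clients are removed from $A$ within this single \sweep call gives $\alpha_{j'} \ge \alphat{0}_{j'}$ (the base case being the first client removed, for which Rule~5 cannot apply and only the other rules come into play). Combining this with Invariant~\ref{inv:containment} applied to $\alphat{0}$, namely $\sqrtalphat{0}_j \le d(j,j') + \sqrtalphat{0}_{j'}$, yields
\[
\sqrtalpha_j \;\ge\; d(j,j') + \sqrtalpha_{j'} \;\ge\; d(j,j') + \sqrtalphat{0}_{j'} \;\ge\; \sqrtalphat{0}_j,
\]
which is the desired bound.

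The main obstacle is the Rule~5 case, which is why the argument is doubly inductive: an outer induction over the solutions produced by \raiseprice (and across \sweep calls) handles Rule~3, while an inner induction on the order of removal from $A$ within a single \sweep call handles the recursive invocation of the statement on the client $j'$. Invariant~\ref{inv:containment} is the key structural fact used to transfer the bound $\alpha_{j'} \ge \alphat{0}_{j'}$ to the bound $\alpha_j \ge \alphat{0}_j$ for $j$. All remaining cases are immediate from the definitions of the corresponding rules and from Lemma~\ref{lem:sweep-basic}.
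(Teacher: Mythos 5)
Your proof is correct and follows essentially the same route as the paper's: the paper phrases it as a minimal-counterexample contradiction (first violating call to \sweep, first violating client in removal order), which is exactly your double induction, with the same case analysis on Rules 3--5 and the same use of Invariant~\ref{inv:containment} together with Property~\ref{item:sb-contained} in the Rule~5 case. The only cosmetic difference is in Rule~3, where the paper excludes the first-call case via Lemma~\ref{lem:no-changes-stage-1} while you handle it directly by observing that the starting value there is $\alphat{0}_j$ itself; both work.
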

\begin{proof}
  Suppose toward contradiction that the statement is false. Consider the first call to \sweep that produces a solution violating it and for this call let $j$ be the first client (in the order of removal from $A$) such that $\alpha_j < \alphat{0}_j$ when $j$ is removed from $A$ but $j$ is undecided\footnote{By Property~\ref{item:sb-decided}, any client $j$ violating the statement must be undecided when removed from $A$ and have $\alpha_j < \alphat{0}_j$ at the time of its removal from $A$ since \sweep does not change $j$'s $\alpha$-value thereafter.}.   Then since, $j$ is undecided it was removed by Rule 3, 4, or 5. If $j$ was removed by Rule 4, then at this moment $\alpha_j \ge \alphat{0}_j$.
Suppose then that $j$ was removed by Rule 3.  Then, $j \in U$.  By Lemma~\ref{lem:no-changes-stage-1}, no client $j \in U$ before the first call to \sweep is undecided after this call, so $j$ must have been undecided at the end of some preceding call to \sweep.  By assumption, we must have had $\alpha_j \ge \alphat{0}_j$ at the moment $j$ was removed from $A$ in this preceding call (and so also immediately before the present call).  But, $\alpha_j$ has increased by $\zeps$, so still $\alpha_j \ge \alphat{0}_j$.  Finally, suppose $j$ was removed by Rule 5.  Then, the $\alpha$-ball of $j$ must contain the $\alpha$-ball of a client $j'$ that has already been removed $A$.  If $j'$ is decided, then by Property~\ref{item:sb-contained} $j$ is decided as well.  Suppose that $j'$ is undecided.  Then, since we picked the first client that violated the condition of the lemma, and $j'$ was already removed from $A$, we have that $\alpha_{j'} \geq \alphat{0}_{j'}$.   But then, if $\alpha_j < \alphat{0}_j$, we have
$\sqrtalphat{0}_j > \sqrtalpha_{j} \ge d(j,j') + \sqrtalpha_{j'} \ge d(j,j') + \sqrtalphat{0}_{j'}$ and $\alphat{0}$ violates Invariant~\ref{inv:containment}.  
In all cases we showed that we must have $\alpha_j \geq \alphat{0}_j$ at the moment that $j$ was removed from $A$, and so also at the end of \sweep, contradicting our assumption that $\alpha_j < \alphat{0}_j$ for some undecided client $j$.
\end{proof}

\begin{lemma}
\label{lem:no-changes-stage-later}
In every stage $s > 1$, no $\alpha_j$ is changed by \sweep until $\theta \ge \theta_{s-1}$.  In particular, every client $j$ with $\alpha_j < \theta_{s-1}$ is decided for every solution produced by \raiseprice in stage $s > 1$.
\end{lemma}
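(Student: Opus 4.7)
The plan is to prove both claims together by induction on the sequence of \sweep calls within stage $s$, with the inductive hypothesis being precisely the statement of Part~2 at the start of each call. Equivalently, this means $\mu := \min_{j \in U}\alpha_j \ge \theta_{s-1}$ at the start of every \sweep call in stage $s$, which is the enabling fact that unlocks Property~3 of Lemma~\ref{lem:sweep-basic}. The base case follows from the termination condition of stage $s - 1$: that stage ends only when no undecided client has $\alpha_j < \theta_{s-1}$.

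Part~1 then follows almost directly. Assuming $\mu \ge \theta_{s-1}$, while $\theta < \theta_{s-1} \le \mu$ any client added to $A$ satisfies $\alpha_j = \theta < \mu$ and so is already decided at the start of the \sweep; Rule~1 or~2 removes it immediately. With $A$ empty, no $\alpha$-value can be increased, and the decrease rule cannot fire either since it requires $|N(i) \cap A| \ge 1$ for some potentially tight facility $i$. Hence nothing changes until $\theta$ reaches $\theta_{s-1}$.

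For the inductive step of Part~2, I first show that no $\alpha$-value can cross the threshold $\theta_{s-1}$ downward during the call. Applying Property~3 with $\mu \ge \theta_{s-1}$: any $\alpha_j < \mu$ is neither increased (first clause) nor decreased (since $B(\alpha_j) \le B(\mu)$, the second clause applies), so it is frozen; and any $\alpha_j \ge \mu$ that is being decreased must satisfy $B(\alpha_j) > B(\theta) \ge B(\theta_{s-1})$, forcing $\alpha_j \ge (1+\aeps)^{B(\theta_{s-1})} > \theta_{s-1}$ throughout the decrease, which terminates the instant $B(\alpha_j)$ would drop into the bucket of $\theta$. Thus the set $\{j : \alpha_j < \theta_{s-1}\}$ is preserved and each such $\alpha_j$ is unchanged. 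If such a $j$ was stopped by some $j'$ at the start, then $\alpha_{j'} \le \alpha_j/9 < \theta_{s-1}$, so $\alpha_{j'}$ is also frozen and $j$ remains stopped by $j'$.

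The hard part will be the witness case. If $j$ was witnessed by a tight facility $i$, then every original contributor $j' \in N(i)$ has $\alpha_{j'} \le t_i < (1+\aeps)^{B(\alpha_j)}$, placing $\alpha_{j'}$ in a bucket $\le B(\mu)$ and hence frozen by Property~3; the delicate point is to rule out the scenario in which a new client $j''$ enters $N(i)$ during the \sweep and thereby pushes $t_i$ into a bucket strictly larger than $B(\alpha_j)$, causing $i$ to stop witnessing $j$. The plan is to chain together Rule~3 (which caps the total growth of $\alpha_{j''}$ in one \sweep by $\zeps$ whenever $j'' \in U$), Rule~5 combined with Invariant~\ref{inv:containment} (which removes $j''$ from $A$ once its $\alpha$-ball threatens to contain the $\alpha$-ball of another client, in particular that of $j$), and the feasibility Invariant~\ref{inv:feasibility} (which together with the decrease rule keeps $i$ tight). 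I expect that combining these bounds will show that either $j''$ never reaches $\alpha_{j''} > d(j'',i)^2$ at all, or at the instant it does, Rule~1 removes $j''$ with $\alpha_{j''}$ still in the bucket $B(\alpha_j)$, so $B(t_i) \le B(\alpha_j)$ is preserved and $i$ keeps witnessing $j$. This closes the induction and establishes both parts of the lemma.
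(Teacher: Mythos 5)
Your overall strategy for the second part is genuinely different from the paper's, and the difference matters: you try to show that every client with $\alpha_j < \theta_{s-1}$ that enters the stage decided \emph{stays} decided, which forces you to prove that the witness relation is preserved throughout each call to \sweep. That is exactly the step you label ``the hard part,'' and it is not actually carried out --- it is deferred with ``I expect that combining these bounds will show\ldots''. Moreover, the sketch you give for it points at the wrong mechanism and contains an incorrect assertion: a client $j''$ that gains a tight edge to the witness $i$ during the call has $\alpha_{j''} = \theta \ge \theta_{s-1} > \alpha_j$ at that moment, so it is generally \emph{not} ``still in the bucket $B(\alpha_j)$.'' (The correct observation is simpler and does not need Rule~3, Rule~5, or Invariant~\ref{inv:containment}: at the instant the edge becomes tight we have $\alpha_{j''} = d(j'',i)^2$, so $\beta_{ij''}=0$ and $j'' \notin N(i)$, hence $t_i$ is unchanged; and since $B(\alpha_{j''}) = B(\theta) \ge B(t_i)$, Rule~1 removes $j''$ before it can ever contribute to $i$.) As written, the inductive step is therefore incomplete.

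The paper avoids this difficulty entirely by arguing in the opposite direction: instead of showing that decided clients remain decided, it characterizes which clients can \emph{end up} undecided. Any client undecided in the solution produced by a call must (by Property~\ref{item:sb-decided}) have been undecided at the moment it was removed from $A$; taking the \emph{first} such client, Rules~1 and~2 are excluded because they imply decidedness, Rule~5 is excluded by Property~\ref{item:sb-contained} (all earlier-removed clients are decided), and Rules~3 and~4 each force $\alpha_j \ge \theta_{s-1}$ (Rule~3 via the induction hypothesis on the preceding call, Rule~4 by definition since it requires $\alpha_j \ge \theta_s$). Since clients are removed from $A$ in non-decreasing order of their $\alpha$-value at removal, this bounds all undecided clients at once. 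I would recommend replacing your witness-preservation argument with this case analysis on the removal rule; your treatment of Part~1 and of the base case is fine and matches the paper.
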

\begin{proof}
By Property~\ref{item:sb-mu} of \sweep, no $\alpha_j$ is changed until $\theta
= \min_{j \in U}\alpha_j$.  Thus to prove the first part of the claim, it suffices to show that in every stage $s > 1$, if $U \neq \emptyset$ then $\min_{j \in U}\alpha_j \ge \theta_{s-1}$.  Note that the second part of the claim then follows as well for every solution except the one produced by the final call to \sweep in \raiseprice, and this last solution has no undecided clients by Invariant~\ref{inv:good-solution}.

Let us now prove that $\min_{j \in U}\alpha_j \ge \theta_{s-1}$ in every stage $s > 1$.  We proceed by induction on the number of calls to \sweep made in stage $s$.  Before the first call to \sweep in stage $s$, we must have $\alpha_j \ge \theta_{s-1}$ for every $j \in U$, since otherwise stage $s-1$ would have continued.  So, consider some later call to \sweep in stage $s$, and consider any $j \in U$ before this call.  Then, we must have had $j$ undecided after the preceding call to \sweep in stage $s$.  Moreover, by Property~\ref{item:sb-decided}, $j$ must have been undecided when it was removed from $A$ in this preceding call.  Consider the \emph{first} client $j$ that was undecided upon removal from $A$ in this preceding call.  Then, $j$ cannot have been removed by Rules~1 or 2.  Moreover, since every client that has been removed from $A$ before $j$ is decided, Property~\ref{item:sb-contained} implies that $j$ must not have been removed by Rule~5.  If $j$ was removed by Rule~3, then we must have had $j \in U$ already in this preceding call to \sweep, and so by the induction hypothesis, $\alpha_j \ge \theta_{s-1}$.  Then, since $j$ was removed from $A$ by Rule~3, we had $\alpha_j \ge \theta_{s-1} + \zeps$.  Finally, if $j$ was removed by Rule~4, then we must have $\alpha_j \ge \theta_s > \theta_{s-1}$ by definition.  Thus, throughout every stage $s > 1$, if $U \neq\emptyset$, then $\min_{j \in U}\alpha_j \ge \theta_{s-1}$, as desired.
\end{proof}

\begin{corollary}
\label{cor:good-neighbor}
Suppose that in $\solt{0}$, $j$ is not stopped and has only $i^+$ as a witness, \ie $j\in U^{(0)}$.  Then, we have that $j$ is decided with $\alpha_j \le \alphat{0}_j + \zeps$ in every solution $(\alpha,z)$ produced by $\raiseprice(\alphat{0}, \zt{0}, \ist{0}, i^+)$.
\end{corollary}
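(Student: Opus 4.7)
The plan is to split the argument at the boundary between stage $1$ and the later stages, since stage $1$ has already been analyzed in detail while the later stages will essentially leave such a client $j \in U^{(0)}$ untouched.

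For stage $1$, Lemma~\ref{lem:no-changes-stage-1} already gives everything I need: after the single call to \sweep in stage~$1$, every $j \in U^{(0)}$ is decided and satisfies $\alpha_j \le \alphat{0}_j + \zeps < \theta_1$. Since only this one solution is produced in stage~$1$, the corollary holds for every solution output by \raiseprice during that stage.

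For stages $s > 1$, the goal is to argue two things simultaneously: (i) the value $\alpha_j$ is never altered again by \sweep, and (ii) $j$ remains decided in every subsequent output. Both follow by induction on $s$ from the bound $\alpha_j \le \alphat{0}_j + \zeps < \theta_1 \le \theta_{s-1}$ established at the end of stage~$1$. By Lemma~\ref{lem:no-changes-stage-later}, any call to \sweep in stage $s > 1$ leaves all $\alpha$-values unchanged until $\theta \ge \theta_{s-1}$; combined with Property~\ref{item:sb-mu} of \sweep and the fact that $\min_{j' \in U}\alpha_{j'} \ge \theta_{s-1}$ throughout stage $s$ (shown inside the proof of Lemma~\ref{lem:no-changes-stage-later}), neither the increase rule nor the decrease rule can touch $\alpha_j$: it cannot be increased because $\alpha_j < \theta_{s-1} \le \mu$, and it cannot be decreased because $B(\alpha_j) \le B(\theta_{s-1}) \le B(\mu)$. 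Hence $\alpha_j$ keeps its stage-$1$ value $\le \alphat{0}_j + \zeps$, which preserves the strict inequality $\alpha_j < \theta_{s-1}$ in every solution produced during stage~$s$. The second clause of Lemma~\ref{lem:no-changes-stage-later} then immediately yields that $j$ is decided in each such solution.

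Combining the two cases, the bound $\alpha_j \le \alphat{0}_j + \zeps$ and the ``decided'' property hold in every solution produced by \raiseprice, which is exactly the statement of the corollary. There is no real obstacle here; the main point is just to verify that Property~\ref{item:sb-mu} rules out \emph{both} increases and decreases of $\alpha_j$ in later stages, so that the hypothesis ``$\alpha_j < \theta_{s-1}$'' needed to invoke Lemma~\ref{lem:no-changes-stage-later} propagates across stages.
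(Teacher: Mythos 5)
Your proposal is correct and follows essentially the same route as the paper: stage~1 is handled by Lemma~\ref{lem:no-changes-stage-1}, and the later stages by Lemma~\ref{lem:no-changes-stage-later} (the paper states this more tersely, while you explicitly fill in via Property~\ref{item:sb-mu} why $\alpha_j$ can be neither increased nor decreased once $\alpha_j < \theta_{s-1} \le \mu$). No gaps.
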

\begin{proof}
We have $j \in U^{(0)}$ and so by Lemma~\ref{lem:no-changes-stage-1}, $j$ is decided with $\alpha_j \leq \alphat{0}_j + \zeps < \theta_1$ in the first solution produced by \raiseprice.  Moreover,  by Lemma~\ref{lem:no-changes-stage-later}, $\alpha_j$ remains unchanged and $j$ remains decided in all later stages.
\end{proof}

\subsection{Bounding the cost of clients}
\label{sec:bound-cost-clients}

In this section we derive inequalities that are used to bound the service cost
of each $(\alpha,z)$ produced during the algorithm.  Consider some solution
$\alpha$ produced by the algorithm, and define 
\begin{align}
  \cB = \{j\in \clients: \mbox{ $j$ is undecided and $2\sqrtalpha_j < d(j,j') +  6\sqrtalphat{0}_{j'}$ for all clients $j'$}\}\,.
  \label{eq:bdef}
\end{align}
The set $\cB$ is defined to contain those clients that are (potentially) bad, \ie have worse connection cost than our target guarantee. 
Specifically, we now show that all clients $j \in \clients \setminus
\cB$, satisfy the first inequality of Property 2 in Definition~\ref{def:roundable} (with $\tau_i$ replaced by $t_i$), while all clients  (in particular those in $\cB$)  satisfy a slightly weaker inequality.

\begin{lemma}
  \label{lem:nearby-witnessed-client}
  \label{lem:roundable-alpha-lower-bound}
 Consider any $(\alpha,z)$ produced by \raiseprice. For every client $j$  the following holds:
\begin{itemize}
\item If $j \in \clients \setminus \cB$, then there exists a tight facility $i$ such that $(1+\sqrt{\delta}+\aeps)\sqrtalpha_j \ge d(j,i) + \sqrt{\delta t_i}$.
\item There exists a tight facility $i$ such that $6\sqrtalphat{0}_j \ge d(j,i) + \sqrt{\delta t_i}$.
\end{itemize}

\end{lemma}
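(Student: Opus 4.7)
The plan is a case analysis on the status of $j$ in the current solution $(\alpha,z)$, tracing back to the initial solution $\solt{0}$ (where every client is decided by Invariant~\ref{inv:good-solution}) whenever the local structure in $(\alpha,z)$ is insufficient.

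In the first case, $j$ is witnessed in $(\alpha,z)$ by a tight facility $i$. By definition $\sqrtalpha_j \ge d(j,i)$ and $B(\alpha_j)\ge B(t_i)$, and since Invariant~\ref{inv:feasibility} forces $\alpha$-values to be at least $1$, the bucket inequality gives $(1+\epsilon)\alpha_j \ge t_i$. Thus
\[
d(j,i) + \sqrt{\delta t_i} \;\le\; \bigl(1 + \sqrt{\delta(1+\epsilon)}\bigr)\sqrtalpha_j \;\le\; (1+\sqrt{\delta}+\epsilon)\sqrtalpha_j,
\]
which yields the first bound. In the second case $j$ is stopped in $(\alpha,z)$ by a non-stopped $j'$ (Lemma~\ref{lem:stopped-normal}) and $j'$ is witnessed by a tight $i$; applying the previous computation to $j'$ and combining with the stopping inequality $2\sqrtalpha_j \ge d(j,j')+6\sqrtalpha_{j'}$,
\[
d(j,i)+\sqrt{\delta t_i} \;\le\; d(j,j')+\bigl(1+\sqrt{\delta(1+\epsilon)}\bigr)\sqrtalpha_{j'} \;\le\; 2\sqrtalpha_j + \bigl(\sqrt{\delta(1+\epsilon)}-5\bigr)\sqrtalpha_{j'} \;\le\; 2\sqrtalpha_j,
\]
using $\sqrt{\delta(1+\epsilon)}<5$ and non-negativity of $\sqrtalpha_{j'}$; since $1+\sqrt{\delta}\ge 2$, this fits inside the first bound.

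The remaining case, in which $j$ is undecided or $j$ is stopped by an undecided $j'$, is the main obstacle. Here I use Invariant~\ref{inv:good-solution}: every client is decided in $\solt{0}$. Following at most one stopping link in $\solt{0}$ if necessary (via Lemma~\ref{lem:stopped-normal} applied inside $\solt{0}$), I locate a client $j^\dagger\in\{j,j'\}$ together with a $\solt{0}$-witness $i^\dagger$. The crucial structural claim I will establish is that $i^\dagger$ remains tight in $(\alpha,z)$: any tight facility has its maximum-$\alpha$ client in its neighborhood as a witness, and then by the proof of Property~\ref{item:sb-decided} no $\alpha$-value in $N(i^\dagger)$ can be decreased for the remainder of the \sweep call, so the contribution to $i^\dagger$ cannot drop below $z_{i^\dagger}$; between \sweep calls no values change. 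The only facility whose price is raised during \raiseprice is $i^+$, but any $j'$ whose sole $\solt{0}$-witness is $i^+$ lies in $U^{(0)}$ and is therefore decided in every produced solution by Corollary~\ref{cor:good-neighbor}; hence in the subcases under consideration the chain can always be routed around $i^+$, either via an alternative $\solt{0}$-witness of $j'$ or by extending one more stopping link in $\solt{0}$.

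With the tight $i^\dagger$ in hand, both bounds follow from combining the triangle inequality, the $\solt{0}$-witness estimate $d(j^\dagger,i^\dagger)+\sqrt{\delta \timet{0}_{i^\dagger}}\le \bigl(1+\sqrt{\delta(1+\epsilon)}\bigr)\sqrtalphat{0}_{j^\dagger}$, the monotonicity $t_{i^\dagger}\le \timet{0}_{i^\dagger}$ (contributions never increase), the inequality $\alpha_j\ge\alphat{0}_j$ for undecided $j$ (Lemma~\ref{lem:sweep-preservation}), and iterated application of the $\solt{0}$-stopping inequality $2\sqrtalphat{0}_{j''}\ge d(j'',j''')+6\sqrtalphat{0}_{j'''}$, which contracts $\sqrtalphat{0}$ by a factor of at least $3$ per link so that the accumulated distance telescopes. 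For the second bound I start the chain at $j$ itself and a short numerical check gives $d(j,i^\dagger)+\sqrt{\delta t_{i^\dagger}}\le 2\sqrtalphat{0}_j\le 6\sqrtalphat{0}_j$. For the first bound with $j\notin\cB$, I exploit the very definition of $\cB$ to pick a $j'$ with $2\sqrtalpha_j\ge d(j,j')+6\sqrtalphat{0}_{j'}$ and start the chain from that $j'$; this replaces the outer factor $2\sqrtalphat{0}_j$ by $2\sqrtalpha_j$, which is bounded by $(1+\sqrt{\delta}+\epsilon)\sqrtalpha_j$ since $1+\sqrt{\delta}\ge 2$. The preservation of tightness of $i^\dagger$ across multiple \sweep calls is the principal technical hurdle; once established, the remaining estimates are routine algebra.
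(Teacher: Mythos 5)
Your first two cases (witnessed in $(\alpha,z)$; stopped by a non-stopped, witnessed client) match the paper's easy cases and are fine. The problem is your ``crucial structural claim'' that the $\solt{0}$-witness $i^\dagger$ remains tight in $(\alpha,z)$. This is false, and it is exactly the difficulty the lemma exists to overcome. The argument you borrow from Property~\ref{item:sb-decided} protects a facility only from the moment it becomes a witness \emph{during} a \sweep call: at that instant every $j'\in N(i)$ satisfies $B(\alpha_{j'})\le B(\theta)$, and such clients are never decreased afterwards. A facility that was tight at the start of \raiseprice enjoys no such protection: a neighbor $j_1\in \Nt{0}(i^\dagger)$ with a large $\alpha$-value can be the maximum contributor ($\alpha_{j_1}=t_{i'}$) of some \emph{other} potentially tight facility $i'$ and be decreased on that account, which drops the total contribution to $i^\dagger$ strictly below $z_{i^\dagger}$. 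Feasibility (Invariant~\ref{inv:feasibility}) only caps contributions from above; nothing keeps them from falling.

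The paper's proof is built around precisely this failure mode. It runs a simultaneous induction on the well-ordered set $R=\{0\}\cup\{\alpha_j\}_{j\notin\cB}\cup\{(1+\aeps)\alphat{0}_j\}_{j\in\clients}$, proving both bullets together. In the troublesome subcase where $j_0$'s $\solt{0}$-witness $i$ may have lost tightness, it observes that either $\alpha_{j'}\ge\alphat{0}_{j'}$ for all $j'\in\Nt{0}(i)$ (whence feasibility forces equality and $i$ is still tight), or some $j_1\in\Nt{0}(i)$ has $\alpha_{j_1}<\alphat{0}_{j_1}$; by Lemma~\ref{lem:sweep-preservation} such a $j_1$ is decided, hence $j_1\notin\cB$ with $\alpha_{j_1}<r$, so the induction hypothesis for the \emph{first} bullet supplies a different tight facility $i_1$ near $j_1$, and the triangle inequality $d(j_0,i_1)\le d(j_0,i)+d(i,j_1)+d(j_1,i_1)$ closes the case with the factor $6$. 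Your proposal also has a circularity in the $j\notin\cB$ undecided case: you invoke the second bullet for the client $j'$ in the $\cB$-exclusion inequality, but establishing the second bullet for $j'$ may itself require the first bullet for yet another client, and so on; without the well-founded induction on $R$ there is no base to ground this recursion. Replacing the false tightness-preservation claim with the paper's induction is not a patch but the substance of the proof.
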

\begin{proof} 

  The proof is by induction on the well-ordered set (with respect to the natural order $\leq$) 
  \begin{align*}
    R = \{0\} \cup \{\alpha_j\}_{j\in \clients \setminus \cB} \cup \{(1+\aeps)\alphat{0}_j\}_{j\in \clients}\,. 
  \end{align*}
  Specifically,  we prove the following induction hypothesis: for
  $r\in R$,
  \begin{enumerate}[label={(\alph*)}]
    \item \label{near1} each client $j \in \clients\setminus \cB$  with $\alpha_j \leq r$ has a tight facility $i$ such that $(1+\sqrt{\delta}+\aeps)\sqrtalpha_j \ge d(j,i) + \sqrt{\delta t_i}$; 
    \item \label{near2} each client $j\in \clients$ with
   $(1+\aeps)\alphat{0}_j\leq r$ has a tight facility $i$ such that $6\sqrtalphat{0}_j \ge d(j,i) + \sqrt{\delta t_i}$.
  \end{enumerate}
  The statement then follows from the above with  $r = \arg\max_{r\in R} r$. 
  
  For the base case (when $r=0$), the claim is vacuous since there is no client $j$ such
  that $\alpha_j \le 0$  or $(1+\aeps) \alphat{0}_j \leq 0$ (because every
  $\alpha$-value is at least $1$ by Invariant~\ref{inv:feasibility}).  
    For the induction step, we assume that each client $j\in \clients\setminus \cB$ with
  $\alpha_j < r$ satisfies~\ref{near1} and each client $j\in
  \clients$ with $(1+\aeps)\alphat{0}_j < r$ satisfies~\ref{near2}.
  We need to prove that any client $j_0 \in \clients \setminus \cB$ with  $\alpha_{j_0}
  = r$  (respectively, $j_0 \in \clients$ with $(1+\aeps)\alphat{0}_{j_0} = r$) satisfies~\ref{near1} (respectively,~\ref{near2}).    We divide the proof into two cases.
                      \paragraph{Case 1: $j_0 \in \clients\setminus \cB$ with $\alpha_{j_0} = r$.} We prove that in this case $j_0$ satisfies~\ref{near1}.  Since $j_0 \not\in \cB$, either $j_0$ has a witness, $j_0$ is currently stopped, or there is another client $j$ such that $2\sqrtalpha_{j_0} \geq d(j_0, j) + 6 \sqrtalphat{0}_{j}$.

Suppose first that $j_0$ has a witness $i$.  Then, $i$ is a tight facility and, since $j_0$ has a tight edge to $i$, $d(j_0, i) \leq \sqrtalpha_{j_0}$. Moreover,
$B(\alpha_{j_0}) \geq B(t_i)$ which implies that $(1+\tfrac{\aeps}{2}) \sqrtalpha_{j_0}\geq \sqrt{(1+\aeps)} \sqrtalpha_{j_0} \geq \sqrt{t_i}$. Therefore, using that $\sqrt{\delta} \leq 2$,
  \begin{align*}
    d(j_0, i) + \sqrt{\delta t_i} \leq (1+ \sqrt{\delta} + \aeps) \sqrtalpha_{j_0}\,.
  \end{align*}

Now suppose that $j_0$ is stopped by another client $j$. Then $\alpha_j \leq
  \alpha_{j_0}/3^2 = r/9$.   On the one hand, if $j\in \clients \setminus \cB$, we have    $d(j, i) + \sqrt{\delta t_i} \leq (1+ \sqrt{\delta}
  + \aeps) \sqrtalpha_j \leq 6 \sqrtalpha_j$ for some tight facility $i$ by the induction
  hypothesis~\ref{near1}.  On the other hand, if $j \in \cB$ then $j$ is undecided so by Lemma~\ref{lem:sweep-preservation},
  $\alphat{0}_j \leq \alpha_j$. This in turn implies that $\alphat{0}_j \leq
  \alpha_j \leq r/9 < r/(1+\aeps)$.  We can thus apply the induction hypothesis~\ref{near2} to $j$, to conclude that there is a tight facility $i$ such
that $d(j,i) + \sqrt{\delta t_i} \leq 6 \sqrtalphat{0}_j \leq  6 \sqrtalpha_j$.  From above we have that,  whether $j$ is in $\cB$  or not,  there is a tight facility $i$ such that
  \begin{align*}
    d(j_0,i) + \sqrt{\delta t_i} &\leq d(j_0, j) + d(j,i) + \sqrt{\delta t_i} \\
    &\leq d(j_0, j) + 6\sqrtalpha_j \\
    &\leq 2\sqrtalpha_{j_0} \leq (1+\sqrt{\delta} + \aeps) \sqrtalpha_{j_0}\,,
  \end{align*}
  where the penultimate inequality uses the fact that $j_0$ is stopped by $j$ and thus $2\sqrtalpha_{j_0} \geq d(j, j_{0}) + 6 \sqrtalpha_{j}$.

Finally, suppose that $j_0$ is not stopped or witnessed. Then, $j_0$ is currently undecided and, as $j_0 \not \in \cB$,  there is a client $j$ such that $2\sqrtalpha_{j_0} \geq d(j_0, j) + 6 \sqrtalphat{0}_{j}$. This implies that 
$\alphat{0}_j  \leq \alpha_{j_0}/9 = r/9 < r/(1+\aeps)$.  We can thus apply the induction hypothesis~\ref{near2} to $j$ to conclude, that there is a tight facility $i$ such that $d(j,i) + \sqrt{\delta t_i} \leq 6\sqrtalphat{0}_j$.  Now, we have:
\begin{align*}
d(j_0, i) + \sqrt{\delta t_i} &\leq d(j_0,j) + d(j,i) + \sqrt{\delta t_i} \\
&\leq d(j_0,j) + 6\sqrtalphat{0}_j \\
&\leq 2\sqrtalpha_{j_0} \leq (1 + \sqrt{\delta} + \aeps)\sqrtalpha_{j_0}\,.
\end{align*}

\paragraph{Case 2: $j_0 \in \clients$ with $(1+\aeps)\alphat{0}_{j_0} = r$.}We prove that in this case $j_0$ satisfies~\ref{near2}.
Suppose first that $\alpha_{j_0} < \alphat{0}_{j_0}$. Then  $j_0$  is decided by
Lemma~\ref{lem:sweep-preservation}. Therefore $j_0 \in \clients \setminus \cB$ with $\alpha_{j_0} < r$ and so  by the induction hypothesis~\ref{near1} there is a tight facility $i$ satisfying $d(j_0, i) + \sqrt{\delta t_i} \leq (1+ \sqrt{\delta} + \aeps) \sqrtalpha_{j_0} < 6 \sqrtalphat{0}_{j_0}$, as required.  Similarly, if $j_0 \in U^{(0)}$ then by
Corollary~\ref{cor:good-neighbor}, $j_0$ is decided and 
\[\alpha_{j_0} \leq  \alphat{0}_{j_0} + \zeps < (1+\aeps)\alphat{0}_{j_0} = r\,,\]
where the second inequality follows from $\zeps < \aeps$ and $\alphat{0}_{j_0} \geq 1$ by Invariant~\ref{inv:feasibility}.  We can thus again apply the induction hypothesis~\ref{near1} to conclude that there is a tight facility $i$ satisfying $d(j_0, i) + \sqrt{\delta t_i} \leq (1+\sqrt{\delta} + \aeps) \sqrtalpha_{j_0} \leq 6 \sqrtalphat{0}_{j_0}$.  Thus, from now on, we assume that $\alpha_{j_0} \geq \alphat{0}_{j_0}$ and that
$j_0 \not \in U^{(0)}$. We divide the remaining part of the analysis into two sub-cases depending on whether $j_0$ was stopped in $\alphat{0}$.   

First, suppose that $j_{0}$ was stopped in $\alphat{0}$ by another client $j$.
  Then $\alphat{0}_j \leq \alphat{0}_{j_0}/9 < r/(1+\aeps)$ and so by the induction
  hypothesis~\ref{near2}, there is a tight facility $i$ satisfying $d(j, i) + \sqrt{\delta
  t_i} \leq 6 \sqrtalphat{0}_j$.  Hence,
  \begin{align*}
    d(j_0, i) + \sqrt{\delta t_i} & \leq d(j_0, j) + d(j,i) + \sqrt{\delta t_i} \\
     & \leq d(j_0, j) + 6\sqrtalphat{0}_j \\
     & \leq 2 \sqrtalphat{0}_{j_0}  < 6 \sqrtalphat{0}_{j_0}\,.
  \end{align*}
  
Finally, suppose that $j_0$ was not stopped in $\alphat{0}$. Then since every client is decided in $\alphat{0}$ (Invariant~\ref{inv:good-solution}) $j_0$
  had a witness $i$ in $\alphat{0}$. Moreover, as $j_0 \not \in U^{(0)}$, we may assume that $i  \neq i^+$ and so $z_i = \zt{0}_i$. By the definition of a witness,
  $\alphat{0}_{j_1} \leq (1+\aeps)\alphat{0}_{j_0}$ for all $j_1 \in
  \Nt{0}(i)$.  If $\alpha_{j_1} \geq \alphat{0}_{j_1}$ for all $j_1 \in \Nt{0}(i)$, then, since $z_i = \zt{0}_i$, our feasibility invariant (Invariant~\ref{inv:feasibility}) implies that in
fact $\alpha_{j_1} = \alphat{0}_{j_1}$ for all $j_1 \in \Nt{0}(i)$ and so $N(i) = \Nt{0}(i)$.  Therefore, in this case $i$ is still a witness for $j_0$ and
  $d(j_0, i) + \sqrt{\delta t_i} \leq (1+\sqrt{\delta} + \aeps) \sqrtalphat{0}_{j_0} \leq 6 \sqrtalphat{0}_{j_0}$.  It remains to consider the case when  $\alpha_{j_1}
  < \alphat{0}_{j_1}$  for some $j_1 \in \Nt{0}(i)$ (note that $j_1 \neq j_0$, since $\alpha_{j_0} \geq \alphat{0}_{j_0}$ by assumption).  Since $\alpha_{j_1} < \alphat{0}_{j_1}$, $j_1$ must be decided (by Lemma~\ref{lem:sweep-preservation}) and so $j_1 \in \clients\setminus \cB$.  Moreover, $\alpha_{j_1} < \alphat{0}_{j_1} \leq (1 + \aeps)\alphat{0}_{j_0} = r$, and so we can apply the induction hypothesis~\ref{near1} to conclude that there is a tight facility $i_1$ satisfying $d(j_1, i_1) + \sqrt{\delta t_{i_1}} \leq (1+\sqrt{\delta} + \aeps) \sqrtalpha_{j_1} < (1+\sqrt{\delta} + \aeps)\sqrtalphat{0}_{j_1}$. Then,
  \begin{align*}
    d(j_0,i_1) + \sqrt{\delta t_{i_1}} & \leq d(j_0,i) + d(i, j_1) + d(j_1,i_1) + \sqrt{\delta t_{i_1}} \\
    &< \sqrtalphat{0}_{j_0} + \sqrtalphat{0}_{j_1} + (1+\sqrt{\delta} + \aeps) \sqrtalphat{0}_{j_1}\\
    &\leq \sqrtalphat{0}_{j_0} + (1+\aeps)^{1/2}\sqrtalphat{0}_{j_0} + (1+\aeps)^{1/2}(1+\sqrt{\delta} + \aeps) \sqrtalphat{0}_{j_0}\\
    &\leq 6 \sqrtalphat{0}_{j_0}\,, & 
  \end{align*}
as required.
\end{proof}
Lemma~\ref{lem:nearby-witnessed-client} shows that the clients  in $ \clients \setminus \cB$ satisfy the first inequality of Property 2 in Definition~\ref{def:roundable}
while the potentially bad clients $j \in\cB$  satisfy a slightly weaker inequality.  It remains to prove that the potentially bad clients will have a small contribution towards the total cost of our solution.

\subsection{Showing that $\alpha$-values are stable}
\label{sec:prel-bound-chang}

The key to our remaining analysis is showing that the $\alpha$-values are relatively well-behaved throughout the algorithm.  The following lemma implies that \sweep decreases an $\alpha_{j'}$ only because it is increasing an $\alpha_{j}$ which is at most  a constant factor smaller. This will imply the required stability properties. 
\begin{lemma}
  At any time during Algorithm~\ref{alg:1}: if a client $j$ has a tight edge to some facility, then $\alpha_{j'} \leq 19^2 \alpha_{j}$ for every other client $j'$ with a tight edge to this facility.
 \label{lem:tightboundedratio}
 \end{lemma}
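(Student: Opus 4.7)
The plan is to prove the equivalent statement $\sqrtalpha_{j'} \le 19 \sqrtalpha_j$ whenever $j$ and $j'$ both have tight edges to a common facility $i$. The central geometric observation is that two tight edges to $i$ give $d(j,j') \le d(j,i) + d(j',i) \le \sqrtalpha_j + \sqrtalpha_{j'}$. Consequently, if $\sqrtalpha_{j'} \ge 7\sqrtalpha_j$, then
\[
d(j,j') + 6\sqrtalpha_j \;\le\; 7\sqrtalpha_j + \sqrtalpha_{j'} \;\le\; 2\sqrtalpha_{j'},
\]
so the stopping condition in Rule~2 of \sweep is automatically satisfied for $j'$ with respect to $j$. (It is essentially this computation that sets the coefficients $2$ and $6$ appearing in \sweep's stopping rule.)

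Armed with this, I plan to trace the ratio back to the latest moment $T^{\star}$ at which $\alpha_{j'}$ was last increased (or, as a base case, the initialization of $\alphat{0}$ in Algorithm~\ref{alg:1}). Because $\alpha_{j'}$ can only have decreased or stayed the same since $T^{\star}$ and $d(j',i)$ is constant, $j'$ also had a tight edge to $i$ at~$T^{\star}$. At $T^{\star}$ either $j \in A$---in which case $\alpha_j = \alpha_{j'} = \theta$ and the ratio at $T^{\star}$ is exactly~$1$---or $j \notin A$. In the latter case the geometric observation applied at $T^\star$ forces $\sqrtalpha_{j'}^{T^\star} \le 7\sqrtalpha_j^{T^\star}$, since otherwise Rule~2 would have removed $j'$ from $A$ earlier. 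For the base case, an analogous direct argument on the uniform-growth initialization of $\alphat{0}$ (whose stopping condition has the same algebraic form) gives the same factor-$7$ bound for any pair tight to a common facility in $\alphat{0}$.

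The main obstacle is combining these $T^{\star}$-bounds with how much $\alpha_j$ can have changed between $T^{\star}$ and the current time~$T$. Since $\sqrtalpha_{j'}^T \le \sqrtalpha_{j'}^{T^{\star}} \le 7\sqrtalpha_j^{T^{\star}}$, the claim reduces to bounding the shrinkage of $\alpha_j$ via $\sqrtalpha_j^{T^\star} \le (19/7)\sqrtalpha_j^T$. For this, I plan to analyze each continuous decrease of $\alpha_j$ in the interval: \sweep decreases $\alpha_j$ only while $\alpha_j = t_{i_0}$ for some potentially tight facility $i_0$ with an active contributor $j_a \in A \cap N(i_0)$, and the decrease halts as soon as $B(\alpha_j) \le B(\theta)$, so $\alpha_j^{\mathrm{after}}$ is within a factor $(1+\aeps)$ of $\alpha_{j_a}^{\mathrm{after}}$. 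Applying the inductive version of the present lemma to the pair $(j,j_a)$, both tight to~$i_0$, bounds $\sqrtalpha_j / \sqrtalpha_{j_a}$ by $19$ going into the decrease, while the halting condition ensures $\sqrtalpha_j^{\mathrm{after}} \gtrsim \sqrtalpha_{j_a}^{\mathrm{after}}$ coming out. Combined with the fact that $j$ retains its tight edge to $i$ throughout (providing the lower barrier $\sqrtalpha_j \ge d(j,i)$) and the monotonicity of $\theta$ within each \sweep call, this yields the required factor-$(19/7)$ shrinkage bound; the constant $19$ is in fact the smallest one that makes the chain $7\cdot(19/7)=19$ of inequalities close.
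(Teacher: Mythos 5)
Your reduction to the pairwise statement $\sqrtalpha_{j'}\le 19\sqrtalpha_j$, the factor-$7$ bound at the moment $T^\star$ of $j'$'s last increase (via Rule~2 and the triangle inequality through the common facility), and the handling of the initialization are all in the spirit of the paper's argument. But the step where you control what happens to $\alpha_j$ between $T^\star$ and the current time $T$ has a genuine gap, and it is exactly the hard part of this lemma. You claim a per-episode shrinkage bound of $19/7$ for $\sqrtalpha_j$, but the tools you invoke cannot deliver it: entering a decrease episode, the inductive form of the lemma applied to $(j,j_a)$ with $j_a\in A\cap N(i_0)$ only gives $\alpha_j\le 19^2\alpha_{j_a}=19^2\theta$, and the halting condition $B(\alpha_j)\le B(\theta)$ only guarantees $\alpha_j\ge\theta/(1+\aeps)$ at the end; so a \emph{single} episode can shrink $\sqrtalpha_j$ by a factor close to $19$, not $19/7$. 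The chain then reads $\sqrtalpha_{j'}^T\le 7\sqrtalpha_j^{T^\star}\le 7c\,\sqrtalpha_j^T$ where $c$ is the very constant you are trying to establish, and $7c\le c$ has no solution --- the scheme is circular and cannot close for any constant, even before accounting for the fact that several decrease episodes (and interleaved increases, which also break your ``$j$ retains its tight edge to $i$ throughout'' lower barrier) may occur in $[T^\star,T]$.

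The paper escapes this by never trying to bound the shrinkage of $\alpha_j$ at all. It proves the stronger invariant $2\sqrtalpha_{j'}\le d(j,j')+18\sqrtalpha_j$ for \emph{all} pairs of clients, and in the decrease case it anchors the argument to time-invariant data: since $\alpha_j=t_i$ for a potentially tight facility $i$, there is $j_1\in N(i)$ with $\alphat{0}_{j_1}\le\alpha_{j_1}\le\alpha_j$; Invariant~\ref{inv:good-solution} (every client decided in $\solt{0}$) supplies a facility $i^\star$ tight in $\solt{0}$ with $d(j_1,i^\star)\le 2\sqrtalphat{0}_{j_1}$ and $\alphat{0}_{j''}\le(1+\aeps)\alphat{0}_{j_1}\le(1+\aeps)\alpha_j$ for all $j''\in\Nt{0}(i^\star)$; and dual feasibility at the time $p$ of $j'$'s last increase forces some $j^\star\in\Nt{0}(i^\star)$ to satisfy $\alphat{p}_{j^\star}\le(1+\aeps)\alpha_j+\zeps$, a bound in terms of the \emph{current} $\alpha_j$. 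Applying the Rule~2 non-stopping condition to the pair $(j',j^\star)$ at time $p$ and chaining triangle inequalities through $j$, $i$, $j_1$, $i^\star$, $j^\star$ yields the constant $18$, hence $19$. You would need to import this anchoring idea; the shrinkage-tracking route does not work.
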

 \begin{proof}
   We prove the following stronger statement: at any time during Algorithm~\ref{alg:1}, we have
   \begin{align}
     \label{eq:stableinv}
      2\sqrtalpha_{j'} \leq d(j',j) + 18 \sqrtalpha_{j} 
   \end{align}
 for any pair $j,j'$ of clients. To see that this implies the lemma consider two clients $j$ and $j'$ that both have tight edges to $i^*$. Then
\[
2\sqrtalpha_{j'} \leq d(j',j) + 18\sqrtalpha_{j} \leq d(j',i^*) + d(i^*,j) + 18\sqrtalpha_j \leq \sqrtalpha_{j'} + \sqrtalpha_{j} + 18\sqrtalpha_{j},
\]
which implies that $\alpha_{j'} \leq 19^2 \alpha_{j}$.
   
Inequality~\eqref{eq:stableinv} is clearly satisfied by the initial solution $\alphat{0}$ constructed at the beginning of Algorithm~\ref{alg:1}, since we stop increasing any $\alpha_j$ as soon as $2\sqrtalpha_j \ge d(j',j) + 6\sqrtalpha_{j'}$ for any client $j'$, and neither $\alpha_j$ nor $\alpha_{j'}$ are later changed.  We now show that \eqref{eq:stableinv} continues to hold throughout the execution of
Algorithm~\ref{alg:1}. The only procedure that updates the dual solution is \sweep, so let us analyze its behavior.

First note that the inequality cannot become violated by increasing $\alpha_{j'}$, because as soon as $2\sqrtalpha_{j'} \ge d(j',j) + 6\sqrtalpha_{j}$, $j'$ will be removed from $A$ by Rule~$2$ of \sweep.  It remains to prove that the inequality does not become violated because $\alpha_j$ is decreasing.  To this end, consider a time when $\alpha_j$ is decreasing. Then, by the definition of \sweep, there must be some potentially tight facility $i$, such that $j \in N(i)$ with $\alpha_{j} = t_{i}$.  Since $j$ has the largest $\alpha$-value in $N(i)$ and $i$ is potentially tight, there is some client $j_1 \in N(i)$ (note that possibly $j_1 = j$) such that $\alphat{0}_{j_1} \leq \alpha_{j_1} \leq \alpha_j$.  We show the following:
\begin{claim*}
There exists some facility $i^\star$ such that $i^\star$ was tight in $\solt{0}$ and also:
 \begin{align*}
   d(j_1, i^\star) \leq 2\sqrtalphat{0}_{j_1} \leq 2 \sqrtalpha_{j} \qquad \mbox{and} \qquad \alphat{0}_{j''} \leq (1+\aeps)  \alphat{0}_{j_1} \leq (1+\aeps) \alpha_{j} \mbox{ for all }  j'' \in \Nt{0}(i^\star)\,.
 \end{align*}
\end{claim*}
\begin{proof}
By Invariant~\ref{inv:good-solution}, every client  must be decided in $\solt{0}$.  Consider client $j_1$.
If $j_1$ was witnessed in $\solt{0}$, then there was a tight facility  $i^\star$ such that $d(j_1, i^\star) \leq
\sqrtalphat{0}_{j_1} \leq \sqrtalpha_{j}$ and $\alphat{0}_{j''} \leq (1+\aeps) \alphat{0}_{j_1}$ for every $j'' \in \Nt{0}(i^\star)$.   If $j_1$ was stopped by a client $j_2$ in $\solt{0}$ (\ie $2\sqrtalphat{0}_{j_1} \geq d(j_1, j_2) + 6 \sqrtalphat{0}_{j_2}$),  then we may assume that $j_2$ is witnessed by Lemma~\ref{lem:stopped-normal}.  In this case, let $i^\star$  be the witness of $j_2$.  Then, 
\[
d(j_1, i^\star) \leq d(j_1,j_2) + d(j_2,i^\star) \le d(j_1,j_2) + \sqrtalphat{0}_{j_2} \le 2\sqrtalphat{0}_{j_1} \le 2\sqrtalpha_j\,,
\]
and also
\[
\alphat{0}_{j''} \leq (1+\aeps) \alphat{0}_{j_2} \leq \alphat{0}_{j_1} \leq \alpha_{j},
\]
for all $j'' \in \Nt{0}(i^\star)$.  In either case, the claim holds.
\end{proof}

Now, let $i^\star$ be the facility guaranteed to exist by the Claim.  Consider the dual solution $\alphat{p}$ at the last time that $j'$ was previously increased.  Then, we must have $\alphat{p}_{j'} \geq \alpha_{j'}$.  Additionally, since Algorithm~\ref{alg:1} never decreases any facility's price, and the current call to \raiseprice has increased any facility's price by at most $\zeps$, we have $\zt{p}_{i^\star} \le z_{i^\star} \le \zt{0}_{i^\star} + \zeps$.  Let $j^\star = \arg\min_{j'' \in \Nt{0}(i^\star)}\alphat{p}_{j''}$.  We claim that:
 \begin{equation}
\label{eq:stableinv-aux-1}
   \alphat{p}_{j^\star} = \min_{j'' \in \Nt{0}(i^\star)} \alphat{p}_{j''} \leq (1+\aeps) \alpha_{j} + \zeps\,.
 \end{equation}
Indeed, otherwise by the Claim, we would have $\alphat{p}_{j''} > (1 + \aeps)\alpha_j + \zeps \ge \alphat{0}_{j''} + \zeps$ for every $j'' \in \Nt{0}(i^\star)$.  Then, since $i^\star$ is tight in $\solt{0}$ we would have:
\begin{align*}
\sum_{j'' \in \clients} [\alphat{p}_{j''} - d(j'',i^\star)]^+ 
&\ge \sum_{j'' \in \Nt{0}(i^\star)} [\alphat{p}_{j''} - d(j'',i^\star)]^+ \\
&> \sum_{j'' \in \Nt{0}(i^\star)} [\alphat{0}_{j''} + \zeps - d(j'',i^\star)]^+ 
\ge \zt{0}_{i^\star} + \zeps \ge \zt{p}\,,
\end{align*}
contradicting Invariant~\ref{inv:feasibility}.
 
We shall now show that \eqref{eq:stableinv-aux-1} and the claim imply \eqref{eq:stableinv}.  Since $j'$ was increasing when $\alphat{p}$ was maintained, Rule~2 of \sweep implies that:
 \begin{align*}
   2\sqrtalpha_{j'} \leq   2\sqrtalphat{p}_{j'} & \leq d(j', j^\star) + 6 \sqrtalphat{p}_{j^\star} & \text{({\small $j'$ was last increased in $\alphat{p}$})} \\
   &\leq  d(j', j) + d(j, j_1) + d(j_1, i^\star) + d(i^\star, j^\star)  +  6 \sqrtalphat{p}_{j^\star}  & \text{({\small triangle inequality})} \\
   &\leq d(j', j) + d(j, j_1) + d(j_1,i^\star) + \sqrtalphat{0}_{j^\star} + 6 \sqrtalphat{p}_{j^\star}  & \text{({\small $j^\star \in \Nt{0}(i^\star)$})} \\
   &\leq d(j', j) + d(j, j_1) + d(j_1,i^\star) + \sqrtalphat{0}_{j^\star} + 12 \sqrtalpha_{j}  & \text{({\small inequality \eqref{eq:stableinv-aux-1}})} \\
   &\leq d(j', j) + d(j, j_1) + 2\sqrtalpha_{j} + (1+\aeps)^{1/2} \sqrtalpha_{j} + 12 \sqrtalpha_{j}  & \text{({\small $j^\star \in \Nt{0}(i^\star)$ and Claim above})} \\
   &\leq d(j', j) + d(j, j_1) + 16\sqrtalpha_{j} & \text{({\small arithmetic})} \\
   & \leq d(j', j) + d(j, i) + d(i, j_1) + 16 \sqrtalpha_{j} & \text{({\small triangle inequality})}\\
   & \leq d(j', j) + \sqrtalpha_{j} + \sqrtalpha_{j_1} + 16 \sqrtalpha_{j} & \text{({\small $j,j_1 \in N(i)$})} \\
   & \leq d(j',j) + 18 \sqrtalpha_{j}\,. & \text{({\small $\alpha_{j} \ge \alpha_{j_1}$ since $j$ decreasing})}
 \end{align*}
 and thus~\eqref{eq:stableinv} remains satisfied when $j$ is decreasing. 
\end{proof}

Using Lemma~\ref{lem:tightboundedratio}, we can now prove that \raiseprice produces a close sequence of solutions, and also bound the total number of clients in $\cB$ for any solution produced by \raiseprice.  For both of these tasks, we make use of the following auxiliary lemma, which is a consequence of Lemma~\ref{lem:tightboundedratio}.

\begin{lemma}
\label{lem:alpha-unchanged}
Throughout stage $s$, for all $j$ with $\alpha_j > \theta_s$, we have $\alpha_j \le \alphat{0}_j$, and for all $j$ with $\alphat{0}_j \ge 20^2\theta_s$ or $\alpha_j \ge 20^2\theta_s$, we have $\alpha_j = \alphat{0}_j$.
\end{lemma}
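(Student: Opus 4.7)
My plan is to prove the two assertions in parallel, with the first serving as a lemma for the second, and to proceed by induction on the stage number $s$ combined with a case analysis on the events of \sweep{}.

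For the first assertion (if $\alpha_j > \theta_s$ in stage $s$, then $\alpha_j \le \alphat{0}_j$), the proof is essentially a direct consequence of Rule~4 of \sweep{}. At the start of stage $1$ we have $\alpha = \alphat{0}$, so the claim holds trivially. For the inductive step across stages, at the end of stage $s{-}1$ the invariant gives the conclusion for all $j$ with $\alpha_j > \theta_{s-1}$, which, because $\theta_s > \theta_{s-1}$, immediately yields the invariant at the start of stage $s$. Within a single call to \sweep{}, an increase can push $\alpha_j$ above $\theta_s$ only while $\alpha_j < \alphat{0}_j$ (otherwise Rule~4 removes $j$ from $A$ immediately), and a decrease can only preserve the inequality, so the invariant is maintained throughout.

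For the second assertion, note first that by the first part the condition $\alpha_j \ge 400\,\theta_s$ forces $\alphat{0}_j \ge \alpha_j \ge 400\,\theta_s$, so the two hypotheses collapse to $\alphat{0}_j \ge 400\,\theta_s$ and it suffices to show $\alpha_j = \alphat{0}_j$. I will again induct on $s$, using the previous stage plus $\theta_s > \theta_{s-1}$ to handle initialization. Within stage $s$, assuming $\alpha_j = \alphat{0}_j \ge 400\,\theta_s$ currently, $\alpha_j$ cannot increase because $j$ entering $A$ would satisfy both parts of Rule~4 ($\alpha_j \ge \theta_s$ and $\alpha_j \ge \alphat{0}_j$) and would be removed at once.

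The main obstacle is ruling out a decrease of $\alpha_j$, and this is where Lemma~\ref{lem:tightboundedratio} is crucial. A decrease would require a potentially tight facility $i$ with $j \in N(i)$, $\alpha_j = t_i$, and some $j_1 \in A \cap N(i)$; I will split on the definition of ``potentially tight''. In the first case some $j' \in N(i)$ has $\alpha_{j'} > \alphat{0}_{j'}$, which combined with the first part forces $\alpha_{j'} \le \theta_s$, and then Lemma~\ref{lem:tightboundedratio} applied to $j$ and $j'$ (both in $N(i)$) yields $\alpha_j \le 19^2\theta_s < 400\,\theta_s$, a contradiction. In the second case $\alpha_{j''} \ge \alphat{0}_{j''}$ for every $j'' \in \Nt{0}(i)$; here $j \in \Nt{0}(i)$ and Lemma~\ref{lem:tightboundedratio} applied to $\alphat{0}$ gives $\alphat{0}_{j''} \ge \alphat{0}_j/19^2 > \theta_s$, so combining with the first part yields $\alpha_{j''} = \alphat{0}_{j''}$ for every $j'' \in \Nt{0}(i)$. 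If $j_1 \in \Nt{0}(i)$ this would imply $\alpha_{j_1} = \alphat{0}_{j_1}$, but Rule~4 together with $\alpha_{j_1} = \theta > \theta_s$ (again from Lemma~\ref{lem:tightboundedratio}) requires $\alpha_{j_1} < \alphat{0}_{j_1}$, a contradiction; and if $j_1 \notin \Nt{0}(i)$, then $\alphat{0}_{j_1} \le d(j_1,i)^2 < \alpha_{j_1}$, again contradicting Rule~4. Both cases fail, so $\alpha_j$ never changes in stage $s$.
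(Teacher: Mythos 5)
Your proof is correct and follows essentially the same approach as the paper's: Rule~4 yields the first assertion, and the second is obtained by ruling out any decrease of $\alpha_j$ using Lemma~\ref{lem:tightboundedratio} together with a case split on the two branches of the definition of a potentially tight facility. The only organizational difference is that the paper argues by contradiction from a moment strictly inside a hypothetical decrease (where $\alpha_j < \alphat{0}_j$ already violates the second branch of potential tightness directly via $j \in \Nt{0}(i)$), whereas you show a decrease can never begin and therefore must close the second branch by routing the contradiction through the active client $j_1 \in A \cap N(i)$ and Rule~4; both arguments are valid.
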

\begin{proof}
For the first claim, we show that any client $\alpha_j$ with $\alpha_j \ge \theta_s$ can continue to increase in stage $s$ only while  $\alpha_j < \alphat{0}_j$.  Indeed, if $\alpha_j \ge \theta_s$ then once $\alpha_j = \alphat{0}_j$, $j$ will immediately be removed from $A$ by Rule $4$. 

For the remaining claim, suppose first that $\alphat{0}_j \ge 20^2\theta_s$.  Then, as we have just shown, $\alpha_j \leq \alphat{0}_j$ throughout stage $s$.  Suppose towards contradiction that in fact $\alpha_j < \alphat{0}_j$ at some moment in stage $s$ or earlier, and let $\alphat{-}$ be the value of $\alpha$ at this time.  Then, at some moment in stage $s$ or earlier, we must have had $\alphat{-}_j < \alpha_j < \alphat{0}_j$, and $\alpha_j > 19^2\theta_s$ but $j$ decreasing.  Since $j$ is being decreased by \sweep at this moment, we must have $j\in N(i)$ for a potentially tight facility $i$.  Since $\alphat{0}_j > \alpha_j$ we must also have $j \in \Nt{0}(i)$.  However, Lemma~\ref{lem:tightboundedratio} implies that for every other $j' \in N(i)$ at this moment we have $\alpha_{j'} \geq 19^{-2}\alpha_j > \theta_s.$  Thus, by the first claim, $\alpha_{j'} \le \alphat{0}_{j'}$ for all $j' \in N(i)$.  This contradicts the fact that $i$ is potentially tight, since $j \in \Nt{0}(i)$ with $\alpha_j < \alphat{0}_j$. 

Finally, suppose that $\alpha_j \ge 20^2\theta_s$.  Then, by the first claim, we must have $\alpha_j \le \alphat{0}_j$ and so also $\alphat{0}_j \ge 20^2\theta_s$.  Then, as we have just shown, $\alpha_j = \alphat{0}_j$.
\end{proof}

\subsubsection{\raiseprice produces a close sequence of $\alpha$-values in polynomial time}
\label{sec:show-that-rais}
In the preceding section, we showed that all of the $\alpha$-values are relatively stable throughout the algorithm.  Using those observations, we can now prove that \raiseprice indeed produces a close sequence of $\alpha$-values.  To that end, let us select the remaining parameters $K$, $\sigma$, and $\zeps$ used in \raiseprice.

Recall that the thresholds used by \raiseprice are defined by:
\begin{equation*}
\theta_1 = (\max_{j \in U^{(0)}}\alphat{0}_j + 2\zeps)(1 + \aeps)^\sigma
\quad \text{and} \quad \theta_s = (1
  + \aeps)^{K}\theta_{s-1} \text{ for all $s > 1$.}
\end{equation*}
Therefore, the ratio of two consecutive thresholds  is
$\theta_s/\theta_{s-1} = (1+\aeps)^K$. We select $K$ to be the smallest integer satisfying
\begin{align*}
  (1+\aeps)^K \geq C_0^{2/\gamma^4},\quad \mbox{ where $\const0 = 81 \cdot 25 \cdot 20^{8}$.}
\end{align*}
Note that $K = \Theta(\aeps^{-1}\gamma^{-4})$.  Given $K$, we select an integer ``shift'' $\sigma$ uniformly at random from the interval $(0,K/2]$.  

Finally, we set the price increment $\zeps$ to:
\begin{equation}
  \zeps = n^{-6(K + \const1+2) - 3}  \qquad \text{where } \const1 = \lceil \log_{1+\aeps} (20^{4}) \rceil + 1 = O(\aeps^{-1})\,.
\end{equation}
Using these parameters, we can show that the sequence of solutions $(\alpha,z)$ produced by \raiseprice is indeed close.  Because each successive $\alpha$-value in this sequence is produced by calling \sweep on the previous value, it suffices to show the following.
\begin{proposition}
Each call to \sweep changes every $\alpha_j$ by at most $n^{-2}$.
\label{prop:close-values}
\end{proposition}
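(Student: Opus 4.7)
The strategy mirrors the inductive argument used in the proof of Lemma~\ref{lem:quasi-close} in the quasi-polynomial case, but crucially exploits that a single call to \sweep now operates in a single stage $s$ and therefore only alters $\alpha$-values lying in a bounded range of buckets.  First I would identify this range.  By Property~\ref{item:sb-mu} of Lemma~\ref{lem:sweep-basic} (together with Lemma~\ref{lem:no-changes-stage-later} when $s > 1$, and Lemma~\ref{lem:no-changes-stage-1} when $s=1$) no $\alpha_j$ with $B(\alpha_j) < B(\theta_{s-1})$ is touched by \sweep (for $s=1$ the lowest relevant bucket is controlled instead by $B(\min_{j\in U^{(0)}}\alphat{0}_j)$).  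On the other end, Lemma~\ref{lem:alpha-unchanged} shows that any $\alpha_j$ or $\alphat{0}_j$ of value at least $20^2 \theta_s$ is also frozen at its original value throughout stage $s$.  Hence the buckets whose $\alpha$-values can move during a single call to \sweep are contained in an interval of length at most
\[
K + \const1 + O(1),\qquad \text{where } \const1 = \lceil\log_{1+\aeps}(20^4)\rceil+1,
\]
i.e.\ the number of ``active'' buckets is $O(K)$ rather than $\Theta(\log n)$.

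Let $b_0$ denote the smallest bucket index touched by the current call to \sweep.  I would then prove, by induction on $b' = 0,1,2,\ldots$, the following analogue of the claim inside the proof of Lemma~\ref{lem:quasi-close}: during this call, any $\alpha_j$ can increase by at most $\zeps\, n^{3b'}$ while $B(\theta) \le b_0 + b'$.  The base case $b'=0$ is immediate since no $\alpha_j$ can move before $\theta$ enters bucket $b_0$.  For the inductive step, suppose $\alpha_j$ is increasing while $B(\theta) = b_0 + b'$; then $\alpha_j = \theta$.  To bound the amount of increase I argue, as in the quasi-polynomial case, that either $\alpha_j$ becomes smaller than its initial value in this call (in which case it had previously been decreased only while $B(\theta) < b_0+b'$, and by the rate comparison built into \sweep this decrease is at most $n$ times the largest increase in any other $\alpha_{j'}$ at lower $\theta$, controlled by the induction hypothesis as $\zeps\, n^{3b'-2}$), or some removal rule must eventually fire.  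For each rule I verify that the ``extra'' amount of increase is bounded by $O(n\cdot \zeps\, n^{3b'-2}) + \zeps$: Rule 1 becomes enabled once a facility $i$ with $j\in N(i)$ becomes tight, and by Invariant~\ref{inv:feasibility} and the induction hypothesis the contributions of other clients in $N(i)$ can have decreased by at most $n\cdot\zeps\, n^{3b'-2}$; Rules 2, 5 involve comparison to another client $j'$ with $\alpha_{j'}$ in a strictly lower bucket and so are controlled similarly; and Rules 3, 4 directly cap the allowed increase by $\zeps$.  The total therefore adds to at most $\zeps\, n^{3b'}$.

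Summing, the amount by which any $\alpha_j$ can increase during the entire call to \sweep is at most $\zeps\, n^{3(K+\const1+O(1))}$.  The symmetric bound for decreases follows because the rate of decrease of any $\alpha_j$ is proportional to the rate of increase of some $\alpha_{j'}$ with a smaller index in the same call (this is exactly the argument used at the end of the proof of Lemma~\ref{lem:quasi-close}).  Substituting $\zeps = n^{-6(K+\const1+2)-3}$ from our parameter choice yields $|\alphaout_j - \alphain_j| \le n^{-3(K+\const1+2)-3} \le n^{-2}$, as required.

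The main technical obstacle is verifying case by case that each of the five removal rules of \sweep (and in particular the new Rule 5 about $\alpha$-ball containment and the distinction between tight and merely \emph{potentially} tight facilities used in the decrease rule) fits cleanly into the inductive bucket framework.  Rules 3 and 4 contribute harmlessly since they cap increases by $\zeps$, but for Rule 1 one must additionally check that a facility $i$ that becomes tight during \sweep has all its contributing clients in $N(i)$ lying in buckets $\le B(\theta)$, so that the induction hypothesis indeed applies to them; this is exactly the role of the ``potentially tight'' definition, which is designed so that the only facilities whose contributions may decrease are ones whose maximum-$\alpha$-client sits in a higher bucket than $\theta$, keeping the inductive accounting intact.
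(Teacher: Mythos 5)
Your overall strategy is the same as the paper's: an induction over buckets, using the fact that within a single stage only an $O(K)$-length window of buckets is active (lower end controlled by $\min_{j\in U}\alpha_j$ together with Lemmas~\ref{lem:no-changes-stage-1} and~\ref{lem:no-changes-stage-later}, upper end by Lemma~\ref{lem:alpha-unchanged}), a per-bucket bound on increases, and the observation that decreases are at most $n$ times the maximal increase. The window-size computation and the final substitution of $\zeps$ match the paper.

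However, there is a genuine gap in your inductive step: the claimed per-bucket bound $\zeps\, n^{3b'}$ (carried over from Lemma~\ref{lem:quasi-close}) cannot be closed, because Rule~2 is \emph{not} ``controlled similarly'' to the other rules. The stopping condition $2\sqrtalpha_j \ge d(j,j') + 6\sqrtalpha_{j'}$ lives in $\sqrt{\alpha}$-space. If $j$ was stopped by $j'$ at the start of the call and $\alpha_{j'}$ has since increased by some amount $\eta$, then $\sqrtalpha_{j'}$ has increased by $\Delta \approx \eta/(2\sqrtalpha_{j'})$, and $j$ is only re-stopped once $\sqrtalpha_j$ has increased by $3\Delta$; the corresponding increase in $\alpha_j$ is $6\Delta\sqrtalpha_j + 9\Delta^2 \approx 3\eta\cdot \sqrtalpha_j/\sqrtalpha_{j'}$. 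Since $\alpha$-values range over $[1, 5n^7]$, this amplification factor can be as large as $\Theta(n^{3.5})$, not $O(n)$, so the increase in one bucket is roughly $n^{5.5}$ times the increase in the previous bucket. This is exactly why the paper's claim uses the exponent $6(b+1)$ rather than $3b$, and why $\zeps$ is defined as $n^{-6(K+\const1+2)-3}$ (the factor $6$ in the exponent is not slack). Your final numerical substitution happens to use this $\zeps$, so the conclusion survives once the induction hypothesis is weakened to $\zeps\, n^{6(b'+1)}$ and the Rule~2 case is handled with the explicit $\sqrt{\cdot}$-conversion (bounding $\Delta$ and $\Delta^2$ separately and using $\sqrtalpha_j \le \sqrt{5n^7}$); but as written the inductive step fails at Rule~2.
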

\begin{proof}
Consider a call to \sweep performed in stage $s$.  By the definition of \sweep, it suffices to bound how much $\alpha_j$ has changed at the moment it is removed from $A$, since it is not subsequently changed.  Let us begin by bounding how much any $\alpha_j$ may be increased.  As in our analysis of \quasisweep, it will then be possible to bound how much any $\alpha$-value is decreased.  Let $\alphat{1}$ and $U^{(1)}$ be the values of $\alpha$ and $U$ before this call to \sweep, and let $\mu = \min_{j \in U^{(1)}}\alphat{1}_j$.  We first show the following:
\begin{claim*}
Any $\alpha_j$ can increase by at most $\zeps n^{6(b+1)}$ while $B(\theta) \le B(\mu) + b$.
\end{claim*}
\begin{proof} The proof is by induction on $b= -1, 0, 1,\dots$.  

\paragraph{Base case $b=-1$:}   Initially we have $\theta = 0$ and, by Invariant~\ref{inv:feasibility}, $\mu \ge 1$.  Thus, at the start of any call to \sweep, we must have $B(\theta) = 0$ and $B(\mu) \ge 1$.  Now, note that while $B(\theta) \le B(\mu) - 1$ we must have $\theta < \mu$.  Then, by Property~\ref{item:sb-mu} of \sweep no $\alpha$-value has yet been altered, and so the claim holds trivially.

\paragraph{Inductive step (assume true for $-1,0,1,\dots, b-1$ and prove for $b$):}        
Now suppose that some $\alpha_j$ is increased by at least $\zeps$ while $B(\theta) \le B(\mu) + b$.  Otherwise, the claim is immediate since $\zeps < \zeps n^{6(b+1)}$.  Note that while this $\alpha_j$ is increasing we must also have $\alpha_j = \theta$ and so $B(\alpha_j) \le B(\mu) + b$.

First, suppose that $\alpha_j < \alphat{1}_j$.  Then, $\alpha_j$ was previously decreased.  Moreover, since $\alpha_j$ was increased by at least $\zeps$ while $B(\theta) \leq B(\mu) + b$, we must have previously decreased $\alpha_j$ while $B(\alpha_j) \leq B(\mu) + b$.  In particular, at the last moment $\alpha_j$ was decreased, we must have had $B(\alpha_j) \leq B(\mu) + b$, and since $\alpha_j$ was decreasing at this moment, we also had $B(\theta) < B(\alpha_j)$.  Therefore, $\alpha_j$ was decreased only while $B(\theta) < B(\mu) + b$.  Moreover, during this time, $j$'s $\alpha$-value was decreased at most $n$ times the maximum amount that any other client's $\alpha$-value was increased.  By the induction  hypothesis, any client's $\alpha$-value can increase at most $\zeps n^{6b}$ while $B(\theta) < B(\mu) + b$.  Thus, $\alpha_j$ has decreased at most $\zeps\cdot n^{6b+1}$, and after increasing $\alpha_j$ by at most this amount, we will again have $\alpha_j = \alphat{1}_j$.

Next, let us bound how much $j$'s $\alpha$-value may increase while $\alpha_j \ge \alphat{1}_j$ (and still $B(\alpha_j) \le B(\mu) + b$).  We now consider three cases, based on the initial status of $j$ in $\alphat{1}$.  

If $j$ is undecided initially, then $j \in U$ and $\alpha_j$ can increase by at most $\zeps \le \zeps n^{6b}$ (since $b \ge 0$) before it is removed by Rule 3.

Next, suppose that $j$ had some witness $i$ in $\alphat{1}$, and let $\Nt{1}(i)$ be the set of clients paying for $i$ in $\alphat{1}$.  For each $j' \in \Nt{1}(i)$ we must have $B(\alphat{1}_{j'}) \le B(\alphat{1}_j) \le B(\mu) + b$, and so $\alpha_{j'}$ is decreased by \sweep only while $B(\theta) \le B(\mu) +  b-1$.  By the same argument given above (when considering the case that $\alpha_j < \alphat{1}_j$), the $\alpha$-value of any such $j' \in \Nt{1}(i)$ can decrease at most $\zeps n^{6b+1}$ during \sweep.  Thus, the total contribution to $i$ can decrease at most $n \cdot \zeps n^{6b + 1} = \zeps n^{6b + 2}$ during \sweep.  After increasing $\alpha_j$ by at most this amount, $i$ will again be tight.  Moreover, at this moment any client $j'$ contributing to $i$ was either already added to $A$ (and potentially also removed), in which case $B(\alpha_{j'}) \le B(\theta) = B(\alpha_j)$, or it was not already added to $A$, in which case $B(\alpha_{j'}) \le B(\alphat{1}_{j'}) \le B(\alphat{1}_j) \le B(\alpha_j)$.  Thus, at this moment $i$ is a witness for $j$, and so $j$ will be removed from $A$ by Rule 1.

Finally, suppose that $j$ was initially stopped by some client $j'$.
Then, by Lemma \ref{lem:stopped-normal}, we may assume that $j'$  was
\emph{not} stopped.  Let $\Delta = \sqrtalpha_{j'} - \sqrtalphat{1}_{j'}$ be the amount that $\sqrtalpha_{j'}$ has been increased by \sweep.  Then, once $\sqrtalpha_j - \sqrtalphat{1}_j \ge 3\Delta$, we will have:
\begin{equation*}
2\sqrtalpha_j \ge 2\sqrtalphat{1}_j +6(\sqrtalpha_{j'} - \sqrtalphat{1}_{j'})
\ge d(j,j') + 6\sqrtalpha_{j'},
\end{equation*}
where in the last inequality we have used the fact that $j'$ stopped $j$ in $\alphat{1}$.  Thus, $\sqrtalpha_j$ can increase by at most $3\Delta$, before $j$ will again be stopped by $j'$ and removed from $A$ by Rule 2.  It remains to bound the corresponding increases in $\alpha_j$ and $\alpha_{j'}$.  We have:
\begin{equation*}
\alpha_j - \alphat{1}_j \le \Bigl(\sqrtalphat{1}_j + 3\Delta\Bigr)^2 - \alphat{1}_j
= 6\Delta \cdot \sqrtalphat{1}_j + 9\Delta^2\,.
\end{equation*}
Now, let us bound the right hand side.  Since $j'$ is not stopped, the previous cases show that $\alpha_{j'}-\alphat{1}_{j'} \le \zeps n^{6b + 2}$.  Then, we have:
\begin{equation*}
\textstyle
\Delta^2 = \left(\sqrt{\vphantom{\alphat{1}_j}\alpha_{j'}} - \sqrt{\alphat{1}_{j'}}\right)^2 \le
\left(\sqrt{\alphat{1}_{j'} + \zeps n^{6b+2}} - \sqrt{\alphat{1}_{j'}}\right)^2 \le
\zeps n^{6b+2},
\end{equation*}
where the last inequality follows from $\sqrt{a + b} \le \sqrt{a} + \sqrt{b}$ for all $a,b \in \bRn$.  On the other hand, since $g(x) = \sqrt{x}$ is a concave, increasing function of $x$ for all $x > 0$, we have:
\begin{equation*}
\textstyle
\Delta = \sqrt{\vphantom{\alphat{1}_j}\alpha_{j'}} - \sqrt{\alphat{1}_{j'}} \le
g\Bigl(\alphat{1}_{j'} + \zeps n^{6b+2}\Bigr) - g\Bigl(\alphat{1}_{j'}\Bigr)
\le \zeps n^{6b+2} \cdot g'\Bigl(\alphat{1}_{j'}\Bigr) =
\dfrac{\zeps n^{6b+2}}{2\sqrt{\alphat{1}_{j'}}}
 \le \dfrac{\zeps n^{6b+2}}{2},
\end{equation*}
where the last inequality follows from Invariant~\ref{inv:feasibility}, which implies $\alphat{1}_{j'} \ge 1$.  Combining the above bounds, in this case we have
\begin{align*}
\alpha_j - \alphat{1}_j \le 6\Delta\cdot \sqrtalphat{1}_j + 9\Delta^2 &\le
6\sqrtalphat{1}_j \cdot \frac{\zeps n^{6b+2}}{2} + 9\zeps n^{6b+2}
\\
&\le 3\sqrt{5n^{7}} \cdot \zeps n^{6b+2} + 9\zeps n^{6b+2}
\le 9\zeps n^{6b + 11/2}\, ,
\end{align*}
where the penultimate inequality follows from the feasibility invariant (Invariant~\ref{inv:feasibility}) and the preprocessing of Lemma~\ref{lem:dis} (that all squared-distances are at most $n^6$) which together imply that $\alpha_j \le \min_i (z_i + d(i,j)^2)  \leq  4n^7 + n^6 \leq 5n^7$ for all $j \in \clients$.

Combining all of the above cases, $\alpha_j$ can increase at most $\zeps n^{6b + 1}$, until $\alpha_j = \alphat{1}_j$ and then at most an additional $9\zeps n^{6b+11/2}$.  Thus, the total increase in $\alpha_j$ while $B(\theta) \le B(\mu) + b$ is at most $9\zeps n^{6b+11/2} + \zeps n^{6b + 1} \le \zeps n^{6b+6}$, as required.
\end{proof}

We now complete the proof of Proposition~\ref{prop:close-values}.  By Lemma \ref{lem:alpha-unchanged}, no $\alpha_j \ge 20^2\theta_s$ is changed by \sweep in any stage $s$, and so once $B(\theta) \ge B(20^2\theta_s)$ no $\alpha$-values are changed. By the Claim, we then have that in any call to \sweep in stage $s$, each client's $\alpha$-value is increased at most $\zeps n^{6(b + 1)}$ where
\[
b = B(20^2\theta_s) - B(\mu) \leq \lfloor\log_{1+\aeps}(20^2\theta_s/\mu)\rfloor + 1\,.
\] 
We now bound the above value $b$ for every stage $s$.

In stage $1$, we execute only a single call to \sweep (as shown in Lemma \ref{lem:no-changes-stage-1}) and in this call, $\mu = \min_{j \in U^{(0)}}\alphat{0}_j$. Since 
every $j \in U^{(0)}$ must have a tight edge to the facility $i^+$ in $\alphat{0}$, Lemma~\ref{lem:tightboundedratio} implies that $\nu \triangleq \max_{j \in U^{(0)}}\alphat{0}_j \le 20^2\mu$.   Then, recall that
\[
\theta_1 = (\nu + 2\zeps)(1+\aeps)^\sigma \le \nu (1+\aeps)^{K} \leq 20^2 \mu (1 + \aeps)^K,
\]
where we have used that $\nu \ge 1$ (by Invariant~\ref{inv:feasibility}), $\zeps < \aeps$ and $\sigma \le K/2 < K$.  Finally, recalling that $\const1 = \lceil\log_{1+\aeps}(20^4)\rceil$, we have:
\[
b \le \log_{1 + \aeps}(20^2\theta_1/\mu) + 1 \le \log_{1+\aeps}(20^2 \cdot 20^2 \cdot (1 + \aeps)^K) + 1 \le K + \const1 + 1\,.
\]

In stage $s > 1$, we have $\mu \ge \theta_{s-1}$ by Lemma~\ref{lem:no-changes-stage-later}.  Then, recall that $\theta_{s} = (1+\aeps)^K\theta_{s-1}$  Then, we have:
\[
b \le \log_{1 + \aeps}(20^2\theta_s/\mu) + 1 \le \log_{1+\aeps}(20^2 (1 + \aeps)^K) + 1 < K + \const1 + 1\,.
\]

In any case, the maximum increase in any client's $\alpha$-value is at most $\zeps n^{6(K + \const1 + 2)} = n^{-3}$ (recalling that by definition $\zeps = n^{-6(K + \const1 + 2) - 3}$).  As we have already observed above in the proof of the Claim, each $\alpha$-value can decrease at most $n$ times this amount.  Thus, no $\alpha$-value can decrease more than $n^{-2}$.
\end{proof}

\subsubsection{Bounding the number of clients in $\cB$}

We bound the number of clients in $\cB$ by showing that such clients need to
have an $\alphat{0}$-value close to a threshold $\theta_s$. We then select the thresholds so that only a tiny fraction of the clients can be in $\cB$. 

\begin{lemma}
\label{lem:bad-client-range}
Suppose that $j \in \cB$ for some $(\alpha,z)$ produced by \raiseprice.  Then, we must have $\tfrac{1}{81}\theta_s \le \alphat{0}_j < 25\cdot 20^4\theta_s$ for some $s$.
\end{lemma}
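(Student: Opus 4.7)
The plan is to take $s$ to be the current stage in which $(\alpha,z)$ is produced by \raiseprice, establish the upper bound $\alphat{0}_j < 25\cdot 20^4\theta_s$ for that $s$, and then split into two cases for the lower bound. From the $\cB$-condition applied with $j' = j$, I would get $\alpha_j < 9\alphat{0}_j$. Combined with $\alpha_j \geq \alphat{0}_j$ (Lemma~\ref{lem:sweep-preservation}, using that $j$ is undecided) and $\alpha_j \geq \theta_{s-1}$ (from Lemmas~\ref{lem:no-changes-stage-1} and~\ref{lem:no-changes-stage-later}, reading $\theta_{s-1}$ as $\theta_1$ when $s=1$), this yields $\alphat{0}_j > \theta_{s-1}/9 > \theta_{s-1}/81$, which already gives a lower bound with index $s-1$.

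Next, by Invariant~\ref{inv:good-solution} $j$ was decided in $\alphat{0}$. I would rule out the ``stopped'' case, since its stopper $j'$ would satisfy $2\sqrtalphat{0}_j \geq d(j,j') + 6\sqrtalphat{0}_{j'}$ and, with $\sqrtalpha_j \geq \sqrtalphat{0}_j$, directly refute $j\in\cB$. I would also rule out the case that the witness of $j$ in $\alphat{0}$ is $i^+$, since Corollary~\ref{cor:good-neighbor} would then force $j$ to remain decided. Hence $j$ was witnessed in $\alphat{0}$ by some $i\neq i^+$ with unchanged price $z_i=\zt{0}_i$. Since $j$ is now undecided while $z_i$ is unchanged, feasibility (Invariant~\ref{inv:feasibility}) forces some $j_1\in \Nt{0}(i)$ with $\alpha_{j_1}<\alphat{0}_{j_1}$: otherwise $N(i)=\Nt{0}(i)$, $t_i=\timet{0}_i$, and $i$ would still witness $j$. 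Lemma~\ref{lem:alpha-unchanged} applied to $j_1$ in stage $s$ then yields $\alphat{0}_{j_1}<20^2\theta_s$, and Lemma~\ref{lem:tightboundedratio} applied at $\alphat{0}$ (since $j$ and $j_1$ share tight edges to $i$) gives $\alphat{0}_j\leq 19^2\alphat{0}_{j_1}<19^2\cdot 20^2\theta_s<25\cdot 20^4\theta_s$, establishing the upper bound with $s$.

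For the lower bound, the easy case is $\alphat{0}_j\geq\theta_s/81$, where the current stage $s$ satisfies both bounds of the lemma. Otherwise I would take $s-1$ as the index: the lower bound $\alphat{0}_j>\theta_{s-1}/81$ is already in hand, but the upper bound $\alphat{0}_j<25\cdot 20^4\theta_{s-1}$ in the remaining ``gap range'' $[25\cdot 20^4\theta_{s-1},\theta_s/81)$ is the main technical obstacle, since the previously derived bound is in terms of $\theta_s$ and $\theta_s/\theta_{s-1}=(1+\aeps)^K\gg 25\cdot 20^4$. My plan to close this gap is to exploit that within this range $\alphat{0}_{j_1}\geq\alphat{0}_j/19^2>20^2\theta_{s-1}$ for every $j_1\in\Nt{0}(i)$, so Lemma~\ref{lem:alpha-unchanged} applied to stage $s-1$ freezes $\alpha_{j_1}=\alphat{0}_{j_1}$ throughout stage $s-1$; consequently $i$ continues to witness $j$ at the start of stage $s$, and the decrease of $\alpha_{j_1}$ identified above must occur within stage $s$. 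Tracking the internal threshold $\theta$ at the first moment of this decrease via Lemma~\ref{lem:tightboundedratio} applied to $j_1$ and the active client driving the decrease, combined with the bound on $\theta$ imposed by Rule~4 of \sweep and Lemma~\ref{lem:alpha-unchanged}, should then tighten $\alphat{0}_{j_1}$ to be $O(\theta_{s-1})$ rather than merely $O(\theta_s)$, yielding $\alphat{0}_j < 25\cdot 20^4\theta_{s-1}$ and closing the gap. The hardest step is making this final threshold-tracking argument sharp enough to rule out the entire gap range, since naive application only reproduces the weaker $\theta_s$-level bound already obtained.
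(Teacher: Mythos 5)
Your upper bound is sound and coincides with the paper's ``Case~2'' argument: ruling out that $j$ was stopped in $\alphat{0}$, ruling out $i^+$ as the witness via Corollary~\ref{cor:good-neighbor}, deducing from feasibility that some $j_1\in\Nt{0}(i)$ has decreased, and then combining Lemma~\ref{lem:alpha-unchanged} with Lemma~\ref{lem:tightboundedratio} to get $\alphat{0}_j<20^4\theta_s$ for the current stage $s$. The lower bound, however, is a genuine gap, and your own diagnosis of the ``gap range'' is accurate: deriving the lower bound from the generic fact $\alpha_j\ge\theta_{s-1}$ (Lemma~\ref{lem:no-changes-stage-later}) can only ever produce the index $s-1$, which is off by a factor $(1+\aeps)^K\gg 25\cdot 20^4$ from the index $s$ of your upper bound. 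Your proposed repair --- tracking the internal threshold at the moment $\alpha_{j_1}$ first decreases --- is not worked out, and it is not the mechanism that closes the gap.

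The missing idea is that the sandwich should be anchored not to the current stage but to the removal event that made the client undecided. The paper proves, by induction over the calls to \sweep{} (and, within a call, over the order of removal from $A$), that every $j\in\cB$ satisfies $\sqrtalpha_j\ge d(j,j')+\sqrtalphat{0}_{j'}$ for an anchor client $j'$ with $\tfrac19\theta_{s'}\le\alphat{0}_{j'}\le 20^4\theta_{s'}$, where $s'$ is the stage in which $j'$ was removed from $A$ by Rule~4. For a Rule-4 removal both bounds come with the \emph{same} index: Rule~4 gives $\alpha_{j'}\ge\theta_{s'}$, which together with the $\cB$-condition $\alpha_{j'}<9\alphat{0}_{j'}$ yields $\alphat{0}_{j'}>\theta_{s'}/9$, matching the $20^4\theta_{s'}$ upper bound from the witness argument. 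Rule-3 removals inherit the anchor from the preceding call to \sweep{} (using Lemma~\ref{lem:sweep-preservation} and the fact that $\alpha_j$ only grew by $\zeps$), and Rule-5 removals inherit it from the earlier-removed undecided client whose $\alpha$-ball they contain; the two bounds of the lemma then follow from the anchor via the $\cB$-condition. Since the lemma only asserts the sandwich ``for some $s$,'' there is no need to force the index to be $s$ or $s-1$; your framing pins it to the current stage, which is what creates the mismatch. (One can show a posteriori that the anchor stage must be $s-1$ or $s$, but that deduction itself already uses the inductive claim you are missing, so it cannot serve as a substitute for it.)
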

\begin{proof}
Consider a call to \raiseprice and let $\solt{0}$ be the input solution.  We denote by $\solt{\ell}$ the solution produced by the $\ell$th  call to \sweep in the execution of \raiseprice. We also use $\cB^{(\ell)}$ to refer to the set $\cB$ of potentially bad clients associated to the solution $\solt{\ell}$. That is, 
\[
\cB^{(\ell)} = \{j\in \clients: \mbox{ $j$ is  undecided in $\solt{\ell}$  and $2\sqrtalphat{\ell}_j < d(j,j') +  6\sqrtalphat{0}_{j'}$ for all clients $j' \in \clients$}\}.
\]
Note that  $\cB^{(0)} = \emptyset$ since every client is decided in  $\solt{0}$ by Invariant~\ref{inv:good-solution}. 
We prove the lemma by showing the following claim by
induction on $\ell$: 
\begin{claim*}
For each client $j\in \cB^{(\ell)}$ there is a client $j'$ such that $\sqrtalphat{\ell}_j \geq d(j, j') + \sqrtalphat{0}_{j'}$
and $\tfrac{1}{9}\theta_s \leq \alphat{0}_{j'} \leq 20^4 \theta_s$, for some $s$.
\end{claim*}  
Before proving the claim, let us show that it indeed implies the Lemma.  Suppose that $j \in \cB^{(\ell)}$ for some $\ell$. We start with the lower bound. Selecting $j' = j$ in the definition of $\cB^{(\ell)}$, we must have $2\sqrtalphat{\ell}_j < 6\sqrtalphat{0}_{j}$ and so $\alphat{\ell}_j < 9\alphat{0}_j$.  Now, consider the client $j'$ and stage $s$ guaranteed by the claim.  Then, we must have:
\[
\alphat{0}_j \geq \tfrac{1}{9}\alphat{\ell}_j \geq \tfrac{1}{9}\alphat{0}_{j'} \geq \tfrac{1}{81}\theta_s.
\]
For the upper bound, note that, because $j \in \cB^{(\ell)}$, $j$ is undecided in $\solt{\ell}$ and so by Lemma \ref{lem:sweep-preservation}, $\alphat{0}_j \leq \alphat{\ell}_j$.  In addition since $j \in \cB^{(\ell)}$ we must have (for the same client $j'$ and stage $s$ guaranteed by the claim):
\[
  2\sqrtalphat{\ell}_j < d(j,j') + 6\sqrtalphat{0}_{j'} \leq d(j,j') + \sqrtalphat{0}_{j'} + 5\cdot \sqrt{20^4\cdot \theta_s}
  \leq \sqrtalphat{\ell}_j +  5\cdot \sqrt{20^4\cdot \theta_s}.
\]
Thus, the upper bound $\alphat{0}_j \leq \alphat{\ell}_j < 25 \cdot 20^4\theta_s$ is satisfied as well. 
\end{proof}

\begin{proof}[Proof of the Claim]
The base case when $\ell=0$ is trivially true since $\cB^{(0)} = \emptyset$.  For the inductive step, we assume the induction hypothesis for $0, 1, \dots, \ell-1$ and prove it for $\ell$. Any client $j \in \cB^{(\ell)}$ is undecided and (by Property~1 of \sweep) must have been removed from $A$ by one of the Rules~$3$,~$4$, or~$5$. We divide the analysis based on these three cases.
\begin{description}
\item[Case 1: $j \in \cB^{(\ell)}$ was removed by Rule~$3$.] 
        In this case, we must have $j \in U$, by the definition of Rule~3.  Moreover, since all clients in $U^{(0)}$ are decided in every
        solution produced by \raiseprice (Corollary~\ref{cor:good-neighbor}), we must have that $\ell \geq 2$.  Thus, $j$ must have
        been undecided in the previously produced solution $\solt{\ell-1}$, and $\alphat{\ell}_j = \alphat{\ell-1}_{j} + \zeps$.  Then,
        $j \in \cB^{(\ell-1)}$, as well, since 
        \begin{equation*}
          2\sqrtalphat{\ell-1}_{j} < 2\sqrtalphat{\ell}_{j} < d(j, j') + 6\sqrtalphat{0}_{j'}\qquad \mbox{for all $j'\in \clients$.}
        \end{equation*}
        The statement then follows from the induction hypothesis and from $\alphat{\ell}_j \geq \alphat{\ell-1}_j$.

\item[Case 2: $j \in \cB^{(\ell)}$ was removed by Rule~$4$.]
        By the definition of Rule~4, we must have $\alphat{\ell}_j \geq \alphat{0}_j$
        and $\alphat{\ell}_j \geq \theta_s$, where $s$ is the stage in which $\alphat{\ell}$ was produced.
        In this case, we prove the claim for $j' = j$.  Clearly, we have $\sqrtalphat{\ell}_j \geq \sqrtalphat{0}_j = d(j,j) + 
        \sqrtalphat{0}_j$.  Next, we prove that $\tfrac{1}{9}\theta_s \leq \alphat{0}_{j} \leq 20^4\theta_s$.
        For the lower bound, we observe that since $j \in \cB^{(\ell)}$ we must have $2\sqrtalphat{\ell}_j < 6\sqrtalphat{0}_j$ and so 
        $\tfrac{1}{9} \theta_s \leq \tfrac{1}{9}\alphat{\ell}_j \leq \alphat{0}_j$. We now prove the upper bound.  First, note that $j$ was 
        not stopped by any client $j'$ in $\solt{0}$ since then we would have $2\sqrtalphat{\ell}_j \geq 2\sqrtalphat{0}_j \geq
        d(j,j')  + 6\sqrtalphat{0}_{j'}$ which would contradict that $j\in \cB^{(\ell)}$.  Then, since every client in
        $\solt{0}$ is decided (Invariant~\ref{inv:good-solution}), $j$ must have been witnessed by some facility $i$ in $\solt{0}$. 
        Since $j$ is currently undecided, by Corollary~\ref{cor:good-neighbor} we may further assume $i\neq i^+$ and thus $\zt{\ell}_i = 
        \zt{0}_i$.   Now suppose toward contradiction that $\alphat{0}_j > 20^4 \theta_s$. Lemma~\ref{lem:tightboundedratio}
        then implies $\alphat{0}_{j'} \ge 20^{-2}\alphat{0}_j > 20^2\theta_s$ for every $j' \in \Nt{0}(i)$.  Then, by
        Lemma~\ref{lem:alpha-unchanged}, $\alpha_{j'} = \alphat{0}_{j'}$ for
        all $j' \in \Nt{0}(i)$.   Thus, as $z_i = \zt{0}_i$, $i$ is still tight and a witness for $i$, which contradicts the assumption that
        $j\in \cB^{(\ell)}$, since $j$ is then decided.

 \item[Case 3: $j \in \cB^{(\ell)}$ was removed by Rule~$5$.] 
        Let $j_1, j_2, \dots, j_p$ be the clients in $\cB^{(\ell)}$ that were removed from $A$ by Rule~$5$ in the $\ell$th call to \sweep. 
        We index these clients according to the order in which they were removed from $A$. The previous
        cases already imply that the clients in $\cB^{(\ell)} \setminus \{j_1, \dots,
        j_p\}$ satisfy the induction hypothesis.  We now assume that it is true
        for the clients in $\cB^{(\ell)}_{<a} := \cB^{(\ell)} \setminus \{j_a, \dots, j_p\}$ and show that it
        is also true for client $j_a$, \ie for all clients in $\cB^{(\ell)} \setminus \{j_{a+1}, \dots, j_p\}$. 
        Consider client $j_a$.  Then, we must have $\sqrtalphat{\ell}_{j_a} \geq d(j_a,j') + \sqrtalphat{\ell}_{j'}$ for some 
        $j'$ that was previously removed from $A$.  Moreover, since $j_a$ is undecided, Property~\ref{item:sb-contained} implies that 
        $j'$ must also be undecided.  We now show that $j' \in \cB^{(\ell)}$.  Indeed, since $j'$ is undecided, if $j' \not \in \cB^{(\ell)}$, 
        there must be a $j''$ such that $2\sqrtalphat{\ell}_{j'} \geq d(j', j'') + 6 \sqrtalphat{0}_{j''}$.  But then 
        \begin{align*}
          2 \sqrtalphat{\ell}_{j_a} \geq 2d(j_a, j') + 2\sqrtalphat{\ell}_{j'}  \geq 2 d(j_a,j') + d(j', j'') + 6 \sqrtalphat{0}_{j''} \geq d(j_a, j'') + 6 \sqrtalphat{0}_{j''}\,,
        \end{align*}
        which contradicts the fact that $j_a\in \cB^{(\ell)}$.  Now, since $j' \in \cB^{(\ell)}$ was removed from $A$ before $j_a$, we in fact have $j' \in \cB^{\ell}_{<a}$, and so by assumption there exists some $j''$ and $s$ such that $\tfrac{1}{9}\theta_s \leq \alphat{0}_{j''} \leq 20^4\theta_s$ and $\sqrtalphat{\ell}_{j'} \geq d(j',j'') + \sqrtalphat{0}_{j''}$.  Thus
\[
\sqrtalphat{\ell}_{j_a} \geq d(j_a,j') + \sqrtalphat{\ell}_{j'} \geq d(j_a,j') + d(j',j'') + \sqrtalphat{0}_{j''} \geq d(j_a,j'') + \sqrtalphat{0}_{j''},
\]
and so the induction hypothesis holds for $j_a$ as well (using $j''$ and $s$).
    \end{description}
\end{proof}

The above lemma says that any client $j$ that becomes potentially bad in any
solution produced by \raiseprice (\ie in $j\in \cB$ for one produced solution)
must have $\alphat{0}_j$ close to a threshold. Our selection of the
shift-parameter $\sigma$ then ensures that this can only happen for a tiny
fraction of the clients. This allows us to   bound the connection
cost of clients in $\cB$ by an arbitrarily  small constant  fraction (depending on the
parameter $K$) of $\sum_{j\in \clients} \alpha_j$. However,   as stated
in  the second inequality of Property~$2$ in Definition~\ref{def:roundable}, we
need to bound the total connection cost of these clients as a tiny fraction of
$\opt_k$ instead of $\sum_{j\in \clients} \alpha_j$. As all we know is that
$\opt_k \geq \sum_{j\in \clients} \alpha_j - \lambda k$, this requires further
arguments that we present in the next section.   

We complete this section by formally showing that a client is unlikely to
become potentially bad over the randomness of the shift-parameter $\sigma$.   
For any given integer $\sigma \in [0,K/2)$, let
\begin{align*}
\cW(\sigma)  =\{j \in \clients : \text{ $81^{-1} \cdot 20^{-2}\cdot \theta_s \le \alphat{0}_j \le 25\cdot 20^{6} \cdot \theta_s$ for some $\theta_s$}\}\,.
\end{align*}
Note that by the above lemma, any client that is in $\cB$ in any solution
produced during the considered call to \raiseprice, is in
$\cW(\sigma)$.\footnote{To argue $\cB \subseteq \cW(\sigma)$, the bounds $81^{-1} \theta_s \le \alphat{0}_j \le 25\cdot 20^{4} \cdot \theta_s$ for some
$\theta_s$ would be sufficient in the definition of $\cW(\sigma)$. However, the
more relaxed bounds will be useful when analyzing dense clients in the next section.}
Note that each value $\alphat{0}_j$ is fixed at the beginning of \raiseprice, and there are only a 
(relatively) small number of choices for $\sigma$ such that any given $j$ is in $\cW(\sigma)$. 
Thus, if we choose $\sigma$ uniformly at random, the probability that any given $j \in \cW(\sigma)$ is
small.  The following corollary formalizes this intuition.
\begin{corollary}
\label{cor:small-prob}
If we select the shift-parameter $\sigma$ uniformly at random from $[0, K/2)$, 
  \begin{align*}
    \Pr[j\in \cW(\sigma)] \leq \gamma^4 \qquad \mbox{for any client $j$.}
  \end{align*}
\end{corollary}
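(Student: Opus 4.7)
The plan is to translate the condition $j \in \cW(\sigma)$ into membership of $\sigma$ in a union of translates of a short interval, and then to count integer $\sigma$'s exactly. Let $M = \max_{j \in U^{(0)}}\alphat{0}_j + 2\zeps$, so that by the definition of the thresholds
\[
\theta_s = (1+\aeps)^{K(s-1)+\sigma}\cdot M \qquad \text{for every } s\ge 1.
\]
The condition $\tfrac{1}{81\cdot 20^2}\theta_s\le \alphat{0}_j\le 25\cdot 20^{6}\theta_s$ rearranges to
\[
\frac{\alphat{0}_j}{25\cdot 20^{6}}\;\le\;\theta_s\;\le\; 81\cdot 20^{2}\,\alphat{0}_j,
\]
a window on $\theta_s$ whose multiplicative width is exactly $C_0 = 81\cdot 25\cdot 20^{8}$. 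Taking logarithms base $1+\aeps$, this is equivalent to saying that there exists an integer $s\ge 1$ with
\[
\sigma + K(s-1)\;\in\; I_j \ := \ \Bigl[\log_{1+\aeps}\bigl(\alphat{0}_j/(25\cdot 20^{6}M)\bigr),\ \log_{1+\aeps}\bigl(81\cdot 20^{2}\alphat{0}_j/M\bigr)\Bigr],
\]
an interval depending only on $j$ (not on $\sigma$) whose length equals $L := \log_{1+\aeps}(C_0)$.

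Next I would analyze the set of ``bad'' $\sigma$, namely $B_j := \bigcup_{s\ge 1}\bigl(I_j - K(s-1)\bigr)$. This is a union of translates of $I_j$ spaced exactly $K$ apart. By the choice of $K$ as the smallest integer with $(1+\aeps)^K\ge C_0^{2/\gamma^4}$, we have $L\le K\gamma^4/2 \le K/2$, so within any window of length $K/2$ the set $B_j$ is a union of at most two intervals of total length at most $L$. Applied to the window $[0,K/2)$ from which $\sigma$ is drawn, this gives at most $L + 2$ integer values of $\sigma$ for which $j\in \cW(\sigma)$, while the total number of integer values available to $\sigma$ is $\lceil K/2\rceil$.

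Putting these pieces together,
\[
\Pr[j\in \cW(\sigma)]\;\le\;\frac{L+2}{\lceil K/2\rceil}\;\le\;\frac{K\gamma^4/2+2}{K/2}\;=\;\gamma^4 + \frac{4}{K}.
\]
Since $K=\Theta(\aeps^{-1}\gamma^{-4})$ is large enough (we choose $\aeps\ll 1$ so that $4/K$ is dwarfed by the slack already absorbed in the constant $C_0 = 81\cdot 25\cdot 20^{8}$), the right-hand side is at most $\gamma^4$. The only delicate step is the two-interval bound on $B_j\cap[0,K/2)$: the main obstacle is simply to be careful about boundary effects when the window $[0,K/2)$ meets a translate of $I_j$ that straddles a multiple of $K$, which is why we allow the additive $+2$ and then absorb it into the constant in the definition of $C_0$.

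\begin{remark}
\label{rem:derandomize-shift}
Since there are only $\lceil K/2\rceil = O(\aeps^{-1}\gamma^{-4})$ candidate shifts and for each one we can evaluate the resulting approximation guarantee in polynomial time, the random choice of $\sigma$ can be replaced by enumeration over all $\sigma\in[0,K/2)$, thereby derandomizing the algorithm.
\end{remark}
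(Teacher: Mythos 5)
Your proposal follows essentially the same route as the paper: rewrite $j\in\cW(\sigma)$ as a condition that $\sigma$ (shifted by $K(s-1)$) lands in a fixed interval of length $\log_{1+\aeps}C_0\le \gamma^4 K/2$, note that the spacing $K$ between translates versus the window $[0,K/2)$ permits at most one (you say two) relevant value of $s$, and divide by the $K/2$ choices of $\sigma$. The only blemish — your additive $+2$ from integer counting, which makes the bound $\gamma^4+4/K$ rather than $\gamma^4$ — is the same harmless rounding slack the paper's own count of ``at most $\tfrac{\gamma^4}{2}K$ distinct values'' silently absorbs, and is removed by giving $K$ an extra additive constant in its definition.
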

\begin{proof}
  Suppose that we select an integer $\sigma$ uniformly at random from $[0,K/2)$. Then, note that by definition $\theta_s =(\max_{j \in U^{(0)}}\alpha_j + 2\zeps)(1 + \aeps)^{K\cdot(s-1) + \sigma}$ and so   $j \in \cW(\sigma)$ if and only if:
\begin{equation*}
K_1 + K (s-1) + \sigma \le \log_{1+\aeps}\alphat{0}_j \le K_2 + K (s-1) + \sigma\,,
\end{equation*}
for some $s$, where $K_1 = \log_{1+\aeps}(81^{-1} \cdot 20^{-2}) + \log_{1+\aeps}(\max_{j \in U^{(0)}}\alphat{0}_j + 2\zeps) $ and $K_2 = \log_{1+\aeps}(25\cdot20^{6})+ \log_{1+\aeps}(\max_{j \in U^{(0)}}\alphat{0}_j + 2\zeps)$. 
In other words, $\sigma$ needs to satisfy 
\begin{equation*}
  K_1 + K (s-1) - \log_{1+\aeps}\alphat{0}_j   \le - \sigma \le K_2 + K (s-1) - \log_{1+\aeps}\alphat{0}_j \mbox{ for some integer $s$.}
\end{equation*}
Notice that the difference between the upper bound and the lower bound is $K_2 - K_1  = \log_{1+\aeps}(81\cdot 25 \cdot 20^{8}) = \log_{1 + \aeps}C_0$ which by selection of $K$ is at most $\tfrac{\gamma^4}{2} K$.
Moreover, as $\sigma \in [0,K/2)$ there is at most one value of $s$  that can satisfy the above inequalities. 
  It follows that there are at most $\frac{\gamma^4}{2}K$ distinct values of $\sigma$ so that $j\in \cW(\sigma)$.   Thus, $j \in \cW(\sigma)$ with probability at most $\gamma^{4}$.  
\end{proof}

\subsection{Handling dense clients}
\label{sec:handl-dense-clients}
Corollary~\ref{cor:small-prob} implies that by carefully selecting our thresholds, we can ensure that only an arbitrarily small fraction $\gamma^4$ of clients $j$ can ever appear in $\cB$ throughout the execution of \raiseprice.  As briefly discussed previously, this is unfortunately insufficient for our purposes.  Specifically, in order to charge the extra service cost incurred by this small fraction of clients to $\opt_k \ge \sum_{j \in \clients}\alpha_j - \lambda k$, we need to handle carefully those clients $j$ for which most of $\alpha_j$ is contributing toward the opening cost $\lambda k$.

To cope with this difficulty, we introduce the notion \emph{dense} facilities and clients, as follows.  Recall that $\gamma \ll \aeps$ is a small constant.  We define the \emph{$\gamma$-close neighborhood} of a facility $i$ as 
\[
\cneigh(i) = \{ j \in \clients : d(j,i)^2 \le \gamma \alphat{0}_j \}.
\]  
Then, we say that a facility $i \in \ist{0}$ is \emph{dense} if 
\[
\sum_{j \in \cneigh(i)}\alphat{0}_j \ge (1 - \gamma)\zt{0}_i.
\]
We let $\dfac \subseteq \ist{0}$ be the set of all dense facilities, and then define the set of \emph{dense clients} as $\dclient = \bigcup_{i \in \dfac}\cneigh(i)$.  Note that the $\gamma$-close neighborhoods, dense facilities, and dense clients are all determined only by the input solution $(\alphat{0},\zt{0})$ and the integral solution $\ist{0}$ passed to \raiseprice.  

Intuitively, the dense clients are precisely those troublesome clients for which $\alphat{0}_j$ is much larger then the service cost of $j$ in $\ist{0}$.  In order to avoid paying $36 \alphat{0}_j$ for any such clients, we construct a set of \emph{special facilities} $\sfact{\ell}$ and \emph{special clients} $\sclientt{\ell}$ for each $\alphat{\ell}$ produced by our \raiseprice, as follows.  Let
\begin{align}
  \label{eq:sfacdef}
  \sfact{\ell} &= \{ i \in \dfac : |\cneigh(i) \cap \cB| \neq \emptyset \text{  and } \alphat{\ell}_{j} \ge \alphat{0}_{j} ~ \forall j \in \cneigh(i) \}\,, \quad \text{and}\notag \\
\sclientt{\ell}(i) &= \cneigh(i) \quad \text{for every $i \in \sfact{\ell}$}\,,
\end{align}
We will show that each solution $\cS^{(\ell)} = \rolt{\ell}$ is roundable, with the set of remaining bad clients given by $\bclient = \cB \setminus \dclient$. 

Recall that in Definition~\ref{def:roundable}, we have $\tau_i = \max_{j \in N(i) \cap \sclient(i)}\alpha_j$ for any facility $i \in \sfac$ and $\tau_i = t_i$ for all other facilities.  Note that as $\solt{0}$ by Invariant~\ref{inv:good-solution} is a completely decided solution, by our choice of $\sfac$ and $\sclient$, we have $\sfact{0} = \emptyset$ in the roundable solution $\rolt{0}$.
Therefore, the conflict graph  $\Gft{0}$ of $\rolt{0}$ does not contain any special facilities, and so $\tau_i = t_i$ for each facility $i \in \Gft{0}$.   Moreover, recall that $\ist{0}$ was the maximal independent set of $\Gft{0}$ computed in the previous call to \graphupdate.  In particular, $|\ist{0}| > k$ and $\ist{0}$ does not contain any special facilities.

The following simple lemma is now a direct consequence of our definitions.
\begin{lemma}
\label{lemma:singlepayment}
Suppose that $j \in \cneigh(i)$ for some $i \in \ist{0}$.  Then, $\betat{0}_{i'j} = 0$ for all other $i' \in \ist{0}$. Moreover, for every client $j\in \clients$, $\alphat{0}_j \geq \sum_{i\in \ist{0}} \betat{0}_{ij}$. 
\end{lemma}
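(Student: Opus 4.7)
The plan is to prove the two claims essentially independently, exploiting in each the fact that $\ist{0}$ is an independent set of $\Gft{0}$ and that, because $\solt{0}$ is completely decided, $\sfact{0} = \emptyset$ so every facility in $\vfact{0}$ is tight and $t_i = \max_{j' \in N(i)} \alphat{0}_{j'}$ in the usual sense.

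For the first claim, I would argue by contradiction. Assume $j \in \cneigh(i)$ for some $i \in \ist{0}$ and simultaneously $\betat{0}_{i'j} > 0$ for some other $i' \in \ist{0}$. The assumption $j \in \cneigh(i)$ gives $d(j,i)^2 \leq \gamma \alphat{0}_j$, so (since $\gamma<1$) we have $\betat{0}_{ij} \geq (1-\gamma)\alphat{0}_j > 0$, hence $j \in N(i)$; the other assumption gives $j \in N(i')$. Thus $j$ witnesses an edge in $\Gcft{0}$ from both $i$ and $i'$, so the only way $i$ and $i'$ can fail to be adjacent in the conflict graph $\Gft{0}$ (which they must, being both in the independent set $\ist{0}$) is via the distance condition $d(i,i')^2 > \delta \min(t_i, t_{i'})$. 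Since $j \in N(i) \cap N(i')$ we have $\min(t_i, t_{i'}) \geq \alphat{0}_j$, and so $d(i,i')^2 > \delta \alphat{0}_j$. On the other hand, by the triangle inequality applied in the Euclidean metric,
\[
d(i,i') \leq d(i,j) + d(j,i') \leq \sqrt{\gamma\,\alphat{0}_j} + \sqrtalphat{0}_j = (1+\sqrt{\gamma})\sqrtalphat{0}_j,
\]
so $d(i,i')^2 \leq (1+\sqrt{\gamma})^2 \alphat{0}_j$. Since $\delta = \dmean > 2$ and $\gamma \ll \aeps$ is sufficiently small to ensure $(1+\sqrt{\gamma})^2 < \delta$, this contradicts the previous lower bound. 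The only subtle step is checking that the choice of $\gamma$ is compatible with $\delta$, which is immediate from the global assumption $\gamma \ll \aeps \ll 1$.

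For the second claim, the approach is a direct reprise of ``Case $s>1$'' of Theorem~\ref{thm:euckmeans}. Fix $j$ and let $S_j = \{i \in \ist{0} : \betat{0}_{ij} > 0\}$, $s = |S_j|$. For $s=0$ the inequality is vacuous, and for $s=1$ with $S_j = \{i\}$ we simply have $\sum_{i' \in \ist{0}} \betat{0}_{i'j} = \alphat{0}_j - d(j,i)^2 \leq \alphat{0}_j$. For $s \geq 2$, I would apply the Euclidean centroid identity $\sum_{i \in S_j} d(j,i)^2 \geq \frac{1}{2s}\sum_{i,i' \in S_j} d(i,i')^2$ and, as in the first claim, use that each pair of distinct $i,i' \in S_j \subseteq \ist{0}$ is non-adjacent in $\Gft{0}$ while sharing the common $\Gcft{0}$-neighbor $j$, forcing $d(i,i')^2 > \delta \min(t_i, t_{i'}) \geq \delta \alphat{0}_j$. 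This gives
\[
\sum_{i \in S_j} d(j,i)^2 \geq \tfrac{s-1}{2}\,\delta\,\alphat{0}_j,
\]
hence $\sum_{i \in \ist{0}} \betat{0}_{ij} = s\,\alphat{0}_j - \sum_{i \in S_j} d(j,i)^2 \leq \bigl(s(1-\tfrac{\delta}{2})+\tfrac{\delta}{2}\bigr)\alphat{0}_j$. Since $\delta \geq 2$, the right-hand side is a non-increasing function of $s$ and attains value exactly $\alphat{0}_j$ at $s=1$, so $\sum_{i \in \ist{0}} \betat{0}_{ij} \leq \alphat{0}_j$ for every $s \geq 0$. The main (and really only) obstacle in both parts is to verify that the parameters $\gamma$ and $\delta$ cooperate as required; everything else is routine triangle-inequality and Euclidean-centroid manipulation already exercised in Section~\ref{sec:euckmeans}.
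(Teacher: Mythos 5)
Your proposal is correct and follows essentially the same route as the paper: the first claim is the paper's argument (that $j \in \cneigh(i)$ and $\betat{0}_{i'j} > 0$ would force the edge $(i,i')$ into $\Gft{0}$ via the triangle inequality and $(1+\sqrt{\gamma})^2 < \delta$) merely rephrased as a contradiction, and the second claim spells out the centroid computation that the paper handles by citing ``Case $s=1$'' and ``Case $s>1$'' of Theorem~\ref{thm:euckmeans}. Both parameter checks you flag ($\gamma \ll 1$ versus $\delta > 2$, and $\delta \ge 2$ for the monotonicity in $s$) are exactly the ones the paper relies on.
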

\begin{proof}
  We start by proving that if $j \in \cneigh(i)$ for some $i \in \ist{0}$, then $\betat{0}_{i'j} = 0$ for all other $i' \in \ist{0}$.
  Consider some facility $i \in \ist{0}$, and suppose that $j \in \cneigh(i)$.  Further, suppose that for some other facility $i' \in \Gft{0}$ we have $\beta_{i'j} > 0$.  Note that this implies (since no facility is special) that $j$ is adjacent to both $i$ and $i'$ in the client-facility graph that generated $\Gft{0}$. We shall show that $i' \not\in \ist{0}$.  Indeed, we must have:
  \begin{equation*}
  d(i,i') \le d(i,j) + d(j,i') < \sqrt{\gamma} \cdot \sqrtalphat{0}_j + \sqrtalphat{0}_j < \sqrt{\delta} \cdot \sqrtalphat{0}_j \le \sqrt{\delta t_i} = \sqrt{\delta \tau_i}\,.
\end{equation*}
Thus, there is an edge between $i$ and $i'$ in the conflict graph $\Gft{0}$, and since $i \in \ist{0}$, we have $i' \not\in \ist{0}$.

We shall now prove that $\alphat{0}_j \geq \sum_{i\in \ist{0}} \betat{0}_{ij}$ for any client $j\in \clients$.
Again using that no facility is special, we have that $j$'s neighborhood in the
client-facility graph that generated $\Gft{0}$ is equal to  the
set of tight facilities that $j$ is paying for.  Therefore, we have that $\alphat{0}_j - \sum_{i \in \ist{0}} \betat{0}_{ij} \geq d(j,\ist{0})^2/\rho$ (which implies $\alphat{0}_j \geq \sum_{i\in \ist{0}} \betat{0}_{ij}$) by the exact same arguments as ``Case $s=1$'' and ``Case $s>1$''  in the proof of Theorem~\ref{thm:euckmeans}.
\end{proof}

The next lemma shows that we can indeed charge the total $\alphat{0}$-value of all non-dense clients to $\opt_k$.
\begin{lemma}
\label{lem:alpha-nondense}
$\sum_{j \in \clients \setminus \dclient}\alphat{0}_j \le \gamma^{-3}\cdot \opt_k$.
\end{lemma}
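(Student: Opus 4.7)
The plan is to partition $\clients \setminus \dclient$ according to each client's relation to $\ist{0}$, and then bound each piece separately. First I would set $C_1 = \{j \in \clients \setminus \dclient : d(j,\ist{0})^2 > \gamma\alphat{0}_j\}$ (equivalently, the non-dense clients that lie in $\cneigh(i)$ for no $i \in \ist{0}$) and $C_2 = \clients \setminus (\dclient \cup C_1)$. By Lemma~\ref{lemma:singlepayment} each $j \in C_2$ lies in $\cneigh(i)$ for exactly one $i \in \ist{0}$, and this $i$ must be non-dense (else $j$ would be in $\dclient$). Thus $C_2 = \bigsqcup_{i \in \ist{0}\setminus\dfac}\cneigh(i)$. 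I would also use that $\ist{0}$ is an independent set of the conflict graph of the completely-decided solution $\solt{0}$ (no special facilities), so every $i \in \ist{0}$ is tight, i.e.\ $\sum_j \betat{0}_{ij} = \zt{0}_i$, and that $\sum_j d(j,\ist{0})^2 \le (\rho + O(\epsilon))\opt_k$ because $\ist{0}$ is either $\facilities$ (the initial set) or the output of a previous \graphupdate\ call with size at least $k$, to which Theorem~\ref{thm:rounding-approx-ratio} applies.

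For $C_1$, the definition immediately gives $\alphat{0}_j \le \gamma^{-1}d(j,\ist{0})^2$, so summing yields $\sum_{j \in C_1}\alphat{0}_j \le \gamma^{-1}(\rho + O(\epsilon))\opt_k$. For $C_2$, using that each non-dense $i$ satisfies $\sum_{j \in \cneigh(i)}\alphat{0}_j < (1-\gamma)\zt{0}_i \le \zt{0}_i$, the disjoint-union representation of $C_2$ gives $\sum_{j \in C_2}\alphat{0}_j \le \sum_{i \in \ist{0}\setminus\dfac}\zt{0}_i$. The main obstacle is therefore to bound $\sum_{i \in \ist{0}\setminus\dfac}\zt{0}_i$, since a priori this could be as large as $(\lambda + 1/n)|\ist{0}|$, which need not be controlled by $\opt_k$.

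The key step is to exploit the unused capacity of non-dense facilities. For each non-dense $i$, combining tightness with the density condition yields
\[
\gamma\zt{0}_i \;<\; \zt{0}_i - \textstyle\sum_{j \in \cneigh(i)}\alphat{0}_j \;\le\; \zt{0}_i - \textstyle\sum_{j \in \cneigh(i)}\betat{0}_{ij} \;=\; \textstyle\sum_{j \notin \cneigh(i)}\betat{0}_{ij}.
\]
Summing over non-dense $i$ and swapping the order of summation, the right-hand side becomes $\sum_{j}\sum_{i \in \ist{0}\setminus\dfac,\ j \notin \cneigh(i)}\betat{0}_{ij}$. A crucial observation (again via Lemma~\ref{lemma:singlepayment}) is that whenever $\betat{0}_{ij} > 0$ and $j \notin \cneigh(i)$, then $j$ cannot lie in $\cneigh(i')$ for any other $i' \in \ist{0}$, so $j \in C_1$. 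Hence the sum is at most $\sum_{j \in C_1}\sum_{i \in \ist{0}}\betat{0}_{ij} \le \sum_{j \in C_1}\alphat{0}_j$, again using Lemma~\ref{lemma:singlepayment}. This gives $\sum_{i \in \ist{0}\setminus\dfac}\zt{0}_i \le \gamma^{-1}\sum_{j \in C_1}\alphat{0}_j \le \gamma^{-2}(\rho + O(\epsilon))\opt_k$, and therefore $\sum_{j \in C_2}\alphat{0}_j \le \gamma^{-2}(\rho + O(\epsilon))\opt_k$.

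Adding the two estimates, $\sum_{j \in \clients\setminus\dclient}\alphat{0}_j \le (\gamma^{-1} + \gamma^{-2})(\rho + O(\epsilon))\opt_k$. Since $\gamma \ll \epsilon \ll 1$, the factor $(\rho + O(\epsilon))(\gamma^{-1} + \gamma^{-2})$ is bounded by $\gamma^{-3}$, which yields the claimed bound $\sum_{j \in \clients\setminus\dclient}\alphat{0}_j \le \gamma^{-3}\opt_k$.
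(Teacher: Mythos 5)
Your proof is correct and follows essentially the same route as the paper's: the same partition into clients far from $\ist{0}$ (your $C_1$) versus those in a non-dense facility's $\gamma$-close neighborhood (your $C_2$), the same use of tightness plus non-density to extract $\gamma \zt{0}_i$ of payment from outside $\cneigh(i)$, and the same application of Lemma~\ref{lemma:singlepayment} to show that payment comes only from $C_1$ clients, whose total $\alphat{0}$-mass is charged to $\opt_k$ via Theorem~\ref{thm:rounding-approx-ratio}. Your summation-swap is exactly the paper's fractional token argument, so there is nothing substantively different to compare.
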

\begin{proof}
We partition the clients in $\clients \setminus \dclient$ into two sets:
  \begin{align*}
    \clients_{>\gamma} = \{j\in \clients\setminus \dclient : d(j, \ist{0})^2 > \gamma\cdot \alphat{0}_j\} \mbox{ and } \clients_{\leq \gamma} = \{j\in \clients\setminus \dclient : d(j, \ist{0})^2 \leq \gamma\cdot \alphat{0}_j\}. 
  \end{align*}
By definition,
  \begin{align}
    \label{eq:claimtriv}
    \sum_{j\in \clients_{>\gamma}}  d(j, \ist{0})^2 >  \gamma \cdot \sum_{j\in \clients_{>\gamma}}  \alphat{0}_j\,.
  \end{align}
  To bound the remaining clients consider the following fractional token argument: each client $j\in \clients_{>\gamma}$ distributes  $\betat{0}_{ij} = [\alphat{0}_j - d(j,i)^2]^+$  tokens to each facility $i\in \ist{0}$.  Lemma~\ref{lemma:singlepayment} says that $\sum_{i \in \ist{0}}\betat{0}_{ij} \leq \alphat{0}_j$ for every client $j$, and so the total number of tokens distributed is at most $\sum_{j \in \clients_{>\gamma}}\alphat{0}_j$.

Now note that every $j \in \clients_{\leq \gamma}$ is in $\cneigh(i)$ for some $i \in \ist{0}$.  By Lemma \ref{lemma:singlepayment} we thus have $\betat{0}_{i'j} = 0$ for every $i' \neq i$ in $\ist{0}$.  Moreover, $i$ must \emph{not} be dense, since otherwise $j$ would be in $\dclient$.  Hence, we have  
  \begin{align}
    \label{eq:some1}
    \sum_{j \in \cneigh(i)}\betat{0}_{ij} \leq 
\sum_{j\in \cneigh(i)} \alphat{0}_j \leq (1-\gamma) \zt{0}_{i}\,.
  \end{align}
Moreover, there must be at least $\gamma\zt{0}_i$ tokens assigned to $i$, because it is a tight facility with respect to $\alphat{0}$ and by Lemma~\ref{lemma:singlepayment}, every client $j \not\in \clients_{> \gamma} \cup  \cneigh(i)$ must have $\betat{0}_{ij} = 0$.  That is,
\begin{align*}
\sum_{j\in \clients_{> \gamma}} \betat{0}_{ij}  = \sum_{j\in \clients \setminus \cneigh(i)} \betat{0}_{ij} \geq \gamma \zt{0}_i\,.
\end{align*}
  Thus,
\begin{align*}
  \sum_{j\in \clients_{\leq \gamma}} \alphat{0}_j = \sum_{i \in \ist{0} \setminus \dfac} \sum_{j\in \cneigh(i)} \alphat{0}_j
\leq \sum_{i \in \ist{0} \setminus \dfac}\zt{0}_i     = \frac{1}{\gamma}\sum_{i \in \ist{0} \setminus \dfac} \gamma \cdot \zt{0}_i 
    \leq \frac{1}{\gamma}  \sum_{j\in \clients_{>\gamma}} \alphat{0}_j\,,
  \end{align*}
where the first inequality follows from~\eqref{eq:some1}, and the last inequality from the fact that each facility $i\in \ist{0} \setminus \dfac$ received at least $\gamma \zt{0}_i$ tokens and the total amount of distributed tokens was at most $\sum_{j\in \clients_{>\gamma}} \alphat{0}_j$.

Hence, 
  \begin{align*}
\sum_{j \in \clients \setminus \dclient} \alphat{0}_j = \sum_{j\in \clients_{>\gamma}} \alphat{0}_j  + \sum_{j\in \clients_{\leq \gamma}} \alphat{0}_j & \leq   \left( 1+ \frac{1}{\gamma}\right)  \sum_{j\in \clients_{>\gamma}} \alphat{0}_j \\
    & < \frac{1}{\gamma} \cdot \left( 1+ \frac{1}{\gamma}\right)  \sum_{j\in \clients_{>\gamma}} d(j, \ist{0})^2 \\
    & \leq  \frac{1}{\gamma} \cdot \left( 1+ \frac{1}{\gamma}\right)\bigl(\rho + 1000\epsilon\bigr)\opt_k\,,
  \end{align*}
  where the penultimate inequality follows from~\eqref{eq:claimtriv} and the last inequality from Theorem~\ref{thm:rounding-approx-ratio}, as $|\ist{0}| > k$.
\end{proof}

Lemma~\ref{lem:alpha-nondense} shows that we can relate the total $\alpha$-value of all non-dense clients to $\opt_k$, as desired. Moreover, we have the following corollary.
\begin{corollary}
  \label{cor:W-mass-small}
  If we select the shift-parameter $\sigma$ uniformly at random from $[0, K/2)$,
    \begin{align*}
      \E\left[\sum_{j\in \cW(\sigma) \setminus \dclient} \alphat{0}_j\right] \leq \gamma \cdot \opt_k\,.
    \end{align*}
In particular, if we set $\cW = \cW(\sigma)$ for the value $\sigma$ that minimizes $\sum_{j \in \cW(\sigma) \setminus \dclient}\alphat{0}_j$ then, we have
\begin{equation*}
\sum_{j \in \cW \setminus \dclient} \alphat{0}_j \leq  \gamma \cdot \opt_k\,.
\end{equation*}
\end{corollary}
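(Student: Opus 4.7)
The plan is to prove this as an immediate consequence of Corollary~\ref{cor:small-prob} combined with Lemma~\ref{lem:alpha-nondense}. First I would apply linearity of expectation to rewrite
\[
\E\left[\sum_{j\in \cW(\sigma) \setminus \dclient} \alphat{0}_j\right] = \sum_{j \in \clients \setminus \dclient} \alphat{0}_j \cdot \Pr[j \in \cW(\sigma)].
\]
Note that each $\alphat{0}_j$ for $j \in \clients \setminus \dclient$ is a fixed quantity (determined before $\sigma$ is selected), so the only randomness is in the indicator $[j \in \cW(\sigma)]$.

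Next I would plug in the two key bounds. By Corollary~\ref{cor:small-prob}, $\Pr[j \in \cW(\sigma)] \leq \gamma^4$ for every client $j$. Therefore,
\[
\E\left[\sum_{j\in \cW(\sigma) \setminus \dclient} \alphat{0}_j\right] \leq \gamma^4 \sum_{j \in \clients \setminus \dclient} \alphat{0}_j \leq \gamma^4 \cdot \gamma^{-3} \cdot \opt_k = \gamma \cdot \opt_k,
\]
where the second inequality applies Lemma~\ref{lem:alpha-nondense}. This proves the expectation bound.

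Finally, I would derandomize to obtain the deterministic consequence. Since the expectation over integer $\sigma \in [0, K/2)$ is at most $\gamma \opt_k$, there must exist some particular integer $\sigma^\star$ in this range such that $\sum_{j \in \cW(\sigma^\star) \setminus \dclient} \alphat{0}_j \leq \gamma \opt_k$. Moreover, the set $\cW(\sigma)$ depends only on the (at most $n$) values $\alphat{0}_j$ and on $\sigma$, so enumerating the $O(K) = O(\aeps^{-1}\gamma^{-4})$ choices of $\sigma$ and evaluating the sum for each is straightforward in polynomial time; we then set $\cW = \cW(\sigma^\star)$ for the minimizing choice. There is no real obstacle here, as all the work has already been done in establishing Corollary~\ref{cor:small-prob} and Lemma~\ref{lem:alpha-nondense}; this corollary just records the combination.
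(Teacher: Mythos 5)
Your proposal is correct and follows essentially the same argument as the paper: linearity of expectation, the pointwise bound $\Pr[j \in \cW(\sigma)] \le \gamma^4$ from Corollary~\ref{cor:small-prob}, the bound $\sum_{j \in \clients \setminus \dclient}\alphat{0}_j \le \gamma^{-3}\opt_k$ from Lemma~\ref{lem:alpha-nondense}, and the observation that the minimum over $\sigma$ is at most the expectation. The remark about enumerating the constantly many choices of $\sigma$ matches the paper's Remark~\ref{rem:derandomize-shift}.
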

\begin{proof}
For the first claim, we note that
  \begin{align*}
    \E\left[\sum_{j\in \cW(\sigma) \setminus \dclient} \alphat{0}_j\right]  = \sum_{j\in \clients \setminus \dclient}  \Pr[j\in \cW(\sigma)] \cdot\alphat{0}_j 
     \leq \gamma^4  \sum_{j\in \clients \setminus \dclient}  \alphat{0}_j 
     \leq  \gamma \cdot \opt_k\,,
  \end{align*}
where the first inequality follows from by Corollary~\ref{cor:small-prob} and the last inequality follows from Lemma~\ref{lem:alpha-nondense}.  The second claim now follows since the minimum of left-hand side over all $\sigma \in [0,K/2)$ is at most its expected value over a randomly chosen $\sigma \in [0,K/2)$.
\end{proof}

\begin{remark}
  \label{rem:derandomize-shift}
  The only property of the selection of $\sigma$ that we use is that $\sum_{j\in \cW\setminus \dclient} \alphat{0}_j \leq \gamma \cdot \opt_k$. It easy to find the $\sigma$ that minimizes $\sum_{j\in \cW(\sigma) \setminus \dclient} \alphat{0}_j$ (since the number of options is constant) at the start of a call to \raiseprice and thus the selection of the shift-parameter can be derandomized.
\end{remark}

Now, we show how to obtain a better bound than that given by Lemma~\ref{lem:roundable-alpha-lower-bound} for dense clients $\dclient \cap \cB$.  Specifically, we show how to bound the cost of all clients in $\dclient \cap \cB$ using the facilities of $\sfac$.  This will allow us to eventually obtain a roundable solution.  

\begin{lemma}
\label{lem:roundable-dense-bound}
For any $j \in \dclient \cap \cB$, either:
\begin{itemize}
\item There exists a tight facility $i \in \facilities$ such that $(1+\sqrt{\delta} + 10\aeps)\sqrtalpha_j \ge d(j,i) + \sqrt{\delta t_i}$.
\item There exists a special facility $i \in \sfac$ such that $(1+\sqrt{\delta} + 10\aeps)\sqrtalpha_j \ge d(j,i) + \sqrt{\delta \stime_{i}}$.
\end{itemize}
\end{lemma}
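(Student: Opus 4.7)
The plan is to use the dense facility $i^* \in \dfac \subseteq \ist{0}$ with $j \in \cneigh(i^*)$ (which exists because $j \in \dclient$) as our anchor, and then split into two cases depending on whether $i^*$ qualifies to be a special facility. By definition of $\cneigh(\cdot)$ we have $d(j,i^*) \le \sqrt{\smash[b]{\gamma \alphat{0}_j}}$, and since $j \in \cB$ is undecided, Lemma~\ref{lem:sweep-preservation} gives $\alpha_j \ge \alphat{0}_j$, so
\[
d(j,i^*) \le \sqrt{\smash[b]{\gamma \alphat{0}_j}} \le \sqrt{\gamma}\,\sqrtalpha_j.
\]
Also, since $\cneigh(i^*) \cap \cB \ni j$, the only way $i^* \notin \sfac$ is that some $j' \in \cneigh(i^*)$ has $\alpha_{j'} < \alphat{0}_{j'}$.

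Case 1: $i^* \in \sfac$. Here $i^*$ will serve as the special facility in the lemma. To bound $\stime_{i^*} = \max_{j' \in N(i^*)\cap \sclient(i^*)}\alpha_{j'}$, consider any such $j' \in \cneigh(i^*)$. Invariant~\ref{inv:containment} gives $\sqrtalpha_{j'} \le d(j,j') + \sqrtalpha_j$, and two applications of the triangle inequality through $i^*$ together with $\alpha_{j'} \ge \alphat{0}_{j'}$ yield
\[
\sqrtalpha_{j'} \;\le\; \sqrt{\gamma}\sqrtalphat{0}_j + \sqrt{\gamma}\sqrtalphat{0}_{j'} + \sqrtalpha_j \;\le\; \sqrt{\gamma}\,\sqrtalpha_j + \sqrt{\gamma}\,\sqrtalpha_{j'} + \sqrtalpha_j,
\]
so $(1-\sqrt{\gamma})\sqrtalpha_{j'} \le (1+\sqrt{\gamma})\sqrtalpha_j$, i.e., $\sqrtalpha_{j'} \le (1 + 3\sqrt{\gamma})\sqrtalpha_j$ for small $\gamma$. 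Taking the maximum, $\sqrt{\stime_{i^*}} \le (1+3\sqrt{\gamma})\sqrtalpha_j$, and combining with the bound on $d(j,i^*)$,
\[
d(j,i^*) + \sqrt{\smash[b]{\delta\, \stime_{i^*}}} \;\le\; \bigl(\sqrt{\gamma} + (1+3\sqrt{\gamma})\sqrt{\delta}\bigr)\sqrtalpha_j \;\le\; (1+\sqrt{\delta}+10\aeps)\sqrtalpha_j,
\]
where the last inequality uses $\gamma \ll \aeps$.

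Case 2: $i^* \notin \sfac$. As noted, there is a client $j' \in \cneigh(i^*)$ with $\alpha_{j'} < \alphat{0}_{j'}$. By Lemma~\ref{lem:sweep-preservation}, $j'$ is decided, hence $j' \in \clients \setminus \cB$. By Lemma~\ref{lem:nearby-witnessed-client} (first bullet) applied to $j'$, there is a tight facility $i$ with $d(j',i) + \sqrt{\smash[b]{\delta\, t_i}} \le (1+\sqrt{\delta}+\aeps)\sqrtalpha_{j'} \le (1+\sqrt{\delta}+\aeps)\sqrtalphat{0}_{j'}$. Using Invariant~\ref{inv:containment} applied to $\alphat{0}$ (and the same two-triangle argument as above, with $\alphat{0}_j \leq \alpha_j$), we get $\sqrtalphat{0}_{j'} \le (1+3\sqrt{\gamma})\sqrtalphat{0}_j \le (1+3\sqrt{\gamma})\sqrtalpha_j$. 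The triangle inequality then gives
\[
d(j,i) + \sqrt{\smash[b]{\delta\, t_i}} \;\le\; d(j,j') + (1+\sqrt{\delta}+\aeps)\sqrtalphat{0}_{j'} \;\le\; 2\sqrt{\gamma}\sqrtalpha_j + (1+\sqrt{\delta}+\aeps)(1+3\sqrt{\gamma})\sqrtalpha_j,
\]
which is at most $(1+\sqrt{\delta}+10\aeps)\sqrtalpha_j$ once again because $\gamma \ll \aeps$.

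I expect the main subtlety to be booking the $O(\sqrt{\gamma})$ slack correctly (coming both from $d(j,i^*)$ being small relative to $\sqrtalpha_j$ and from relating $\alpha_{j'}$ or $\alphat{0}_{j'}$ to $\sqrtalpha_j$ via the $\cneigh$-triangle inequality combined with Invariant~\ref{inv:containment}); the calculations themselves are routine once one notices that $\alpha_j \ge \alphat{0}_j$ (by undecidedness) lets us pay the $O(\sqrt{\gamma})$ loss against $\sqrtalpha_j$ rather than $\sqrtalphat{0}_j$, so the bound holds for $j$'s current $\alpha$-value as required by Definition~\ref{def:roundable}.
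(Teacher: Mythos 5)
Your proposal is correct and follows essentially the same route as the paper: anchor on the dense facility $i^\star$ with $j \in \cneigh(i^\star)$, split on whether $i^\star \in \sfac$, bound $\stime_{i^\star}$ (resp.\ $\sqrtalphat{0}_{j'}$) via Invariant~\ref{inv:containment} combined with the triangle inequality through $i^\star$, and in the second case pass to a decided client $j'\in\cneigh(i^\star)$ with $\alpha_{j'}<\alphat{0}_{j'}$ and invoke Lemma~\ref{lem:roundable-alpha-lower-bound}. The only differences are cosmetic: you derive the containment bound by direct rearrangement (getting $1+3\sqrt{\gamma}$) where the paper argues by contradiction (getting $1+\aeps$), and your $d(j,j') \le 2\sqrt{\gamma}\sqrtalpha_j$ should really be $3\sqrt{\gamma}\sqrtalpha_j$ — both discrepancies are harmlessly absorbed into the $10\aeps$ slack.
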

\begin{proof}
  Consider a client $j_0 \in \dclient \cap \cB$.  Since $j_0 \in \dclient$ there must be some $i^\star \in \dfac$ such that $j_0 \in \cneigh(i^\star)$.  Moreover, since $j_0 \in \cB$, $j_0$ is undecided and so by Lemma \ref{lem:sweep-preservation} we must have $\alpha_{j_0} \ge \alphat{0}_{j_0}$.
  
Suppose first that $i^\star \in \sfac$. Then $\stime_{i^\star} = \max_{j\in N(i^\star) \cap \sclient(i^\star)} \alpha_j$.  Since $i^\star \in \sfac$ we have $\alpha_j \geq \alphat{0}_j$ for all $j\in \cneigh(i^\star)$.  We claim that $\stime_{i^\star}\leq(1+\aeps)^2 \alpha_{j_0}$. Indeed, otherwise there is a client $j \in N(i^\star) \cap \sclient(i^\star) = N(i^\star) \cap \cneigh(i^\star)$ such that $\sqrtalpha_j > (1+\aeps)\sqrtalpha_{j_0}$ and so
\begin{align*}
(1+\aeps)\sqrtalpha_j& > (1+\aeps)\sqrtalpha_{j_0} + \tfrac{\aeps}{2}\cdot \sqrtalpha_{j_0} + \tfrac{\aeps}{2}\cdot \sqrtalpha_{j} & \\
& \geq (1+\aeps)\sqrtalpha_{j_0} + \tfrac{\aeps}{2}\cdot \sqrtalphat{0}_{j_0} + \tfrac{\aeps}{2}\cdot \sqrtalphat{0}_{j}  & \mbox{\small ($\alpha_{j_0} \ge \alphat{0}_{j_0}$ and $\alpha_j \ge \alphat{0}_j$)} \\
& \geq (1+\aeps)\sqrtalpha_{j_0} + \tfrac{\aeps}{2\sqrt{\gamma}}\cdot d(j_0,i^\star) + \tfrac{\aeps}{2\sqrt{\gamma}}\cdot d(j,i^\star)  & 
\mbox{\small ($j, j_0 \in \cneigh(i^\star)$)} \\
& \geq (1+\aeps)\sqrtalpha_{j_0} + (1+\aeps)d(j_0,i^\star) + (1+\aeps)d(j,i^\star)  & \mbox{\small ($\gamma \ll \aeps$ and so $\tfrac{\aeps}{2\sqrt{\gamma}} \ge (1 + \aeps)$)} \\
& \geq (1+\aeps)\Bigl(\sqrtalpha_{j_0} + d(j_0,j)\Bigr)\,, &
\end{align*}
contradicting Invariant~\ref{inv:containment}, since the $\alpha$-ball of $j$ would then strictly contain the $\alpha$-ball of $j_0$.  Hence, $\sqrt{\stime_{i^\star}} \le (1 + \aeps) \sqrtalpha_{j_0}$. Furthermore, $d(j_0,i^\star) \le \sqrt{\gamma} \sqrtalphat{0}_{j_0} \le \sqrt{\gamma} \sqrtalpha_{j_0}$ and  therefore 
  \[(1 + \sqrt{\delta} + 3\aeps)\sqrtalpha_{j_0} \ge d(j_0,i^\star) + \sqrt{\delta \stime_{i^\star}}\,.\]

  On the other hand, if $i^\star \not\in \sfac$ then (by the definition of $\sfac$)
there must be some $j \in \cneigh(i^\star)$ with $\alpha_{j} < \alphat{0}_{j}$.  By Lemma \ref{lem:sweep-preservation}, $j$ must be decided.  Then, $j \not\in \cB$ and so by Lemma \ref{lem:roundable-alpha-lower-bound} there exists some tight facility $i$ such that $(1 + \sqrt{\delta} + \aeps)\sqrtalpha_{j} \ge d(j,i) + \sqrt{\delta t_i }$. Moreover, applying the same argument as above, we must have $\sqrtalphat{0}_{j} \leq (1+\aeps) \sqrtalphat{0}_{j_0}$, since otherwise in $\alphat{0}$, the $\alpha$-ball of $j$ would strictly contain the $\alpha$-ball of $j_0$, contradicting Invariant~\ref{inv:containment}.  Then,  we have:
\begin{align*}
d(j_0,i) + \sqrt{\delta t_i} &\le d(j_0,i^\star) + d(i^\star,j) + d(j,i) + \sqrt{\delta t_i} \\
&\le \sqrt{\gamma}\sqrtalphat{0}_{j_0} + \sqrt{\gamma}\sqrtalphat{0}_{j} + (1+\sqrt{\delta} + \aeps)\sqrtalpha_{j} \\
&\le \aeps\sqrtalphat{0}_{j_0} + \aeps(1+\aeps)\sqrtalphat{0}_{j_0} + (1 + \sqrt{\delta} + \aeps)(1 + \aeps)\sqrtalphat{0}_{j_0} \\
&< (1 + \sqrt{\delta} + 10\aeps)\sqrtalpha_{j_0}\,, 
\end{align*}
where for the final inequality we used that $j_0$ is undecided and so by Lemma~\ref{lem:sweep-preservation} we have $\alpha_{j_0} \ge \alphat{0}_{j_0}$.
\end{proof}

\subsection{Showing that each solution is roundable and completing the analysis}
\label{sec:select-thresh-thet}

We start by showing that each solution $(\alpha,z)$ produced by Algorithm~\ref{alg:1} satisfies the properties of Definition~\ref{def:roundable}.

\begin{proposition}
  \label{prop:roundable}
Every solution $(\alpha,z)$ produced by Algorithm~\ref{alg:1} is roundable.
\end{proposition}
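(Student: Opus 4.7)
I will verify the four requirements of Definition~\ref{def:roundable} one by one. Dual feasibility of $\alpha$ for $\dualf[\lambda+1/n]$ follows directly from Invariant~\ref{inv:feasibility} (Lemma~\ref{lem:invfeasgood}). Condition~(\ref{item:roundable-1}) is immediate from the structure of Algorithm~\ref{alg:1} and \raiseprice: the base price $\lambda$ is fixed during each inner loop, \raiseprice raises only a single $z_{i^+}$ by $\zeps < 1/n$, and the update on line~\ref{line:update} defers the change of $\lambda$ until every facility's price has been raised, so every $z_i$ lies in $\{\lambda,\lambda+\zeps\}$.

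For condition~(\ref{item:roundable-2}) I set $\bclient = \cB \setminus \dclient$ and supply witnesses case-by-case. Clients in $\clients \setminus \cB$ get a tight witness from the first part of Lemma~\ref{lem:roundable-alpha-lower-bound}, noting that $\tau_{w(j)} \le t_{w(j)}$ and that the $10\aeps$ slack in~(\ref{item:good-client-roundable}) absorbs the $\aeps$ appearing in the lemma. Clients in $\cB \cap \dclient$ are handled by Lemma~\ref{lem:roundable-dense-bound}, whose two cases correspond precisely to a tight or special witness satisfying~(\ref{item:good-client-roundable}). Clients $j \in \bclient$ use the second part of Lemma~\ref{lem:roundable-alpha-lower-bound} to obtain a tight $w(j)$ with $(d(j,w(j)) + \sqrt{\delta \tau_{w(j)}})^2 \le 36 \alphat{0}_j$. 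To discharge~(\ref{item:bad-client-roundable}) I will observe that $\bclient \subseteq \cW \setminus \dclient$ (since $\cB \subseteq \cW$ by Lemma~\ref{lem:bad-client-range}), so Corollary~\ref{cor:W-mass-small} yields $\sum_{j \in \bclient} 36\alphat{0}_j \le 36\gamma \opt_k$.

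Turning to condition~(\ref{item:roundable-3}), the cardinality bound $|\sfac|\le n$ follows because each $i \in \sfac$ contains a bad client in $\cneigh(i)$ and the close neighborhoods $\{\cneigh(i)\}_{i \in \ist{0}}$ are pairwise disjoint by Lemma~\ref{lemma:singlepayment}. For the payment lower bound, two structural facts are key: the definition of $\sfac$ ensures $\alpha_j \ge \alphat{0}_j$ for every $j \in \cneigh(i) = \sclient(i)$, and $d(j,i)^2 \le \gamma \alphat{0}_j$ by the definition of $\cneigh(i)$, giving $\beta_{ij} \ge \betat{0}_{ij}$; and because $\solt{0}$ is completely decided (Invariant~\ref{inv:good-solution}), $\sfact{0}=\emptyset$ and every $i \in \ist{0} \supseteq \sfac$ is tight in $\solt{0}$. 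Rewriting $\sum_{j \in \cneigh(i)}\betat{0}_{ij} = \zt{0}_i - \sum_{j \in \Nt{0}(i)\setminus \cneigh(i)}\betat{0}_{ij}$ and summing over $i \in \sfac$ gives
\[
\sum_{i \in \sfac}\sum_{j \in \sclient(i)}\beta_{ij} \;\ge\; \lambda|\sfac| - Y, \qquad Y := \sum_{i \in \sfac}\sum_{j \in \Nt{0}(i)\setminus\cneigh(i)}\betat{0}_{ij}.
\]

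Bounding $Y \le \gamma \opt_k$ is the step I expect to be the main obstacle. By the contrapositive of Lemma~\ref{lemma:singlepayment}, any $j$ contributing to $Y$ lies in $\clients \setminus \dclient$ and satisfies $\sum_{i \in \ist{0}} \betat{0}_{ij} \le \alphat{0}_j$; however the crude bound $Y \le \sum_{j \notin \dclient} \alphat{0}_j$ combined with Lemma~\ref{lem:alpha-nondense} gives only $\gamma^{-3}\opt_k$. To close the gap I would combine the per-facility leakage estimate $\sum_{j \in \Nt{0}(i)\setminus \cneigh(i)}\betat{0}_{ij} \le (2\gamma-\gamma^2)\zt{0}_i$ (which follows from $i$ being dense together with $\betat{0}_{ij} \ge (1-\gamma)\alphat{0}_j$ on $\cneigh(i)$) with the LP-duality estimate $\lambda(|\ist{0}| - k) \le \opt_k + 1$ available whenever \raiseprice is invoked---Algorithm~\ref{alg:1} returns as soon as $|\ist{\ell,r}| = k$, so $|\ist{0}| > k$ throughout. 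Using $\sum_{i \in \sfac}\zt{0}_i \le \sum_{i \in \ist{0}}\zt{0}_i \le \sum_j \alphat{0}_j \le \opt_k + (\lambda+1/n)k$ and balancing the per-facility and total bounds should deliver the required $\gamma \opt_k$ estimate.
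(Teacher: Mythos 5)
Your treatment of feasibility, Condition~\ref{item:roundable-1}, and Condition~\ref{item:roundable-2} matches the paper's proof (your choice $\bclient = \cB\setminus\dclient$ is a subset of the paper's $\cW\setminus\dclient$, which is fine since Definition~\ref{def:roundable} only asks for existence of such a set, and your reduction to Lemmas~\ref{lem:roundable-alpha-lower-bound} and~\ref{lem:roundable-dense-bound} plus Corollary~\ref{cor:W-mass-small} is exactly the intended argument). The derivation of $\sum_{i\in\sfac}\sum_{j\in\sclient(i)}\beta_{ij}\ge\lambda|\sfac|-Y$ is also the paper's intermediate step.

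The gap is in bounding $Y$. Your proposed route cannot work: the per-facility estimate $\sum_{j\in\Nt{0}(i)\setminus\cneigh(i)}\betat{0}_{ij}\le(2\gamma-\gamma^2)\zt{0}_i$ is correct, but summing it gives $Y\le 2\gamma\sum_{i\in\sfac}\zt{0}_i\approx 2\gamma\lambda|\sfac|$, and $\lambda|\sfac|$ is \emph{not} $O(\opt_k)$ in general --- that is precisely why Condition~\ref{item:roundable-3} is stated as a lower bound of the form $\lambda|\sfac|-\gamma\opt_k$ rather than demanding $\lambda|\sfac|$ itself be small. The duality estimate $\lambda(|\ist{0}|-k)\le\opt_k+1$ controls only $\lambda$ times the \emph{excess} number of open facilities, not $\lambda|\sfac|$, so no balancing of these two bounds yields $\gamma\opt_k$; likewise the crude bound $\gamma^{-3}\opt_k$ from Lemma~\ref{lem:alpha-nondense} cannot be interpolated with it. The missing idea is that the leakage clients are not merely non-dense: for each $i\in\sfac$ some $j\in\cneigh(i)$ lies in $\cB$, so Lemma~\ref{lem:bad-client-range} places $\alphat{0}_j$ near a threshold $\theta_s$, and Lemma~\ref{lem:tightboundedratio} (applied to the tight facility $i$ in $\solt{0}$) forces \emph{every} $j'\in\Nt{0}(i)$ to have $\alphat{0}_{j'}$ within a factor $20^2$ of $\alphat{0}_j$, hence $\Nt{0}(i)\subseteq\cW$ (this is exactly why $\cW(\sigma)$ is defined with the widened range $[81^{-1}20^{-2}\theta_s,\,25\cdot20^6\theta_s]$). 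Combined with Lemma~\ref{lemma:singlepayment}, which gives $\Nt{0}(i)\cap\dclient=\cneigh(i)$, one gets $\Nt{0}(i)\setminus\cneigh(i)\subseteq\cW\setminus\dclient$, whence
\begin{equation*}
Y\;\le\;\sum_{j\in\cW\setminus\dclient}\sum_{i\in\sfac}\betat{0}_{ij}\;\le\;\sum_{j\in\cW\setminus\dclient}\alphat{0}_j\;\le\;\gamma\cdot\opt_k
\end{equation*}
by Lemma~\ref{lemma:singlepayment} again and Corollary~\ref{cor:W-mass-small}. Without routing the bound through the random-shift set $\cW$, the $\gamma^{-3}$ loss of Lemma~\ref{lem:alpha-nondense} cannot be recovered.
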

\begin{proof}
By construction, each solution $(\alpha,z)$ produced by Algorithm~\ref{alg:1} is feasible with respect to $\dualf[\lambda+\zeps]$ and $\zeps < \frac{1}{n}$. In addition, we have $\lambda \leq z_i \leq \lambda+ \zeps \leq \lambda + 1/n$ for all $i\in \facilities$. 
It remains to show that Properties \ref{item:roundable-2} and \ref{item:roundable-3} of Definition \ref{def:roundable} are satisfied.  Recall the definitions of $\dfac$ and $\dclient$, and define $\sfac$ and $\sclient$ as in~\eqref{eq:sfacdef}.   Further let $\bclient = \cW \setminus \dclient$. 

Now we show that Property~\ref{item:roundable-2} holds for $\cS = (\alpha,z,\sfac,\dclient)$ with respect to the set $\bclient$.  By Lemma \ref{lem:bad-client-range} (which shows that a client $j \in \cB$ only if $\frac{1}{81}\theta_s \le \alphat{0}_j \le 25 \cdot 20^{4}\theta_s$ for some stage $s$) we have $\cB \subseteq \cW$.  Thus $\cB \setminus \dclient \subseteq \bclient$.  Now, by Lemma \ref{lem:roundable-alpha-lower-bound} for all $j \in \clients \setminus \cB$ there exists some tight facility $i$ such that
\[
(1 + \sqrt{\delta} + 10\epsilon)^2\alpha_j \ge \left(d(j,i) + \sqrt{\delta  t_i}\right)^2
\ge \left(d(j,i) + \sqrt{\delta \stime_i}\right)^2.
\]
Similarly, by Lemma \ref{lem:roundable-dense-bound}, for all $j \in \cB \cap \dclient$, there exists either some tight facility $i$ or some special facility $i \in \sfac$ such that
\[
(1 + \sqrt{\delta} + 10\epsilon)^2\alpha_j 
\ge \left(d(j,i) + \sqrt{\delta \stime_i}\right)^2\,.
\]
Finally, we consider the remaining clients in $\cB \setminus \dclient \subseteq \bclient$.  Lemma~\ref{lem:roundable-alpha-lower-bound} shows that for every client $j \in \clients$ there is some tight facility $i$ such that
\[
  36\alphat{0}_j \ge\left(d(j,i) + \sqrt{\delta t_i}\right)^2 \geq \left(d(j,i) + \sqrt{\delta \stime_i}\right)^2.
\]
For each client $j$, let $w(j)$ be this specified tight facility $i$.  Then,
\begin{equation*}
\sum_{j \in \bclient}\left(d(j,i) + \sqrt{\delta \stime_{w(j)}}\right)^2
\le 36\sum_{j \in \bclient} \alphat{0}_j 
\le 36 \gamma \cdot \opt_k,
\end{equation*}
where the last inequality follows from Corollary~\ref{cor:W-mass-small}. 

Finally, we show that Property~\ref{item:roundable-3} must hold. Consider some $i \in \sfac$.  By
definition of $\sfac$,  we must have $j \in \cB$ for some $j \in \cneigh(i)$.
Then, by Lemma \ref{lem:bad-client-range} we must have $\frac{1}{81}\theta_s
\le \alphat{0}_j \le 25 \cdot 20^{4}\theta_s$ for some $s$.  By 
Lemma~\ref{lem:tightboundedratio} (which bounds the ratio to be at most $19^2 < 20^2$ between $\alpha_j$ and $\alpha_{j'}$ for any pair of clients $j,j'$ that share a tight edge to some common facility $i$), we must have $\alphat{0}_{j'} \in \cW$ for any $j' \in \Nt{0}(i)$.  
Moreover, by Lemma~\ref{lemma:singlepayment} (which shows that each dense client pays for at most one dense facility in $\ist{0}$), we have $\betat{0}_{ij} = 0$ for all $j \in \dclient \setminus \cneigh(i)$.  Altogether, then we have $\Nt{0}(i) \subseteq \cW$ and $\Nt{0}(i) \cap \dclient = \cneigh(i)$ and so
\[
\Nt{0}(i) \setminus \cneigh(i) = \Nt{0}(i) \setminus \dclient \subseteq \cW \setminus \dclient\,,
\]
for every $i \in \sfac$.  Notice that Lemma~\ref{lemma:singlepayment} also implies that for each dense facility $i \in \dfac$ there is some dense client that only pays for that facility in $\ist{0}$, so indeed $|\sfac| \leq |\dfac| \leq n$. It remains to prove that $\sum_{i\in \sfac} \sum_{j\in \sclient} \beta_{ij} \geq \lambda |\sfac| - \gamma \cdot \opt_k$.    To that end, recall that Invariant~\ref{inv:good-solution} implies that $\sfact{0} = \emptyset$.
Thus, every $i \in \ist{0}$ must have been tight in $\alphat{0}$, and hence $\sum_{j \in \Nt{0}(i)}\betat{0}_{ij} \ge \zt{0}_i \ge \lambda$.  Combining these
observations, for every $i \in \sfac$ we have:
\begin{equation}
\sum_{j \in \sclient(i)}\beta_{ij} = \sum_{j \in \cneigh(i)}\beta_{ij} \ge \sum_{j \in \cneigh(i)}\betat{0}_{ij} \ge \lambda - \sum_{j \in \Nt{0}(i) \setminus \cneigh(i)} \betat{0}_{ij} \ge \lambda - \sum_{j \in \cW \setminus \dclient} \betat{0}_{ij}\,,\label{eq:special-fac-price}
\end{equation}
where the first inequality follows from the definition of $\sfac$, which requires that for any $i \in \sfac$, $\alpha_j \geq \alphat{0}_j$ for all $j \in \cneigh(i)$.  By Invariant~\ref{inv:good-solution}, every client is decided in $\alphat{0}$ and so, in particular, $\sfact{0} = \emptyset$ and $\ist{0}$ contains no special facilities.  Then, by Lemma~\ref{lemma:singlepayment}, $\sum_{i \in \sfac}\betat{0}_{ij} \le \sum_{i \in \ist{0}}\betat{0}_{ij} \le \alphat{0}_j$ for all $j$.  Summing \eqref{eq:special-fac-price} over all $i \in \sfac$ we thus have
\begin{equation*}
\sum_{i \in \sfac}\sum_{j \in \sclient(i)}\beta_{ij} \ge |\sfac|\lambda - \sum_{j \in \cW \setminus \dclient}\sum_{i \in \sfac}\betat{0}_{ij} \ge
|\sfac|\lambda - \sum_{j \in \cW \setminus \dclient}\alphat{0}_{j} \ge
|\sfac|\lambda - \gamma \cdot \opt_k,
\end{equation*}
where the final inequality follows from Corollary~\ref{cor:W-mass-small}.
\end{proof}

The following theorem completes the analysis.

\begin{theorem}
\raiseprice runs in polynomial time and produces a polynomial number of close roundable solutions.
\label{thm:polynomial-time}
\end{theorem}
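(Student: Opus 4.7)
Closeness of the produced sequence is already given by Proposition~\ref{prop:close-values} and roundability of each output by Proposition~\ref{prop:roundable}; only the counting and running-time bounds remain.

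I would first bound the number of stages. Invariant~\ref{inv:feasibility} together with Lemma~\ref{lem:dis} gives $1 \le \alpha_j \le 5n^7$ throughout \raiseprice, and Lemma~\ref{lem:no-changes-stage-later} implies stage $s+1$ is entered only when some undecided $j$ satisfies $\alpha_j \ge \theta_s$. Since the thresholds grow by a factor $(1+\epsilon)^K$ per stage with $\theta_1 \ge 1$, this gives $O((\log n)/(K\epsilon)) = O(\gamma^4 \log n) = O(\log n)$ stages.

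For the number of calls per stage I plan to introduce a lexicographic progress measure on the pair $(|U|,\ U_{<\theta_s})$, where $U_{<\theta_s} = \{j \in U : \alpha_j < \theta_s\}$, with the second coordinate discretized at scale $\zeps$. Property~\ref{item:sb-mu} of Lemma~\ref{lem:sweep-basic} forces every call to add some client of $U$ to $A$, and this client must leave $A$ via one of Rules~1--5: Rules~1 and~2 render it decided (and by Property~\ref{item:sb-decided} of Lemma~\ref{lem:sweep-basic} it remains so), strictly reducing $|U|$; Rule~4 raises its $\alpha$-value above $\theta_s$, removing it from $U_{<\theta_s}$; and Rule~3 strictly raises its $\alpha$-value by $\zeps$. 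Since $|U| \le n$, $\alpha_j \le 5n^7$, and $\zeps = n^{-O(1)}$, the progress measure takes only $\mathrm{poly}(n)$ values, giving $\mathrm{poly}(n)$ calls per stage. Combined with the $O(\log n)$ stage bound, \raiseprice makes $\mathrm{poly}(n)$ calls to \sweep.

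For the running-time bound I would discretize each \sweep call into a sequence of events (a new tight facility or edge, a Rule~2 or Rule~5 inequality becoming active, $\alpha_j$ crossing $\theta_s$, $\alphat{0}_j$, or a bucket boundary), computing the next event in $O(nm)$ time; the number of such events within a single call is polynomial by the same progress argument. Building $\sfact{\ell}$ and $\sclientt{\ell}$ via~\eqref{eq:sfacdef} is performed from precomputed $\cneigh(i)$ and $\dfac$ (which depend only on $(\alphat{0},\zt{0},\ist{0})$ and are computed once per call to \raiseprice), followed by an $O(nm)$ check of $\alphat{\ell}_j \ge \alphat{0}_j$. The most delicate step of the plan is the Rule~5 case of the progress argument: unlike Rules~1--4 it can remove an undecided client from $A$ without a $\zeps$-magnitude change to that client's own $\alpha$-value, so one must chain through the network of $\alpha$-ball containments, using Property~\ref{item:sb-contained} of Lemma~\ref{lem:sweep-basic}, Lemma~\ref{lem:sweep-preservation}, and Invariant~\ref{inv:containment}, to charge each such removal to a $\zeps$-increase of some other client in the same call.
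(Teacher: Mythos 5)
Your decomposition (closeness and roundability from Propositions~\ref{prop:close-values} and~\ref{prop:roundable}, an $O(\log n)$ bound on the number of stages from the geometric growth of the thresholds and $\alpha_j \le M = O(n^8)$, and an event-driven implementation of \sweep) matches the paper, and the stage bound and implementation sketch are fine. The gap is in the per-stage bound on the number of calls to \sweep. Your lexicographic measure $(|U|,\,U_{<\theta_s})$ is not monotone in its first coordinate: the set $U$ of undecided clients is recomputed after every call, and a client that was decided can become undecided again --- its witness $i$ loses tightness as soon as some $\alpha_{j'}$ with $j' \in N(i)$ is decreased on behalf of another potentially tight facility, and a stopped client loses its stopper when the stopper's $\alpha$-value increases. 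So Rules~1 and~2 do not ``strictly reduce $|U|$'' in net terms, Rule~4 does not prevent new clients from entering $U_{<\theta_s}$ (only clients below $\theta_{s-1}$ are protected, by Lemma~\ref{lem:no-changes-stage-later}), and the second coordinate is left undefined as an ordered quantity. On top of this, the Rule~5 case --- which you correctly identify as the delicate one --- is deferred rather than resolved, so the proposal as written does not establish that each call makes quantifiable progress.

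The paper's argument sidesteps all of this by tracking a single scalar: $\mu = \min_{j \in U}\alpha_j$. It looks at the undecided client $j$ attaining this minimum \emph{in the output} of each call (ties broken by order of removal from $A$) and shows $j$ cannot have been removed by Rules~1 or~2 (it would be decided, by Property~\ref{item:sb-decided} of Lemma~\ref{lem:sweep-basic}) nor by Rule~5 (by Property~\ref{item:sb-contained} and minimality: the contained client $j'$ is either decided, making $j$ decided, or undecided with $\alpha_{j'}\le\alpha_j$ and removed earlier, contradicting the choice of $j$). Hence $j$ was removed by Rule~4, which terminates the stage, or by Rule~3, which means $j\in U$ at the start of the call and $\alpha_j$ rose by exactly $\zeps$, so $\mu$ increases by at least $\zeps$ per call regardless of which clients enter or leave $U$ and regardless of any decreases elsewhere. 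This gives $\zeps^{-1}M = n^{O(\aeps^{-1}\gamma^{-4})}$ calls per stage. Your proposed ``chaining through $\alpha$-ball containments'' for Rule~5 is exactly the minimality argument above, so the fix is to abandon the lexicographic measure and make $\min_{j\in U}\alpha_j$ the sole progress measure.
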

\begin{proof} 
  That the produced solutions are close follows from
  Proposition~\ref{prop:close-values} and the produced solutions are roundable
  follows from Proposition~\ref{prop:roundable}. We continue to bound the
  running time and the number of produced solutions.  \raiseprice produces one
  solution for each call to \sweep.  In Appendix~\ref{sec:Ashkan} we argue
  (similarly as we did for \quasisweep) that \sweep can be implemented in
  polynomial time, and it is clear that the remaining operations in \raiseprice
  can be implemented in polynomial time.  Thus, to prove both claims, it
  suffices bound the number of calls to \sweep in \raiseprice. For that purpose, define:
\[
M = \lambda + \max_{j \in \clients, i \in \facilities}d(j,i) \le 4n^7 + n^6  < n^8\,.
\]
Using  our preprocessing (Lemma \ref{lem:dis}), we note that at all times during any call to \raiseprice, we have $\alpha_j \le M$, since otherwise $\alpha$ would be infeasible (contradicting Invariant~\ref{inv:feasibility}).

Let us now bound the number of calls to \sweep in each stage.  In stage 1, we
make only 1 call to \sweep, as shown in Lemma \ref{lem:no-changes-stage-1}.  In
each stage $s > 1$, \raiseprice calls \sweep only until $\alpha_j \ge \theta_s$
and $\alpha_j \ge \alphat{0}_j$ for every undecided client $j$.
Consider a call to $\sweep$ in stage $s>1$ and let $(\alpha,z)$ be the produced solution. Let $j$  be the undecided client with the smallest $\alpha$-value in $(\alpha,z)$ (breaking ties in the order of removal from the set $A$). If $j$ was removed by Rule~$4$, we have $\alpha_j \ge \theta_s$
and $\alpha_j \ge \alphat{0}_j$ and so every undecided client has an $\alpha$-value of at least $\theta_s$ which implies the termination of stage $s$. Otherwise, as $j$ has the smallest $\alpha$-value of undecided clients it cannot be removed by Rule~$5$ (by Property~\ref{item:sb-contained}) and so it must have been removed by Rule~$3$. Therefore, by the definition of that rule, $j$ was undecided in the previous iteration and its $\alpha$-value has increased by $\zeps$ in the considered call to \sweep. By the above, we have that either stage $s$ terminates or the smallest $\alpha$-value of the undecided clients increases by at least $\zeps$. Therefore, the stage must terminate after at most  $\zeps^{-1}M=n^{O(\aeps^{-1}\gamma^{-4})}$ calls to \sweep since no $\alpha$-value is larger than $M$.

Finally, let us bound the number of stages executed in \raiseprice.  By Lemma~\ref{lem:no-changes-stage-later} after stage $s$, all clients with $\alpha_j < \theta_s$ are decided.  Then, for $s = (K\aeps)^{-1}\Theta(\log n) = \gamma^{4}O(\log n)$ we have $\theta_s > M$ and so all clients must be decided.
\end{proof}

\bibliographystyle{abbrv}
\bibliography{library}

\appendix

\section{Implementation of \quasisweep}
\label{sec:quasi-runtime}
In Section~\ref{sec:quasi-polyn-time-algor}, we presented the procedure \quasisweep in a continuous fashion. We now describe a discrete, polynomial time implementation of \quasisweep.  As presented, \quasisweep maintains only the $\alpha$-values of each client, the value of $\theta$, and the set $A$ of active clients.  We suppose we are increasing $\theta$ at the speed of $1$, so that the value of $\theta$ corresponds to the current time.  Then, \quasisweep changes each $\alpha$-value at the speed of either 0, 1, or $-|A|$.  Moreover, this speed does not change until one of the following events happens:
\begin{enumerate}[label={Event \arabic*:},ref={\arabic*},leftmargin=*,labelindent=\parindent]
\item \label{re1} Client $j$ joins $A$: this can happen only if $\alpha_j = \theta$.
\item \label{re2} $\theta$ changes buckets: this can only happen when $\theta$ has reached the border of a bucket.
\item \label{re3} Facility $i$ becomes tight: this can happen if $(1)$ no client with a tight edge to $i$ is decreasing, and $(2)$ some client in $A$ has a tight edge to $i$.
\item \label{re4} Client $j$ gains a tight edge to facility $i$: this can happen only if $j\in A$.
\item \label{re5} Client $j \notin A$ changes buckets and enters the same bucket as $\theta$: this can happen only if $\alpha_j$ is being decreased. 
\end{enumerate}
Note that we remove a client from $A$ either immediately after it is added to $A$, at the time that some facility becomes tight, or at the time that it gains a tight edge to some (tight) facility. Therefore, we do not need to add an event for removing a client from $A$, since it only happens if one of the above events happen.

The polynomial time \quasisweep now works as follows: In each step, we find the next time that any one of the above events happens, then increase/decrease each $\alpha$-value according to its current speed to obtain a new set of values at this time.  Then, we update $\theta$, $A$, and our set of speeds and continue.  We need to show how we can efficiently compute the next event that happens, and also we need to prove that the number of such events is polynomial.  In what follows, we compute the time until each event above happens, assuming that it is the next event that happens.  Then the next event that actually happens is the event with the minimum such time (breaking ties arbitrarily).  

We now consider each of the above events in turn:
\begin{itemize}
\item The time until the Event~\ref{re1} may happen next is $\min_{\alpha_j>\theta} \alpha_j - \theta$.  Also we have exactly $n$ occurrences of this event. 

\item The time until Event~\ref{re2} may happen next is the difference between $\theta$ and the border of the next bucket, i.e., $B_{next}-\theta$, where $B_{next} = (1 + \epsilon)^{B(\theta)}$.  We have at most $O(\epsilon^{-1}\log(n))$ such events.  

\item For Event~\ref{re3},  if some client with a tight edge to $i$ is decreasing then (non-tight) facility $i$ cannot become tight (due to the choice of the speed of decrease). If no decreasing client has a tight edge to facility $i$, then the time that $i$ may become tight is
 \[
\frac{(\lambda + \epsilon_z) - \sum_{j\in \cD} [\alpha_j-d(j,i)^2]^+}{|N(i)\cap A|}. 
\]
Notice that the numerator is the current slack of facility $i$ and the denominator is the speed at which this slack decreases.
Moreover, there are at most $m =|\facilities|$ such events, since if a facility becomes tight, it will stay tight (as we discussed in our description of \quasisweep).   

\item The time until Event~\ref{re4} may happen for some edge $(j,i)$ is $d(j,i)^2 - \alpha_j$ if $\alpha_j < d(j,i)^2$, and there are at most $nm$ such events, since if an edge becomes tight, it remains tight afterwards.  

\item Finally, Event~\ref{re5} may happen only for those clients $j$ with $B(\alpha_j) > B(\theta)$.  For any such $j$, the time until Event~\ref{re5} happens is $(\alpha_j-B_{next})/|A|$. This event can happen also at most $n$ times, since once $B(\alpha_j) = B(\theta)$, $j$ is no longer decreased.  Note that when $\alpha_j$ is decreasing, we consider it to change buckets at the moment that it lies on the lower border of its current bucket (\ie at the moment that $1 + \log_{1 + \aeps}\alpha_j = B_{next}$).  It is easy to verify that still $(1+\aeps)\alpha_j \leq \alpha_{j'}$ for any $j$ and $j'$ placed in the same bucket by this rule.
\end{itemize}
\noindent
From the above, it is clear that the number of events are polynomial, and also that the next event can be computed in polynomial time.

 \section{Implementation of \sweep}
\label{sec:Ashkan}
In this section, we present a polynomial time implementation of the \sweep procedure. Our general approach is the same as described \quasisweep in Appendix~\ref{sec:quasi-runtime}, besides the set of events.  Recall that the polynomial time algorithm for \quasisweep is as follows: repeatedly find the next event that happens, then update the $\alpha$-values.  We increase $\theta$ at a rate 1, so that $\theta$ corresponds naturally to our notion of time.  Let $\theta^{(0)}$ denote the value of $\theta$ at the time that the preceding event happened.
 
We now focus on the events, explain them in detail, show the number of times that each can occur, and discuss the way that we can find each one of them.  Let $A$ be the set of active clients (as in \sweep) and let $D$ denote the set of all clients $j$ whose value $\alpha_j$ is being decreased.  Then, $\alpha_j$ is changed at a rate of 1 for every $j \in A$, $-|A|$ for every $j \in D$, and 0 for all other clients.  We now consider the events that can cause $A$ and $D$ to change.  For each such event, we show how to compute (in polynomial time) the time at which it would occur, assuming that $A$ and $D$ have not yet changed.  The next time that the behavior of \sweep can change (and the next time that an event actually occurs) is then the minimum over all of these event times.  

Given this next time $\theta$, we first update all $\alpha$-values according to their current rate of change.  Then, we compute the set $A$ of active clients, as follows.  We first add to $A$ all clients $j$ with $\alpha_j = \theta$.  Next, we remove clients from $A$ according to Rules 1-5 in \sweep.  Using the updated $\alpha$-values and the updated set $A$, we then compute the set of decreasing clients as in \sweep.   Note that until $\theta$ has increased neither the set $A$ nor any $\alpha$-values will change.  Thus, we can assume without loss of generality that the next event occurs when $\theta > \theta_0$.  For any $j \in A$ (after we update $A$) we therefore consider $\alpha_j$ to be \emph{strictly} greater than its present value when computing the set of potentially tight facilities (and hence decreasing clients): i.e.\ if $j \in A$, we consider $j \in N(i)$ if $\alpha_j \geq d(j,i)^2$ and we consider $\alpha_j > \alphat{0}_j$ if $\alpha_j \geq \alphat{0}_j$.  After computing the set of decreasing clients, we finally set $\theta_0 = \theta$ and continue.

Let us now describe how to compute the events that may cause $A$ or $D$ to change, and argue that they occur at most a polynomial number of times.  We consider the following basic events:
\begin{enumerate}[label={Event \arabic*:},ref={\arabic*},leftmargin=*,labelindent=\parindent]
\item \label{pev1} $\theta$ changes bucket or some client $j \in D$ enters the same bucket as $\theta$.
The time at which $\theta$ changes bucket can be computed exactly as described in Event 2 in our discussion of the running time of \quasisweep, and this happens at most once for each bucket.  Similarly, we can compute the next time that some client $j \in D$ enters the same bucket as $\theta$ as described in Event 5 in our discussion of \quasisweep (using the fact that here also, all clients of $D$ are decreasing at a rate of $|A|$).  Just as in \quasisweep, here also no client is decreased after entering the same bucket as $\theta$, and so this event can happen at most once per client. 
\item \label{pev2} $\alpha_j$ becomes equal to some constant $C$.  This can occur only for a client $j \in D$ with $\alpha_j > C$ or $j \in A$ with $\alpha_j < C$.  Since no value $\alpha_j$ is decreased by \sweep once it has been increased, this event can happen at most twice for any $j$ and $C$: once when $j \in D$ and once when $j \in A$.  Moreover, while $A$ and $D$ (and so the rate $r_j$ at which $\alpha_j$ is changing) remain constant, the time at which this event will occur is the $\theta$ satisfying
$r_j(\theta - \theta_0) = C - \alpha_j$.
\item \label{pev3} $\theta = \alpha_j$ for some $j$ that has not yet been added to $A$.  This event occurs only if $B(\alpha_j) = B(\theta)$ and so $\alpha_j$ is not presently decreasing or increasing.  Once $\theta = \alpha_j$ we place $j \in A$ and so this event can occur at most once for each $j$.  The time at which this event occurs is then, by definition, $\theta = \alpha_j$.
\item \label{pev4} $\theta = (C_1 + C_2\sqrt{\alpha_{j'}})^2$ for some constants $C_1$ and $C_2$ and some client $j'$ that has already been removed from $A$.  Once a client $j'$ has been removed from $A$, it is not subsequently changed by \sweep and so $(C_1 + C_2\sqrt{\alpha_{j'}})^2$ is a constant.  Thus, this event can happen at most once for each $j$, $C_1$, and $C_2$.
\item \label{pev5} A facility $i$ becomes tight and a witness for some $j \in A$.  As discussed in our analysis of \sweep, once $i$ is a tight and a witness for some $j$ it will remain so until the end of \sweep.  Thus, this event occurs at most once for each $j$ and $i$.  This event can occur only if the contributions to $i$ are increasing, in which case we must have $N(i) \cap D = \emptyset$.  The time that $i$ becomes tight is then the value $\theta$ satisfying:
\[
z_{i}=\sum_{j\in \cD \setminus A} [\alpha_j-d(i,j)^2]^++\sum_{j\in A}[\theta-d(i,j)^2]^+\,.
\]
\end{enumerate}

Now, let us show how these events capture the potential changes in $A$ and $D$, in turn.  First, suppose that $A$ changes.  If some client is added to $A$, we must have $\theta = \alpha_j$ and so Event~\ref{pev3} occurs.  Now suppose that some client $j$ is removed from $A$.  We consider the rules for removing clients from $A$ one by one:
\begin{itemize}
\item{Rule 1:}  In this case, either $j$ gains a tight edge to some tight facility $i$ or a new facility $i$ becomes tight and a witness for $j$.  In the first case, $\alpha_j = d(i,j)^2$, and so Event~\ref{pev2} occurs with $j \in A$ and $C = d(i,j)^2$.  In the second case, Event~\ref{pev5} occurs.
\item{Rule 2:} In this case, $j$ is stopped by some client $j'$.  Any such $j'$ must have already been removed from $A$.  Also, we have $2\sqrt{\theta} = 2\sqrt{\alpha_j} = d(j,j') + 6\sqrt{\alpha_{j'}}$.  Then, Event~\ref{pev4} occurs with $C_1 = d(j,j')/2$ and $C_2 = 3$.
\item{Rule 3:} In this case, we had $j \in U$ at the start of \sweep and $\alpha_j = \alphat{0}_j + \zeps$.  Then, Event~\ref{pev2} occurs with $j \in A$ and $C = \alphat{0}_j + \zeps$.
\item{Rule 4:} In this case, we have that $\alpha_j \geq \alphat{0}_j$ and $\alpha_j \geq \theta_s$.  When this first happens, Event~\ref{pev2} occurs with $j \in A$ and either $C = \alphat{0}_j$ or $C = \theta_s$.
\item{Rule 5:} In this case, there is a client $j'$ that has already been removed from $A$ such that $\sqrtalpha_j \geq d(j,j')+\sqrtalpha_{j'}$.  Then, Event~\ref{pev4} occurs with $j \in A$, $C_1 = d(j,j')$ and $C_2 = 1$.
\end{itemize}
From the above discussion, the events that can potentially cause $A$ to change will occur at most a polynomial number of times.\footnote{Here, and later we implicitly use that all $\alphat{0}$, $d(j,j')$, $d(j,i)$ are all constant throughout \sweep, and there are at most a polynomial number of distinct such values.  That is, we consider Event 2 and Event 4 with only a polynomial number of values for $C$, $C_1$, and $C_2$.}

Let us now consider how the set $D$ may change while the set $A$ remains constant.  We first consider the case in which some client $j$ is removed from $D$.  Since $j \in D$ presently, we must have that $j \in N(i)$ and $\alpha_j = t_i$ for some potentially tight facility $i$ with $N(i) \cap A \neq \emptyset$.  Then, $\alpha_j$ will stop decreasing only if one of the following happen:
\begin{itemize}
\item Client $j$ enters the same bucket as $\theta$.  Then, Event~\ref{pev1} occurs.  
\item Client $j$ is removed from $N(i)$.  Then, Event~\ref{pev2} occurs with $j \in D$ and $C = d(j,i)^2$.  
\item Facility $i$ becomes no longer potentially tight.  For this case, we first claim that there must in fact be some $j \in N(i)$ with $\alpha_j > \alphat{0}_j$.  Suppose otherwise; then we must have $\alpha_j \leq \alphat{0}_j$ for all $j \in N(i)$ and so $N(i) \subseteq \Nt{0}(i)$.  Since $N(i) \cap A \neq \emptyset$, there must be some $j' \in \Nt{0}(i) \cap A$.  Moreover, since $i$ is potentially tight $\alpha_{j'} = \alphat{0}_{j'}$.  But then, since $j' \in A$ we would then consider $\alpha_{j'} > \alphat{0}_{j'}$ for the purpose of computing this event, as described in our initial discussion.  In summary, as long as $j$ is decreasing due to some potentially tight facility $i$, we must have some client $j'$ with $\alpha_{j'} > \alphat{0}_{j'}$.  Then, $i$ remains potentially tight until $\alpha_{j'}$ decreases to $\alphat{0}_{j'}$.  When this happens, Event~\ref{pev2} occurs with $j' \in D$ and $C = \alphat{0}_{j'}$.
\end{itemize}
From the above discussion, the number of events that may cause a client $j$ to be removed from $D$ can occur at most a polynomial number of times for each value of $A$.  In particular, any client can be removed from $D$ at most a polynomial number of times in total.

Finally, let us consider the case in which a client $j$ is added to $D$.  Note that if $j \in A$ we have $B(\alpha_j) = B(\theta)$ and so $j$ cannot decrease.  Thus, we must have $j \not\in A$.  Then $j$ can begin decreasing only in the following cases:
\begin{itemize}
\item Some $i$ such that $j \in N(i)$ with $\alpha_j = t_i$ and $A \cap N(i) \neq \emptyset$ becomes potentially tight.  Similar to the discussion above, in this case, we must have that $\alpha_{j'} = \alphat{0}_{j'}$ for some $j' \in N(i) \cap A$.  Then, Event~\ref{pev2} occurs with $j' \in A$ and $C = \alphat{0}_{j'}$.
\item A client $j' \in A$ is added to $N(i)$ for some already potentially tight facility $i$ such that $j \in N(i)$ with $\alpha_j = t_i$.  In this case, we must have $\alpha_{j'} = d(j,i)^2$.  Then, Event~\ref{pev2} occurs with $j' \in A$ and $C = d(j,i)^2$.
\item The value $\alpha_{j}$ becomes equal to $t_i$ for some already potentially tight facility $i$.  In this case, let $j' \in N(i)$ be any client with $\alpha_{j'} = t_i$ presently.  Then either $j'$ is removed from $N(i)$, in which case Event~\ref{pev2} occurs with $j' \in A$ and $C = d(j,i)^2$, or $\alpha_{j'}$ is decreased until it is equal to $\alpha_j$.  This last case occurs at the time $\theta$ satisfying $|A|(\theta - \theta_0) = \alpha_{j'} - \alpha_j$.  We now argue that this event occurs at most a polynomial number of times.  Indeed, whenever this event occurs we add some $j \not\in D$ to the set $D$, and we have previously shown that any client $j$ can be removed from (and hence added back to) $D$ at most a polynomial number of times.
\end{itemize}

The above shows that we can calculate the next event in polynomial time and that there are in total at most polynomially many events. It follows that \sweep can be implemented to run in time that is polynomial in the number of clients and facilities.

 \section{Bounding the Distances}\label{appdistances}
Here we prove the following:
\begin{lemma}By losing a factor $(1+100/n^2)$ in the approximation guarantee, we can assume that the squared-distance between any client and any facility is in $[1,n^6]$, where $n=|\cD|$.
\end{lemma}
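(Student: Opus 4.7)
The plan is a standard preprocessing combining scaling with rounding at both extremes, so that after preprocessing the squared distances lie in $[1, n^6]$ while the algorithm's approximation ratio degrades by only a factor $(1 + 100/n^2)$.

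First I would run any known polynomial-time constant-factor approximation for $k$-means (for example, the $9$-approximation via local search of Kanungo et al.) to obtain a value $B$ with $\text{OPT}_k \leq B \leq \alpha \cdot \text{OPT}_k$ for an absolute constant $\alpha$. Scaling the ambient Euclidean space uniformly by $\sqrt{n^4/B}$ (which multiplies every squared distance by the same factor) lets me assume that $\text{OPT}_k \in [n^4/\alpha,\, n^4]$. This scaling preserves both the Euclidean structure and the approximation ratio of any subsequent algorithm, so it is free.

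To enforce the upper bound $d(i,j)^2 \leq n^6$, I observe that any solution assigning a client $j$ to a facility $i$ with $d(j,i)^2 > n^6 \geq n^2 \cdot \text{OPT}_k$ has cost exceeding $\text{OPT}_k$ by a factor of at least $n^2$, and is in particular much worse than the constant-factor approximation itself. I can therefore simply forbid such connections when running the algorithm (conceptually, set $c(j,i)=+\infty$ for those pairs); this leaves $\text{OPT}_k$ unchanged, and a $\rho$-approximation of the truncated instance is a $\rho$-approximation of the original. To enforce the lower bound $d(i,j)^2 \geq 1$, I replace each connection cost $c(j,i) = d(j,i)^2 < 1$ by $\bar c(j,i) = 1$. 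For any solution $S$ with $|S|\le k$, the modified cost exceeds the true cost by at most $n$, since each of the $n$ clients pays for a single facility and that payment is rounded up by at most $1$. As $\text{OPT}_k \geq n^4/\alpha$, the relative blow-up incurred in both $\text{OPT}_k$ and in the cost of any output solution is at most $\alpha n / n^4 \leq 100/n^2$, as required.

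The main subtlety is that Steps 2 and 3 alter the connection costs $c(j,i)$ but not the underlying Euclidean positions, so I would need to check that the geometric arguments throughout the paper (the triangle inequality applied to $d(j,j')$ and $d(i,i')$, the centroid identity $\sum_i c(i,\mu) = \frac{1}{2s}\sum_{i,i'} c(i,i')$, and so on) still apply. This is immediate: each of those inequalities uses only the genuine Euclidean distances between pairs of \emph{points} (which are unchanged), together with the inequality $\bar c(j,i) \geq d(j,i)^2$, which still holds after the modification. Thus the only quantitative care needed is tracking the $(1 + 100/n^2)$ factor, which follows directly from the bound $\text{OPT}_k = \Omega(n^4)$ secured in the scaling step.
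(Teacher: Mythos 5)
Your overall plan (guess $\opt$ with a constant-factor algorithm, rescale so $\opt$ is polynomially bounded, truncate large costs, round small costs up to $1$, and charge the additive loss of $n$ against the now-large $\opt$) is the same skeleton the paper uses, and the arithmetic tracking the $(1+100/n^2)$ factor is fine. The gap is in the step you dismiss as ``immediate.'' The lemma is a statement about the \emph{squared Euclidean distances} of the instance the algorithm runs on, not about a surrogate cost function, and the downstream analysis genuinely uses the distance bounds in the direction your argument does not cover. You correctly note that passing from a tight edge to a distance bound only needs $\bar c(j,i) \ge d(j,i)^2$; but the paper also needs the \emph{upper} bound $d(j,i)^2 \le n^6$ as a bound on true distances. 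For example, the termination argument in Section~\ref{sec:quasi-rounding} shows $|\ist{L}| \le 1$ by arguing that at price $4n^7$ every client has a tight edge to every tight facility (``since the maximum squared facility--client distance is $n^6$'') and that any two facilities satisfy $d(i,i')^2 \le 4n^6$ by chaining two client--facility distances through the triangle inequality. If you set $c(j,i) = +\infty$ for far pairs (or cap the cost without moving the points), neither of these holds: a client at true squared distance $2n^6$ from $i$ never contributes to $i$, the conflict graph at the final price need not be a clique, and the algorithm is no longer guaranteed to ever reach an independent set of size $k$. Similarly, the conflict-graph edge test compares the genuine facility--facility quantity $d(i,i')^2$ against $\delta\min(t_i,t_{i'})$, so modifying only client--facility costs leaves the two sides of these inequalities living in different geometries.

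This is exactly why the paper's Euclidean proof in Appendix~\ref{appdistances} is not a two-line cost surgery: after rescaling, it partitions $\cD \cup \cF$ into clusters that are pairwise far apart, discards facilities in no cluster, translates each cluster so its centroid is at the origin, and then \emph{adds new coordinates} (one per cluster, plus one separating facilities from clients) so that the modified distances are realized as honest squared Euclidean distances lying in $[1,n^6]$. That re-embedding is the missing idea in your proposal: without it, you have only produced a cost function sandwiched between $d^2$ and $d^2 + O(1)$ on some pairs and $+\infty$ on others, and you would have to re-derive, for the modified costs, every place the paper uses the triangle inequality or an upper bound on a client--facility distance — which is precisely the work the lemma is supposed to let you avoid. (A smaller instance of the same issue: for the lower bound, the paper applies $\max(\cdot,1)$ to \emph{all} pairs, including client--client and facility--facility, to preserve metricity; rounding only client--facility costs would again decouple costs from any underlying metric.)
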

\begin{proof}
We prove that for a given instance of the $k$-means problem, $\cI=(\cF,\cD,d,k)$, we can in polynomial time output an instance $\cI\rq{}=(\cF,\cD,d\rq{},k)$ such that:
\begin{itemize}
\item The squared distance between any client and any facility is in $[1,n^6]$ in $\cI\rq{}$, i.e., for any $i\in\cF, j\in\cD$, we have $1 \leq d\rq{}(i,j)^2 \leq n^6$.
\item For any constant $\rho$, any $\rho$-approximate solution for $\cI\rq{}$ is a $\rho(1+100/n^2)$-approximate solution for $\cI$.
\end{itemize}
In what follows, we first prove the lemma for the case that $d$ can be any metric distance function, then we prove it for the case in which $d$ must be a Euclidean metric function. 
\paragraph{Metric Distance:}
We focus on the case that $d$ is a metric distance. To that end, we create $3$ instances $\cI_1, \cI_2, \cI\rq{}$ with distances $d_1,d_2,d\rq{}$ respectively. Choose $M$, such that $\opt(\cI) \leq M \leq 100\cdot\opt(\cI)$. We can use the algorithm presented in~\cite{DBLP:journals/corr/abs-0809-2554} to find such $M$. First, let $d_1(i,j) = \sqrt{\frac{n^3}{M}}d(i,j)$ for all $i\in\cF, j\in\cD$.  This results in $\opt(\cI_1)=\opt(\cI) \frac{n^3}{M}$, so $n^3/100 \leq \opt(\cI_1)\leq n^3$. Second, for any $i\in\cF, j\in\cD$ let $d_2(i,j) = \min (d_1(i,j),n^2)$. Consider any constant-factor $\rho$-approximate solution, for $\cI_1$. This solution cannot use any of the edges that we updated in the previous step, since the cost of this edge is more than $n^4 \geq n\cdot \opt(\cI_1)$. Similarly, any $\rho$-approximate solution for $\cI_2$ cannot use any such edge. 
Therefore, $\opt(\cI_2{}) = \opt(\cI_1{})$. Third, for any $i\in\cF, j\in\cD$, assign $d\rq{}(i,j) = \max(d_2(i,j),1)$. Since this step might increase the cost of any solution by at most $n$, $\opt(\cI_2) \leq \opt(\cI\rq{}) \leq \opt(\cI_2{})+n$. Now it is clear that  for any $i\in\cF, j\in\cD$, $1 \leq d\rq{}(i,j)^2 \leq n^4$. We need to show that any good solution for $\cI\rq{}$ is also a good solution for $\cI$. Note that during all these steps, we focused on the distances between clients and facilities. To guarantee that  $d\rq{}$ is metric, we make the exact same changes on the pairs of facilities and pairs of clients as well.  

Consider a $\rho$-approximate solution for $\cI\rq{}$. We know that the cost of this solution is at most $\rho\cdot OPT(\cI\rq{})$. Now consider the same solution for $\cI_2$. Since the cost of any solution for $\cI_2$ is no more than its cost for $\cI\rq{}$, the cost of this solution for $\cI_2$ is at most $\rho\cdot OPT(\cI\rq{}) \leq \rho\cdot (OPT(\cI_2)+n)$. Also the cost of the same solution for $\cI_1$ equals to its cost for $\cI_2$ so it is at most $\rho(OPT(\cI_2)+n) = \rho(OPT(\cI_1) +n) \leq \rho\cdot OPT(\cI_1)(1+100/{n^2})$,
where the last inequality is due to the fact that $n^3/100 \leq \opt(\cI_1)$. Thus the cost of the same solution for $\cI$ is at most  $\frac{M}{n^3}(\rho\cdot \opt(\cI_1)(1+100/n^2))=\rho(1+100/n^2)\cdot\opt(\cI)$. The lemma then follows by noting that $d\rq{}$ is metric since we only rescaled, increased the minimum distance, and decreased the maximum distance of the given metric $d$.\\
\paragraph{Euclidean Metric Distance:}
Now assume that the given distance function is Euclidean. We assume that clients and facilities are points in some $\ell$ dimensional Euclidean space. We first create a solution $\cI_1$, making sure that the $\opt(\cI_1)$ is bounded by a polynomial.  As in the previous case, we can use~\cite{DBLP:journals/corr/abs-0809-2554} to find an $M$ such that $\opt(\cI) \leq M \leq 100\cdot\opt(\cI)$.  We then divide each coordinate of each point by $\sqrt{\frac{n^3}{M}}$.   We get that $d_1(i,j) = \sqrt{\frac{n^3}{M}}d(i,j)$ for all $i\in\cF, j\in\cD$ and $n^3/100 \leq\opt(\cI_1)\leq n^3$. Now we cluster the points in $\cD \cup \cF$ such that the distance between any two points in different clusters is $\Omega(n)\cdot\opt(\cI_1)$. To do that, we create each cluster as follows: Pick any client $j$ that is not part of any cluster and add it to cluster $S$; we call $j$ the center of cluster $S$. While there exists a client $j\rq{}\in \cD$ that is not part of any cluster and the distance between $j\rq{}$ to its closest client in $S$ is less than $n^2/4$ add $j\rq{}$ to $S$. Let $S_1,\dots,S_s$ be the clusters that we create. This gives a partition of the clients. Now we add a facility $i$ to cluster $S_\ell$, if there exists a client $j\in S_\ell$, such that $d(i,j)< n^2/8$. This ensures that each facility is at most part of one cluster, since the distance between two clients in different clusters is more than $n^2/4$.

It is easy to see that our clusters have the following properties.
\begin{enumerate}
\item \label{prepdis1} $d_1(j,j\rq{})^2< n^6/16$ for any two clients $j,j\rq{}$ in the same cluster. This is because when we add any client to a cluster, the maximum distance in the cluster increases by less than $n^2/4$ and so $d_1(j,j\rq{})< n^3/4$ at the end of the process.
\item \label{prepdis2}  $d_1(i,j)^2 \leq n^6/8$ for any client $j$ and facility $i$ in the  same cluster.  Indeed, by the triangle inequality we have that $d_1(i,j) \leq d_1(i,j_1)+d_1(j_1,j)$ where $j_1$ is the client such that $d_1(i,j_1) < n^2/4$.  We have $d_1(j_1,j) < n^3/4$ by the previous property.
\item \label{prepdis3} $d_1(i,i\rq{})^2 \leq n^6/8$ for any two facilities $i,i\rq{}$ in the same cluster. Similarly to the previous case, let $j,j\rq{}$ be the closest client in this cluster to $i,i\rq{}$ respectively. By the triangle inequality we have that $d_1(i,i\rq{}) \leq d_1(i,j)+d_1(j,j\rq{})+d_1(j\rq{},i\rq{}) \leq n^2/4+n^3/4+n^2/4$.

\item \label{prepdis4} $d_1(i,j)^2 \geq n^4/64 \geq (n/64)\cdot \opt (\cI_1)$ for any facility $i$ and client $j$ not in the same cluster.
\end{enumerate}  
 We remove all the facilities that are not part of any cluster, since no client can be connected to them in any solution with approximation guarantee better than $n/64$ (this follows from the above property~\ref{prepdis4}). From above properties~\ref{prepdis1},~\ref{prepdis2}, and~\ref{prepdis3} it is clear that the squared-distance between any two points in the same cluster is at most $n^6/8$.  Then, the squared-distance between any point (whether corresponding to a client $j$ or a facility $j$) in some cluster and the centroid of that cluster is also at most $n^6/8$.

Next, we translate each cluster of points so that its centroid lies at the origin, which we denote by $\vec{0}$.  This preserves the distances between each pair of points in the same cluster.  Consider any two points $p_1,p_2$ in distinct clusters.  Then, 
$d_1(p_1,p_2)^2 \leq 2(d_1(p_1,\vec{0})^2+d_1(\vec{0},p_2)^2) \leq n^6/2$.  Now, we add $s$ new dimensions, one for each cluster. For each $1 \leq i \leq s$, we assign $n^2$ to the $i^{th}$ new coordinate for the points in $i^{th}$ cluster and $0$ to the rest. Let $\ell\rq{} = \ell+s$ the number of the coordinates that the points have right now. Now consider two points and the value of their coordinates, $j = (j_1,j_2,...,j_{\ell\rq{}}),j\rq{}=(j\rq{}_1,j\rq{}_2,...,j\rq{}_{\ell\rq{}})$, we have \[d(j,j\rq{})^2=\sum_{k=1}^{\ell\rq{}} (j_k-j\rq{}_k)^2=\sum_{k=1}^{\ell} (j_k-j\rq{}_k)^2+ \sum_{k=\ell+1}^{\ell\rq{}} (j_k-j\rq{}_k)^2\leq n^6/2+2n^4. \] 
This guarantees that the maximum squared distance between any two points remains less than $n^6/2+2n^4$.   Also, it still holds that any solution with an approximation guarantee better than $n/64$ can only connect the clients in a cluster to the facilities in the same cluster, since the squared distance between any facility and any client in different clusters is at least $2n^4$, which is more than $(n/64)\cdot\opt(\cI_1)$. Now we need to make sure that the distance between the facilities and the clients is at least one. To that end, we add one new dimension and assign one for facilities and zero for clients in this coordinate. Similarly to the analysis of the general metric, we can show that any $\rho$-approximate solution for the new instance is also a $\rho(1+100/n^2)$-approximate for $\cI$, since we increase the cost of any solution by at most $n$. Note that the last step does not increase the distance-squared between any two points by more than one so the maximum distance-squared between any two points is at most $n^6/2+2n^4+1 \leq n^6$. 

Clearly, the running time of this procedure is $\poly(n)$.
\end{proof}

\end{document}